\newcommand{\ug}{\textsc{UniqueGames}}
\newcommand{\Ex}{\E}
\newcommand{\cP}{{\cal P}}
\newcommand{\tright}{\blacktriangleright}
\newcommand{\sym}{{\rm sym}}
\newcommand{\defeq}{\overset{\rm def}{=}}
\newcommand{\vphi}{\varphi}
\newtheorem{problem}[theorem]{Problem}
\newcommand{\paren}[1]{\left(#1 \right )}
\newcommand{\Abs}[1]{\left\lvert#1\right\rvert}
\newcommand{\norm}[1]{\left\lVert#1\right\rVert}
\let\e\varepsilon
\newcommand{\smallsetexpansion}{{\sc Small-set edge-expansion}}
\newcommand{\smallsetvertexexpansion}{{\sc Small-set vertex-expansion}}
\newcommand{\stronguniquegames}{{\sc Strong Unique Games}}
\newcommand{\uniquegames}{{\sc Unique Games}}
\newcommand{\oddcycletransversal}{{\sc Odd cycle transversal}}
\newcommand{\hypersse}{{\sc Hypergraph-SSE}}
\newcommand{\hug}{$d$-{\sc ary Unique Games}}
\newcommand{\sbug}{{\sc Strong Bipartite UG}}
\newcommand{\strug}{{\sc Strong Unique Games}}
\renewcommand{\ug}{{\sc Unique Games}}
\newcommand{\cH}{\mathcal{H}}
\newcommand{\bigO}{\mathcal{O}}
\newcommand{\bigo}[1]{\bigO\left(#1\right)}
\newcommand{\tbigO}{\tilde{\mathcal{O}}}
\newcommand{\tbigo}[1]{\tbigO\left(#1\right)}
\newcommand{\sval}{{\cV}}
\newcommand{\Prob}[1]{\Pr\left[#1\right]}
\newcommand{\cE}{{\mathcal{E}}}
\newcommand{\sdp}{{\sf SDP}}
\newcommand{\wh}[1]{\widehat{#1}}
\title{Approximation Algorithms and Hardness for \\ Strong Unique Games}
\author{Suprovat Ghoshal\\Indian Institute of Science\\Bangalore, India \\suprovat@iisc.ac.in \and
	Anand Louis
	\\Indian Institute of Science\\Bangalore, India \\anandl@iisc.ac.in }
\date{}
\begin{document}

\begin{titlepage}
\maketitle

\begin{abstract}
The \uniquegames~problem is a central problem in algorithms and complexity theory.
Given an instance of \uniquegames, 
the \stronguniquegames~problem asks to find the largest subset of vertices,
such that the \uniquegames~instance induced on them is completely satisfiable.
While the  \uniquegames~problem has been studied extensively, to the best of our knowledge, there hasn't been much work studying the approximability of the \stronguniquegames~problem. 
Given an instance with label set size $k$ where a set of $1 - \e$ fraction of the vertices
induce an instance that is completely satisfiable, our first algorithm produces
a set of $1 - \tbigo{k^2} \e \sqrt{\log n}$ fraction of the vertices such that the 
\uniquegames~induced on them is completely satisfiable.
In the same setting, our second algorithm produces a set of $1 - \tbigo{k^2} \sqrt{\e \log d}$
(here $d$ is the largest vertex degree of the graph) fraction of the vertices 
such that the \uniquegames~induced on them is completely satisfiable.
The technical core of our results is a new connection between \stronguniquegames~and
{\em small-set vertex-expansion} in graphs.
Complementing this, assuming the Unique Games Conjecture, we prove that there exists an absolute constant $C$
such that it is NP-hard to compute a set of size larger than $1 - C \sqrt{\e \log k \log d}$ 
such that all the constraints induced on this set are satisfied. 

For the \uniquegames~problem, given an instance that has as assignment satisfying $1 - \epsilon$
fraction of the constraints, there is a polynomial time algorithm 
[Charikar, Makarychev, Makarychev - STOC 2006] that computes an assignment satisfying 
$1 - \bigo{\sqrt{\epsilon \log k}}$ fraction of the constraints; [Khot et al. - FOCS 2004] 
prove a matching (up to constant factors) Unique Games hardness. 
Therefore, our hardness results suggest that the \stronguniquegames~problem might be harder
to approximate than the \uniquegames~problem.

Given an undirected graph $G(V,E)$ the \oddcycletransversal~problem,
asks to delete the least fraction of vertices to make the induced graph on the 
remaining vertices bipartite. 
As a corollary to our main algorithmic results, we obtain an algorithm that 
outputs a set $S$ such the graph induced on $V \setminus S$ is bipartite, and 
$\Abs{S}/n \leq \bigo{\sqrt{\epsilon \log d}}$
(here $d$ is the largest vertex degree and $\epsilon$ is the optimal fraction of
vertices that need to be deleted).
Assuming the Unique Games conjecture, we prove a matching (up to constant factors) hardness.

\end{abstract}

\end{titlepage}

\tableofcontents

\newpage

\section{Introduction}

The \uniquegames~problem \cite{Khot02a} is a central problem in computational complexity theory.
Formally, the problem is the following.
\begin{problem}[\uniquegames]				
Given an instance of \uniquegames~$\cG(V_{\cG},E_{\cG},[k],\{\pi_e\}_{e \in E_{\cG}})$,
consisting of a graph $(V_{\cG},E_{\cG})$, alphabet $[k]$, and bijections $\pi_{(u,v)}: [k] \to [k]$ for each
$(u,v) \in E_{\cG}$, the goal is to compute an assignment $\sigma: V_{\cG} \to [k]$ which maximizes the 
number of satisfied constraints $\Abs{\set{\set{u,v} \in E_{\cG}: \pi_{(u,v)}(\sigma(u)) = \sigma(v)}}$.
The value of this game, denoted by $\val(\cG)$, is the fraction of constraints satisfied by this optimal assignment.
\end{problem}
We say that an instance is {\em satisfiable} if its value is $1$. 
A natural algorithmic question regarding the \uniquegames~problem is that given an
instance that is ``almost satisfiable'', can one compute an assignment that ``almost satisfies'' 
the instance? 
There are multiple ways to formalize the notion of an instance being ``almost satisfiable''. One
way to quantify this is using the value of the instance: the larger the value,
the closer it is to being satisfiable. 
Khot's {\em Unique Games Conjecture} \cite{Khot02a}, conjectures that even if an instance 
has value close to $1$, it is NP-hard to find an assignment that satisfies even a few constraints.
More formally, the conjecture says the following.
\begin{conjecture}[Unique Games Conjecture~\cite{Khot02a}]				
\label{conj:ug}
	There exists $\epsilon_0 \in (0,1)$ such that for every choice of $0 < \epsilon_c,\epsilon_s \leq \epsilon_0$,
there exists a $k \in \N$ such that 
	the following holds. Given a \ug~instance $\cG(V,E,[k],\{\pi_{e}\}_{e \in E})$, it is \NP-Hard to distinguish between the following cases:
	\begin{itemize}
		\item [YES:] $\val{\cG} \geq 1 - \epsilon_c$.
		\item [NO:]  $\val(\cG) \leq \epsilon_s$.
	\end{itemize}
	Furthermore, the underlying constraint graph of $\cG$ is regular.
\end{conjecture}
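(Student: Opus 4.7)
The final statement is the Unique Games Conjecture itself, which is a longstanding open problem; no complete proof is known. What follows is the most plausible route, based on the recent resolution of the closely related $2$-to-$2$ Games Conjecture by Khot, Minzer, and Safra.

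The plan is to proceed via a PCP-style reduction from a hard base problem with rich algebraic structure---Smooth Label Cover or $d$-to-$1$ Games---both of which are \NP-hard with perfect completeness. First I would compose the outer base instance with an inner verifier over the Grassmann graph, so that each outer variable is replaced by a ``cloud'' of $[k]$-labelings and each inner constraint is a bijection on $[k]$. In the YES case, a satisfying outer labeling would lift to a \ug~assignment satisfying a $1 - \e_c$ fraction of the inner constraints, yielding completeness. In the NO case, one would need to show that any assignment satisfying more than $\e_s$ fraction of constraints can be list-decoded into a short list of candidate outer labelings, contradicting the soundness of the base instance.

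The main obstacle is the soundness analysis. The $2$-to-$2$ Games proof hinges on an inverse theorem for small non-expanding sets in the Grassmann graph, but that argument is tailored to $2$-to-$2$ constraints and inherently loses completeness. Proving an analogous inverse theorem for the $1$-to-$1$ (unique) regime, together with a list-decoder whose accuracy survives near-perfect completeness, is exactly the missing ingredient; in a sense closing this gap is what the conjecture reduces to. Finally, the regularity condition on the underlying constraint graph can be enforced separately by a standard expander-replacement degree-reduction, applied once the main hardness is in hand.
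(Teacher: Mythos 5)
You have correctly identified the situation: this statement is Khot's Unique Games Conjecture, which the paper does not prove and could not prove --- it is stated (as Conjecture~\ref{conj:ug}) purely as a hypothesis, imported from \cite{Khot02a}, under which the hardness results (Theorems~\ref{thm:strong-ug-informal} and~\ref{thm:oct-hardness}) are derived. There is therefore no proof in the paper to compare against, and your refusal to manufacture one is the right call. Your sketch of the most plausible attack --- composing a smooth or $d$-to-$1$ outer instance with a Grassmann-graph inner verifier, with completeness inherited from the outer labeling and soundness resting on an inverse theorem for non-expanding sets --- is an accurate description of the state of the art after the $2$-to-$2$ Games theorem, and you correctly locate the missing ingredient: the known expansion/inverse theorem is tied to the $2$-to-$2$ regime and inherently sacrifices perfect or near-perfect completeness, so it does not yield the $1-\epsilon_c$ versus $\epsilon_s$ gap with a unique (bijective) inner constraint structure. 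One small caveat: the regularity clause in the paper's statement of the conjecture is a mild additional normalization (standard to enforce, as you note), and the paper simply assumes it as part of the conjectured hard instances rather than deriving it; your remark that it can be handled by degree-reduction once the main hardness is available is consistent with how such statements are usually justified. In short, your proposal is not a proof, does not claim to be, and correctly reflects that the statement remains open.
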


This conjecture, a central topic of study in the last two decades, 
implies optimal hardness of approximation for many problems \cite{KR08,KKMO07,Rag08} etc.

Another way to quantify the notion of a \uniquegames~instance being close to satisfiable 
is by the relative size of the largest set of vertices that induce an instance that is 
fully satisfiable. The \stronguniquegames~problem studies this notion and is formally 
defined as follows. 
\begin{problem}[\stronguniquegames]		\label{prop:str-ug}
Given an instance of \stronguniquegames~instance $\cG(V_{\cG},E_{\cG},[k],\{\pi_e\}_{e \in E_{\cG}})$,
consisting of a graph $(V_{\cG},E_{\cG})$, alphabet $[k]$, and bijections $\pi_e: [k] \to [k]$ for each
$e \in E_{\cG}$, the goal is to compute the largest cardinality set $S \subset V_{\cG}$ such that the
value of the \uniquegames~instance induced on $S$ is $1$. 
The value of this game, denoted by $\sval(\cG)$, is defined as $\Abs{S}/\Abs{V_{\cG}}$ where $S$ is an optimal set.
\end{problem}

While the \uniquegames~problem is an {\em edge deletion} problem 
(delete the smallest number of edges to obtain an instance that is fully satisfiable),
the \stronguniquegames~problem is a {\em vertex deletion} problem
(delete the smallest number of vertices to obtain an instance that is fully satisfiable).
This allows it to express well studied combinatorial optimization problems such as the 
\oddcycletransversal~as special cases.

The \stronguniquegames~problem has been studied in several different contexts.  
Khot and Regev \cite{KR08} gave a reduction from \uniquegames~to \stronguniquegames, and used it
to prove optimal Unique Games hardness for the minimum vertex cover problem. More recently, Bhangale and Khot
\cite{BK19} showed that the recent sequence of works on the $2$-to-$2$ games~\cite{DBLP:conf/stoc/DinurKKMS18,DBLP:conf/focs/KhotMS18} implies that a variant of the \stronguniquegames~problem is NP-Hard 
in certain parameter regimes (see Section \ref{sec:related}),
and used that to obtain improved inapproximability results for problems such as {\sc Max Acyclic Subgraph}, {\sc Max Independent Set}, etc.

While there has been a lot of work on developing approximation algorithms for the \uniquegames~problem
(see Section \ref{sec:related} for a brief survey), to the best of our knowledge, there hasn't 
been much work on approximation algorithms for the \stronguniquegames~problem.
Our main contributions are approximation algorithms and Unique Games hardness results 
for the \stronguniquegames~problem.

\subsection{Our results}
Our main algorithmic results are the following.
\begin{theorem}				
\label{thm:partial-ug-1}
There exists a polynomial time randomized algorithm which takes as input an instance 
of \stronguniquegames~$\cG(V_{\cG},E_{\cG},[k],\{\pi_e\}_{e \in E})$ with $\sval(\cG) = 1 -\e$,
and outputs a set $S' \subseteq V_{\cG} $ such that value of the \uniquegames~instance induced 
in $S'$ is $1$ and\footnote{$\tbigo{k}$ denotes $\bigo{k\ {\sf polylog}(k)}$} 
 $\Abs{S'}/n \geq \paren{1 - \tbigo{k^2} \e \sqrt{\log n}}$. 
\end{theorem}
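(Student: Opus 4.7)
The plan is to reduce \strug{} to an instance of small-set vertex-expansion on an auxiliary graph that encodes the UG consistency requirements, and then invoke the standard $O(\sqrt{\log n})$-approximation for small-set vertex-expansion to recover a small deletion set. This matches both the $\sqrt{\log n}$ factor in the statement and the ``technical core'' description in the abstract.

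First, I would write a natural SDP relaxation for \strug. For each vertex $v \in V_{\cG}$ and each label $i \in [k]$ introduce a vector $v_i$, enforce $\inprod{v_i, v_j} = 0$ for $i \ne j$ and $\sum_{i=1}^{k} \norm{v_i}^2 \le 1$; the slack $y_v := 1 - \sum_i \norm{v_i}^2$ is interpreted as the fractional deletion mass of $v$. For each edge $(u,v,\pi_e) \in E_{\cG}$ enforce UG consistency through inner-product constraints such as $\inprod{u_i, v_{\pi_e(i)}} = \norm{u_i}^2$ conditioned on both endpoints being kept (with the standard triangle-inequality-type slack terms to handle deletion). The objective minimizes $\sum_v y_v$. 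Any integral solution with $\sval(\cG) = 1-\e$ yields an SDP value of at most $\e n$.

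Second, I would round this SDP by adapting a Louis--Raghavendra--Vempala-style rounding for small-set vertex-expansion. The idea is to project the SDP vectors onto Gaussian directions, flag vertices whose label-mass distribution is too diffuse or whose incident edges exhibit SDP-disagreement, and add both endpoints of each such problematic edge to a candidate deletion set $T$. The SSVE analysis then yields $\Ex\abs{T} \le \tbigo{k^2 \, \e \, \sqrt{\log n}} \cdot n$, where the $\sqrt{\log n}$ factor comes from the hyperplane/ARV-style analysis and the $k^2$ factor comes from a union bound over pairs of labels at the endpoints of each edge.

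Third, on the surviving set $S' := V_{\cG} \setminus T$, read off an assignment $\sigma : S' \to [k]$ from the SDP, e.g.\ $\sigma(v) = \arg\max_{i} \norm{v_i}^2$ with appropriate tie-breaking. By construction of $T$, every surviving UG edge is SDP-consistent, and the rounding guarantees $\pi_e(\sigma(u)) = \sigma(v)$ on every such edge, so the UG instance induced on $S'$ is completely satisfiable, as required.

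The main obstacle will be the joint analysis that simultaneously controls the expected deletion mass of $T$ and guarantees that the assignment on the surviving set is genuinely consistent on every surviving edge. A naive Gaussian projection only ensures that edges on which the SDP vectors are close are satisfied with high probability (in the CMM sense); obtaining deterministic consistency after deletion forces one to engineer $T$ to absorb the failure events of every surviving incident edge. It is precisely the union bound over pairs $(i,j) \in [k] \times [k]$ at edge endpoints, combined with the per-edge SSVE charging argument, that appears to produce the $\tilde{O}(k^2)$ loss; sharpening this step is likely where the bound can be further improved.
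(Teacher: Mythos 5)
Your high-level framing (an SDP with per-label vectors, orthogonality within a cloud, and a connection to small-set vertex expansion, with the $\sqrt{\log n}$ coming from ARV/$\ell_2^2$-separator machinery) matches the paper's motivation, but the proposal has two genuine gaps precisely at the steps the paper spends most of its effort on. First, a one-shot SSVE-style rounding does not produce what you need: bicriteria small-set roundings (whether LRV-style Gaussian projection or the Louis--Makarychev hypergraph separators) return a set of measure only $\Theta(\delta n)$ of the label-extended graph, not a labeling of almost all of $V_{\cG}$, and nothing in the rounding forces the set to be non-repeating (at most one label per cloud). The paper resolves this by an \emph{iterative} CMM-style procedure: in each round it samples a hypergraph $m$-orthogonal separator (with $m \approx 10k$, which is where one factor of $k$ enters, the other coming from the SDP objective bound $2\epsilon nk$), keeps only uniquely-labeled vertices, deletes the vertices whose associated hyperedge is cut, and repeats $O((1/\alpha)\log n)$ times; an iterative invariant (Lemma \ref{lem:iter}) guarantees that vertices labeled in different rounds never share a constraint, and a per-vertex charging argument against cut hyperedges (Lemma \ref{lem:V-bound}) bounds the total deletion by the separator distortion $\tbigo{k}\sqrt{\log n}$ times the SDP value. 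Your proposal has no mechanism playing the role of this invariant, so even if each ``surviving'' edge were individually fine, you could not conclude that the union of decisions made across the graph is globally satisfiable.

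Second, the decoding step $\sigma(v) = \arg\max_i \norm{v_i}^2$ is asserted to be consistent on every surviving edge, but SDP-consistency of the vectors on an edge does not imply argmax-consistency: the mass $\norm{v_i}^2$ can be spread nearly uniformly over many labels in a way that is perfectly correlated across the edge (e.g.\ a cyclic-shift constraint with uniform label mass), and the argmaxes at the two endpoints can then disagree while all your inner-product constraints hold up to any reasonable slack. This is exactly why the known \uniquegames{} algorithms decode via (orthogonal-)separator sampling with probabilistic guarantees rather than argmax, and why the paper must delete the internal boundary of each sampled set and charge those deletions to the objective rather than claim deterministic per-edge success. You correctly identify this joint ``control $|T|$ and get deterministic consistency'' issue as the main obstacle, but the proposal leaves it unresolved, and resolving it is essentially the content of the paper's proof (Lemmas \ref{lem:Vf-sat}, \ref{lem:iter}, \ref{lem:V-bound}, together with the termination bound of Lemma \ref{lem:orth-sep-prop}).
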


\begin{theorem}				
\label{thm:partial-ug-2}
There exists a polynomial time randomized algorithm which takes as input an instance 
of \stronguniquegames~$\cG(V_{\cG},E_{cG},[k],\{\pi_e\}_{e \in E})$ with $\sval(\cG) = 1 -\e$ and the
graph $(V_{\cG},E_{\cG})$ having maximum vertex degree $d$,
and outputs a set $S' \subseteq V_{\cG} $ such that value of the \uniquegames~instance induced 
in $S'$ is $1$ and $\Abs{S'}/n \geq \paren{1 - \tbigo{k^2} \sqrt{\e \log d}}$. 
\end{theorem}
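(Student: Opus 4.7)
The plan is to exploit a new reduction from \stronguniquegames{} to a small-set vertex-expansion style rounding problem. I would first write an SDP relaxation in which, for each vertex $v$ and label $i \in [k]$, one keeps vectors $u_{v,i}$ encoding a fractional labeling and, in parallel, a ``deletion'' vector $y_v$, with the global normalization $\sum_i \norm{u_{v,i}}^2 + \norm{y_v}^2 = 1$. The edge constraint for $(u,v) \in E_{\cG}$ enforces that either an endpoint is marked for deletion or the labels match under $\pi_{(u,v)}$; a convenient form is $\sum_i \inprod{u_{u,i}, u_{v,\pi_{(u,v)}(i)}} + \tfrac{1}{2}\paren{\norm{y_u}^2 + \norm{y_v}^2} \geq 1$, supplemented by the $\ell_2^2$-triangle inequalities from a constant number of Lasserre rounds. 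The objective is to minimize $\sum_v \norm{y_v}^2$. In the $\sval(\cG) = 1 - \e$ case, placing $y_v = \mathbf{1}$ on the $\e n$ vertices outside the optimal satisfiable set and labeling the rest consistently certifies that the SDP optimum is at most $\e$.

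The rounding has two stages. The first stage performs a correlated rounding: each vertex $v$ is either assigned a single label drawn from the distribution encoded by $\{u_{v,i}\}_i$, or declared tentatively deleted with probability proportional to $\norm{y_v}^2$. Analyzing surviving edges using standard Charikar--Makarychev--Makarychev machinery shows that the set $B$ of surviving edges whose endpoints disagree has small expected size. The crucial new step is the second stage: the SDP itself witnesses a small fractional vertex cover of the endpoints of $B$ via the $y$-variables, so the subinstance of edges we still have to kill looks, at the SDP level, like a small-set vertex-expansion instance on a subgraph of maximum degree $d$. Applying a Louis--Raghavendra-style SSVE rounding to this subinstance extracts an integral vertex cover of size $\tbigo{k^2}\sqrt{\e \log d}\cdot n$; adjoining this cover to the tentative deletion set yields the final $S'$, whose complement induces a fully satisfiable \uniquegames{} instance.

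The main obstacle is ensuring that the final bound scales with $\sqrt{\log d}$ rather than $\sqrt{\log n}$. A naive composition of CMM-rounding followed by SSVE-rounding would pay $\sqrt{\log n}$ already in the labeling step, merely reproducing Theorem~\ref{thm:partial-ug-1}. The resolution is to make the labeling rounding ``lazy'' at vertices where the SDP distribution over labels is spread out: a vertex commits to a label only when its largest coordinate mass exceeds a threshold calibrated to $d$, and otherwise it is \emph{passed down} to the vertex-cover rounding on the bounded-degree residual graph. Controlling the resulting $k^2$ factor via a union bound over label pairs in the disagreement event, and showing that the two stages compose without double-counting deletions (so that the total expected fraction of deleted vertices remains $\tbigo{k^2}\sqrt{\e \log d}$), is where the bulk of the technical work lies.
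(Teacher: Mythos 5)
There is a genuine gap at the heart of your second stage. You assert that after the CMM-style correlated rounding, ``the SDP itself witnesses a small fractional vertex cover of the endpoints of $B$ via the $y$-variables.'' It does not. The $y$-variables have total mass at most $\e n$ and certify only the \emph{optimal} vertex-deletion set; the edges violated by \emph{your} stage-one rounding are controlled (in any CMM-type analysis) by the label-vector distances $\sum_i \norm{u_{u,i} - u_{v,\pi_{(u,v)}(i)}}^2$, not by $\norm{y_u}^2 + \norm{y_v}^2$. An edge both of whose endpoints have $y_u = y_v = 0$ but spread-out label distributions can perfectly satisfy your edge constraint and still be rounded inconsistently, so the disagreement set $B$ carries no SDP witness of size $\tbigo{k^2}\sqrt{\e\log d}\,n$, and the ``Louis--Raghavendra-style SSVE rounding'' in stage two has nothing of the right value to round. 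This is not a technicality: converting an edge-violation guarantee into a vertex-deletion guarantee is exactly the crux of the theorem. CMM-type labeling only bounds the violated edges by $\bigo{\sqrt{\e\log k}}\Abs{E_{\cG}}$, and a vertex cover of such a set in a degree-$d$ graph can be far larger than $\sqrt{\e\log d}\,n$; the ``lazy labeling'' device you invoke to rescue this is asserted, not analyzed, and it is precisely where the $\sqrt{\log d}$ would have to come from.

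The paper avoids this trap by never rounding labels edge-by-edge. It works entirely in the label-extended graph and its derived hypergraph, solves one SDP whose objective is the hypergraph cut function with orthogonality constraints inside each cloud, and rounds iteratively with the $\ell_2$-$\ell_2^2$ hypergraph orthogonal separators of Louis--Makarychev (Theorem~\ref{thm:hos-2}). Every \stronguniquegames~vertex that gets deleted is charged to the event that \emph{its own} hyperedge (encoding its closed neighborhood in the label-extended graph) is cut by the separator, and the separator's cut probability has an $\ell_2$ term whose distortion scales as $\sqrt{\log r}$ in the arity $r = \bigo{d}$; a Cauchy--Schwarz aggregation over hyperedges (Lemma~\ref{lem:cut-bound}) then yields the $\tbigo{k^2}\sqrt{\e\log d}$ deletion bound, while full satisfiability of the output is maintained by deleting the internal boundary of the separator at every iteration (Lemma~\ref{lem:iter}). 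Your deletion-variable SDP plus Lasserre rounds is a reasonable starting relaxation, but to make your route work you would need (i) an SDP witness for the \emph{residual} vertex-deletion problem created by stage one, and (ii) a proof that the lazy-labeling threshold keeps the stage-one violations incident to surviving vertices below $\sqrt{\e\log d}\,n$; neither is supplied, and without them the claimed bound does not follow.
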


We also give an algorithm that gives slightly improved guarantees (compared to Theorem \ref{thm:partial-ug-1}), for which the running time has polynomial dependence on $1/\epsilon$, as stated in the following theorem.

\begin{theorem}				
	\label{thm:partial-ug-3}
	There exists a  randomized algorithm which takes as input an instance 
	of \stronguniquegames~$\cG(V_{\cG},E_{cG},[k],\{\pi_e\}_{e \in E})$ with $\sval(\cG) = 1 -\e$ and the
	graph $(V_{\cG},E_{\cG})$ having maximum vertex degree $d$,
	and outputs a set $S' \subseteq V_{\cG} $ such that value of the \uniquegames~instance induced 
	in $S'$ is $1$ and $\Abs{S'}/n \geq \paren{1 - \tbigo{k^2} \e\sqrt{ \log d \e n}}$. The algorithm runs in time ${\rm poly}(n,k,1/\epsilon)$.
\end{theorem}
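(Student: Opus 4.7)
The plan is to re-use the reduction from \stronguniquegames~to small-set vertex-expansion on the label-extended graph that underlies Theorems \ref{thm:partial-ug-1} and \ref{thm:partial-ug-2}, but to replace the global SDP rounding by a \emph{volume-sensitive} rounding whose loss scales with the volume of the deleted set rather than with $n$, and to solve the relaxation by an iterative scheme whose number of iterations (and hence running time) is polynomial in $1/\epsilon$.

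First I would construct the label-extended graph $\cH$ on $V_\cG \times [k]$ in which each UG edge $(u,v)$ with bijection $\pi = \pi_{(u,v)}$ contributes the agreement edges $\{(u,i),(v,\pi(i))\}$ for $i \in [k]$, and write down the same convex relaxation used for Theorem \ref{thm:partial-ug-1}. A $(1-\epsilon)$-satisfiable induced subinstance corresponds to a conflict-free ``label-cloud'' in $\cH$ (a transversal picking at most one vertex of $\{v\}\times[k]$ per $v\in V_\cG$) whose boundary in $\cH$ has volume $O(kd\epsilon n)$, so the SDP has a feasible solution of $\ell_2^2$ volume $O(d\epsilon n)$.

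I would then round this feasible SDP solution by a volume-sensitive procedure --- of the flavour of the $\sqrt{\log V}$-factor small-set expansion roundings of Raghavendra--Steurer--Tetali and Bansal--Feige--Krauthgamer--Makarychev--Naor --- whose loss is controlled by $\sqrt{\log V}$, where $V$ is the SDP volume of the recovered set, rather than by $\sqrt{\log n}$. Plugging in $V = O(d\epsilon n)$ and carrying through the $\tilde{O}(k^2)$ label-extension overhead (by the same union bound over labels and conflicting pairs used in Theorem \ref{thm:partial-ug-1}) produces the claimed $\tilde{O}(k^2)\,\epsilon\,\sqrt{\log(d\epsilon n)}$ bound. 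The $\mathrm{poly}(n,k,1/\epsilon)$ running time comes from solving the relaxation via a multiplicative-weights / matrix-exponential scheme whose iteration count is polynomial in $1/\epsilon$, instead of a generic interior-point SDP solver.

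The main obstacle will be arranging the volume-sensitive rounding so that it simultaneously (i) respects UG label consistency (i.e.\ returns a genuine label-cloud, not an arbitrary subset of $\cH$), and (ii) charges its loss only against the volume of the bad set rather than the full volume of $\cH$, without smuggling a $\sqrt{\log n}$ factor back in through the union bound over conflicting label pairs. Handling (i) will likely require a ``majority label'' post-processing step as in Theorem \ref{thm:partial-ug-1}, while handling (ii) together with the $\mathrm{poly}(1/\epsilon)$ time budget is precisely what forces the iterative, small-set-tailored solver rather than a black-box SDP solve.
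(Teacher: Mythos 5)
Your high-level plan points in the right direction -- keep the label-extended construction, the SDP, and the iterative label-consistent rounding of Theorem \ref{thm:partial-ug-1}, and replace only the source of the $\sqrt{\log n}$ factor by something that scales with the SDP cost rather than with $n$ -- but the proposal is missing the ingredient that actually achieves this, and that ingredient is essentially the whole content of the paper's proof (Section \ref{sec:separator}). The paper does not change the rounding framework at all: it constructs an \emph{improved hypergraph orthogonal separator} (Theorem \ref{thm:hos-3}) with distortion $\bigo{\beta^{-1} m \log m \log\log m \sqrt{\log (dM/\beta)}}$, where $M$ is the SDP objective, and then plugs it into Algorithm \ref{alg:ver} verbatim. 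The mechanism is the Feige et al.~\cite{FLH08} $\sqrt{\log \mathrm{opt}}$ trick: define the hypergraph shortest-path metric $d_{\sdp}$ induced by the $\ell_2^2$ edge lengths, greedily build an $\epsilon$-net of size $\bigo{dM/\epsilon}$ (Lemma \ref{lem:greedy-ball}, via a ball-carving argument in which every ball is charged $\geq \epsilon/4$ of hyperedge weight and each hyperedge is charged at most $d$ times), and run the single-scale embedding theorem of \cite{ALN08} on the net rather than on all of $V$, so that its $\sqrt{\log N}$ becomes $\sqrt{\log(dM/\Delta)}$ (Theorem \ref{thm:separators}). Your appeal to ``volume-sensitive roundings of the RST/BFKM flavour'' does not supply this: those roundings give a dependence of type $\sqrt{\log(1/\delta)}$ on the \emph{target set-size fraction} for edge expansion, which here would be $\sqrt{\log(1/\epsilon)}$, not the $\sqrt{\log(d\epsilon n)}$ in the theorem; moreover they are not hypergraph orthogonal separators, so the properties your own obstacle (i) relies on -- the per-vertex sampling probability, the orthogonal-pair bound, and the per-hyperedge cut bound that drive Lemmas \ref{lem:Vf-sat}, \ref{lem:unique}, \ref{lem:orth-sep-prop} and \ref{lem:V-bound} -- would not carry over. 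You correctly flag obstacles (i) and (ii), but resolving them is exactly the step you leave open.

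A secondary but genuine error is your account of the running time. Solving the StrongUG-SDP by interior point is already $\mathrm{poly}(n,k)$; no multiplicative-weights solver is needed, and it would not explain the $1/\epsilon$ dependence. In the paper the $\mathrm{poly}(n,k,1/\epsilon)$ bound comes from the separator construction itself, whose running time is $\mathrm{poly}(|E|,|E|/M)$ because the $\epsilon$-net and the net-based embedding are computed at scales tied to the SDP value $M$; since $M = \Theta(\epsilon n k)$ in the relevant regime, this is $\mathrm{poly}(n,k,1/\epsilon)$ and is precisely why this theorem, unlike Theorems \ref{thm:partial-ug-1} and \ref{thm:partial-ug-2}, is not stated as a polynomial-time algorithm in $n$ and $k$ alone.
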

 
Raghavendra and Steurer \cite{RS10} (see also \cite{RST12}) showed a close connection between the 
\uniquegames~problem and small-set {\em edge}-expansion in graphs. 
The technical core of our results is a new connection between \stronguniquegames~and
small-set {\em vertex}-expansion (see Section \ref{sec:prelim} for the definitions of
these expansion quantities).

Complementing these algorithmic results, we proving the following hardness for \stronguniquegames.
\begin{restatable}{rethm}{hardness}		
	\label{thm:strong-ug-informal}
	Fix $\epsilon \in (0,1)$ and let  $k \geq k(\epsilon)$ and $d \geq \left[C \epsilon^{-2} \log k, d(\epsilon)\right]$. 
	Assuming the Unique Games Conjecture (Conjecture \ref{conj:ug}), the following holds. 
	Given a \stronguniquegames~instance $\cG(V_{\cG},E_{\cG},[k],\{\pi_e\}_{e \in E_{\cG}})$
	such that the underlying graph $(V_{\cG},E_{\cG})$ has maximum degree $d$,
	it is \NP-Hard to distinguish between the following cases. 
	\begin{itemize}
		\item {\bf YES}: $\sval(\cG) \geq 1 - \epsilon$.
		\item {\bf NO}: $\sval(\cG) < \min\left(1 - C'\sqrt{\epsilon \log d \log k},0.01\right)$.
	\end{itemize}
	where $C, C' > 0$ are absolute constants and $k(\epsilon),d(\epsilon)$ are constants depending only on $\epsilon$.
\end{restatable}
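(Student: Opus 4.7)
The plan is to reduce from \ug~(which is NP-hard under Conjecture \ref{conj:ug}) to \stronguniquegames~via a long-code-style gadget, in the spirit of the Khot-Kindler-Mossel-O'Donnell Max-Cut hardness and the Khot-Regev Vertex Cover hardness. First, fix a \ug~instance $\cG_0$ on alphabet $[k_0]$ with a hard gap of $\val \geq 1-\eta_c$ versus $\val \leq \eta_s$, where $\eta_c, \eta_s$ are chosen as appropriate polynomial functions of $\e$. Next, build the target instance $\cG$ by replacing each vertex $v$ of $\cG_0$ with a cloud $C_v$ isomorphic to a noisy long code over an alphabet $[q]$ of block length $T$, and each edge $(u,v) \in E(\cG_0)$ with permutation $\pi_{(u,v)}$ by a bipartite graph on $C_u \times C_v$ whose edges are the $\rho$-correlated pairs twisted by $\pi_{(u,v)}$. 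Tuning $T = \Theta(\log d)$, keeping $q$ constant, and choosing $\rho = \rho(\e)$ appropriately forces $\cG$ to have maximum degree $d$ and final alphabet size $q^T \leq k$, matching the hypotheses on $d$ and $k$.

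For \textbf{completeness}, given $\sigma: V(\cG_0) \to [k_0]$ of value $\geq 1 - \eta_c$, take $S$ to be the union of the dictator encodings of $\sigma(v)$ in each cloud $C_v$, minus both endpoint clouds of any $\cG_0$-edge that $\sigma$ violates, together with a small $\rho$-noise-budget subset of intra-cloud vertices. A direct counting argument gives $\Abs{S}/\Abs{V(\cG)} \geq 1 - \e$ with the \ug~induced on $S$ fully satisfiable.

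The technical core is \textbf{soundness}. Suppose $\sval(\cG) \geq 1 - \delta$ with $\delta < C'\sqrt{\e \log d \log k}$; let $S$ be the witness and let $\tau$ satisfy all constraints of the \ug~induced on $S$. For each cloud view $\tau$ as a partial long-code function, complete it via independent uniform random labels on $C_v \setminus S$, and observe that the noisy long-code test still accepts with probability $\geq 1 - O(\delta)$ in expectation over edges. A Majority-Is-Stablest / Mossel-style invariance analysis then extracts, for each $v$, a short list $L(v) \subseteq [k_0]$ of coordinates of non-trivial influence, where the Gaussian tail bound fixes the influence threshold at $\Theta\paren{1/\sqrt{T \log q}} = \Theta\paren{1/\sqrt{\log d \log k}}$. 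A random decoding $v \mapsto $ a uniform element of $L(v)$ then yields a labeling of $\cG_0$ of value $\gtrsim \delta^2/(\e \log d \log k)$, contradicting the hardness of $(1-\eta_c)$-vs-$\eta_s$ distinguishing once $\delta$ falls below $C'\sqrt{\e \log d \log k}$.

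The main obstacle I expect is the partial-labeling aspect of Strong UG inside the invariance framework: $\tau$ covers only $S$, not the full cloud, and naive completion can blow up coordinate influences and spoil the Gaussian tail bound. I would address this by randomized completion combined with a low-degree truncation of the indicator of $S \cap C_v$, so that the relevant Fourier weight stays bounded under the noise operator. The extra $\sqrt{\log d}$ factor over the classical KKMO $\sqrt{\log k}$ is precisely the price of the degree-$d$ constraint: smaller $T \sim \log d$ gives a coarser long code with a weaker Gaussian tail, and balancing these against $\e$ yields the claimed hardness bound; the lower bound $d \geq C \e^{-2} \log k$ in the theorem statement is exactly what is needed for $T$ to be large enough that the invariance principle goes through.
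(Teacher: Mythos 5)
Your construction is essentially the ``first attempt'' that the paper describes and then discards: a standard noisy long-code gadget with pairwise $\rho$-correlated constraints. The obstruction is quantitative and it is fatal for the claimed bound. In the vertex-deletion setting, completeness requires a set $S$ covering a $1-\e$ fraction of vertices on which \emph{every} induced constraint is satisfied; with a noisy gadget sparsified to degree $d$, a dictator-labeled vertex survives only if none of its $d$ incident correlated edges resamples the dictator coordinate, which happens with probability roughly $1-\eta d$ at noise rate $\eta$. You are therefore forced to take $\eta \approx \e/d$, and a Majority-Is-Stablest/invariance soundness analysis of a pairwise test then only rules out assignments of value better than $1-O(\sqrt{(\e/d)\log k})$, a strictly weaker gap than the claimed $1-\Omega(\sqrt{\e \log d \log k})$. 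Relatedly, your claimed influence threshold $\Theta(1/\sqrt{T\log q})$ with block length $T=\Theta(\log d)$ has no mechanism behind it: in invariance-based soundness the loss is governed by the noise rate and the measure $1/k$ of the decoded sets, not by the block length, and block length also does not control the degree of a correlated-pairs graph --- degree reduction requires an explicit subsampling step with a union bound over all right labelings, as in Theorem \ref{thm:deg-red}.

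The paper obtains the $\sqrt{\log d}$ factor by a genuinely different mechanism. It first proves hardness of a $d$-ary hypergraph variant of \ug~via a dictatorship test over a purpose-built Markov chain gadget (Section \ref{sec:gadget}): each alphabet block has a heavy $s_i$ vertex connected only to a light $t_i$ vertex, and the $t_i$'s form an expander. This gadget has dictator cuts with \emph{vertex} expansion $O(\e)$ independent of $d$ (which repairs completeness) while retaining spectral gap $\Omega(\e)$, and the soundness of the $d$-ary test is analyzed via the Exchangeable Gaussians Theorem of Isaksson and Mossel (Theorem \ref{thm:bound}): the probability that $d$ correlated Gaussian copies all land in a set of measure $1/k$ is at most $\frac1k\bigl(1-\Omega(\sqrt{\e\log d\log k})\bigr)$, which is precisely where $\sqrt{\log d}$ enters. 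The \hug~instance is then converted to \strug~through a bipartite reformulation (Theorem \ref{thm:bipartite-ug}), randomized sparsification to degree ${\rm poly}(d)$ (Theorem \ref{thm:deg-red}), and the Khot--Regev composition (Lemma \ref{lem:strong-ug}); it is this composition, rather than a randomized completion and low-degree truncation of partial long codes, that handles the requirement that the game induced on $S$ be fully satisfiable. Without a gadget whose dictator cuts have small vertex expansion and without the $d$-ary isoperimetric analysis, your approach cannot reach the stated trade-off.
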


Analogous to this, 
Khot et al. \cite{KKMO07} showed that assuming the Unique Games conjecture, 
it is NP-hard to distinguish between the following cases of an instance $\cG$ of 
\uniquegames~having alphabet size $k$ (i) $\val(\cG) \geq 1 - \epsilon$, and, (ii)
$\val(\cG) \leq 1 - \bigo{\sqrt{\epsilon \log k}}$.
Charikar et al. \cite{CMM06a} gave a polynomial time algorithm that takes as input 
an instance $\cG$ of \uniquegames~having alphabet size $k$ and $\val(\cG) \geq 1 - \epsilon$,
and outputs an assignment satisfying $1 - \bigo{\sqrt{\epsilon \log k}}$ fraction of the 
constraints.
Therefore, the Unique Games hardness of the \uniquegames~problem had been
settled (up to constant factors). 
To the best of our knowledge, our work (Theorem \ref{thm:strong-ug-informal}) is the first to suggest that the 
\stronguniquegames~problem might be strictly harder to approximate than the \uniquegames~problem.

\subsubsection*{Odd cycle transversal}
Given an undirected graph $G=(V,E)$ the \oddcycletransversal~problem,
an NP-hard problem, 
asks to delete the least fraction of vertices to make the induced graph on the 
remaining vertices bipartite. Since this problem is a special case of the 
\stronguniquegames~problem, we get the following result.
\begin{corollary}[Corollary to Theorem \ref{thm:partial-ug-2}]
\label{cor:oct}
There exists a polynomial time randomized algorithm which takes as input 
a $G= (V,E)$ having maximum vertex degree $d$, and outputs a set $S$ such
the graph induced on $V \setminus S$ is bipartite, and 
	$\Abs{S}/n \leq \bigo{\sqrt{\opt \log d}}$.
Here $\opt$ denotes the fraction of vertices in an optimal set of vertices for the
\oddcycletransversal~problem on $G$.
\end{corollary}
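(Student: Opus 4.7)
The plan is to exhibit \oddcycletransversal~as a special case of \stronguniquegames~with alphabet size $k = 2$, and then directly invoke Theorem \ref{thm:partial-ug-2}.

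Given $G=(V,E)$, I will construct the \stronguniquegames~instance $\cG(V, E, [2], \{\pi_e\}_{e \in E})$ by taking the same underlying graph and setting $\pi_e$ to be the nontrivial bijection on $[2]$ (the swap map $0 \leftrightarrow 1$) for every edge $e \in E$. Under this choice, a labeling $\sigma : S \to [2]$ satisfies all edges within a subset $S \subseteq V$ if and only if $\sigma$ is a proper $2$-coloring of the induced subgraph $G[S]$, which is equivalent to $G[S]$ being bipartite. Consequently, the maximum $S$ on which all constraints are satisfied is exactly the complement of a minimum odd cycle transversal, so $\sval(\cG) = 1 - \opt$, where $\opt$ is the optimal \oddcycletransversal~value on $G$.

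Next, I would apply Theorem \ref{thm:partial-ug-2} to $\cG$ with $\e = \opt$ and the maximum degree bound $d$ inherited from $G$. The theorem outputs in polynomial time a set $S' \subseteq V$ such that the induced \uniquegames~instance on $S'$ has value $1$ (i.e., $G[S']$ is bipartite), with
\[
\frac{|S'|}{n} \;\geq\; 1 - \tbigo{k^2}\sqrt{\e \log d} \;=\; 1 - \bigo{\sqrt{\opt \log d}},
\]
since the alphabet size is the absolute constant $k = 2$, which absorbs the $\tbigo{k^2}$ factor into the $\bigo{\cdot}$. Taking $S := V \setminus S'$ then gives a set whose removal leaves a bipartite graph and satisfies $|S|/n \leq \bigo{\sqrt{\opt \log d}}$, as claimed.

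There is no real obstacle here: the only things to check are the mild points that the reduction preserves the max-degree parameter $d$, that $k$ is a constant so the $\tbigo{k^2}$ term disappears, and that the output of Theorem \ref{thm:partial-ug-2} (which guarantees the induced instance has value $1$) translates under our bijections into $G[S']$ being properly $2$-colorable, hence bipartite.
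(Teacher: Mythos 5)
Your proposal is correct and matches the paper's own proof: both encode \oddcycletransversal~as a \stronguniquegames~instance with $k=2$ by placing the swap (inequality) bijection on every edge, observe that a fully satisfiable induced sub-instance corresponds exactly to a bipartite induced subgraph, and then invoke Theorem \ref{thm:partial-ug-2} with constant alphabet size so the $\tbigo{k^2}$ factor is absorbed. No gaps.
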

Agarwal et al. \cite{ACMM05} gave an $\bigo{\sqrt{\log n}}$ approximation algorithm 
for the \oddcycletransversal~problem.
The approximation guarantee in Corollary \ref{cor:oct} is better than this 
when $\opt$ is ``large'' and $d$ is ``small''. 
Furthermore, the $\bigo{\sqrt{\log n}}$-approximation guarantee is only meaningful when $\opt = \bigo{1/\sqrt{\log n}}$,
whereas Corollary \ref{cor:oct} gives a non-trivial guarantee when $\opt = \bigo{1/\log d}$,
the latter range can be much larger than the former range for ``small'' values of $d$.

Complementing Corollary \ref{cor:oct}, we give matching Unique Games hardness (up to constant factors).
\begin{theorem}
\label{thm:oct-hardness}
Let $\epsilon \in (0,1)$ and $d \geq C \epsilon^{-2}$ be fixed constants. 
Assuming the Unique Games Conjecture (Conjecture \ref{conj:ug}), the following holds. 
Given a graph $G = (V,E)$ having maximum degree $d$, it is \NP-Hard to distinguish between
the following cases. 
	\begin{itemize}
		\item {\bf YES}: $\opt \leq \epsilon$.
		\item {\bf NO}: $\opt \geq C'\sqrt{\epsilon \log d}$.
	\end{itemize}
	where $\opt$ denotes the optimal fraction of vertices for the \oddcycletransversal~problem 
	and $C, C' > 0$ are absolute constants. 

\end{theorem}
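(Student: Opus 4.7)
The plan is to use the same high-level strategy as Theorem~\ref{thm:strong-ug-informal}: reduce from a hard instance of \uniquegames~via a long-code gadget and argue completeness/soundness. The key adaptation for odd cycle transversal is to use a gadget whose induced constraint is directly a $2$-coloring (``not equal'') problem, bypassing the intermediate large-alphabet \stronguniquegames~instance; this removes the $\sqrt{\log k}$ factor from the gap, matching Corollary~\ref{cor:oct} up to a constant.

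I would begin by invoking Conjecture~\ref{conj:ug} to obtain a regular \uniquegames~instance $\cG(V,E,[k_0],\{\pi_e\})$ with completeness $1-\epsilon_c$ and soundness $\epsilon_s$, where $k_0 = k_0(\epsilon)$ is sufficiently large and $\epsilon_c, \epsilon_s$ depend only on $\epsilon$. Construct the reduced graph $G = (V_G, E_G)$ on $V_G = V \times \{-1,+1\}^{k_0}$ by inserting, for each edge $(u,v)$ of $\cG$ with permutation $\pi$, a weighted bundle of edges drawn from the following distribution: sample $x \in \{-1,+1\}^{k_0}$ uniformly, sample a noise vector $z \in \{-1,+1\}^{k_0}$ with $\Pr[z_i = -1] = 1 - \rho$ independently, set $y_i = z_i \cdot x_{\pi^{-1}(i)}$, and add the edge $\{(u,x),(v,y)\}$. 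The noise parameter $\rho$ is tuned so that small $\rho$ makes the dictator labeling nearly bipartite while leaving enough room for the soundness analysis.

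For \textbf{completeness}, given a UG assignment $\sigma$ satisfying $1-\epsilon_c$ of the constraints, consider the $2$-coloring $(v,x) \mapsto x_{\sigma(v)}$. On a satisfied UG edge, $y_{\sigma(v)} = z_{\sigma(v)} \cdot x_{\sigma(u)}$, so the $G$-edge is properly colored whenever $z_{\sigma(v)} = -1$, i.e., with probability $1-\rho$. A routine averaging shows that deleting the vertices incident to an above-average number of monochromatic edges (an $O(\rho + \epsilon_c)$ fraction of $V_G$) produces a bipartite induced subgraph, and choosing $\rho, \epsilon_c = \Theta(\epsilon)$ certifies $\opt \leq \epsilon$. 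For \textbf{soundness}, suppose there is a deletion set of size less than $C'\sqrt{\epsilon \log d}$ that leaves a bipartite subgraph. The resulting partial $\pm 1$ function $f_v$ on each cube, together with the correlated distribution defining the edges of $G$, can be analysed via noise stability and hypercontractivity for the $\rho$-biased product: the absence of short odd cycles should force $f_v$ to correlate noticeably with a single dictator $x \mapsto x_i$ for some $i = i(v)$. Decoding $\sigma(v) = i(v)$ via a standard influence-based rounding yields $\val(\cG) \geq \epsilon_s$, contradicting UG-hardness.

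To enforce the degree upper bound, sparsify $G$ by keeping each candidate edge independently with probability $p = d/\bar{d}$, where $\bar{d}$ is the average degree of the unsparsified gadget. Chernoff bounds with a union bound over the candidate deletion sets of size $O(\sqrt{\epsilon \log d}) |V_G|$ show that both completeness and soundness survive sparsification, and working out the tail estimate gives precisely the hypothesis $d \geq C\epsilon^{-2}$. \textbf{The main obstacle} I anticipate is the soundness step. Unlike Max-Cut style reductions, here the coloring $f_v$ is defined only on a large subset of the cube (not all of it), and the noise operator is $\rho$-biased rather than uniform. One therefore needs a partial-function Majority-is-Stablest estimate that handles both the vertex-deletion slack and the biased correlation structure of $G$, together with a careful calibration of $\rho$ so that the Fourier weight of $f_v$ localises on a single coordinate rather than a low-degree tail.
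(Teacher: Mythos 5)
There is a genuine gap, and it is exactly the issue the paper flags in its overview: the plain $\rho$-noisy Boolean hypercube is the wrong gadget for a \emph{vertex}-deletion problem. In your completeness step, a dictator coloring $(v,x)\mapsto x_{\sigma(v)}$ leaves each gadget edge monochromatic independently with probability $\approx\rho$; to make the induced subgraph bipartite you must delete a vertex cover of the monochromatic edges, and after sparsifying to degree $d$ a vertex is incident to at least one monochromatic edge with probability $\approx 1-(1-\rho)^d\approx\min(1,\rho d)$. With your choice $\rho=\Theta(\epsilon)$ and the regime $d\geq C\epsilon^{-2}$, this is essentially all of $V_G$, so "$\opt\leq\epsilon$" fails; deleting only vertices with an \emph{above-average} number of monochromatic edges does not remove the remaining monochromatic edges. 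If instead you calibrate $\rho=\Theta(\epsilon/d)$ to rescue completeness, the best soundness any noise-stability/Borell argument can give at noise $\rho$ is $O(\sqrt{\rho\log d})=O(\sqrt{(\epsilon/d)\log d})$, which collapses the gap by a $\sqrt{d}$ factor and cannot reach $C'\sqrt{\epsilon\log d}$. This is precisely why the paper does not reduce through the noisy hypercube: it builds the Markov-chain gadget of Section \ref{sec:gadget} whose dictator cuts have vertex expansion $O(\epsilon)$ \emph{independently of the $d$ sampled neighbors} (Lemma \ref{lem:dict-ns}, Claim \ref{cl:vert-exp}) while retaining spectral gap $\Omega(\epsilon)$ (Claim \ref{cl:M-prop}), and it uses a $d$-ary test so that the $\sqrt{\log d}$ in the NO case comes from the probability that all $d$ correlated Gaussian copies land in a set (Theorem \ref{thm:bound}, and for $k=2$ the constant-volume variant Proposition \ref{prop:line-bound-2}). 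Your two-query, per-edge test has no mechanism to generate the $\log d$ factor even if the gadget were fixed.

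A second,独立 concern is that your soundness step openly relies on an unproven "partial-function Majority-is-Stablest" for biased noise, where the coloring is only defined off the deleted set. The paper avoids needing any such invariance principle for partial functions: it proves hardness for \strug~over alphabet $\{\pm 1\}$ with linear (EQ/NEQ) constraints via the chain \ug~$\to$ \hug~$\to$ \sbug~$\to$ degree-reduced \sbug~$\to$ \strug~(Theorems \ref{thm:hyperug}, \ref{thm:bipartite-ug}, \ref{thm:deg-red}, Lemma \ref{lem:strong-ug}), where the deletion set is handled combinatorially through the set $S$ of retained vertices and folding of the long codes, and then converts EQ/NEQ constraints into a pure \oct~instance by the local doubling gadget of Theorem \ref{thm:oct-red} ($e\mapsto e^+,e^-$ with NEQ constraints), which only increases the degree from $d$ to $d+1$ and preserves $\sval$ up to a factor $2$ in the deleted fraction. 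If you want to pursue a direct hypercube-style reduction you would have to both replace the gadget by one with the small-vertex-boundary/dictator property and supply the missing partial-function stability theorem; as written, the proposal does not establish the theorem.
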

We note that Theorem \ref{thm:oct-hardness} doesn't follow as a direct corollary to
Theorem \ref{thm:strong-ug-informal} and requires some additional ideas.

The edge-deletion version of the \oddcycletransversal~problem asks to delete
the smallest fraction of edges to make the remaining graph bipartite. 
The seminar work of Goemans and Williamson \cite{GW94} gave an algorithm that 
deletes $\bigo{\sqrt{\opt}}$ fraction of edges, where $\opt$ denotes the optimal 
fraction of edges to be deleted. 
Khot et al. \cite{KKMO07} gave a matching (up to constant factors) Unique Games 
hardness for this problem. 
Theorem \ref{thm:oct-hardness} suggests that the \oddcycletransversal~problem might
be harder to approximate than its edge-deletion counterpart.

\subsection{Related Work}
\label{sec:related}
\paragraph{Unique Games.}
For a \uniquegames~instance of value $1 - \e$, Charikar et al. \cite{CMM06a} gave a
polynomial time algorithm that outputs an assignment satisfying $1 - \bigo{\sqrt{\e \log k}}$
fraction of the constraints. Chlamtac et al. \cite{CMM06} gave a polynomial time algorithm
that outputs an assignment satisfying $1 - \bigo{\e \sqrt{\log n \log k}}$ fraction of the 
constraints.

\paragraph{Small-set expansion.}
Raghavendra and Steurer \cite{RS10} (see also \cite{RST12}) showed a close connection between the 
\uniquegames~problem and \smallsetexpansion~in graphs. 
\cite{RST10a} and \cite{BFKM11} gave bi-criteria approximation algorithms for small set edge-expansion in graphs. 
\cite{lm16} gave approximation algorithms for small-set vertex-expansion in graphs and 
small-set expansion in hypergraphs; the core of their results was the construction of 
hypergraph orthogonal separators which we describe in detail in Theorem \ref{thm:hos-1} and Theorem \ref{thm:hos-2}. Feige et al.~\cite{FLH08} study the related problem of finding the minimum weight vertex separators in graphs for which they give a $\sqrt{\log {\rm opt}}$-approximation.

\paragraph{Hardness of approximation.}
Khot and Regev \cite{KR08} gave a reduction from \uniquegames~to \stronguniquegames,
and proved that Conjecture \ref{conj:ug} implies an analogous conjecture for the 
\stronguniquegames~problem; 
this also proved a constant factor inapproximability of \stronguniquegames~assuming the Unique Games Conjecture. 
We note that their reduction was not a factor preserving reduction.
Bhangale and Khot \cite{BK19} proved that for every $\epsilon > 0$, there exists a $k$ such
that the following holds: given a \uniquegames~instance of $(V,E,[k],\{\pi_e\}_{e \in E})$ 
where the underlying graph $(V,E)$ is a bipartite graph with $A,B \subset V$ as the two color
classes and is regular on the $A$ side, it is NP-hard to distinguish between the following two cases
(i) there exists $A' \subset A$ of size at least $(1/2 - \epsilon)|A|$ and an assignment
that satisfies all the edges incident on $A'$, and 
(ii) every assignment satisfies at most $\epsilon$ fraction of edges. 

There has been a extensive work on proving hardness of approximation results assuming
the Unique Games Conjecture, we describe some of the most relevant results here. 
Khot et al. \cite{KKMO07} showed that the assuming the Unique Games conjecture, 
it is NP-hard to distinguish between the following cases of an instance $\cG$ of 
\uniquegames~having alphabet size $k$ (i) $\val(\cG) \geq 1 - \epsilon$, and,  (ii)
$\val(\cG) \leq 1 - \bigo{\sqrt{\epsilon \log k}}$.

\cite{LRV13} proved that for the problem of computing the vertex expansion of graphs, 
it is SSE-hard (see \cite{RS10} for definition of the small-set expansion hypothesis)
to obtain an approximation guarantee better than $\bigo{\sqrt{OPT \log d}}$ on graphs of 
maximum degree $d$; their work also gave a matching (up to constant factors) approximation algorithm.

\paragraph{Partial Problems.}
"Strong" or "partial" versions of some other problems have been studied in the literature.
A graph $G(V,E)$ is said to be $\alpha$-partially $k$-colorable if there exists a set 
$S \subset V$ such that $\Abs{S} \geq \alpha \Abs{V}$ and the graph induced on $S$
is $k$-colorable. 
For $\alpha$-partially $3$-colorable graphs, approximation algorithms have been studied
in the context of pseudorandom instances \cite{kumar2018finding}, and worst case
and semi-random instances \cite{GLS19}. 
Agarwal et al. \cite{ACMM05} gave a $\bigo{\sqrt{\log n}}$ approximation algorithm for the 
\oddcycletransversal~problem.
This problem is fixed parameter tractable when parameterized by the number of bad vertices
\cite{RSV04,NRRS12}. Khot and Bansal~\cite{BK09} showed that assuming 
the Unique Games conjecture,
\oddcycletransversal~is NP-Hard to approximate to any constant factor.

\section{Overview}

\subsection{Approximation Algorithms for \stronguniquegames}
The first observation is that unlike the traditional \uniquegames~problem, \stronguniquegames~cannot be formulated as a Max-CSP, since the global constraint that the induced game on the set output by the algorithm has to be completely satisfiable. Consequently, off the shelf algorithms (and in particular, Raghavendra's optimal SDP+Rounding framework~\cite{Rag08,RS09}) known for solving such CSPs do not apply here. While \uniquegames~formulated as a Max-CSP does not shed much light in this setting, it's connection with graph partitioning problems turns out to be the main insight here. In particular, it shares tight connections with the \smallsetexpansion~problem; we will describe it in some detail here since this motivates the key ideas behind our approach. We begin by recalling the definition of the label extended graph.
\begin{definition}[Label extended graph]				
\label{defn:lab-gr}
Given a \uniquegames~instance $\mathcal{G}(V_{\cG},E_\cG,[k],\{\pi_{e,v}\}_{e \in E})$, 
the corresponding label extended graph $G= (V,E)$ is constructed as follows. The vertex set 
is given by $V \defeq \cup_{v \in V}\cC_v$ where $\cC_v \defeq \{v\} \times [k]$ is the 
cloud of vertices associated with $v \in V_\cG$. Furthermore, for every edge constraint $(u,v) \in E$, we add an edge between vertices $(u,i)$ and $(v,j)$ if $\pi_{(u,v)}(i) = j$. 
\end{definition}
Given a labeling $\sigma:V_\cG \to [k]$, one can naturally encode it as the following subset $S_\sigma := \set{ (a,\sigma(a))): a \in V_{\cG}}$. Furthermore, the goodness of a labeling for $\cG$ translates nicely into structural conditions for the corresponding subsets in the following way. Given a graph $G' = (V',E')$, we define the {\em edge expansion} of a set $S \subset V$ as 
\[ \phi^{E'}_{G'}(S) \defeq \frac{|\partial^{E'}_{G'}(S)|}{|S||V' \setminus S|}, \] 
where $\partial^{E'}_{G'}(S)$ is the set of edges in $E'$ with exactly one vertex in $S$. The following connection between satisfiability of assignments and expansion of sets in the label extended graph is well known.
\begin{proposition}[\cite{RS10}]			
\label{prop-sse}
	An assignment $\sigma:V_{\cG} \to [k]$ satisfies at least $(1 - \e)$-fraction of edges if and only if $\phi^E_G(S_\sigma) \leq \e$.
\end{proposition}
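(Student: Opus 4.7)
\textbf{Proof plan for Proposition \ref{prop-sse}.} My plan is to fix any assignment $\sigma : V_{\cG} \to [k]$, partition the edges of the label extended graph $G = (V,E)$ according to the underlying \ug{} constraint that spawned them, and show that each constraint contributes to $\partial^E_G(S_\sigma)$ in a way that depends in a clean, all-or-nothing fashion on whether the constraint is satisfied by $\sigma$. Summing the contributions will then convert the edge-expansion ratio into the unsatisfied fraction.

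Concretely, for each constraint $(u,v) \in E_{\cG}$, Definition \ref{defn:lab-gr} places in $E$ the $k$ edges $\{((u,i),(v,\pi_{(u,v)}(i))) : i \in [k]\}$. An edge $((u,i),(v,j))$ lies in $\partial^E_G(S_\sigma)$ iff exactly one of the conditions $i = \sigma(u)$, $j = \sigma(v)$ holds. I would analyze the two cases:
\begin{itemize}
\item If the constraint is satisfied, i.e.\ $\pi_{(u,v)}(\sigma(u)) = \sigma(v)$, then the $i = \sigma(u)$ edge has both endpoints in $S_\sigma$, and for any $i \neq \sigma(u)$, bijectivity of $\pi_{(u,v)}$ forces $\pi_{(u,v)}(i) \neq \sigma(v)$, so neither endpoint lies in $S_\sigma$. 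Contribution to $\partial^E_G(S_\sigma)$: $0$.
\item If the constraint is violated, the $i = \sigma(u)$ edge contributes $1$ (its $u$-endpoint is in $S_\sigma$ but its $v$-endpoint is $(v,\pi_{(u,v)}(\sigma(u))) \neq (v,\sigma(v))$), and the unique $i_0 := \pi_{(u,v)}^{-1}(\sigma(v)) \neq \sigma(u)$ contributes another $1$ (its $v$-endpoint is in $S_\sigma$, $u$-endpoint is not). All other $i$ contribute $0$. Contribution: exactly $2$.
\end{itemize}
Here the key technical hinge, and the only step where anything non-trivial happens, is the use of $\pi_{(u,v)}$ being a bijection — it is what rules out a violated constraint contributing $3$ or more boundary edges and rules out any extra contribution from a satisfied constraint.

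Summing across all constraints gives $\abs{\partial^E_G(S_\sigma)} = 2 \cdot U(\sigma)$, where $U(\sigma)$ is the number of constraints violated by $\sigma$. Since $\cG$ is regular and the labeled extended graph is $d$-regular with $\vol(S_\sigma) = d \abs{V_{\cG}}$ and $2 \abs{E_{\cG}} = d \abs{V_{\cG}}$, plugging into the expansion definition (with the appropriate normalization, e.g.\ against $\vol(S_\sigma)$ as is standard in the \smallsetexpansion{} literature used by \cite{RS10}) yields
\[
\phi^E_G(S_\sigma) \;=\; \frac{2 U(\sigma)}{d\abs{V_{\cG}}} \;=\; \frac{U(\sigma)}{\abs{E_{\cG}}},
\]
so $\phi^E_G(S_\sigma) \leq \e$ if and only if at least $(1-\e)$-fraction of constraints are satisfied. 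The ``only if'' direction of the proposition is immediate from the same calculation since the argument is an identity, not an inequality. The main thing to be careful about is simply matching the paper's normalization convention for $\phi$ — the combinatorial heart of the argument is the bijectivity-based case analysis above, which yields the clean factor of $2$.
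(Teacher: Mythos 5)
The paper itself gives no proof of Proposition \ref{prop-sse} (it is quoted from \cite{RS10}), so there is nothing to compare against; your argument is the standard one and it is correct: the per-constraint case analysis, with bijectivity of $\pi_{(u,v)}$ forcing exactly $0$ boundary edges for a satisfied constraint and exactly $2$ for a violated one, gives $\abs{\partial^E_G(S_\sigma)} = 2U(\sigma)$, which is precisely how the equivalence is obtained in \cite{RS10}. Your caution about normalization is warranted: with the paper's literal formula $\phi^{E}_{G}(S) = \abs{\partial^E_G(S)}/(\abs{S}\,\abs{V\setminus S})$ the stated ``iff with $\e$'' does not come out, so the intended quantity is the conductance-type ratio $\abs{\partial^E_G(S_\sigma)}/\mathrm{vol}(S_\sigma)$; note also that you do not need regularity of $\cG$ for this, since every vertex $(u,i)$ of the label extended graph has degree $\deg_{\cG}(u)$, hence $\mathrm{vol}(S_\sigma) = \sum_{u}\deg_{\cG}(u) = 2\abs{E_{\cG}}$ and $\phi^E_G(S_\sigma) = U(\sigma)/\abs{E_{\cG}}$ exactly, for an arbitrary constraint graph.
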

In other words, good labelings for $\cG$ translate to non-expanding sets in the label extended graph $G$. This connection has been instrumental in the development in connecting \uniquegames~to \smallsetexpansion, which has eventually lead to the best known integrality gaps \cite{KV05,RS09a} and hardness of approximation results for several graph partitioning problems \cite{RST12,LRV13}. It is therefore natural to ask if analogous connections can be made for the \stronguniquegames~problem. The key to this lies in first understanding how the above connections fails in this context. Informally, given a labeling $\sigma:V_\cG \to [k]$, the edge boundary of the set $\partial^E_{\cG}(S_\sigma)$ exactly counts the number of edge constraints violated by $\sigma$, which does not say how many vertices the violated edges are incident on. However, we can choose to count the vertices on which the violated edges are incident on. In the language of the expansions, this translates to looking at the {\em vertex expansion} of the set which is defined as follows. For any subset $S \subset V$, we define its vertex expansion to be 
\[ \phi^V_G(S)\defeq \frac{|\partial^V_G(S)|}{|S|}, \]
where $\partial^V(S)$ is the set of vertices in $V \setminus S$ whose neighborhood intersects with $S$. As in Proposition \ref{prop-sse}, it turns out that a quantitatively similar statement connecting \stronguniquegames~and the vertex expansion of the label extended graph can also be shown, as stated by the following proposition.
\begin{proposition}			\label{prop:ssve-ug}
Let $\cG$ be a \stronguniquegames~instance, and let $G = (V,E)$ be its label extended graph. Then the following conditions hold:
\begin{itemize}
	\item If $\sval(G) \geq 1 - \e$, then there exists a subset $S \subset V$ of size 
		at least $(1 - \epsilon)|V_{\cG}|$ such that $\Abs{\partial^V_G(S)} \leq \epsilon n k$.
	\item If there exists a non-repeating subset $S \subset V$ of size at least $(1 - \delta)|V_{\cG}|$ 
		such that $\Abs{\partial^V_G(S)} \leq \epsilon n$, then $\sval(G) \geq 1 - \e - \delta$.
\end{itemize}
Here a subset $S$ is non-repeating if $|S \cap \cC_v| \leq 1$ for every vertex $v \in V_\cG$.
\end{proposition}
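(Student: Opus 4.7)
The plan is to prove both directions by explicitly translating between a partial labeling $\sigma$ on $V_\cG$ and a (non-repeating) subset $S$ of the label-extended graph $G$, using $S = \{(v,\sigma(v)) : v \in T\}$ where $T$ is the good subset of $V_\cG$. The key structural fact I will exploit is that in the label-extended graph, the only neighbors of $(v, \sigma(v))$ are vertices of the form $(u, \pi_{(u,v)}(\sigma(v)))$ for $u$ adjacent to $v$ in $\cG$, so that consistency of $\sigma$ across an edge is equivalent to the corresponding cross-cloud neighbor of $(v,\sigma(v))$ lying inside $S$.

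For the forward direction, I would take a witness $T \subseteq V_\cG$ with $|T| \geq (1-\epsilon)n$ and a satisfying assignment $\sigma : T \to [k]$, set $S := \{(v,\sigma(v)) : v \in T\}$, and observe that $|S| = |T| \geq (1-\epsilon)n$. The key calculation is to argue that $\partial^V_G(S) \subseteq \bigcup_{u \notin T} \cC_u$. Indeed, for any $(v, \sigma(v)) \in S$ with $v \in T$ and any neighbor $u \in T$ of $v$, the satisfied constraint forces $\pi_{(u,v)}(\sigma(v)) = \sigma(u)$, so the corresponding neighbor lies in $S$. Thus every neighbor of $S$ outside $S$ must lie in a cloud $\cC_u$ with $u \notin T$, and since each cloud has size $k$ and there are at most $\epsilon n$ such bad clouds, we obtain $|\partial^V_G(S)| \leq \epsilon n k$.

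For the reverse direction, given a non-repeating $S$ with $|S| \geq (1-\delta)n$ and $|\partial^V_G(S)| \leq \epsilon n$, I would define $T := \{v \in V_\cG : S \cap \cC_v \neq \emptyset\}$ (so $|T| = |S|$ by the non-repeating property) and let $\sigma(v)$ be the unique label with $(v,\sigma(v)) \in S$ for $v \in T$. Then I would let $B := \{v \in T : \cC_v \cap \partial^V_G(S) \neq \emptyset\}$ be the set of ``boundary-touching'' clouds, which satisfies $|B| \leq |\partial^V_G(S)| \leq \epsilon n$, and set $T' := T \setminus B$, giving $|T'| \geq (1-\delta-\epsilon)n$. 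The crux is to verify that $\sigma$ restricted to $T'$ satisfies every edge of $\cG$ induced on $T'$: if some edge $(u,v)$ with $u,v \in T'$ were violated, then the vertex $(v, \pi_{(u,v)}(\sigma(u)))$ would be a neighbor of $(u,\sigma(u)) \in S$, would differ from $(v,\sigma(v))$, and by non-repeating would lie outside $S$; this puts it in $\partial^V_G(S) \cap \cC_v$, forcing $v \in B$, a contradiction.

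No part of this argument looks genuinely hard; the main thing to get right is the non-repeating hypothesis in the reverse direction, since without it a second copy of $\cC_v$ inside $S$ could ``absorb'' the violated neighbor and break the contrapositive argument. I would make sure to highlight that it is precisely this hypothesis that lets each violated constraint among $T'$ charge a distinct cloud of $\partial^V_G(S)$, which is what gives the clean additive $\epsilon + \delta$ deficit.
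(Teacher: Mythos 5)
Your proposal is correct and follows essentially the same route as the paper: in the forward direction both arguments show the vertex boundary of $S=\{(v,\sigma(v)):v\in T\}$ is contained in the clouds of the at most $\epsilon n$ unlabeled vertices, and in the reverse direction both delete the (at most $\epsilon n$) clouds touching $\partial^V_G(S)$ and use the non-repeating property to force any violated constraint's neighbor into the boundary, yielding the $1-\epsilon-\delta$ bound. Your write-up is, if anything, slightly more explicit than the paper's about where non-repeating is used.
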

Equipped with the above intuition, one can rephrase the \stronguniquegames~problem as follows: {\em Can we efficiently find a set $S \subset V$ of size $n(1-\epsilon)$ (where $n = |V_\cG|$) which corresponds to a valid labeling (i.e., is non-repeating) and has small vertex expansion?} If one chooses to ignore the requirement of the set $S$ encoding a valid partial assignment, then this is precisely the \smallsetvertexexpansion~problem. As one would expect, the exact version of this an $\NP$-Hard problem, but we know of efficient approximation algorithms for this \cite{lm16}. In particular, it admits an approximation preserving reduction to the Hypergraph Small Set Expansion (HSSE in short) problem \cite{lm16}: given a hypergraph $H = (V',E')$ and a parameter $\delta \in (0,1/2]$, compute a set $S \subset V'$ having size at most $\delta |V'|$ which has the least value of $|\partial^V_{H'}(S)|/|S|$ ($\partial^V_{H'}(S)$
is the set of edges in $E'$ which have at least one vertex from $S$ and at least one vertex from $V \setminus S$). 

\paragraph{Approximation Algorithm for HSSE.}
In order to motivate our final algorithm a short description of Louis and Makarychev's approximation algorithm for HSSE \cite{lm16} is in order. 
For a hypergraph $H' = (V',E')$, an SDP relaxation for $\delta$-\hypersse~is the following: 
\begin{eqnarray*}
	\text{minimize}  &  \sum_{e \in E'} \max_{u,v \in e}\| x_u - x_v\|^2 & \\
	\text{subject\space to}& \sum_{v \in V'}\langle x_u, x_v \rangle  \leq  \delta|V'|\|x_u\|^2 & \forall u \in V' \\
						   & \sum_{v \in V'}\|x_v\|^2  = 1 \\
	& \|x_u - x_v\|^2 +\|x_u - x_w\|^2 \geq \|x_w - x_v\|^2 & \forall u,v,w \in V' \\
	& 0 \le \langle x_u , x_v \rangle \leq \|x_u\|^2 &\forall u,v \in  V' 
\end{eqnarray*}
In the above SDP, for every vertex $v \in V'$, the variable $x_v$ is meant to indicate whether the vertex $v$ should be included in the small-set $S$. The objective is exactly the vectorized form of the hypergraph cut function. The first constraint is used to control the size of the set, and the second constraint is to just to normalize the denominator of the objective. The $\ell^2_2$-triangle inequalities would be used to finding a embedding from which a set with small vertex boundary can be efficiently rounded. 
We describe the rounding algorithm \cite{lm16} in Section \ref{sec:hos}.

However this does not directly give us a reduction from \stronguniquegames~to \hypersse~because any correct algorithm has to ensure that the subset of $G$ output by the reduction has to also encode a valid labeling: 
there are no constraints in the SDP that rule out the choice of sets which intersect a cloud corresponding to a vertex more than once. 
Moreover, the above approximation algorithm for $\delta$-\smallsetvertexexpansion~only guarantees that the size of the set returned is $\Theta(\delta n)$. In particular, even if the algorithm returned a subset $S$ which encodes a valid partial labeling, the set itself could be of size, say $n/100$, which would leave $0.99$ fraction of the vertices unused, which is wasteful. 
We note that these issues are not specific to \stronguniquegames~and would have arisen
in the works on designing approximation algorithms for \uniquegames~as well\footnote{While
the approximation algorithms for \uniquegames~\cite{CMM06a,CMM06} were presented in 2006,
the algorithms for \smallsetexpansion~\cite{RST10a,BFKM11} came a few years later.}.
We use the ideas from approximation algorithms for \uniquegames~due to \cite{CMM06a,CMM06} to overcome these issues. 
We add orthogonality constraints among vertices belonging to the same cloud. In the SDP, for any vertex $v \in V_\cG$ and a pair of distinct labels $i,j \in [k]$, we add the constraint $\langle x_{(a,i)},x_{(a,j)} \rangle = 0$ 
(in \cite{lm16}'s reduction from vertex expansion to hypergraph expansion, 
$(a,i),(a,j)$ are also vertices of $H'$). 
Given a set of vectors, 
Chlamtac et al. \cite{CMM06} (see also \cite{CMM06a}) gave an algorithm to sample from a distribution
over subsets of these vectors which has the property that the probability of two orthogonal vectors being sampled
is ``small''; they called this distribution the {\em orthogonal separators}.
They showed how to iteratively use the algorithm for sampling from this distribution to
obtain a labeling for the \uniquegames~instance. 
Let $S^{(t)}$ denote the vertices of the \uniquegames~instance being labeled in iteration $t$. 
While one can argue about the probability of an edge inside $S^{(t)}$ being satisfied 
using the properties of the orthogonal separators, 
there might not be any easy way to argue about the edges/constraints between $S^{(t)}$ and $S^{(t')}$
for $t \neq t'$.
Therefore, as a conservative assumption, Chlamtac et al. \cite{CMM06} count all the edges leaving 
$S^{(t)}$ as unsatisfied edges and bound the number of such edges. 

At a high level, our approach can be viewed as using \cite{lm16}'s {\em hypergraph orthogonal separator}
(see Section \ref{sec:hos}) along with Chlamtac et al.'s approach \cite{CMM06}.
We iteratively use the hypergraph orthogonal separator to obtain a labeling of the vertices
in the instance. 
However, as before, for $t \neq t'$, it could be the case that $\sval(\cG[S^{(t)}]) = 1$ and 
$\sval(\cG[S^{(t')}]) = 1$, but taken together 
they might induce violated constraints in $\cG[S^{(t)} \cup S^{(t')}]$.
Analogous to the case of \uniquegames, at any iteration $t$, we delete the internal boundary of $S^{(t)}$,
\[
\partial_{\rm int}(S^{(t)}) \defeq \left\{ u \in S^{(t)} : \exists v \notin S^{(t)} \mbox{ s.t. }  v \in N_\cG(u) \right\}  
\]
i.e, any vertex which has a outgoing edge (and therefore can potentially be violated in the future iterations) is deleted. 
We can bound the number of such vertices deleted using the properties of the hypergraph orthogonal separator. 

While our analysis is similar in spirit to the analysis of Chlamtac et al. \cite{CMM06} for \uniquegames, 
there are subtle differences.

\subsection{Hardness of approximating \stronguniquegames}

The reduction for the above theorem goes through in steps with several intermediate problems which we describe below. 

\paragraph{\ug~to \hug .} The first step of the reduction, which is also the key step of the reduction, is to establish an arity dependent hardness for a hypergraph variant of \ug, which we call \hug.  Here, the constraints here are hyperedges. Specifically, for any hyperedge $e = (v_1,v_2,\ldots,v_d)$, we have $d$-bijection constraints $\{\pi_{e,v_i}\}_{i \in [d]}$, and we say that an assignment $(\sigma_1,\sigma_2,\ldots,\sigma_d) \in [k]^d$ satisfies the hyperedge $e$ if and only if $\pi_{e,v_i}(\sigma_i) = \pi_{e,v_j}(\sigma_j)$ for every $i,j \in [d]$. In this first step, we reduce an instance $\cG$ of $(\epsilon,1-\epsilon)$-\ug~ to an instance $\cH$ of $\left(1 - \epsilon, 1 - \bigo{\sqrt{\epsilon \log d \log k}}\right)$-\hug~(see Definition \ref{def:abhug} for formal definitions of these problems). We shall discuss this step in more details in Section \ref{sec:overviewstrug}.

\paragraph{\hug~to \sbug.} In the second step, we switch to a bipartite setting by reducing \hug~instances $\cH$ to a \strug~instance $G_{SB}$. Here, given a bipartite constraint graph $G$ (where the constraints are bijections), the objective is to find a labeling which maximizes the number of left vertices for which all incident constraints are satisfied. Furthermore, we shall also require that the left and right degrees to be bounded by ${\rm poly}(d)$, so that the final \strug~instance also has its degree bounded by ${\rm poly}(d)$. As a first step, given a \hug~instance, we can construct the bipartite graph $G_{SB}$ where the \hug~constraints and the \hug~vertices are the set of left ($V_L$) and right vertices ($V_R$) respectively. Furthermore, for every hypergraph constraint supported on hyperedge $ e = (v_1,v_2,\ldots,v_d)$ with projection functions $\pi_{e,v_i}$, we add the constraints $\pi_{e,v_i}$ between left vertex $u_e$ (identified with hyperedge $e$) and right vertex $v_i$, for every $i \in [d]$. Now it is easy to verify that given any labeling $\sigma:V \to [k]$, it can be extended to a labeling $\sigma':V_L \cup V_R \to [k]$ such that there is a one-to-one correspondence between the number of hyperedge constraints in $\cH$ satisfied by $\sigma$ and the number of left vertices for which all incident constraints are satisfied by $\sigma'$. In particular, $\sigma$ satisfies at least $(1 - \epsilon)$ fraction of hyperedge constraints in $\cH$ if and only if at least $(1 - \epsilon)$-fraction of left vertices in $G_{SB}$ have all the constraints incident on them satisfied by $\sigma'$.

Furthermore, since every hyperedge has arity $d$, the graph is $d$-left regular. However, the right degrees are not necessarily bounded by $d$. This is fixed by sub-sampling using the following two step process: (i) We construct $G'$ on vertex set $(V_L',V_R)$ as follows: for each right vertex $v \in V_R$, we sample $\ell$ left-neighbors (where $\ell = \ell(d)$ is a function of $d$) and  (ii) We construct $G''$ on vertex set $(V_L'',V_R')$ from $G$ by removing all large degree right vertices and all their left neighbors. Using standard tail bounds we can show that in (i) the completeness and soundness are approximately preserved and in (ii) only a small fraction of right vertices are deleted, which implies that the completeness and soundness parameters of the sub-sampled instance are again approximately preserved. Similar ideas based on random sampling have been used for degree reduction in the context of other problems as well such as vertex expansion \cite{LRV13,AKS09}, etc.

\paragraph{\sbug~to \strug .} The last step of the reduction transforms \sbug~to \strug~instances again preserving the completeness and soundness parameters exactly. This step is the same as the reduction in \cite{KR08}. However, we need to show that if we start with an instance of maximum degree $d$, then the reduction produces an instance of maximum degree ${\sf poly}(d)$; we verify that the reduction in \cite{KR08} does indeed satisfy this.
Their reduction is the following. Given an \sbug~instances $G_{SB}$ (with left and right degrees bounded by $d$ and $\bigo{d \ell}$ respectively), \cite{KR08} construct the graph $\cG$ on the vertex $V_L''$ as follows. Whenever a pair of left vertices $u,u'$ are incident on a common right vertex $v$, the composed constraint $\pi_{u \to u'} = \pi_{v,u'}\circ \pi^{-1}_{v,u}$ is added. Again, it is straightforward to verify that the reduction satisfies the following property. A labeling $\sigma:V_L'' \cup V_R \to [k]$ satisfies all the constraints incident on a set $S \subset V_L''$ if and only if its restriction $\sigma|_{V_L''}$ satisfies all constraints induced on $\cG[S]$. Furthermore, the neighbors of a vertex $u \in V_L''$ in $\cG$ are exactly the $2$-step neighbors of $u$ in $G''$. Therefore, the max degree of $\cG$ is bounded by ${\rm poly}(dl)$.

\subsection{Hardness of \hug} 
\label{sec:overviewstrug}

As mentioned earlier, the first step of the reduction is also the key technical step in proof of Theorem \ref{thm:strong-ug-informal}, which differs significantly from known hardness results for \hug. As is standard, the reduction from \ug~ to \hug~ is via a {\em dictatorship testing gadget}. Our reduction can be thought of as an extension of the {\em Long code} based dictatorship tests (for e.g., see \cite{KKMO07}, etc.) to the hypergraph setting using a new family of Markov chains (Section \ref{sec:gadget}) in which non-dictatorial cuts behave similar to the (analytic) small set vertex expansion of Gaussian graphs (Theorem \ref{thm:bound}). It is important to note that the choice of the Markov operator is crucial for the reduction. In particular, as we will describe below, a straightforward generalization of long code tests to the setting of hypergraph constraints will give us weaker hardness, since the noise hypercube (which is the Markov operator for standard long code based tests) does not have the desired vertex expansion properties. 

A first attempt at generalizing the long code test to the setting of \hug~is in Figure \ref{fig:pcptest-0-intro}. \\
\begin{figure}[h!]
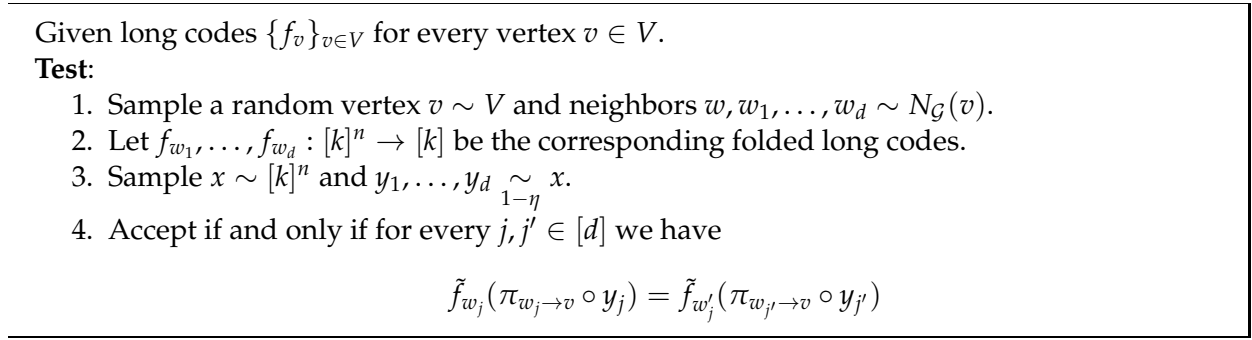

	\begin{mdframed}
		Given long codes $\{f_v\}_{v \in V}$ for every vertex $v \in V$. \\
		{\bf Test}: 
		\begin{enumerate}
			\item Sample a random vertex $v \sim V$ and neighbors $w,w_1,\ldots,w_d \sim N_{\cG}(v)$.  
			\item Let ${f}_{w_1},\ldots,{f}_{w_d}:[k]^n \to [k]$ be the corresponding folded long codes.
			\item Sample $x \sim [k]^n$ and $y_1,\ldots,y_d \underset{1 - \eta}{\sim} x$.
			\item Accept if and only if for every $j,j' \in [d]$ we have 
			\[
			\tilde{f}_{w_j}(\pi_{w_j \to v} \circ {y}_j) = \tilde{f}_{w_j'}(\pi_{w_{j'} \to v} \circ {y}_{j'}) 
			\]
		\end{enumerate}
	\end{mdframed}
	\caption{$d$-ary Long Code Test}
	\label{fig:pcptest-0-intro}
\end{figure}

In the above test, $\pi \circ x$ denotes the permuted string $(x_{\pi(1)},x_{\pi(2)},\ldots,x_{\pi(n)})$. The completeness and soundness of this test follows from known noise stability properties of the noisy $k$-ary hypercube \cite{KKMO07,MOO10}. 
For proving completeness, let $\sigma:V \to [n]$ be the labeling which satisfies at least $(1 - \epsilon)$-fraction of the constraints. Then, for every vertex $v \in V$, we have $f_v = \chi_{\sigma(v)}$ (i.e. the $\sigma(v)^{th}$ dictator function). Then, using the stability of the dictator functions it is easy to see that the above test accepts with probability at least $1 - \epsilon - \eta d$. On the other hand, analogous to \cite{KKMO07}, the soundness depends on the $d$-ary expansion of small sets in the hypercube. This can be seen as follows. For simplicity, suppose we assume that all the constraints in the hyperedges are identity constraints. Moreover, as is standard, we shall instead be interpreting the $k$-ary long codes as functions to the the $k-1$-dimensional simplex i.e, $f_v : [k]^n \to \Delta_k$. Furthermore, we will write the vector valued functions as $f_v = (f^1_v,\ldots,f^k_v)$. Using the above interpretation, the acceptance probability of the test can be expressed as 
\[
\Pr\left[\mbox{Test Accepts}\right] = \Ex_{v \sim V}\Ex_{w_1,\ldots,w_d \sim N_{\cG}(v)} \Ex_{x,\left(y_j\right)^d_{j = 1}}\left[\sum_{i \in [k]} \prod_{j \in [d]}f^v_j(y_j)\right].
\] 
Let $g_v \defeq \Ex_{w \sim v} f_v$ be the function averaged over the neighborhood. Using a sequence of now standard steps, we can express each of the inner expectation terms as
\begin{equation}				\label{eq:isop-bound}
\Ex_{x}\left[\sum_{i \in [k]} \left(\Gamma_{1 - \eta} g^i_v(x)\right)^d  \right] 
\end{equation}
where $\Gamma_{1 - \eta}$ is the noise operator on the $k$-ary hypercube. Now, for a fixed vertex, suppose we assume that the $\Gamma_{1 - \eta} g_v^i$s do not have influential coordinates (otherwise we can decode a labeling for $v$), we can use the {\em Invariance Principle} \cite{MOO10}s to pass on to the Gaussian space and lower bound the above expression by 
\[
\Pr_{g \sim N(0,1)^m}\Pr_{g_1,\ldots,g_d \underset{1 - \eta}{\sim} g} \left[\forall j \in [d]: G(g_j) = 1 \right].
\]
where $G$ multilinear polynomial representation of $g$ in terms of the Fourier basis of the $k$-ary noisy hypercube. Therefore the soundness analysis reduces to the following small set isoperimetry question over the Gaussian space: {\em What is the probability that all $d$ correlated copies of a Gaussian random variable lands inside a set $S$ of volume $1/k$?}. This can be answered by the general form of {\em Borell's isoperimetric inequality}~\cite{Bor85} which says that the above expression is maximized for halfspaces. Combining this with quantitative bounds given by the Gaussian isoperimetric inequality, we can lower bound the above expression by $1 - O(\sqrt{\eta \log d \log k})$. 

However, as can be seen, the above only gives us completeness $(1 - 2\eta d)$ vs. $1 - O(\sqrt{\eta \log d \log k})$-hardness i.e., the completeness and soundness guarantees cannot be combined together to get the desired inapproximability. The central issue here is in the choice of the underlying gadget i.e., {\em the noisy hypercube does not have small volume dictator cuts with small vertex expansion}. Instead, to get obtain the desired inapproximability, we need the underlying gadget to satisfy the following properties.
\begin{itemize}
	\item[1.] {\bf Completeness}: There are $k$-dictator cuts with vertex expansion at most $\epsilon$.
	\item[2.] {\bf Soundness}: The value of Eq. \ref{eq:isop-bound} for non dictatorial cuts of volume $1/k$ is at most $\frac1k - O(\frac1k\sqrt{\epsilon \log d \log k})$.
\end{itemize}
Towards this, we propose a new Markov chain based gadget that builds on the gadget used in the hardness reduction for optimal inapproximability (based on the small-set expansion hypothesis \cite{RS10}) of Vertex Expansion in graphs \cite{LRV13}. Informally, our gadget consists of vertex set $V = \uplus_{i \in [k]}V_i$ where each $V_i$  consists of two vertices $s_i,t_i$, where conditioned $a \in V_i$ we have $a = s_i$ w.p. $\approx 1 - \epsilon$. Furthermore, (a) the $s_i$ vertices are only connected to the $t_i$ vertices, (b) and the induced graph on the $t_i$ vertices form a expander. Here (a) ensures that each $V_i$ has small internal boundary, and (b) ensures that the spectral gap of the Markov chain is at least $\Omega(\epsilon)$. Therefore, using (a) and (b), we are above to obtain the desired completeness and soundness guarantees (see Claim \ref{cl:vert-exp}, Lemma \ref{lem:invar}). 

In summary, for our reduction, we define our long codes to be on the Markov chain based gadget; combining these codes with $d$-ary long code test gives us our reduction. We crucially use the spectral properties of the Markov Chain to establish the completeness and soundness directions. A crucial ingredient in the soundness analysis is the {\em Exchangeable Gaussians Theorem}, which is a generalization of Borell's isoperimetric inequality to non-spherically correlated Gaussians introduced by Isakkson and Mossel \cite{IM12}.  

\begin{remark}
	We point out that the result of Raghavendra~\cite{Rag08} implies that there exists a canonical SDP for \hug~whose integrality gap matches the optimal Unique Games based inapproximability for \hug. We leave bounding the integraility gap of the SDP corresponding to \hug~ as an open question.
\end{remark}

\paragraph{ Comparison with \cite{LRV13}.} Given that our results are motivated by the connection of \stronguniquegames~to small-set vertex expansion (Proposition \ref{prop-sse}), it is not surprising that components of our reduction are inspired by previous work such as \cite{LRV13} which studied SSE-hardness of approximating vertex expansion in graphs. Our choice of the Markov operator can be thought of as a generalization of the Markov operator of \cite{LRV13} to the small volume setting, along with some added bells and whistles such as positive semidefiniteness. The overall road map of the complete reduction also shares certain standard similar steps such as uniformization and sparsification, which are standard in reductions which show degree dependent hardness (e.g.,~\cite{AKS09}).

However,  several new technical ingredients are needed to obtain our inapproximability results for \stronguniquegames. Our construction of the Markov operator has the crucial property that it is an edge-expander (i.e. we obtain a lower bound on its {\em spectral gap}) in which small sets have small vertex expansion. This enables us to obtain a best-of-both-worlds guarantee (Claim \ref{cl:M-prop}). Consequently the isoperimetric inequalities required to obtain our soundness parameters are also derived differently (Theorem \ref{thm:bound}). Due to the different choice of the outer verifier i.e.,~\ug, the analysis of the underlying dictatorship tests (Theorem \ref{thm:sound}) rely on existing techniques \cite{KKMO07} which are more suited towards decoding labelings. Our re-interpretation of \hug~ as \ug~with hyperedge constraints as vertices (Theorem \ref{thm:bipartite-ug}), although elementary, is crucial in the chain of reductions. The subsequent steps seek to preserve distinct structural properties that arise out of the asymmetry in the roles of left and right vertices in \sbug, and hence require different combinatorial constructions and analysis (Theorem \ref{thm:deg-red}, Theorem \ref{thm:oct-red}). The technical differences in these steps are subtle.

\part{Approximation Algorithms}

\section{Preliminaries}
\label{sec:prelim}

We introduce some notation that will be used frequently in the rest of the paper. We use $G = (V,E)$ to denote a graph $G$ with vertex set $V$ and edge set $E$. For any subset $S \subset V$, we use $G[S]$ to denote the subgraph of $G$ induced by the vertex set $S$ Throughout the paper, we shall use $\cG = (V_\cG,E_\cG,[k],\{\pi_e\}_{e\in E_\cG}\})$ to denote a \stronguniquegames~instance as defined in Problem \ref{prop:str-ug}. We will use $n = |V_\cG|$ to denote the number of vertices in the \stronguniquegames instance, and use $V_\cG$ and $[n]$ interchangeably to denote the vertex set.

We shall always use $G = (V,E)$ to denote its label extended graph (as defined in Definition \ref{defn:lab-gr}), and $H = (V',E')$ to be the label extended hypergraph defined later in Definition \ref{defn:lab-hyper}. For a graph $G = (V,E)$ and a set subset $S \subseteq V$, we shall use $G[S]$ to denote the graph induced by $S$ in $G$, and extend the notation analogously to hypergraphs as well. For the label extended graph $G = (V,E)$ and the label extended hypergraph $H = (V',E')$, for any subsets $T \subset V$ and $T' \subset V'$, we define the ${\rm Vert}(T)$ and ${\rm Vert}(T')$ to be the projection of the subsets into the Unique Games vertex set. Formally, we shall define
\[
{\rm Vert}(T) = \left\{ a \in V_\cG| \exists i \in [k] \mbox{ s.t. } (a,i) \in T \right\}
\]
and 
\[
{\rm Vert}(T') = \left\{ a \in V_\cG| \exists i \in [k] \mbox{ s.t. } (a,i) \in T' \right\}
\]
Finally, for a subset $T \subseteq V$ we define $\partial^V_G(T) = \{v \notin T| N_G(v) \cap T \neq \emptyset \}$ to be the vertex boundary of the graph. Also closely related is the internal boundary of the set $T$ denoted by  $\partial_{\rm int}(T)$ which can be formally defined as
\[
\partial_{\rm int}(T) = \{ v \in T | N_G(v) \cap T^c \neq \emptyset\}
\] 
Similarly for a subset $T' \subseteq V'$ we use $\partial^{E'}_H(T') = \{e \in E' | e \cap T \neq \emptyset \mbox{ and } e \cap T^c \neq \emptyset \}$ to denote the hyperedge boundary of the set. Whenever the context is clear, we will drop the indexing by $G$ and $H$ in the notation for vertex boundary and hyperedge boundary respectively.

\subsection{Hypergraph Orthogonal Separators.}
\label{sec:hos}
A key tool used in the rounding algorithms are hypergraph orthogonal separators, which were first introduced in the context of hypergraphs in \cite{lm16}.

\begin{definition}[Hypergraph Orthogonal Separator]				
\label{def:hos-1}
Let $X = \set{\bar{u}: u \in V}$ be a set of vectors in the unit ball that satisfy
$\ell_2^2$ triangle inequalities. We say that a random set $S \subset V$ is a 
{\em hypergraph $m$-orthogonal separator} with distortion $D \geq 1$, probability scale
$\alpha > 0$, and separation threshold $\beta \in (0,1)$, if it satisfies the following properties.
\begin{enumerate}
	\item For every $u \in V$, $\Prob{u \in S } = \alpha \norm{\bar{u}}^2$.
	\item For every $u,v \in V$ such that $\norm{\bar{u} - \bar{v}}^2 \geq 
	\beta \min \set{\norm{\bar{u}}^2, \norm{\bar{v}}^2}$,
	\[ \Prob{u \in S \textrm{ and } v \in S} \leq \alpha \frac{\min \set{\norm{\bar{u}}^2, \norm{\bar{v}}^2}}{m}  \]
	\item For every $e \subset V$, $\Prob{e \textrm { is cut by } S} \leq \alpha D \max_{u,v \in e} \norm{\bar{u} - \bar{v}}^2$. 
\end{enumerate}
\end{definition}
 
For proving Theorem \ref{thm:partial-ug-1}, we shall use the construction guaranteed by the following theorem from \cite{lm16},

 which is a slight refinement of $\ell^2_2$-separators defined in 
 \begin{theorem}  
 	\label{thm:hos-1}
 	Let $d \geq 2$ be an integer and let $H = (V,E)$ be a hypergraph. Then there is a polynomial-time randomized algorithm that given a set
 	of vectors $\set{\bar{u}: u \in V}$ satisfying $\ell_2^2$ triangle inequalities ,
 	parameters $m \geq 2$ and $\beta \in (0,1)$, generates a hypergraph $m$-orthogonal separator with 
 	probability scale $\alpha \geq 1/n$ and distortion $D = \bigo{\beta^{-1} m \log m \log \log m \sqrt{\log n (1/\beta)}}$.
 \end{theorem}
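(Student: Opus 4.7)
The plan is to construct the hypergraph orthogonal separator by extending the orthogonal-separator construction of Chlamtac, Makarychev, and Makarychev \cite{CMM06a,CMM06} with an additional region-growing phase that exploits the $\ell_2^2$ triangle inequality to control hyperedge cuts (property 3). Properties 1 and 2 will be inherited, with minor modification, from the classical Gaussian-projection construction; the new work is in property 3, where a bound that is naively only linear in the Euclidean diameter must be upgraded to one linear in the $\ell_2^2$ diameter.

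Concretely, I would first sample a Gaussian vector $g \sim N(0, I_n)$, a scale $r$ uniformly from a suitable discretization of $\{\|\bar u\|^2 : u \in V\}$, and a threshold $T$ of order $\sqrt{\log m + \log(1/\beta)}$. Place a vertex $u$ in a candidate set $S_0$ iff $\|\bar u\|^2 \geq r$ and $\langle \bar u/\|\bar u\|, g\rangle \geq T$. A direct Gaussian tail computation yields $\Pr[u \in S_0] = \alpha \|\bar u\|^2$ with $\alpha = \Omega(1/n)$, giving property 1. For any pair $\bar u,\bar v$ that is $\beta$-separated, the normalized inner product is at most $1-\beta/2$, so correlated-Gaussian joint-tail bounds give $\Pr[u,v \in S_0] \leq \alpha \min\{\|\bar u\|^2,\|\bar v\|^2\}/m$, giving property 2. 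I would then add a cleaning phase: sample a random radius $w$ from a heavy-tailed distribution on $[0,1]$ and form $S \subseteq S_0$ by removing any $u \in S_0$ whose $\ell_2^2$-ball of radius $w$ meets $V \setminus S_0$. The $\ell_2^2$-triangle inequality is precisely what makes this step behave like ball-growing in a genuine metric.

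For property 3, a hyperedge $e$ is cut by $S$ only if the random boundary (determined by $r$, $T$, and $w$) passes ``between'' two of its vertices, and by ball-growing the probability of this is linear in the $\ell_2^2$-diameter of $e$. The distortion then decomposes as: $\beta^{-1}$ from the scale randomization, $m\log m\log\log m$ matching the orthogonal-separator distortion of \cite{CMM06a,CMM06} (arising from a chaining argument that amplifies separation), and $\sqrt{\log n(1/\beta)}$ from the Gaussian concentration in the region-growing step, yielding the stated $D$. The main obstacle will be this region-growing analysis: a naive threshold cuts a hyperedge with probability proportional to the Euclidean diameter $\max_{u,v\in e}\|\bar u-\bar v\|$, which is only $\sqrt{\operatorname{diam}_{\ell_2^2}(e)}$, so one must upgrade this by a full power to get linear in the $\ell_2^2$-diameter. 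The $\ell_2^2$-triangle inequality enables the upgrade via a multi-scale ball-growing argument, and the delicate part is balancing the radius distribution against the Gaussian threshold so that properties 1 and 2 survive with only polylogarithmic loss.
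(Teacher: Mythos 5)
There is a genuine gap, and it sits exactly where you flag the ``main obstacle.'' First, for context: the paper does not prove Theorem \ref{thm:hos-1} at all --- it is imported as a black box from \cite{lm16}, and the only separator the paper itself analyzes is the refined variant in the appendix (Theorem \ref{thm:hos-3}). In both \cite{lm16} and that appendix, the $\sqrt{\log n}$ factor in the distortion does \emph{not} come from Gaussian concentration in a region-growing step; it comes from the Arora--Lee--Naor single-scale embedding theorem for negative-type metrics (quoted in the paper as Theorem \ref{thm:embed}), i.e.\ from the measured-descent/chaining machinery behind ARV. The paper says this explicitly when proving Theorem \ref{thm:hos-3}: the $\sqrt{\log n}$ term ``comes directly from the $\Omega(1/\sqrt{\log n})$-separation guaranteed by Theorem \ref{thm:embed}.'' Your proposal instead asserts that a Gaussian threshold plus a ``multi-scale ball-growing argument'' using the $\ell_2^2$ triangle inequality upgrades the cut probability from the Euclidean diameter $\max_{u,v\in e}\|\bar u-\bar v\|$ to the $\ell_2^2$ diameter $\max_{u,v\in e}\|\bar u-\bar v\|^2$ at a cost of only $\sqrt{\log n}$. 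That upgrade is precisely the hard theorem: a single Gaussian threshold cut separates a pair with probability proportional to $\|\bar u-\bar v\|$, and no elementary ball-growing scheme on an $\ell_2^2$ metric is known to square this loss; the only known route is to produce, for each scale $\Delta$, random sets in which $\Delta/16$-far pairs are separated by an $\ell_2^2$-gap of $\Omega(\Delta/\sqrt{\log n})$, which is exactly the content of \cite{ALN08}. As written, your argument assumes the conclusion of that theorem rather than proving it, and if one only uses what you actually construct (threshold at $T$, radius $w$, $\ell_2^2$ triangle inequality) the hyperedge-cut probability you can certify scales like the square root of the $\ell_2^2$ diameter, which is far too weak for property 3.

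A second, more local issue: your cleaning phase (delete $u \in S_0$ whose $\ell_2^2$-ball of radius $w$ meets $V \setminus S_0$) destroys the exact identity $\Pr[u \in S] = \alpha\|\bar u\|^2$ of property 1, since the deletion probability depends on the geometry of $u$'s neighborhood and is not uniform across vertices. Property 1 is used in the paper with equality (e.g.\ in Lemma \ref{lem:unique}, step 1), so ``approximately proportional'' is not enough; the construction in \cite{lm16} is arranged so that the hyperedge-cutting randomness is combined with the \cite{CMM06a,CMM06} separator in a way that leaves the marginal inclusion probability exactly $\alpha\|\bar u\|^2$. Your sketch would need an explicit mechanism (e.g.\ intersecting with an independent set-valued procedure whose marginal is vertex-independent) to restore this. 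Relatedly, the step ``$\beta$-separation implies normalized inner product at most $1-\beta/2$'' is not literally correct when $\|\bar u\|$ and $\|\bar v\|$ differ substantially, which is one of the reasons the $m\log m\log\log m$ factor in \cite{CMM06a,CMM06} arises from a more careful construction than a single correlated-Gaussian tail bound. So the overall architecture (CMM-style separator for properties 1--2, plus an $\ell_2^2$-metric cutting step for property 3) matches the source, but the two load-bearing ingredients --- the ALN-type $\Delta/\sqrt{\log n}$ separation for negative-type metrics and the preservation of exact property 1 --- are missing from the proposal.
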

 
Additionally, we shall also be using the notion of $\ell_2$-$\ell_2^2$ hypergraph orthogonal separators due to \cite{lm16} in proving Theorem \ref{thm:partial-ug-2}.

\begin{definition}[$\ell_2$-$\ell_2^2$ Hypergraph Orthogonal Separator]
\label{def:hos-2}
Let $X = \set{\bar{u}: u \in V}$ be a set of vectors in the unit ball that satisfy
$\ell_2^2$ triangle inequalities. We say that a random set $S \subset V$ is a 
{\em $\ell_2-\ell_2^2$ hypergraph $m$-orthogonal separator} with $\ell_2$-distortion 
$D_{\ell_2}: \N \to \R$, $\ell_2^2$-distortion $D_{\ell_2^2}$, probability scale
$\alpha > 0$, and separation threshold $\beta \in (0,1)$, if it satisfies the following properties.
\begin{enumerate}
	\item For every $u \in V$, $\Prob{u \in S } = \alpha \norm{\bar{u}}^2$.
	\item For every $u,v \in V$ such that $\norm{\bar{u} - \bar{v}}^2 \geq 
	\beta \min \set{\norm{\bar{u}}^2, \norm{\bar{v}}^2}$,
	\[ \Prob{u \in S \textrm{ and } v \in S} \leq \alpha \frac{\min \set{\norm{\bar{u}}^2, \norm{\bar{v}}^2}}{m}  \]
	\item For every $e \subset V$, 
	\[ \Prob{e \textrm { is cut by } S} \leq \alpha D_{\ell_2^2} \max_{u,v \in e} \norm{\bar{u} - \bar{v}}^2
	+ \alpha D_{\ell_2}(\Abs{e}) \min_{w \in e} \norm{\bar{w}} \cdot \max_{u,v \in e} \norm{\bar{u} - \bar{v}} . \] 
\end{enumerate}
\end{definition}

Definition \ref{def:hos-2} differs from Definition \ref{def:hos-1} only in item 3.

\begin{theorem}[\cite{lm16}]
\label{thm:hos-2}
There is a polynomial-time randomized algorithm that given a set of vertices $V$, a set
of vectors $\set{\bar{u}: u \in V}$ satisfying $\ell_2^2$ triangle inequalities ,
parameters $m \geq 2$ and $\beta \in (0,1)$, generates a $\ell_2-\ell_2^2$ hypergraph $m$-orthogonal separator with 
probability scale $\alpha \geq 1/n$ and distortions 
\[D_{\ell_2}(r) = \bigo{\beta^{-1/2} m \log m \log \log m \sqrt{\log r}} \qquad \textrm{and} \qquad
	D_{\ell_2^2} = \bigo{m} .\]
where $r$ is the largest arity of any hyperedge in the hypergraph.
\end{theorem}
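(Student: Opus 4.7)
The plan is to use the same two-phase construction that underlies Theorem \ref{thm:hos-1} and sharpen only the analysis of property 3. In that construction one samples a standard Gaussian $g \sim \mathcal{N}(0, I)$, draws a threshold $t$ from a distribution of scale $\sqrt{\log m}$, and forms a raw set $\widetilde S$ by including $u$ whenever its normalized projection $\xi_u \defeq \langle g, \bar u\rangle/\norm{\bar u}$ exceeds $t$ and an independent scale-truncation on $\norm{\bar u}^2$ is passed. An orthogonalization step then processes the vertices of $\widetilde S$ in a random order, discarding any $u$ whose inclusion would violate the $\beta$-separation condition, yielding the final set $S$.

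Properties 1 and 2 will follow from exactly the same calculations as in Theorem \ref{thm:hos-1}, since they depend only on single-vertex and pairwise inclusion probabilities, which the refined cut analysis leaves untouched. The standard $\ell_2^2$-based cut bound from that proof can be imported verbatim, yielding the $\alpha \cdot \bigo{m} \cdot \max_{u,v\in e}\norm{\bar u - \bar v}^2$ contribution with $D_{\ell_2^2} = \bigo{m}$.

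The new ingredient is the $\ell_2$ term with the $\sqrt{\log r}$ factor. Fix a hyperedge $e$ of arity $r$. Since orthogonalization only deletes vertices, $\Pr[e \text{ is cut by } S] \leq \Pr[e \text{ is cut by } \widetilde S]$, and the latter event forces $t$ to lie in $[\min_{u\in e}\xi_u,\ \max_{u\in e}\xi_u]$. The family $\{\xi_u\}_{u \in e}$ is a centered Gaussian process whose increments satisfy $\mathrm{Var}(\xi_u - \xi_v) = \norm{\bar u/\norm{\bar u} - \bar v/\norm{\bar v}}^2$, so a standard Gaussian maximal inequality (equivalently, one step of Dudley/Fernique chaining) gives
\[
\Ex\!\left[\max_{u\in e}\xi_u - \min_{u\in e}\xi_u\right] \;\leq\; \bigo{\sqrt{\log r}} \cdot \max_{u,v\in e}\norm{\frac{\bar u}{\norm{\bar u}} - \frac{\bar v}{\norm{\bar v}}}.
\]
Combining this with the $\bigo{1/\sqrt{\log m}}$ density of $t$ on its support, together with the inequality $\norm{\bar u/\norm{\bar u} - \bar v/\norm{\bar v}} \lesssim \norm{\bar u - \bar v}/\min_{w\in e}\norm{\bar w}$ and the scale-truncation (which ensures the vertices relevant for the cut have comparable norms up to a factor $\beta^{1/2}$), yields a cut probability of $\alpha \cdot \bigo{\beta^{-1/2} m\log m\log\log m \sqrt{\log r}} \cdot \min_{w\in e}\norm{\bar w} \cdot \max_{u,v\in e}\norm{\bar u - \bar v}$. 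The $m \log m \log\log m$ factor is inherited from the orthogonalization step exactly as in Theorem \ref{thm:hos-1}.

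The main obstacle I expect is obtaining $\beta^{-1/2}$ rather than the naive $\beta^{-1}$ dependence in $D_{\ell_2}$. The natural strategy is to partition the hyperedge by vertex-norm scale: on the sub-family of vertices whose norms lie within a $\beta^{1/2}$ factor of $\min_w \norm{\bar w}$, the chaining bound above gives $\beta^{-1/2}$ cleanly, while vertices that are much smaller than the rest of the edge force $\max_{u,v\in e}\norm{\bar u - \bar v}$ to be large, so the coarser $\ell_2^2$ bound already absorbs their contribution without loss. Carrying this case split through the random-order orthogonalization while preserving the $m \log m \log \log m$ prefactor is the technically delicate part.
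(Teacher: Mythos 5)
First, a point of orientation: the paper does not prove Theorem \ref{thm:hos-2} at all --- it is imported verbatim from \cite{lm16}, just like Theorem \ref{thm:hos-1}, and the authors use it as a black box in Section \ref{sec:thm2}. So there is no in-paper proof to compare against; the relevant comparison is to the construction and analysis in \cite{lm16} itself. At the level of strategy your sketch is aimed in the right direction (the $\ell_2$ term with $\sqrt{\log r}$ in \cite{lm16} does come from controlling the spread of Gaussian projections over the $r$ vertices of a hyperedge via a maximal inequality, and Definition \ref{def:hos-2} indeed differs from Definition \ref{def:hos-1} only in item 3), but the sketch has concrete gaps that keep it from being a proof.

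The most serious one is the claim that the $\ell_2^2$ part of property 3, with $D_{\ell_2^2} = \bigo{m}$, ``can be imported verbatim'' from the Theorem \ref{thm:hos-1} analysis. It cannot: the distortion proved there is $\bigo{\beta^{-1} m \log m \log\log m \sqrt{\log n}}$, so importing it gives an $\ell_2^2$ coefficient that is larger than $\bigo{m}$ by a factor $\beta^{-1}\log m \log\log m\sqrt{\log n}$, and removing the $\sqrt{\log n}$ requires a genuinely different argument (in \cite{lm16} the $\bigo{m}$ term arises from a separate, cruder case analysis tied to the norm/scale-selection part of the separator, not from the embedding-based bound that produces $\sqrt{\log n}$). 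Second, a single Gaussian threshold with a density-$\bigo{1/\sqrt{\log m}}$ cutoff does not by itself yield property 2's $\alpha\min\set{\norm{\bar u}^2,\norm{\bar v}^2}/m$ separation; in the CMM-style construction that both theorems rest on, the $1/m$ factor comes from roughly $\log m$ independent tests (``words''), and these repetitions are exactly what feed the $m\log m\log\log m$ prefactor into the cut probability --- so your cut analysis has to be carried out jointly with that machinery rather than with one threshold event, and your heuristic $\Pr[t\in[\min_u\xi_u,\max_u\xi_u]]$ computation does not survive this unchanged. Finally, you yourself flag the $\beta^{-1/2}$ (versus $\beta^{-1}$) dependence and its interaction with the orthogonalization step as unresolved; that is precisely the technically delicate part, so as it stands the proposal is an outline of the \cite{lm16} route with the hard steps left open rather than an independent proof.
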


\subsection{Notions of Expansions}

In this work, we frequently use the following well studied notions of small set expansions.
Given a graph $G = (V,E)$, the vertex expansion of a set $S$ is defined as $\phi^V_{G}(S) := \frac{|\partial^V_{G}(S)|}{|S|}$,
where $\partial^V(S)$ is the set of vertices in $V \setminus S$ whose neighborhood intersects with $S$.
While the vertex expansion only accounts for the outer boundary of a set, another related notion i.e., that of symmetric vertex expansion takes into account the inner boundary as well. Formally, again for a graph $G = (V,E)$, we define the symmetric expansion of a set $S$ as $\Phi^V_{G}(S) := \frac{|\partial_{\rm int}(S)| + |\partial^V_{G}(S)|} {|S|}$,
where $\partial_{\rm int}(S) := \partial^V_{G}(V \setminus S)$. 
Finally, given a hypergraph $H = (V,E)$, we define the edges expansion of a set as
$\phi^E_H(S)  = \frac{|\partial^E_H(S)|}{|S|}$,
where $\partial^E_H(S)$ is the set of hyperedges with at least one vertex in $S$ and at least one vertex in $V \setminus S$.

In particular, we will be interested in the small set profile for the above notions of expansions. The $\delta$-\smallsetvertexexpansion~of a graph is defined as 
\[
\phi^V_{\delta}(G) = \min_{S \subset V : |S| {= }\delta n} \phi^V_{G}(S)
\]
Analogously, we can also define the small set symmetric vertex expansion of a graph  and the small set edge expansion of a hypergraph $H = (V,E)$ as 
\[
\Phi^V_{\delta}(G) \overset{\rm def}{=} \min_{S \subset V : |S| {= }\delta n} \Phi^V_{G}(S) \qquad\qquad{\rm and}\qquad\qquad \phi^E_{\delta}(H) \overset{\rm def}{=} \min_{S \subset V : |S| {= }\delta n} \phi^E_{H}(S) 
\]

\subsection{The Label Extended Hypergraph}

We recall the standard reduction from \smallsetvertexexpansion~to Small Set Hypergraph Edge Expansion \cite{LRV13,lm16}. It is important to note that we do not use the reduction in a black box way; the analysis of the algorithm will explicitly use details of the constructions used in the reduction. To begin with, the following lemma states the properties of the reduction.

\begin{lemma}			\label{lem:transform}
	Given a graph $G = (V,E)$, one can construct an intermediate weighted graph $G_\sym(V_\sym,E_\sym,w_\sym)$, and a final weighted hypergraph $H(V',E',w)$ such that (i) $\Phi^{V_\sym}_\delta(G_\sym)  \leq \phi^V_{2\delta}(G)$ and $\phi^{E'}_{2\delta}(H) \leq \Phi^{V_\sym}_{2\delta}(G_{\rm sym})$
\end{lemma}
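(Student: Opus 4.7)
The plan is to establish the two inequalities of Lemma \ref{lem:transform} via two separate reductions: a weighted bipartite symmetrisation $G \to G_\sym$, and the standard closed-neighbourhood hypergraph reduction $G_\sym \to H$.

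For the second inequality $\phi^{E'}_{2\delta}(H) \leq \Phi^{V_\sym}_{2\delta}(G_\sym)$, I would set $V' \defeq V_\sym$ (keeping the same vertex weights) and, for each vertex $v \in V_\sym$, add a hyperedge $e_v \defeq \{v\} \cup N_{G_\sym}(v)$ consisting of $v$ together with its closed neighbourhood in $G_\sym$. This is the standard reduction from symmetric vertex expansion to hyperedge expansion used in \cite{LRV13,lm16}. For any $S \subseteq V_\sym$, a direct case check shows that the hyperedge $e_v$ lies in $\partial^{E'}_H(S)$ iff $v$ has at least one neighbour in $S$ and at least one neighbour in $V_\sym \setminus S$, i.e., iff $v \in \partial_{\rm int}(S) \cup \partial^{V_\sym}_{G_\sym}(S)$. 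Hence $\phi^{E'}_H(S) = \Phi^{V_\sym}_{G_\sym}(S)$ pointwise in $S$; minimising over sets of weighted size $2\delta|V'|$ gives the claim (in fact with equality).

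For the first inequality $\Phi^{V_\sym}_{\delta}(G_\sym) \leq \phi^V_{2\delta}(G)$, I would construct $G_\sym$ as a weighted bipartite symmetrisation of $G$, in the spirit of \cite{LRV13}: duplicate each vertex $v \in V$ into a pair $(v, \tilde v)$ to form $V_\sym \defeq V \sqcup \tilde V$, and for each edge $\{u,v\} \in E$ include the cross-edges $\{u, \tilde v\}$ and $\{\tilde u, v\}$ in $E_\sym$, with vertex weights chosen so that $|V_\sym|_{w_\sym} = 2|V|$. Given the optimal $S \subseteq V$ of size $2\delta n$ realising $\phi^V_{2\delta}(G)$ with boundary $T = \partial^V_G(S)$, the plan is to exhibit a subset $S_\sym \subseteq V_\sym$ of weighted size $\delta |V_\sym|$ (using the factor-of-two slack in the set-size constraint) whose symmetric vertex boundary in $G_\sym$ is controlled by $T$, thereby giving $\Phi^{V_\sym}_{G_\sym}(S_\sym) \leq \phi^V_G(S)$. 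The key feature of the bipartite construction is that each vertex of $S_\sym$ sits on the opposite side of the bipartition from (most of) its neighbours, so the ``internal'' boundary contribution to $\Phi^{V_\sym}$ can be identified with (a copy of) the ``external'' boundary of $S$ in $G$.

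\emph{Main obstacle.} The first inequality is the substantive step. Pointwise on the same graph, $\Phi^V$ always dominates $\phi^V$, so the reduction must genuinely change the underlying graph, and any lift of $S$ could in principle introduce internal boundary contributions that do not appear in $\phi^V_G(S)$. The difficulty lies in designing $G_\sym$ so that vertices contributing to the internal boundary of $S_\sym$ either do not appear or are absorbed into a copy of the external boundary $T$ of $S$. The bipartite symmetrisation, together with the factor-of-two slack in the set-size parameter (which allows $|S_\sym|_{w_\sym}$ to be as large as $|S|$), is what makes such a lift possible without degrading the expansion ratio.
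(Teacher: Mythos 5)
Your reduction $G_\sym \to H$ (closed-neighbourhood hyperedges $e_v = \{v\} \cup N_{G_\sym}(v)$ with weight $w_\sym(v)$) is exactly the paper's Step (ii), and your pointwise identity $\phi^{E'}_H(S) = \Phi^{V_\sym}_{G_\sym}(S)$ is the right reason it works (minor slip: $e_v$ is cut iff the \emph{closed} neighbourhood $\{v\}\cup N_{G_\sym}(v)$ meets both $S$ and its complement, not iff $N_{G_\sym}(v)$ alone does; the conclusion $v \in \partial_{\rm int}(S) \cup \partial^{V_\sym}_{G_\sym}(S)$ is nevertheless correct).

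The gap is in Step (i). Your $G_\sym$ is the bipartite double cover ($V_\sym = V \sqcup \tilde V$, cross edges $\{u,\tilde v\},\{\tilde u,v\}$), and no lift of $S$ makes the inequality $\Phi^{V_\sym}_{G_\sym}(S_\sym) \leq \phi^V_G(S)$ come out. If $S_\sym$ is one copy of $S$, then every non-isolated vertex of $S$ has \emph{all} of its $G_\sym$-neighbours on the opposite side, hence outside $S_\sym$, so $\partial_{\rm int}(S_\sym) \supseteq S_\sym$ (up to isolated vertices) and $\Phi^{V_\sym}_{G_\sym}(S_\sym) \approx 1$. The advertised "factor-of-two slack" does not exist: with $|V_\sym|_{w_\sym} = 2|V|$ the budget is $\delta|V_\sym|_{w_\sym} = 2\delta n = |S|$, so you cannot afford the second copy $\tilde S$, and even $S \cup \tilde S$ would fail, because its internal boundary consists of all vertices of $S$ having a $G$-neighbour outside $S$ — a quantity $\phi^V_G(S)$ does not control (take $G$ a star and $S$ the leaves: $\phi^V_G(S) = 1/|S|$ but every leaf is internally exposed). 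This is the substantive point your "main obstacle" paragraph gestures at but does not resolve, and the double cover simply lacks the mechanism to resolve it. The paper's construction is different: $G_\sym$ is the vertex--edge \emph{incidence} graph, $V_\sym = V \cup E$, with $w_\sym(v)=1$ and $w_\sym((u,v)) = \min\bigl(1/\deg(u),\,1/\deg(v)\bigr)$. One then lifts $S$ to $S$ together with all edge-vertices of edges meeting $S$: no original vertex is then internally exposed, the externally exposed vertices are exactly $\partial^V_G(S)$, and the internally exposed elements are edge-vertices of crossing edges, whose total weight is at most $\sum_{v \in \partial^V_G(S)} \deg(v)\cdot\frac{1}{\deg(v)} = |\partial^V_G(S)|$ precisely because of the degree-normalized weights. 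That small-weight "buffer layer" of edge-vertices is the missing idea; without it (or something playing the same role) the first inequality does not go through.
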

\begin{proof}
	The construction is as follows. 
	
	{\bf Step (i)} $G \mapsto G_{\rm sym}$. Given $G = (V,E)$ we construct a weighted bipartite graph $G_{\sym} = (V_\sym,E_\sym,w_\sym)$ where $V_\sym = V \cup E$ and we add and edge between $v \in V$ and $e \in E$ if edge $e$ is incident to vertex $v$. For every $v \in V$, we assign weight $w_{\sym}(v) = 1$, and for every edge $(u,v) \in E$ we assign the corresponding vertex weight $w_\sym((u,v)) = \min\left(\frac{1}{\rm deg(u)},\frac{1}{\rm deg(v)}\right)$. 
	
	{\bf Step (ii)} $G_{\rm sym} \mapsto H$. Given $G_\sym = (V_\sym,E_\sym)$ we construct the weighted hypergraph $H = (V',E')$ with $V' = V_{\rm sym}$. Furthermore, we include the following hyperedegs: for every vertex $v \in V'$, we introduce the hyperedge $\{v\} \cup N_{G_\sym}(v)$ with weight $w\left(\{v\} \cup N_{G_\sym}(v)\right) = w_\sym(v)$.
	
	It can be verified that the above construction satisfies the expansion preserving properties, for the complete proofs see \cite{LRV13}. 
	
\end{proof}

Now we define the Label Extended Hypergraph as follows

\begin{definition}[Label Extended Hypergraph]			\label{defn:lab-hyper}
	The label extended hypergraph is the weighted hypergraph $H = (V_H,E_H,w)$  obtained by applying Lemma \ref{lem:transform} on the label extended graph $G$. In particular, the vertex set $V_H = V \cup E$ consists of graph vertices $(u,i) \in V$ and edge vertices $\{(u,i),(v,j)\} \in E$. Furthermore, every hyperedge $e \in E_H$ can be identified with a vertex $v \in V_H$ such that $e = \{v\} \cup N_{G_\sym}(v)$. We denote by $\cE(v)$ the hyperedge corresponding to vertex $v$. 
\end{definition}

\section{Approximation Algorithm for \stronguniquegames}		
\label{sec:algo}

We use Algorithm \ref{alg:main} for Theorem \ref{thm:partial-ug-1}.

\begin{algorithm}
	\SetAlgoLined
	\KwIn{A \stronguniquegames~Instance $\mathcal{G}(V_\cG,E_\cG,[k],\{\pi_{e,v}\}_{e \in E,v \in e})$ such that $\sval(\cG) \geq 1 - \epsilon$.}
	Construct the label extended graph $G = (V,E)$ as in Definition \ref{defn:lab-gr}\;
	Construct the label extended hypergraph $H = (V',E',w)$ from $G$ as in Definition \ref{defn:lab-hyper} using $G_\sym$ as the intermediate graph as in Lemma \ref{lem:transform}\;
	Solve the following StrongUG-SDP:
	\[	\text{minimize} \qquad  \sum_{e \in E'} w(e)\max_{u,v \in e}\| x_u - x_v\|^2  \]
		\text{subject\space to}
	\begin{align}
		&\sum_{a \in [n]}\sum_{i \in [k]}\|x_{(a,i)}\|^2 = n(1-\epsilon) \label{eq:c1} \\
		&\sum_{i \in [k]} \|x_{(a,i)}\|^2 \leq 1& \forall a \in [n] \label{eq:am1}\\
		&\|x_u - x_v\|^2 +\|x_u - x_w\|^2 \geq \|x_w - x_v\|^2 & \forall u,v,w \in V' \label{eq:l22} \\
		&\langle x_u, x_v \rangle = 0 &\forall u = (a,i),v = (a,j), i ,j \in [k], i\neq j, a \in V \label{eq:ort} \\
		&0 \leq \langle x_u, x_v \rangle \leq \|x_u\|^2 &\forall u,v \in V' \label{eq:cs}
	\end{align}\\
	Use the Algorithm \ref{alg:ver} to round the SDP solution, and output the solution obtained.
	\caption{Partial Unique Games}
	\label{alg:main}
\end{algorithm}

Given a \stronguniquegames~instance $\cG(V_\cG,E_\cG,[k],\{\pi_e\}_{e \in E})$ such that $\sval(\cG) \geq 1- \epsilon$, the algorithm first constructs the label extended graph $G = (V,E)$. Using Proposition \ref{prop:ssve-ug}, we know that it contains a non-repeating set $S \subset V$ such that $\phi^V_G(S) \leq \epsilon k$ such that $|S| \geq n(1- \epsilon)$. Then using the sequence of transformations described in Lemma \ref{lem:transform}, the algorithm constructs the label extended hypergraph $H = (V',E')$ (as in Definition \ref{defn:lab-hyper}). Again using the properties of the reduction in Lemma \ref{lem:transform}, we know that there exists a non-hypergraph expanding set $S' \subseteq V'$ of weight $2\delta |V'|$ such that $\phi^{E'}_{H}(S') \leq 2 \epsilon n k$. We then solve the Strong-UG SDP on $H$. The objective of the SDP is $\sum_{e \in E'} \max_{u,v \in e}\| x_u - x_v\|^2$ which is a relaxation of the cut function of the hypergraph.
We now give a brief description of each of the constraints of the above SDP.
\begin{itemize}
	\item[$\tright$] $\sum_{a \in [n]}\sum_{i \in [k]}\|x_{(a,i)}\|^2 = n(1-\epsilon)$ is used to normalize the denominator in the expression of hyeprgraph expansion.
	\item[$\tright$] $\sum_{i \in [k]}\|x_{(a,i)}\|^2 \leq 1$ and $\langle x_{(a,i)},x_{(a,j)} \rangle = 0$ are intended to ensure that for every $a \in [n]$ at most one label is chosen per vertex.
	\item[$\tright$] $0 \leq \langle x_u,x_v \rangle \leq \|x_u\|^2$ and the $\ell^2_2$-triangle inequalities ensure that the vector SDP solution is $\ell^2_2$-embeddable, and therefore can be rounded orthogonal separators.
\end{itemize} 

In particular, using $S'$ as witness, we can construct a feasible $\{0,1\}$-solution for which the value of the objective is at most $2\epsilon n k$, which implies that the optimal vector solution obtained by solving the SDP also has objective at most $2\epsilon n k$ (Lemma \ref{obs:spd-opt}). The algorithm proceeds use this vector solution to construct a solution to the \stronguniquegames~instance using the VertexExpansionRound algorithm (Algorithm \ref{alg:ver}), which we describe below.

\subsection{Rounding the SDP solution}

The algorithm for rounding of the vector solution is the following.

\begin{algorithm}
	\SetAlgoLined
	\KwIn{Vector solution $\{x_v\}_{v \in V'}$ for StrongUG-SDP}
	
	{\it Thresholding}: \label{step:thr}
	Construct set $V' \gets \left\{a \in [n] | \sum_{i \in [k]} \|x_{(a,i)}\|^2  \geq \frac12\right\}$\;
	Initialize hypergraph orthogonal separator (Theorem \ref{thm:hos-1}/Theorem \ref{thm:hos-2})
	with parameters $m = 10k, \beta = 1/2$.
	Initialize $t \gets 1$,$H_1 \gets H[V']$ and $C^{(1)} \gets \emptyset$\;
	\While{$H_t  \neq \emptyset$ and  $t \leq \frac{10}{\alpha}\log n $}{ \label{step:while}
		Sample hypergraph orthogonal separator $\Gamma^{(t)}$ for hypergraph $H[V_t]$ as in
		(Theorem \ref{thm:hos-1}/Theorem \ref{thm:hos-2})\;			
		{\it Deleting the boundary}: \label{step:boundary}
		Delete the set of vertices $\left\{v \in V_\cG \Big| \exists  i \in [k] : \cE(v,i) \in \partial_{H_t}(\Gamma^{(t)}) \right\}$ from $V_\cG$ and update the hypergraph $H_t$ and the set $\Gamma^{(t)}$ accordingly\;
		{\it Identifying uniquely labeled vertices}:
		Let $S^{(t)} \gets \left\{(v,i) \in \Gamma^{(t)} : \not{\exists} j \in [k]\setminus\{i\} \mbox{ s.t. } (v,j) \in \Gamma^{(t)} \right\}$\; 
		{\it Update the labeled set and hypergraph for next iteration}:
		Update the following	
		\[
		C^{(t+1)} \gets C^{(t)} \cup S^{(t)} \qquad\qquad\qquad V^{(t+1)}  \gets V^{(t)} \setminus {S}^{(t)} \qquad\qquad\qquad t \gets t + 1.
		\]
	}
	\If{$H_t \neq \emptyset$}{
		Delete left over vertices from $H_t$\;
	}
	Return the subset $V_{\rm fin} = \{v \in V| \exists i \in [k] : (v,i) \in C^{(t)}\}$ and the corresponding labeling $\sigma_{\rm fin} :V \mapsto [k]$;
	
	\caption{VertexExpansionRound}
	\label{alg:ver}
\end{algorithm}

The VertexExpansionRound algorithm takes as input the optimal solution $\{x_{v}\}_{v \in V'}$ to the SDP. It then proceeds to decide which vertices are to be deleted and retained $\cG$ in two phases. We describe them in sequence.

\paragraph{The thresholding step.} In the first step, the algorithm deletes all vertices $a \in [n]$ such that $\sum_{i \in [k]}\|x_{a,i}\|^2  \leq \frac12$ i.e, vertices which are not expected have large contribution to the objective. By averaging, it follows that $V'$ retains at least $(1 - 2\epsilon)$-fraction of vertices (Lemma \ref{lem:thr}).

\paragraph{The ``while'' loop.} This is the core component of the rounding algorithm. At each iteration, the algorithm proceeds to find a subset of the label extended graph on the surviving vertices with small vertex expansion, then deletes the vertices in the internal boundary of the set, thereby making sure that (i) the subset of vertex-label pairs chosen at iteration $t$ satisfy all the constraints induced amongst themselves, and (ii) they will not induce violated edges with new vertices picked in subsequent iterations. 

This idea is implemented in the following way. At iteration $t$, the algorithm samples a hypergraph orthogonal seperator $\Gamma^{(t)}$, and deletes all the vertices $v \in V'$ (and even the corresponding \stronguniquegames~vertices in $\cG$) for which the corresponding hyperedge $\cE(v)$ crosses the cut i.e., $\cE(v) \in \partial^{E'}_{H_t}(\Gamma^{(t)})$. Then it includes all the remaining vertices which appear with unique labels (i.e., $\{v \in V_\cG : |\cC_v \cap \Gamma^{(t)}| = 1\}$) to the current solution $S^{(t)}$ and puts back all the remaining vertices into the hypergraph for consideration in future iterations. A careful charging argument ensures that every vertex deleted from the internal boundary of the set can be charged to its hyperedge getting cut (Lemma \ref{lem:Vf-sat}). Furthermore, the number of hyperedges getting cut can be bounded as a function of the SDP objective using the guarantees of the orthogonal separators (Lemma \ref{lem:V-bound}).


\subsection{Proof of Theorem \ref{thm:partial-ug-1}}
\label{sec:main-proof}	

The proof of Theorem \ref{thm:partial-ug-1} will require us to show that the solution $(V,\sigma)$ returned by the algorithm satisfies the following conditions.

\begin{itemize}
	\item[(i)] The labeling $\sigma : V \mapsto [k]$ satisfies all the constraints in the induced game $\cG[V]$
	\item[(ii)] The set $V$ is not too small.
\end{itemize}

The proof of the first point relies on the following iterative invariant maintained by the algorithm. At any iteration $t$, recall that $V_t$ is the set of vertices added to the set $S$. By deleting all the hyperedges that cross the hypergraph orthogonal separator in iteration $t$, the algorithm ensures that the internal boundary of the set $V_t \cup S^{(t)}$ in the label extended graph $G$, and consequently all the constraints induced in the set $S^{(t)}$ is always satisfiable. 

For point (ii), observe that the VertexExpansionRound algorithm discards vertices in two phases (a) the thresholding step  and (b) the iterations of the while loop. For the thresholding step, we observe that any feasible SDP solution must satisfy the constraint $\sum_{a \in [n]} \sum_{i \in [k]} \|x_{(a,i)}\|^2 \geq n(1-\epsilon)$ and $\sum_{a \in [n]}\|x_{(a,i)}\|^2 \leq 1$ for every vertex $a \in [n]$, and therefore by an averaging argument we can show that the thresholding step retains most of the vertices. Secondly, the analysis of the while loop involves the observation than during any of its iterations, the cost of any vertex getting deleted can be charged to the corresponding hyperedge getting cut by the hypergraph orthogonal separator. Therefore, using the third property of the orthogonal separators, we can bound the expected number of vertices discarded in each iteration as a function of the SDP optimal. Finally, using the first and second properties of the hypergraph orthogonal separator, we show that after $T$ iterations (for an appropriate choices of $T$) all vertices must have been removed from the hypergraph (by either adding them to the set $S$, or by discarding them from consideration). In the following subsections, we establish the points discussed here formally.

\subsubsection{$\mathcal{G}[V_{\rm fin}]$ is fully satisfiable}			
\label{sec:soln-sat}

Here we show that the Unique Games instance induced on the set $V_{\rm fin}$ is fully satisfiable using the labeling $\sigma_{\rm fin}$. In fact, we shall prove a stronger statement which claims that for any iteration $t$, the Unique Game induced on the set of vertices collected till that iteration is completely satisfiable.
\begin{lemma}			\label{lem:Vf-sat}
For any iteration $t$, let $V^{(t)}_\cG := {\rm Vert}\left(C^{(t)}\right)$ denote 
the set of \stronguniquegames~vertices collected till the end of iteration $t$ 
and let $\sigma_{C^{(t)}}:V^{(t)}_\cG  \mapsto [k]$ denote the corresponding labeling 
determined by the set $C^{(t)}$. Then $\sigma_{C^{(t)}}$ satisfies all the constraints in the 
\uniquegames~instance induced on $\cG\left[V^{(t)}_\cG\right]$.
\end{lemma}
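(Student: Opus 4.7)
The proof proceeds by induction on $t$. The base case $t=0$ is vacuous since $C^{(0)} = \emptyset$. For the inductive step, consider any constraint $(u,v) \in E_\cG$ with both $u, v \in V^{(t)}_\cG$; set $\sigma = \sigma_{C^{(t)}}$, $i = \sigma(u)$, $j = \sigma(v)$, and let $t_u, t_v \leq t$ denote the iterations at which $u, v$ were respectively added to $C$. If $t_u, t_v \leq t-1$, the induction hypothesis applied to $C^{(t-1)}$ immediately gives $\pi_{uv}(i) = j$. The plan is to rule out the case $t_u \neq t_v$ entirely, and to handle the remaining case $t_u = t_v = t$ via a direct structural argument.

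The central ingredient is the following property of $\Gamma^{(s)}$ after the boundary deletion step at any iteration $s$: \emph{for every label extended edge $e = \{(x,a),(y,b)\} \in E$ such that both UG-vertices $x$ and $y$ survive the boundary deletion at iteration $s$, either both $(x,a), (y,b) \in \Gamma^{(s)}$ or neither is}. Suppose toward contradiction $(x,a) \in \Gamma^{(s)}$ while $(y,b) \notin \Gamma^{(s)}$. The hyperedge $\cE(x,a)$ contains both $(x,a)$ and the hypergraph vertex corresponding to $e$. If $e \notin \Gamma^{(s)}$, then $\cE(x,a)$ is cut, forcing $x$ to be deleted in the boundary step, a contradiction. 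If instead $e \in \Gamma^{(s)}$, then $\cE(y,b)$, which also contains $(y,b)$ and $e$, is cut, forcing $y$ to be deleted, again a contradiction.

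From this property I derive two conclusions. First, whenever $u, v \in {\rm Vert}(S^{(s)})$ are UG-neighbors, applying the property to the label extended edge $\{(u,i), (v, \pi_{uv}(i))\} \in E$ yields $(v, \pi_{uv}(i)) \in \Gamma^{(s)}$, and the uniqueness condition defining $S^{(s)}$ forces $j = \pi_{uv}(i)$. This handles the case $t_u = t_v = t$. Second, if $u \in {\rm Vert}(S^{(t')})$ with label $i$ and $v$ is any UG-neighbor of $u$ that survives boundary deletion at iteration $t'$, then $v \in {\rm Vert}(S^{(t')})$ as well: the property forces $(v, \pi_{uv}(i)) \in \Gamma^{(t')}$, while for any other label $l$, having $(v,l) \in \Gamma^{(t')}$ would force $(u, \pi_{uv}^{-1}(l)) \in \Gamma^{(t')}$ by the same property, contradicting the uniqueness of $u$'s label in $\Gamma^{(t')}$.

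The second conclusion now rules out $t_u < t_v$: if $v$ survives boundary deletion in iteration $t_u$, it must already join $S^{(t_u)}$, contradicting $t_v > t_u$; and if $v$ is deleted in the boundary step of iteration $t_u$ (or any earlier one), it is permanently removed from $V_\cG$ and cannot appear in $S^{(t_v)}$. By symmetry $t_v < t_u$ is impossible too, so the only remaining case is $t_u = t_v$, already handled above. The delicate point -- and the step I expect to warrant the most care -- is that once $u$ is added at iteration $t_u$ it disappears from the hypergraph, so the constraint $(u,v)$ is no longer directly visible to the rounding procedure in later iterations; the structural claim is what bridges this gap, by forcing $v$'s label to be committed in the same iteration as $u$'s so that a cross-iteration inconsistency can never arise.
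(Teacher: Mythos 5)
Your proposal is correct and follows essentially the same route as the paper: your structural claim (that $\Gamma^{(s)}$-membership propagates across label-extended edges between vertices surviving the boundary deletion) is exactly the mechanism behind the paper's Observation \ref{obs:ob3} and both cases of the iterative invariant (Lemma \ref{lem:iter}), and your first and second conclusions are those two statements in slightly different packaging (the paper phrases the second as ``collected vertices have no neighbors left in the hypergraph,'' you phrase it as ``a surviving neighbor of a vertex labeled at iteration $t'$ is itself labeled, consistently, at iteration $t'$''). The only caveat, which your applications already respect, is that the structural claim and the second conclusion implicitly require the relevant clouds and edge-vertices to still be present in $H^{(s)}$, i.e.\ neither endpoint was labeled or deleted in an earlier iteration.
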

Towards proving this lemma, we first make a couple of easily verifiable observations that will be useful in proving the main lemma of this sections.

\begin{observation}		\label{obs:ob2}
	For any iteration $t$, let $\hat{V}^{(t)} = \left\{v \in V_{\cG}| \exists ! i \in [k] \mbox{ s.t. } (v,i) \in \Gamma^{(t)}  \right\}$ be the set of vertices that are sampled with unique labels. Then $\hat{V}^{(t)} \cap V^{(t')}_H = \emptyset$ for every iteration $t' \geq t+1$.
\end{observation}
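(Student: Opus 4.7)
The plan is to unwind the definitions of $\hat{V}^{(t)}$, $S^{(t)}$, and the update rule $V^{(t+1)} \gets V^{(t)} \setminus S^{(t)}$ directly; the statement is essentially a bookkeeping check on the algorithm's invariants.

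First I would fix an arbitrary vertex $v \in \hat{V}^{(t)}$ and verify that $v$ is removed from consideration at the end of iteration $t$. By the definition of $\hat{V}^{(t)}$, there exists a unique label $i \in [k]$ such that $(v,i) \in \Gamma^{(t)}$ (after the boundary deletion step has been applied to $\Gamma^{(t)}$). Reading off the rule in Algorithm \ref{alg:ver} that defines $S^{(t)} = \{(v,i) \in \Gamma^{(t)} : \not\exists j \neq i \text{ with } (v,j) \in \Gamma^{(t)}\}$, this is precisely the condition for $(v,i) \in S^{(t)}$. Hence $v \in {\rm Vert}(S^{(t)})$, and the update rule removes the cloud $\cC_v$ from the vertex set $V^{(t+1)}_H$ of the hypergraph carried forward to the next iteration. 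In particular, $\hat{V}^{(t)} \cap V^{(t+1)}_H = \emptyset$.

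Next I would extend this to all $t' \geq t+1$ by a trivial induction: at every iteration, the vertex set $V^{(t')}_H$ only ever loses vertices (either via the boundary deletion step or via the $S^{(t')}$ update), so $V^{(t'+1)}_H \subseteq V^{(t')}_H$. Combined with the base case $v \notin V^{(t+1)}_H$, this yields $v \notin V^{(t')}_H$ for every $t' \geq t+1$, giving the claimed emptiness of the intersection.

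I do not anticipate any real obstacle. The only mildly delicate point is notational: the statement $V^{(t+1)} \gets V^{(t)} \setminus S^{(t)}$ identifies a UG vertex $v$ with its cloud $\cC_v$ in the label-extended hypergraph, so that removing a single pair $(v,i) \in S^{(t)}$ should be read as removing all of $\cC_v$ from $V^{(t+1)}_H$. This is the only interpretation consistent with the algorithm's intent of producing a well-defined partial labeling, and once it is adopted the observation is immediate.
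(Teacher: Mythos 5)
Your proof follows the same overall route as the paper -- reduce to $t' = t+1$ via monotonicity of the hypergraph vertex sets and argue that the entire cloud of any $v \in \hat{V}^{(t)}$ is removed during iteration $t$ -- and your remark that the update $V^{(t+1)} \gets V^{(t)} \setminus S^{(t)}$ must be read as removing all of $\cC_v$ matches the paper's intent. However, as written you cover only one of the two cases the paper treats. You assume that the unique-label condition defining $\hat{V}^{(t)}$ holds for $\Gamma^{(t)}$ \emph{after} the boundary-deletion step, and from there conclude $(v,i) \in S^{(t)}$. But $\hat{V}^{(t)}$ is the set of vertices \emph{sampled} with unique labels, i.e.\ it is defined with respect to the separator as drawn; such a vertex $v$ may nevertheless satisfy $\cE((v,i)) \in \partial_{H^{(t)}}(\Gamma^{(t)})$ for some $i \in [k]$, in which case it is deleted in Step \ref{step:boundary} of Algorithm \ref{alg:ver} and never enters $S^{(t)}$, so your chain of implications does not apply to it. The omission is harmless and the fix is one line -- in that case $v$ is removed from the hypergraph outright, so certainly $v \notin {\rm Vert}(H^{(t+1)})$ -- and this is precisely the first of the two cases in the paper's proof (the second case, where no hyperedge of $v$ crosses the separator and the label gets decided, is the one you argue). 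Either add that case, or justify explicitly that the post-deletion reading of $\hat{V}^{(t)}$ is sufficient for the only place the observation is used (the proof of Lemma \ref{lem:iter}, where the vertex in question survives Step \ref{step:boundary}, so the pre- and post-deletion intersections $\cC_w \cap \Gamma^{(t)}$ coincide).
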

\begin{proof}
	Since $V^{(t')}_H \subseteq V^{(t)}_H$ for every iteration $t' \geq t+1$, it suffices to show the statement for $t' = t+1$. Observe that for any vertex $v \in \hat{V}^{(t)}$, at least one of the events holds for iteration $t$.
	\begin{itemize}
		\item $\cE((v,i)) \in \partial_{H^{(t)}}\left(\Gamma^{(t)}\right)$ for some $i \in [k]$. In this case, the vertex $v$ gets deleted from $H^{(t)}$.
		\item $\cE((v,i)) \notin \partial_{H^{(t)}} \left(\Gamma^{(t)}\right)$ for any $i \in [k]$. In this case, the label of the vertex gets decided in iteration $t$, and it is removed from $H^{(t)}$.
	\end{itemize} 
	So in both cases above, the cloud of vertices corresponding to vertex $v$ gets removed from $H^{(t+1)}$.
\end{proof}
	
\begin{observation} 		
\label{obs:ob3}
	For any iteration $t$, let $A \subset S^{(t)}$ be such that for all $u \in A, i \in [k]$, $\cE((u,i)) \notin \partial_{H^{(t)}}(\Gamma^{(t)})$. Furthermore, let $\sigma_A : A \mapsto [k]$ be the corresponding labeling. Then all constraints induced in $\cG[{\rm Vert}(A)]$ are completely satisfied by $\sigma_A$.
\end{observation}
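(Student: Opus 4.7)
The plan is to prove this observation by a direct contradiction argument that traces a single violated constraint through the label extended hypergraph $H$ and derives a contradiction with the uniqueness property defining $S^{(t)}$. Fix an edge constraint $(u,v) \in E_{\cG}$ with $u, v \in {\rm Vert}(A)$, and write $i = \sigma_A(u)$, $j = \sigma_A(v)$, so that $(u,i), (v,j) \in A \subseteq \Gamma^{(t)}$. Assume toward contradiction that $j' := \pi_{(u,v)}(i) \neq j$. By Definition~\ref{defn:lab-gr}, $G$ contains the edge $\{(u,i), (v,j')\}$, which I will denote by $e^*$; by the constructions in Lemma~\ref{lem:transform} and Definition~\ref{defn:lab-hyper}, $e^*$ appears as a vertex of $V'$ in $H$ lying in $N_{G_\sym}((u,i)) \cap N_{G_\sym}((v,j'))$. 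Consequently $e^*$ belongs to both hyperedges $\cE((u,i))$ and $\cE((v,j'))$.

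The core step is a two-hop ``ping-pong'' between cloud vertices and edge vertices using the non-cut hypothesis on $A$. First, since $(u,i) \in \Gamma^{(t)}$ witnesses that $\cE((u,i))$ meets $\Gamma^{(t)}$, and $\cE((u,i)) \notin \partial_{H^{(t)}}(\Gamma^{(t)})$ by hypothesis, every surviving vertex of $\cE((u,i))$ must also lie in $\Gamma^{(t)}$; in particular, $e^* \in \Gamma^{(t)}$. Next, since $v \in {\rm Vert}(A)$ and $j' \in [k]$, the hypothesis applies again to give $\cE((v,j')) \notin \partial_{H^{(t)}}(\Gamma^{(t)})$. Because this hyperedge now contains $e^* \in \Gamma^{(t)}$ it cannot be cut, which forces $(v,j') \in \Gamma^{(t)}$. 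But then both $(v,j)$ and $(v,j')$ lie in $\Gamma^{(t)}$ with $j \neq j'$, contradicting the defining property of $S^{(t)}$ together with $(v,j) \in A \subseteq S^{(t)}$.

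The one bookkeeping point worth verifying is that the vertices $(u,i), (v,j'), e^*$ and the hyperedges $\cE((u,i)), \cE((v,j'))$ are all genuinely present in $H^{(t)}$ at the moment the hypothesis is invoked, i.e., after the boundary-deletion step of iteration $t$. This is immediate once one observes that $u$ and $v$ themselves survive the deletion pass (their clouds contribute to $S^{(t)} \supseteq A$), so the full clouds $\cC_u, \cC_v$ along with every edge-vertex of $G$ between them remain in the vertex set of $H^{(t)}$. I do not anticipate any further conceptual obstacle: the whole argument is a short chase through the hypergraph incidence structure, and the main insight is simply that the non-cut hypothesis must be invoked at \emph{both} endpoints $(u,i)$ and $(v,j')$ of the ``bad'' edge $e^*$ in $G$ in order to produce the forbidden second label at $v$.
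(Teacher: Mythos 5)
Your proof is correct and is essentially the paper's own argument: the paper also assumes a violated constraint $(u,v)$, looks at the edge-vertex $e^*=\{(u,i),(v,j')\}$ shared by $\cE((u,i))$ and $\cE((v,j'))$, and derives a contradiction from the non-cut hypothesis together with the uniqueness defining $S^{(t)}$. Your two-hop chase (first forcing $e^*\in\Gamma^{(t)}$ via $\cE((u,i))$, then forcing $(v,j')\in\Gamma^{(t)}$ via $\cE((v,j'))$) merely spells out the step the paper's terser write-up leaves implicit, so there is no gap and no genuinely different route.
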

\begin{proof}
	We prove this by contradiction. Suppose there exists a edge $(u,v)$ in the induced \uniquegames~instance $\cG[{\rm Vert}(A)]$ i.e., $\sigma_A(v) \neq \pi_{uv}(\sigma_A(u)) = l$ (say). Then by definition of $\sigma_A$, we must have $(v,l) \notin A$. But $(v,l) \in N_{G_{\rm sym}}(u,i)$ which implies that the edge $e = \{(u,i),(v,l)\} \in \cE((v,l))$ i.e, $\cE((v,l)) \in \partial_{H^{(t)}}(\Gamma^{(t)})$, which gives us the contradiction.  
\end{proof}

Our proof of Lemma \ref{lem:Vf-sat} uses the following iterative invariant maintained by the algorithm: for any iteration $t$, the set of vertices collected in $C^{(t+1)}$ has an empty internal boundary in the hypergraph $H^{(t+1)}$. 
This directly implies that any labeling assigned to vertices added in subsequent iterations is not going to induce violated edges incident on the $V^{(t)}_\cG$. 
	
\begin{lemma}[Iterative Invariant]
\label{lem:iter}
For any iteration $t$, 
	$N_{\cG}\left({\rm Vert}(C^{(t)}) \right) \cap {\rm Vert}(H^{(t)}) = \emptyset$.
\end{lemma}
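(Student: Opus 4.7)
The plan is to prove this by induction on $t$. The base case $t=1$ is immediate since $C^{(1)} = \emptyset$. For the inductive step, suppose the invariant holds at iteration $t$. Writing $C^{(t+1)} = C^{(t)} \cup S^{(t)}$ gives
\[
N_{\cG}({\rm Vert}(C^{(t+1)})) = N_{\cG}({\rm Vert}(C^{(t)})) \cup N_{\cG}({\rm Vert}(S^{(t)})),
\]
and the first piece is disjoint from ${\rm Vert}(H^{(t+1)})$ by the inductive hypothesis, since $H^{(t+1)}$ is obtained from $H^{(t)}$ purely by further deletions. The core of the argument is therefore to establish the analogous statement for the newly added vertices $S^{(t)}$.

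To this end, fix any $u \in {\rm Vert}(S^{(t)})$ with unique label $i$ (so $(u,i) \in \Gamma^{(t)}$ and no $(u,i'), i' \ne i$, lies in $\Gamma^{(t)}$), and fix any neighbor $v \in N_{\cG}(u)$. The goal is to show $v \notin {\rm Vert}(H^{(t+1)})$ by tracing through the two-layer structure of the label-extended hypergraph $H$. Let $j = \pi_{(u,v)}(i)$, so that $e_0 := \{(u,i),(v,j)\}$ is an edge of the label-extended graph $G$ and hence an edge-vertex in $V_H$. Because $u$ was not deleted in iteration $t$, none of the hyperedges $\cE((u,i'))$ cross $\Gamma^{(t)}$; combined with $(u,i) \in \Gamma^{(t)}$, this forces $\cE((u,i)) \subseteq \Gamma^{(t)}$, and in particular $e_0 \in \Gamma^{(t)}$ since $e_0 \in N_{G_{\sym}}((u,i))$.

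Now suppose for contradiction that $v \in {\rm Vert}(H^{(t+1)})$. Then $v$ also survived iteration $t$'s deletion step, so no $\cE((v,j'))$ crosses $\Gamma^{(t)}$. Since $\cE((v,j))$ contains $e_0 \in \Gamma^{(t)}$, we deduce $\cE((v,j)) \subseteq \Gamma^{(t)}$ and hence $(v,j) \in \Gamma^{(t)}$. As $v \notin {\rm Vert}(S^{(t)})$, the label $j$ cannot be the unique label of $v$ in $\Gamma^{(t)}$, so there exists $j' \neq j$ with $(v,j') \in \Gamma^{(t)}$. Running the same non-crossing argument backwards, the hyperedge $\cE((v,j'))$ lies entirely in $\Gamma^{(t)}$, which puts the edge-vertex $e_1 := \{(u,i'),(v,j')\}$ (where $i' = \pi_{(u,v)}^{-1}(j')$) into $\Gamma^{(t)}$. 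Applying the non-crossing property one final time to $\cE((u,i'))$ gives $(u,i') \in \Gamma^{(t)}$, contradicting the uniqueness of $i$ as $u$'s label in $\Gamma^{(t)}$.

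The main subtlety I anticipate is the bookkeeping in this backward trace: one must invoke the non-crossing property for three distinct hyperedges in turn (at $u$, then at $v$, then back at $u$), using the key combinatorial fact that each edge-vertex of $V_H$ lies in exactly two $\cE(\cdot)$-hyperedges, namely those indexed by its two endpoint clouds. Once this is set up correctly, the argument is entirely combinatorial, drawing only on the construction of $H$ via Lemma \ref{lem:transform} and the uniqueness-of-label criterion used by the algorithm to define $S^{(t)}$; the orthogonality SDP constraints and the distortion bounds of the hypergraph orthogonal separator play no direct role in this particular lemma.
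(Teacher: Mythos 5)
Your proof is correct and takes essentially the same route as the paper's: induction on $t$, with the inductive step driven by the facts that hyperedges $\cE(\cdot)$ of vertices surviving the deletion step do not cross $\Gamma^{(t)}$, that each edge-vertex of the label extended hypergraph lies in the two hyperedges indexed by its endpoints, and that vertices retained in $S^{(t)}$ carry a unique label (the paper's Observation \ref{obs:ob2}). The only difference is organizational: the paper fixes $w \in {\rm Vert}(H^{(t+1)})$ and case-splits on whether $\cC_w \cap \Gamma^{(t)}$ is empty, while you fix a labeled $u$ and a neighbor $v$ and chain the non-crossing property $u \to v \to u$ to contradict uniqueness of $u$'s label — the same ingredients in a different order.
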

	
\begin{proof}
	The proof is by induction on the number of iterations. We assume that the claim holds up to some iteration $t$ i.e., 
$N_{\cG}\left({\rm Vert}(C^{(t)}) \right) \cap {\rm Vert}(H^{(t)}) = \emptyset$. 
Recall that $S^{(t)}$ is the set of ``uniquely picked'' vertices in iteration $t$ 
and that ${\rm Vert}(C^{(t+1)}) = {\rm Vert}(C^{(t)}) \cup {\rm Vert}(S^{(t)}) $.
		
Therefore, to prove the inductive claim it suffices to show that 
(i) $N_{\cG}\left({\rm Vert}(C^{(t)}) \right) \cap {\rm Vert}(H^{(t+1)}) = \emptyset$ and 
(ii) $ N_{\cG}\left({\rm Vert}(S^{(t)}) \right) \cap {\rm Vert}(H^{(t+1)}) = \emptyset$. 
The first point follows from the induction hypothesis, since 
$N_{\cG}\left({\rm Vert}(C^{(t)}) \right) \cap {\rm Vert}(H^{(t+1)})  
\subseteq 
N_{\cG}\left({\rm Vert}(C^{(t)}) \right) \cap {\rm Vert}(H^{(t)}) = \emptyset$.
For the second point, fix any vertex $w \in {\rm Vert}(H^{(t+1)})$. There are two cases:
\begin{enumerate}
\item
{\bf $\cC_w \cap \Gamma^{(t)} = \emptyset$.} If $w \notin N_\cG({\rm Vert}(S^{(t)}))$, then we are done. Otherwise, suppose there exists $(u,i) \in S^{(t)}$ such that $w \in N_{\cG}(u)$.
Since vertex $u$ does not get deleted in Step \ref{step:boundary} of Algorithm \ref{alg:ver}, 
we have $\cE((u,i)) \notin \partial_{H^{(t)}}(\Gamma^{(t)})$, which implies that the vertex $\{(u,i),(w,\pi_{uw}(i))\} \in \Gamma^{(t)}$.
Since $\cC_w \cap \Gamma^{(t)} = \emptyset$, this implies that 
$\cE((w,\pi_{uw}(i))) \in \partial_{H^{(t)}}(\Gamma^{(t)})$, and therefore vertex $w$ 
gets deleted in Step \ref{step:boundary} of Algorithm \ref{alg:ver}, in iteration $t$. 
Therefore, any vertex $w$ such that $\cC_w \cap \Gamma^{(t)} = \emptyset$ 
belongs to ${\rm Vert}(H^{(t+1)})$ only if $w \notin N_\cG({\rm Vert}(S^{(t)}))$.		

\item
{\bf $\cC_w \cap \Gamma^{(t)} \neq \emptyset$.} 
Using Observation \ref{obs:ob2}, since $w \in {\rm Vert}(H^{(t+1)})$, we must have 
$|\cC_w \cap \Gamma^{(t)}| \geq 2$.
Suppose $(w,j),(w,j') \in \cC_w \cap \Gamma^{(t)}$ for a pair of distinct labels $j,j' \in [k]$.
Furthermore, since vertex $w$ does not get deleted in Step \ref{step:boundary} of Algorithm \ref{alg:ver} in iteration $t$, 
we must have $\cE((w,j)),\cE((w,j')) \notin \partial_{H^{(t)}}(\Gamma^{(t)})$.
If $N_{\cG}(w) \cap  {\rm Vert}(S^{(t))} = \emptyset$, then we are done. 
If not, fix a vertex $(u,i)\in S^{(t)}$ such that $u \in N_{\cG}(w)$. 
Since the vertex $(u,i)$ is uniquely picked from $\cC_u$, $(u,\pi_{wu}(j)) \notin \Gamma^{(t)}$ or $(u,\pi_{wu}(j')) \notin \Gamma^{(t)}$. 
We claim that $\cE((u,\pi_{wu}(j))) \in \partial_{H^{(t)}}(\Gamma^{(t)})$ or $\cE((u,\pi_{wu}(j'))) \in \partial_{H^{(t)}}(\Gamma^{(t)})$
To see this, without loss of generality, suppose $(u,\pi_{wu}(j)) \notin \Gamma^{(t)}$. Note that by construction $N_{G_\sym}((w,j)) = \left\{ \{(w,j)(v,\pi_{wv}(j))\} : v \in N_{\cG}(w) \right\}$ and therefore 
\[ \cE((w,j)) = \left\{(w,j)\right\} \cup \left\{\{(w,j),(v,\pi_{wv}(j))\} : v \in N_\cG(w) \right\}.\]
Let $l = \pi_{wu}(j)$. Then 
$\cE((u,l))= \left\{(u,l)\right\} \cup \left\{\{(u,l),(v',\pi_{uv'}(l))\} : v' \in N_\cG(u) \right\}.$
Now consider the edge $e = \{(w,j),(u,l)\}$ in the label extended graph. 
By construction, $e$ is a vertex in the hypergraph $H^{(t)}$, and  $e \in \cE((w,j))$ 
and $e \in \cE((u,l))$. But $\cE((w,j)) \notin \partial_{H_T}(\Gamma^{(t)})$ and $(w,j) \in \Gamma^{(t)}$ 
which implies that $e \in \Gamma^{(t)}$. This together with the fact that $(u,l) \notin \Gamma^{(t)}$ implies that $\cE((u,l)) \in \partial_{H^{(t)}}(\Gamma^{(t)})$.
		
From the above observation it follows that vertex $u$ must get deleted. Since the 
above arguments apply to any choice of $u \in N_\cG(w)$ we have that all the vertices in 
$N_{\cG}(w) \cap {\rm Vert}(S^{(t)})$ get deleted in Step \ref{step:boundary} of 
Algorithm \ref{alg:ver} in  iteration $t$.
\end{enumerate}
Therefore, the iterative invariant holds.		
\end{proof}

\begin{proof}[Proof of Lemma \ref{lem:Vf-sat}]	
Using Lemma \ref{lem:iter}, we now complete the proof of the Lemma \ref{lem:Vf-sat}. 
The proof is again by induction on the number of iterations. Suppose 
$\sval(\cG[V^{(t)}_\cG])  = 1$ with labeling $\sigma_{C^{(t)}}$, for some iteration $t$. 
We shall then use this to show that $\sval(\cG[V^{(t+1)}_\cG]) = 1$ with 
labeling $\sigma_{C^{(t+1)}}$. Note that $C^{(t+1)} \gets C^{(t)} \cup S^{(t)}$
and by induction hypothesis we know $\sval(\cG[V^{(t)}_\cG]) = 1$ with labeling $\sigma_{S^{(t)}}$.
Furthermore from Lemma \ref{lem:iter}, we know vertices in $V^{(t)}_\cG$ do not 
share constraints with ${\rm Vert}(S^{(t)})$. Therefore, it suffices to show that 
$\sval(\cG[S^{(t)}]) = 1$ with labeling $\sigma_{S^{(t)}}$. However, we know that 
for every $(u,i) \in S^{(t)} \times [k]$, we must have 
	$\cE((u,i)) \notin \partial_{H^{(t)}}(\Gamma^{(t)})$ (otherwise vertex $u$ gets deleted
in Step \ref{step:boundary} of Algorithm \ref{alg:ver}). Therefore, using 
Observation \ref{obs:ob3} we get that $\sigma_{S^{(t)}} : S^{(t)} \mapsto [k]$ 
satisfies all the constraints in \uniquegames~instance induced on 
${\rm Vert}(S^{(t)})$. Therefore, we have that $\sval(\cG[V^{(t+1)}_\cG]) = 1$ with 
labeling $\sigma_{C^{(t+1)}}$, which proves the induction step.
\end{proof}

\subsubsection{Bounding number of discarded vertices}			
\label{sec:vert-bound}
The number of vertices discarded in step \ref{step:thr} of Algorithm \ref{alg:ver} can be bounded
by an application of Markov's inequality.
\begin{lemma}			
\label{lem:thr}
Let $V'$ be the thresholded set of vertices constructed in Step \ref{step:thr} of Algorithm \ref{alg:ver}. 
	Then $|V'| \geq (1-2\epsilon)n$.
\end{lemma}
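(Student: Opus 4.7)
The plan is to prove Lemma \ref{lem:thr} by a direct averaging argument that combines the SDP mass constraint \eqref{eq:c1} with the per-vertex upper bound \eqref{eq:am1}. The thresholding step keeps exactly those vertices whose total squared norm across all labels is at least $1/2$, and the SDP guarantees that the total squared norm across all vertex-label pairs is $n(1-\epsilon)$, while each individual vertex contributes at most $1$. So the discarded vertices are the ones carrying ``small'' SDP mass, and by averaging there cannot be too many of them.

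Concretely, let $T = [n] \setminus V'$ denote the set of discarded vertices. By definition of $V'$, every $a \in T$ satisfies $\sum_{i \in [k]} \|x_{(a,i)}\|^2 < 1/2$, and by constraint \eqref{eq:am1} every $a \in V'$ satisfies $\sum_{i \in [k]} \|x_{(a,i)}\|^2 \leq 1$. Plugging these two bounds into the global identity \eqref{eq:c1} gives
\[
n(1-\epsilon) \;=\; \sum_{a \in V'} \sum_{i \in [k]} \|x_{(a,i)}\|^2 + \sum_{a \in T} \sum_{i \in [k]} \|x_{(a,i)}\|^2 \;\leq\; |V'| \cdot 1 + (n-|V'|) \cdot \tfrac{1}{2} .
\]
Rearranging yields $|V'|/2 \geq n(1/2 - \epsilon)$, i.e. $|V'| \geq (1 - 2\epsilon) n$, as claimed.

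There is no real obstacle here; this is a one-line Markov-style deduction from the SDP's linear constraints, and the only thing to be careful about is invoking both \eqref{eq:c1} (as an equality lower bound on total mass) and \eqref{eq:am1} (to cap the surviving vertices at $1$). The content of the lemma is just that the feasibility of the SDP forces the ``heavy'' vertices to form a large fraction of $[n]$ whenever the total mass is close to $n$.
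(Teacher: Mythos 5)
Your proof is correct and is essentially the same argument as the paper's: the paper applies Markov's inequality to the per-vertex mass $\sum_{i \in [k]}\|x_{(a,i)}\|^2$ using constraints \eqref{eq:c1} and \eqref{eq:am1}, which is exactly the averaging computation you have written out explicitly. No gaps.
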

\begin{proof}
	From the SDP constraints, we have that 
	\[
	\E_{a \sim [n]} \left[\sum_{i \in [k]} \|x_{(a,i)}\|^2\right]  = \frac1n \sum_{a \in [n]}\sum_{i \in [k]} \|x_{a,i}\|^2 \geq (1-\epsilon).
	\]
	On the other hand, for every $a \in [n]$, we also have that $\sum_{i \in [k]} \|x_{(a,i)}\|^2 \leq 1$. Therefore, using Markov's inequality, it follows that for at least $1 - 2 \epsilon$ fraction of the vertices we have $\sum_{i \in [k]}\|x_{(a,i)}\|^2 \geq 1/2$.  
\end{proof}

Bounding the number of vertices deleted in Step \ref{step:while} (the ``while loop'') of Algorithm \ref{alg:ver}
requires more work.
As a first step, we show that the optimal value of the StrongUG-SDP is small as stated below.
\begin{restatable}{lem}{SdpOpt}  \label{obs:spd-opt}
	The SDP optimal is at most $2\epsilon n k$.
\end{restatable}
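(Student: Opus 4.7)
The plan is to exhibit an explicit $\{0,1\}$-valued feasible solution to the SDP whose objective value is at most $2\epsilon n k$; since the SDP is a minimization problem, this bound then transfers to the optimum.

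Since $\sval(\cG) \geq 1-\epsilon$, there exist $T \subseteq V_\cG$ with $|T| = (1-\epsilon)n$ (WLOG after possibly shrinking, as any induced subinstance of a fully satisfied instance is still fully satisfied) and a labeling $\sigma: T \to [k]$ satisfying every constraint of $\cG[T]$. I set $S := \{(a,\sigma(a)) : a \in T\} \subseteq V$, which is a non-repeating set in the label-extended graph $G$ of size $(1-\epsilon)n$. A direct counting argument then gives $|\partial^V_G(S)| \leq \epsilon n k$: since $\sigma$ satisfies every edge inside $T$, the external boundary of $S$ lives entirely inside clouds $\cC_b$ with $b \notin T$, and each such cloud contributes at most $k$ vertices. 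This is also what underlies the first half of Proposition~\ref{prop:ssve-ug}.

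Following the construction of Lemma~\ref{lem:transform}, I lift $S$ to $S' := S \cup \{e \in E : e \text{ has at least one endpoint in } S\} \subseteq V'$ and define SDP vectors by one-dimensional indicators: $x_v = 1$ if $v \in S'$ and $x_v = 0$ otherwise. Feasibility is immediate: \eqref{eq:c1} holds with equality since $\sum_{(a,i)}\|x_{(a,i)}\|^2 = |S| = (1-\epsilon)n$; \eqref{eq:am1} and \eqref{eq:ort} hold because $S$ is non-repeating, so at most one coordinate per cloud equals $1$; and \eqref{eq:l22}, \eqref{eq:cs} are trivial for indicator scalars.

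For this indicator solution, the SDP objective equals the total $w_\sym$-weight of hyperedges cut by $S'$. Each hyperedge has the form $\cE(v) = \{v\} \cup N_{G_\sym}(v)$ with weight $w_\sym(v)$, so $\cE(v)$ is cut iff $v$ lies in the symmetric vertex boundary of $S'$ in the bipartite graph $G_\sym$. By the bipartite structure of $G_\sym$ and the choice of $S'$, this symmetric boundary decomposes as (i) vertices $v \in V \setminus S$ with a neighbor in $S$ through some label-extended edge, which are exactly the $|\partial^V_G(S)| \leq \epsilon n k$ elements of $\partial^V_G(S)$, each of $w_\sym$-weight $1$; and (ii) edges $e \in E$ straddling $S$ and $V \setminus S$. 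For (ii), I charge each straddling edge to its endpoint $v \in \partial^V_G(S)$: since there are at most $\deg(v)$ such edges and each has weight $w_\sym(e) \leq 1/\deg(v)$, the aggregate contribution per $v$ is at most $1$, so part (ii) contributes at most $|\partial^V_G(S)| \leq \epsilon n k$ as well. Summing yields objective at most $2\epsilon n k$, as claimed.

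The main technical subtlety is the weighted charging in part (ii), which exploits the $\min$ in the definition $w_\sym(e) = \min(1/\deg(u),1/\deg(v))$ to dominate each edge-weight by $1/\deg(v)$ for the outside endpoint $v$; without this asymmetric cancellation, the edge-boundary contribution could dwarf the vertex-boundary. Feasibility of the indicator solution and the hyperedge-versus-symmetric-vertex-boundary correspondence are routine once the construction of $G_\sym$ and $H$ from Lemma~\ref{lem:transform} is unpacked.
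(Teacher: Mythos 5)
Your proof is correct, and it follows the same overall strategy as the paper: use the optimal satisfiable set $S$ to build a $\{0,1\}$ feasible solution (feasibility of \eqref{eq:c1}--\eqref{eq:cs} being immediate from non-repetition and the $0/1$ structure), and then bound the objective by a vertex-boundary term of weight at most $\epsilon n k$ plus an edge-boundary term that is charged, via the $\min\paren{1/\deg(u),1/\deg(v)}$ weights, to vertices of $\partial^V_G(S)$, giving $2\epsilon nk$ in total. The one substantive place where you deviate is the value assigned to the edge-vertices of $V'$: the paper sets an edge-vertex to $1$ only when \emph{both} of its endpoints lie in $S'$, whereas you set it to $1$ when \emph{at least one} endpoint lies in $S$. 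Your convention has a concrete payoff in the accounting: with it, a vertex-type hyperedge $\cE((a,i))$ is cut exactly when $(a,i) \in \partial^V_G(S)$ (no vertex of $S$ is ever on the inner boundary of your lifted set, since all its incident edge-vertices are included), so the unit-weight contribution is literally $\Abs{\partial^V_G(S)} \leq \epsilon nk$ with no need to reason about the internal boundary of $S'$, which is the delicate point in the paper's Claim \ref{cl:x-identity}. The edge-type hyperedges are cut exactly on $\partial^E_G(S)$ under either convention, and your per-vertex charging argument for them is the same as the paper's. So: same route, with a slightly different (and cleaner) lift of the indicator to the edge-vertices.
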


\begin{proof}
	It suffices to show a feasible solution $(\hat{x}_v)_{v \in V_H}$ under which the objective is at most $2\epsilon n k$.
	Let $S \subseteq [n]$ be the set of size at least $(1-\epsilon)n$ such that $\val(\mathcal{G}[S]) = 1$, and let $\sigma_S : S \mapsto [k]$ be the corresponding labeling which fully satisfies $\mathcal{G}[S]$. Let $S' = \{(a,\sigma_S(a)) | a \in S\}$ denote the corresponding non-vertex expanding set in the label extended graph $G$. We set the $\hat{x}_v$ variables as follows:
	\[
	\widehat{x}_{v}= 
	\begin{cases}
	\mathbbm{1}\left((a,i) \in S' \right)  & \text{if } v = (a,i) \mbox{ for some } a \in [n],\ i \in [k]\\
	\mathbbm{1}\left( (a,i), (b,j) \in S' \right) 
			& \text{if }  v = ((a,i),(b,j)) \mbox{ for some } a,b \in [n],\ i,j \in [k]
	\end{cases}
	\]
	We argue that $(\hat{x}_v)_{v \in V_H}$ forms a feasible solution to the StrongUG-SDP. 
	To begin with, for every $\ug$-vertex $a \in [n]$, we have 
	\[
	\sum_{i \in [k]} \|x_{a,i}\|^2 
	= \sum_{i \in [k]} \mathbbm{1}\left(\left\{\sigma_S(a) = i\right\} \wedge \left\{a \in S\right\}\right) 
	= \mathbbm{1}\left(\left\{a \in S\right\}\right) \leq 1.
	\] 
	This satisfies Constraint \ref{eq:am1}. Constraint \ref{eq:c1} also holds since
	\[
	\sum_{a \in [n]}\sum_{i \in [k]} \|x_{a,i}\|^2 
	= \sum_{a \in [n]}\mathbbm{1}\left(\left\{a \in S\right\}\right) 
	= |S| \geq n(1-\epsilon).
	\]
	Furthermore, since for every vertex $v \in V$, $\hat{x}_v \in \{0,1\}$, the solution trivially satisfies the $\ell^2_2$-triangle inequalities (Constraint \ref{eq:l22}) and Cauchy-Schwarz constraints (Constraints \ref{eq:cs}). Finally, for every $a \in [n]$ at most one of the $\hat{x}_{a,i}$ variables are set to $1$, the orthogonality constraints (Constraint \ref{eq:ort}) are also satisfied. 
Therefore, $(\hat{x}_v)_{v \in V_H}$ is a feasible SDP solution.

	Now we claim that for $(\hat{x}_v)_{v \in V_H}$, the objective value is at most $2\epsilon n k$. This is an immediate consequence of the fact that the vector $(\hat{x}_v)_{v \in V}$ is precisely the indicator of the non-hyperedge expanding small set that is given by the sequence of reductions from Lemma \ref{lem:transform}.
To begin with, we claim the following identities are satisfied by the $(\hat{x}_v)_{v \in V}$ vectors.
\begin{claim}
\label{cl:x-identity}
\begin{enumerate}
	\item For any vertex $(a,i) \in V$ where $a \in [n],\ i \in [k]$, we have 
\[
	\max_{u,v \in \cE((a,i))} \|\hat{x}_u - \hat{x}_v\|^2 
		= \mathbbm{1}\left(\{(a,i) \in \partial^{V}_G(S')\}\right).
	\]
\item For any edge vertex $v' = \{(a,i),(b,j)\}$ we have 
	\[
	\max_{u,v \in e\{(a,i),(b,j)\}} \|\hat{x}_u - \hat{x}_v\|^2 
		= \mathbbm{1}\left(\{(a,i) \in \partial^{E}_G(S')\}\right).
	\]
\end{enumerate}
\end{claim}
\begin{proof}
For the first point observe that for any $(a,i) \in V$, by construction we have   
\begin{align*}
	\max_{u,v \in \cE((a,i))} \|\hat{x}_u - \hat{x}_v\|^2 
& = \mathbbm{1}\left(\left\{(a,i) \notin S'\right\} \wedge \left\{ N_{G}((a,i)) \cap S' \neq \emptyset\right\} \right) \\ 
& \qquad +  \mathbbm{1}\left(\left\{(a,i) \in S'\right\} \wedge \left\{ N_{G}((a,i)) \cap (V \setminus S') \neq \emptyset\right\} \right) \\
	& = \mathbbm{1}\left((a,i) \in \partial_{V}(S')\right).
\end{align*}
	Similarly, for any edge vertex $((a,i),(b,j))$ we have
\begin{align*}
\max_{u,v \in \cE(((a,i),(b,j)))} \|\hat{x}_u - \hat{x}_v \|^2 
& = \mathbbm{1}\left(\{(a,i) \in S',(b,j)\notin S'\} \vee \{(a,i) \in S',(b,j)\notin S'\}\right) \\
& = \mathbbm{1}\left(\{(a,i),(b,j)\} \in \partial_{E}(S')\right).
\end{align*}
	\end{proof}

	We can bound the objective evaluated with the $\hat{x}_v$ vectors as follows.
	\begin{align*}
	&\sum_{e \in E'} w(e) \max_{g,h \in e} \|\hat{x}_g - \hat{x}_h \|^2  \\
	&= \sum_{v \in V} w({\cE(v')}) \max_{g,h \in \cE(v)} \|\hat{x}_g - \hat{x}_h\|^2  +  \sum_{e \in E} w(e) \max_{g,h \in \cE(e)} \|\hat{x}_g - \hat{x}_h\|^2 \\
		&= \sum_{v \in V} w(v) \mathbbm{1}\left(\left\{v \in \partial^{V}(S') \right\}\right)  +  \sum_{e\in E} w({\cE(e)}) \mathbbm{1}\left(\left\{e \in \partial^{E}(S')\right\}\right) & \textrm{(Using Claim \ref{cl:x-identity})}\\
	&\leq |\partial^{V}(S')| +   \sum_{u \in V} \frac{1}{{\rm deg}_{G}(u)} \sum_{v \in N_{G}(u)} \mathbbm{1}\left(\left\{(u,v) \in \partial^{E}(S')\right\}\right) \\
	&\leq 2|\partial^{V}(S')| \leq 2 \epsilon n k .
	\end{align*}
	This gives us the required bound on the SDP objective.
	\end{proof}
	
We observe that in any iteration $t$ of the while loop, every vertex deleted can be charged to hyperedge crossing 
the cut induced $\Gamma^{(t)}$. This along with the properties of the hypergraph orthogonal separators will give us 
the bound on the expected number of vertices deleted. This is formally stated in the following lemma.
\begin{restatable}{lem}{Vbound}  \label{lem:V-bound}
	The expected number of vertices discarded during all the iterations of the while loop is at most
	\[
	\rho(\epsilon,n,k) = O\left(\epsilon n k^2\log k \log\log k \sqrt{\log n }\right).
	\]
\end{restatable}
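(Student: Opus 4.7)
My plan is to charge each vertex deletion during an iteration of the while loop to a hyperedge cut by the sampled orthogonal separator, invoke the cut-probability guarantee of Theorem \ref{thm:hos-1}, and then sum the resulting per-iteration bound over iterations using the SDP-value bound from Lemma \ref{obs:spd-opt}.

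Concretely, for a fixed iteration $t$, a vertex $v \in V_\cG$ is discarded in Step \ref{step:boundary} of Algorithm \ref{alg:ver} only if some hyperedge $\cE((v,i))$, with $(v,i) \in V$ still present in $H^{(t)}$, lies in $\partial^{E'}_{H^{(t)}}(\Gamma^{(t)})$. A union bound over the cloud $\cC_v$ therefore gives
\[
\Ex\Brac{X_t \,\big|\, H^{(t)}} \;\leq\; \sum_{(v,i) \in V \cap {\rm Vert}(H^{(t)})} \Pr\Brac{\cE((v,i)) \in \partial^{E'}_{H^{(t)}}(\Gamma^{(t)})},
\]
where $X_t$ denotes the number of vertices deleted in iteration $t$. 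By property (3) of the hypergraph $m$-orthogonal separator (Theorem \ref{thm:hos-1}), instantiated with $m = 10k$ and $\beta = 1/2$, each summand is at most $\alpha D \max_{g,h \in \cE((v,i))}\|x_g - x_h\|^2$, where $D = O(k \log k \log\log k \sqrt{\log n})$. Summing, the right-hand side is at most $\alpha D$ times the vertex-vertex contribution to the SDP objective restricted to $H^{(t)}$; since the algorithm only removes vertices and hyperedges as it progresses, this quantity is monotonically upper bounded by the initial SDP objective, which by Lemma \ref{obs:spd-opt} is at most $2\epsilon n k$. Hence $\Ex[X_t \mid H^{(t)}] \leq 2\alpha D \epsilon n k$ for every $t$.

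Summing this per-iteration bound over the at most $(10/\alpha)\log n$ iterations of the while loop and applying linearity of expectation yields the claimed $\rho(\epsilon,n,k)$ bound after substituting the value of $D$. The main subtlety, and what I expect to be the most delicate part of the proof, is the bookkeeping underlying the monotonicity step: one must carefully verify that the inherited SDP solution $\{x_u\}$ on ${\rm Vert}(H^{(t)})$ remains feasible on the residual hypergraph and that every hyperedge/vertex removal can only decrease the objective, so that Lemma \ref{obs:spd-opt} applies uniformly across iterations. A second, more amortized argument may additionally be needed to avoid losing an extra logarithmic factor in $n$ when summing across the $(10/\alpha)\log n$ iterations, e.g. by arguing that the effective SDP cost of $H^{(t)}$ decays sufficiently quickly so that the sum $\sum_t \alpha D \cdot (\text{SDP cost of } H^{(t)})$ is no more than $O(D)$ times the initial SDP cost.
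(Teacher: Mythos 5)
Your charging step is sound: per iteration, a union bound over the cloud together with property (3) of the separator gives $\Ex[X_t \mid H^{(t)}] \leq \alpha D \sum_{(v,i)} \max_{g,h \in \cE((v,i))}\|x_g - x_h\|^2 \leq 2\alpha D\,\epsilon n k$, and the feasibility worry you raise is not really an issue (the separator only needs the restricted vectors to satisfy the $\ell_2^2$ triangle inequalities, which restriction preserves). The genuine gap is the last step: summing $2\alpha D \epsilon nk$ over the $(10/\alpha)\log n$ iterations gives $O(D\,\epsilon n k \log n) = O(\epsilon n k^2 \log k \log\log k\,\sqrt{\log n}\,\log n)$, which exceeds the claimed bound by a $\log n$ factor. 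You flag this yourself, but the repair you sketch — that the SDP cost of the residual hypergraph $H^{(t)}$ decays fast enough for the sum to telescope — has no justification: vertices leave the hypergraph because they are uniquely picked or cut, not because they carry large SDP mass, so there is no reason the residual objective shrinks geometrically.

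The paper avoids the iteration count entirely by a per-vertex competing-events argument. Fix $u \in V_\cG$ and set $\Delta(u) = D\sum_{i\in[k]}\max_{a,b\in\cE((u,i))}\|x_a-x_b\|^2$ (vertices with $\Delta(u) \geq 1/8$ can be deleted up front and paid for by the objective). In any iteration in which $u$ is still present, the failure event (some $\cE((u,i))$ crosses the separator, causing deletion) has probability at most $\alpha\Delta(u)$ by property (3), while the success event (u is uniquely picked and none of its hyperedges are cut, so its label is finalized) has probability at least $\alpha/4 - \alpha\Delta(u)$ by Lemma \ref{lem:unique}. Since each iteration ends for $u$ in failure, success, or survival to the next round, the probability that $u$ is \emph{ever} deleted over the whole run is at most the ratio $\Pr[{\rm Fail}(u)]/\Pr[{\rm Succ}(u)] \leq 8\Delta(u)$, with no dependence on the number of iterations. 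Summing, the expected number of deleted vertices is at most $8\sum_u \Delta(u) \leq 8D \sum_{e}\max_{a,b\in e}\|x_a-x_b\|^2 \leq 16 D\,\epsilon n k$ by Lemma \ref{obs:spd-opt}, which gives the stated $O(\epsilon n k^2\log k\log\log k\sqrt{\log n})$ after substituting $D$. Without this (or an equivalent) argument replacing your naive sum over iterations, the proposal does not establish the lemma as stated.
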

We defer the proof of the above lemma to Section \ref{sec:Vbound}. 

\subsubsection*{Putting things together}

\begin{proof}[Proof of  Theorem \ref{thm:partial-ug-1}]
Lemma \ref{lem:Vf-sat} establishes that the \uniquegames~instance induced $\cG[V_{\rm fin}]$ is completely
satisfiable using the labeling $\sigma_{\rm fin}: V_{\rm fin} \mapsto [k]$ returned by the algorithm. 
Lemma \ref{lem:V-bound} establishes that the expected number of vertices discarded by Algorithm \ref{alg:ver}
is at most ${O}\left(\epsilon n k^2 \log k\log\log k\sqrt{\log n}\right)$. 
Finally, Lemma \ref{lem:orth-sep-prop} will show that the ``while loop'' terminates in 
$O\left(\frac1\alpha\log n\right)$ iterations w.h.p. This finishes the proof of Theorem \ref{thm:partial-ug-1}.
\end{proof}

\subsection{Analysis of orthogonal separator}		
\label{sec:aux}
In this subsection, we present our calculations related to the use of the hypergraph orthogonal separator.
		
\subsubsection{Termination}
The following lemma which lower bounds the probability of a vertex being picked uniquely in any iteration;
it is identical to the corresponding lemma in \cite{CMM06}, we reproduce \cite{CMM06}'s proof here for completeness.
\begin{lemma}[\cite{CMM06}]			
\label{lem:unique}
	Let $V^{(t)}_H := \left\{v \in V_{\cG} | \exists i \in [k] \mbox{ s.t. } (v,i) \in H^{(t)} \right\}$ denote the set of Unique Game vertices in the hypergraph $H^{(t)}$ at the beginning of iteration $t$. Then for any $u \in V^{(t)}$ we have
	\[
	\Pr_{\Gamma^{(t)}} \left[|\cC_u \cap \Gamma^{(t)}| = 1\right] \geq \frac{\alpha}{4}.
	\]
\end{lemma}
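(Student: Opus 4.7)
The plan is to follow the template of the CMM06 argument, using the first two properties of the hypergraph orthogonal separator together with the SDP's orthogonality constraints across a cloud. Write $p_i \defeq \Pr[(u,i) \in \Gamma^{(t)}]$ and $p_{ij} \defeq \Pr[(u,i),(u,j) \in \Gamma^{(t)}]$ for distinct labels $i,j \in [k]$. By a standard Bonferroni inclusion-exclusion step,
\[
\Pr\bigl[|\cC_u \cap \Gamma^{(t)}| = 1\bigr] \;\geq\; \sum_{i \in [k]} p_i \;-\; 2\sum_{i < j} p_{ij},
\]
so it suffices to bound the two sums.

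For the first sum, property (1) of Theorem \ref{thm:hos-1} gives $p_i = \alpha \|x_{(u,i)}\|^2$. Since $u$ survived the thresholding step of Algorithm \ref{alg:ver} and has not been deleted through iteration $t$, we have $\sum_{i \in [k]} \|x_{(u,i)}\|^2 \geq 1/2$, so $\sum_i p_i \geq \alpha/2$.

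For the second sum, the key observation is that the SDP orthogonality constraint \eqref{eq:ort} forces $\langle x_{(u,i)}, x_{(u,j)}\rangle = 0$ whenever $i \neq j$, hence
\[
\|x_{(u,i)} - x_{(u,j)}\|^2 \;=\; \|x_{(u,i)}\|^2 + \|x_{(u,j)}\|^2 \;\geq\; 2 \min\bigl\{\|x_{(u,i)}\|^2, \|x_{(u,j)}\|^2\bigr\},
\]
which comfortably satisfies the separation-threshold hypothesis with $\beta = 1/2$. So property (2) applies, giving $p_{ij} \leq \alpha \min\{\|x_{(u,i)}\|^2, \|x_{(u,j)}\|^2\}/m$. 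Using $\min\{a,b\} \leq (a+b)/2$ together with constraint \eqref{eq:am1}, we can bound
\[
\sum_{i<j} p_{ij} \;\leq\; \frac{\alpha}{m}\sum_{i<j} \frac{\|x_{(u,i)}\|^2+\|x_{(u,j)}\|^2}{2} \;=\; \frac{\alpha(k-1)}{2m}\sum_{i \in [k]}\|x_{(u,i)}\|^2 \;\leq\; \frac{\alpha(k-1)}{2m}.
\]

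Plugging $m = 10k$ (the value set in Algorithm \ref{alg:ver}) yields $\sum_{i<j} p_{ij} \leq \alpha/20$, and combining,
\[
\Pr\bigl[|\cC_u \cap \Gamma^{(t)}| = 1\bigr] \;\geq\; \frac{\alpha}{2} - 2\cdot\frac{\alpha}{20} \;=\; \frac{2\alpha}{5} \;\geq\; \frac{\alpha}{4},
\]
as desired. The only non-routine step is recognizing that constraint \eqref{eq:ort} is exactly what lets us invoke the $m$-orthogonal separator's pairwise-separation guarantee across the cloud $\cC_u$; everything else is bookkeeping. No separate obstacle is expected in the proof.
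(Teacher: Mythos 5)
Your proof is correct and follows essentially the same route as the paper's (and CMM06's) argument: inclusion–exclusion over the cloud, property (1) of the separator plus the thresholding bound $\sum_i\|x_{(u,i)}\|^2\geq 1/2$ for the first sum, property (2) plus $\min\{a,b\}\leq(a+b)/2$ and constraint \eqref{eq:am1} for the second, and the choice $m=10k$. The only difference is cosmetic: you explicitly verify via the orthogonality constraint \eqref{eq:ort} that the separation threshold $\beta=1/2$ is met before invoking property (2), a step the paper leaves implicit.
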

\begin{proof}
	By definition, we can write
	\begin{eqnarray*}
		\Pr_{\Gamma^{(t)}} \left[|\cC_u \cap \Gamma^{(t)}| = 1\right]  &=& \sum_{i \in [k]}\Pr\Big[(u,i) \in \Gamma^{(t)} \wedge \not\exists j \neq i \mbox{ s.t. } (u,j) \in \Gamma^{(t)}\Big] \\
		&\geq& \sum_{i \in [k]}\Pr\Big[(u,i) \in \Gamma^{(t)}\Big] - \sum_{j \neq i}\Pr\Big[(u,i),(u,j) \in \Gamma^{(t)}\Big] \\
		&\overset{1}{=}& \alpha \sum_{i \in [k]} \|x_{(u,i)}\|^2 - \sum_{i \in [k]}\sum_{j \neq i}\Pr\Big[(u,i),(u,j) \in \Gamma^{(t)}\Big] \\
		&\overset{2}{\geq}& \frac{\alpha}{2} \sum_{i \in [k]} \|x_{(u,i)}\|^2 - \alpha\sum_{i \in [k]}\sum_{j \neq i} \min\Big(\|x_{(u,i)}\|^2,\|x_{(u,j)}\|^2\Big) \\
		&{\geq}& \frac{\alpha}{2} - \alpha\sum_{i \in [k]}\sum_{j \neq i} \frac{\|x_{(u,i)}\|^2 + \|x_{(u,j)}\|^2}{2} \\
		&\overset{3}{\geq}& \frac{\alpha}{2} - \frac{\alpha k}{2m} \geq \frac{\alpha}{4}		\label{eqn:uniq2}
	\end{eqnarray*}
	Here step $1$ uses the first property of the orthogonal separator, step $2$ uses the fact that since vertex $u \in V^{(t)}_H$, we must have $\sum_{i \in [k]}\|x_{u,i}\|^2  \geq \frac12$, and the inequality in step $3$ follows from our choice of parameter $m$.
\end{proof}

The following lemma gives a high probability upper bound for the number of iterations performed by the ``while loop'';
it is identical to the corresponding lemma in \cite{CMM06}, we reproduce \cite{CMM06}'s proof here for completeness.
\begin{lemma}[\cite{CMM06}]
\label{lem:orth-sep-prop}
	With probability at least $1 - n^{-O(1)}$, the while loop terminates in $T = (1/\alpha)\log(10n)$ iterations.
\end{lemma}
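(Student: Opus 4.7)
The plan is to show that in each iteration the number of vertices in the working hypergraph shrinks by a constant factor in expectation, and then convert this into a high-probability statement by standard concentration / union bound over vertices. The key input is Lemma \ref{lem:unique}, which guarantees that for any $u \in V^{(t)}_H$, the orthogonal separator $\Gamma^{(t)}$ includes $\cC_u$ with a unique label with probability at least $\alpha/4$. Combined with Observation \ref{obs:ob2}, this says that conditioned on $u$ surviving until iteration $t$, the probability that $u$ is removed in iteration $t$ is at least $\alpha/4$: either the cloud $\cC_u$ lands singly in $\Gamma^{(t)}$ and $u$ gets absorbed into $S^{(t)}$, or the hyperedge $\cE((u,i))$ is cut and $u$ is dropped at Step~\ref{step:boundary}. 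Either way, $u \notin V^{(t+1)}_H$.

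Next I would exploit that the orthogonal separators $\Gamma^{(1)},\Gamma^{(2)},\dots$ are drawn independently by independent invocations of the rounding procedure from Theorem \ref{thm:hos-1}. So for a fixed $u \in V_\cG$, we have
\[
\Pr\bigl[\,u \in V^{(T+1)}_H\,\bigr] \;\leq\; \prod_{t=1}^{T}\Pr\bigl[\,u \in V^{(t+1)}_H \;\big|\; u \in V^{(t)}_H\,\bigr] \;\leq\; \Bigl(1-\tfrac{\alpha}{4}\Bigr)^{T} \;\leq\; \exp\!\bigl(-\alpha T/4\bigr).
\]
Choosing $T$ to be a sufficiently large constant multiple of $(1/\alpha)\log n$ (so that $\exp(-\alpha T/4) \leq n^{-c-1}$ for the desired exponent $c$), a union bound over the at most $n$ vertices of $V_\cG$ gives that every vertex has been removed from the hypergraph by the end of iteration $T$, except with probability $n^{-c}$. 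Equivalently, the while loop terminates after $T$ iterations with probability at least $1 - n^{-\Omega(1)}$, which is the claim.

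The argument is essentially routine once Lemma \ref{lem:unique} is in hand, so I do not anticipate a substantive obstacle; the only modest care needed is to confirm that the two ways in which $u$ leaves $V^{(t)}_H$ (being uniquely picked versus being boundary-deleted) are both captured by the $\alpha/4$ bound, and to make sure the independence of the $\Gamma^{(t)}$'s is preserved by the intermediate deletions (which only affect which vertices are still present, not the law of the fresh separator on that hypergraph). The final exponent in $n^{-O(1)}$ can be tuned by adjusting the constant in $T$; the stated $T = (1/\alpha)\log(10n)$ is written up to constants, matching the analogous step in \cite{CMM06}.
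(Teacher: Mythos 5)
Your proof is correct and follows essentially the same route as the paper: bound the per-iteration probability of a surviving vertex not being removed by $1-\alpha/4$ via Lemma \ref{lem:unique} (since uniquely picked vertices always leave the hypergraph), multiply across the $T$ independent separators, and union bound over the $n$ vertices. The paper phrases this through the events $\Theta_{v,t}$ (``not uniquely picked'') and ``undecided'' vertices, but the argument is the same.
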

\begin{proof}
	We begin by observing that whenever a vertex $v$ is uniquely picked with some label $i$, it is always removed from the hypergraph $H$ (by either adding it to the set $S$ or deleting it from the hypergraph $H$). Let $\Theta_{v,t}$ denote the event that the vertex $v$ is not uniquely picked in iteration $t$. From lemma \ref{lem:unique} we know that $\Pr_{\Gamma^{(t)}}\Big[\Theta_{v,t}\Big] \leq 1 - \frac{\alpha}{4}$. We say that a vertex $v$ is undecided at iteration $T$ if $v \in {\rm Vert}(H^{(t)})$ i.e., it has not been deleted of included in the set of candidate vertices at least once. Then by definition we have 
	\begin{eqnarray*}
		\Pr_{\left(\Gamma^{(t)}\right)^T_{t = 1}} \left[\exists v \in [n] : \{v~\mbox{ is undecided} \}\right]
		&\leq& \sum_{a \in [n]} \Pr_{\left(\Gamma^{(t)}\right)^T_{t = 1}} \left[ \{v \mbox{ is undecided} \}\right] \\
		&\leq& \sum_{a \in [n]} \Pr_{\left(\Gamma^{(t)}\right)^T_{t = 1}} \left[ \forall t \in [T], \Theta_{v,t}\right] \\
		&\leq& \sum_{a \in [n]} \prod_{t \in [T]}\left(1 - \frac{\alpha}{4}\right) \\
		&\leq& n \left(1 - \frac{\alpha}{4}\right)^T \\
		&\leq& n^{-O(1)} 
	\end{eqnarray*} 
	where the last inequality follows from our choice of $T$. 
\end{proof}

\subsubsection{Analysis of the ``while'' loop}		
\label{sec:Vbound}

We now prove Lemma \ref{lem:V-bound}. 
We remark that a similar argument was used in \cite{CMM06} to bound the number of edges violated. 
However, since we have to ensure that the candidate set of vertices induce fully satisfiable \uniquegames~instances, 
the success and failure events tracked by us are quite different from theirs.

\Vbound*
\begin{proof}
	Fix an iteration $t$. Then for every vertex $u \in V_\cG$, we define a success event ${\rm Succ}(u)$ and a failure event ${\rm Fail}(u)$ as follows. 
	\[
	{\rm Succ}(u) \defeq \left\{ |\cC_u \cap \Gamma^{(t)}| = 1 \right\} \bigwedge \left\{ \wedge_{i \in [k]}\cE(u,i) \notin \partial_{H^{(t)}}(\Gamma^{(t)}) \right\}
	\]
	and a failure event 
	\[
	{\rm Fail}(u) \defeq \left\{ \vee_{i \in [k]} \cE(u,i) \in \partial_{H^{(t)}}(\Gamma^{(t)}) \right\}.
	\]
	Furthermore, we define the quantity $\Delta(u) \defeq D_{\ell^2_2}\ \sum_{i \in [k]}\max_{a,b \in \cE(u,i)} \|x_{a} - x_{b}\|^2$. We assume without loss of generality that $\Delta(u) \leq 1/8$ for every $u \in V_\cG$ (for the vertices with $\Delta(u) \geq 1/8$, we can delete them as is, and it would have been ``paid for'' by the objective function by a losing a factor of $8$). Then we can upper bound the probability of failure as 
	\begin{eqnarray*}
		\Pr_{\{\Gamma^{(t)}\}} \left[\left\{ \vee_{i \in [k]}\cE(u,i) \notin \partial_{H^{(t)}}(\Gamma^{(t)}) \right\}\right] 
	&\overset{1}{\leq}& \sum_{i \in [k]} \Pr_{\Gamma^{(t)}}\left[\cE(u,i) \in \partial_{H^{(t)}}(\Gamma^{(t)})\right] \\
	&\leq& \alpha \sum_{i \in [k]} D(\ell^2_2) \max_{a,b \in \cE(u,i)}| \|x_a - x_b \|^2\\
	&=& \alpha \Delta(u)
	\end{eqnarray*}
	where inequality $1$ uses the first property of the orthogonal separator, and the last equality is by definition. On the other hand, we can lower bound the success probability by 
	\begin{align*}
	\Pr\left[{\rm Succ}(u)\right] 
	&\geq \Pr_{\Gamma^{(t)}} \left[\left\{ |\cC_u \cap \Gamma^{(t)}| = 1 \right\}\right] 
	- \Pr_{\Gamma^{(t)}}\left[\left\{ \vee_{i \in [k]} \cE(u,i) \notin \partial_{H^{(t)}}(\Gamma^{(t)}) \right\}\right] \\
	& \geq \Pr_{\Gamma^{(t)}} \left[|\cC_u \cap \Gamma^{(t)}| = 1\right] - \alpha \Delta(u) \\
		& \geq \frac{\alpha}{4}- \alpha \Delta(u) \qquad \qquad \textrm{(Using Lemma \ref{lem:unique}).}
	\end{align*}
	Therefore, in any iteration, we have
	\[
	\frac{\Pr\left[{\rm Fail}(u)\right]}{\Pr\left[{\rm Succ}(u)\right]} \leq \frac{\Delta(u)}{\frac14 - \Delta(u)} \leq 8\Delta(u).
	\]
	Since the above is true for any iteration $t$, then the probability of a vertex $u$ getting deleted over all the iterations is at most $8\Delta(u)$. Therefore, the expected fraction of vertices deleted during all the iterations of the while loop is at most
	\[
		\sum_{u \in V_\cG} 8 \Delta(u) 
		\leq 8 D_{\ell^2_2}\ \sum_{e \in E''} \max_{a,b \in e} \|x_a - x_b\|^2 
		\leq 8D_{\ell^2_2}\ \epsilon nk. 
	\]
	
	Plugging in the value of $D_{\ell^2_2}$ from Theorem \ref{thm:hos-1} gives us the desired bound.
	
\end{proof}

\section{Proof of Theorem \ref{thm:partial-ug-2}} \label{sec:thm2}
	The algorithm for Theorem \ref{thm:partial-ug-2} is almost identical to that of Theorem \ref{thm:partial-ug-1}. The only difference is that the VertexExpansionRound algorithm (Algorithm \ref{alg:ver}) uses the $\ell^2_2$-$\ell_2$ hypergraph orthogonal separator (defined in Definition \ref{def:hos-2}) instead of the $\ell^2_2$ hypergraph orthogonal separator. As before, the arguments from Section \ref{sec:soln-sat} apply as is, and show that the set $V$ returned by the algorithm satisfies $\sval(\cG[V]) = 1$. Again as before, the thresholding step will remove at most $2\epsilon n $ vertices. 
	
	The only difference is in the analysis of the while loop where for every vertex $u$, we define $\Delta(u)$ to be 
	\[
	\Delta(u) \defeq D_{\ell^2_2} \sum_{i \in [k]} \max_{a,b \in \cE(u,i)} \|x_a - x_b\|^2 + \sum_{i \in [k]} D_{\ell_2}(|\cE(u,i)|) \min_{w \in \cE(u,i)}\|x_w\| \min_{a,b \in \cE(u,i) }\|x_a - x_b\|
	\]
	Again by an identical sequence of arguments, we will be bound the expected fraction of vertices deleted by $\sum_{u \in V_{\cG}} \Delta(u)$, which in turn can be bounded by Lemma \ref{lem:cut-bound} as 
	\[
		\sum_{u \in V_{\cG}} \Delta(u) \leq O(\epsilon n k^2\log k\log\log k) + O\left(\sqrt{\eta^H_{\rm max}\epsilon n k^2 \log k \log\log k} \right) . \]
	Furthermore, again from Lemma \ref{lem:cut-bound} we get that $\eta^H_{\rm max} \leq 2\log d_{\rm max}(\cG)$. Finally, using identical arguments as before, we can show that the while loop terminates in $O((1/\alpha)\log n)$ iterations. This completes the proof of Theorem \ref{thm:partial-ug-2}

\subsubsection*{Bounding the expected number of hyperedges cut}		

The following lemma  bounds the expected fraction of hyperedges cut by an $\ell^2_2$-$\ell_2$ hypergraph orthogonal separator, 

\begin{lemma}			\label{lem:cut-bound}
	Let $H = (V_H,E_H)$ be a hypergraph with arity at most $r$. Furthermore, let $\{x_v\}_{v \in V_H}$ be a set of vectors satisfying the $\ell^2_2$-triangle inequality. 
	Let $\Gamma$ be an $\ell^2_2$ - $\ell_2$ $(\alpha,m,\beta)$-hypergraph orthogonal separator with $\beta = 1$.
		Then, 
		\[
		\sum_{e \in E_H} \Pr_\Gamma\left[e \in \partial_H(\Gamma)\right] \leq \alpha m \sum_{e \in E_H} \max_{a,b \in e}\|x_{a} - x_b\|^2 + \alpha m \log m \log \log m\sqrt{\eta^{\max}_H   \sum_{e \in E_H} \max_{a,b} \|x_a - x_b\|^2} 
		\]
		In particular, when $H$ is the label extended hypergraph of a \stronguniquegames~instance $\cG$, we can take $\eta^H_{\rm max} \leq 2 \log d_{\rm max}(\cG)$, where $d_{{\rm max}}(\cG)$ is the maximum degree of any vertex in $\cG$.
\end{lemma}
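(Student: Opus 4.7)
The plan is to apply the third property of the $\ell_2$--$\ell_2^2$ hypergraph orthogonal separator (Definition \ref{def:hos-2}) edge-by-edge and then sum, so the bound splits naturally into an $\ell_2^2$-contribution and an $\ell_2$-contribution. Concretely, Definition \ref{def:hos-2}(3) combined with the distortion bounds from Theorem \ref{thm:hos-2} (applied with $\beta = 1$) gives, for each $e \in E_H$,
\[
\Pr_\Gamma\!\left[e \in \partial_H(\Gamma)\right] \;\leq\; \alpha\,D_{\ell_2^2}\max_{a,b \in e}\|x_a - x_b\|^2 \;+\; \alpha\,D_{\ell_2}(|e|)\,\min_{w \in e}\|x_w\|\,\max_{a,b \in e}\|x_a - x_b\|,
\]
with $D_{\ell_2^2} = O(m)$ and $D_{\ell_2}(|e|) = O\!\left(m\log m\log\log m\sqrt{\log |e|}\right)$.

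Summing the first term over $e$ yields $\alpha\,O(m)\sum_e \max_{a,b}\|x_a - x_b\|^2$, matching the first summand of the claimed bound. For the second term, I would uniformly bound $\sqrt{\log|e|} \leq \sqrt{\eta_{\max}^H}$, factor out $O(m\log m \log\log m)\sqrt{\eta_{\max}^H}$, and then apply Cauchy--Schwarz to
\[
\sum_{e \in E_H}\min_{w \in e}\|x_w\|\cdot \max_{a,b \in e}\|x_a - x_b\| \;\leq\; \sqrt{\sum_{e \in E_H}\min_{w \in e}\|x_w\|^2}\cdot \sqrt{\sum_{e \in E_H}\max_{a,b \in e}\|x_a - x_b\|^2}.
\]
The second factor is already what we want. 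For the first factor, I would use the identification of each hyperedge $e = \mathcal{E}(v)$ with a unique center vertex $v \in V_H$ (Definition \ref{defn:lab-hyper}), so that $\sum_e \min_{w \in e}\|x_w\|^2 \leq \sum_{v \in V_H}\|x_v\|^2$, which the SDP norm constraints control. Collecting everything gives the stated upper bound.

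For the second assertion about the label-extended hypergraph, I would use the explicit arity of the hyperedges from Definition \ref{defn:lab-hyper}: for a graph-vertex $v = (a,i) \in V$ the degree in $G_{\mathrm{sym}}$ equals $\deg_\cG(a) \leq d_{\max}(\cG)$, while for an edge-vertex $v = \{(a,i),(b,j)\} \in E$ the degree in $G_{\mathrm{sym}}$ is exactly $2$. Hence $|\mathcal{E}(v)| = 1 + \deg_{G_{\mathrm{sym}}}(v) \leq d_{\max}(\cG) + 1$, so $\log|\mathcal{E}(v)| \leq 2\log d_{\max}(\cG)$ provided $d_{\max}(\cG) \geq 2$; this gives $\eta_{\max}^H \leq 2\log d_{\max}(\cG)$.

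The main technical hurdle will be the Cauchy--Schwarz step: one must make sure that the "center vertex" identification is indeed a bijection between hyperedges and their creating vertices in $V_H$, so that $\sum_e \min_w\|x_w\|^2$ really does telescope into $\sum_v \|x_v\|^2$ without double-counting, and that the SDP constraints (or the weighting in Lemma \ref{lem:transform}) provide the $O(1)$ bound needed for the factor to be absorbed into $\eta_{\max}^H$. Everything else is routine: plug in the distortions from Theorem \ref{thm:hos-2}, use $\beta = 1$, and bound the arity.
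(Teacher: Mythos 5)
Your handling of the second assertion essentially coincides with the paper's: the paper invokes Claim 6.2 of \cite{lm16} together with the fact that every hyperedge of the label extended hypergraph is $\cE(v)$ for a unique center vertex $v$, so $\eta^H_{\max}\le \max_v \log|\cE(v)| = \log(\max_v \deg_{G_\sym}(v)+1)\le 2\log d_{\rm max}(\cG)$, which is exactly your computation. Be aware, though, that $\eta^H_{\max}$ is not \emph{defined} as $\max_e \log|e|$; in \cite{lm16} it is defined via an assignment of the per-hyperedge quantities $\log|e|$ to vertices, and Claim 6.2 only says it is bounded by $\max_e\log|e|$ when one can choose a distinct representative vertex per hyperedge (the same center-vertex fact you use). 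For the first inequality the paper gives no self-contained argument at all — it states that the bound is implicit in the proof of Theorem 6.3 of \cite{lm16} — so your attempt to rederive it is a genuinely different (more explicit) route, and it correctly reproduces the skeleton of that argument: sum property 3 of Definition \ref{def:hos-2} with the distortions of Theorem \ref{thm:hos-2}, then Cauchy--Schwarz, then the center-vertex injection to telescope $\sum_e\min_{w\in e}\|x_w\|^2\le\sum_{v\in V_H}\|x_v\|^2$.

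The gap is the one you flagged and did not close: after Cauchy--Schwarz you are left with the factor $\bigl(\sum_{e\in E_H}\min_{w\in e}\|x_w\|^2\bigr)^{1/2}\le\bigl(\sum_{v\in V_H}\|x_v\|^2\bigr)^{1/2}$, and in this paper's setting that is \emph{not} $O(1)$: the StrongUG-SDP imposes $\sum_{a,i}\|x_{(a,i)}\|^2=n(1-\epsilon)$ (Constraint \ref{eq:c1}), and the edge-vertices of $H$ contribute additional mass, so the factor is of order $\sqrt{n}$. What your chain of inequalities actually proves is
\[
\sum_{e\in E_H}\Pr\bigl[e\in\partial_H(\Gamma)\bigr]\;\le\;\alpha\, O(m)\sum_{e}\max_{a,b\in e}\|x_a-x_b\|^2\;+\;\alpha\, O(m\log m\log\log m)\sqrt{\eta^H_{\max}}\Bigl(\sum_{v\in V_H}\|x_v\|^2\Bigr)^{1/2}\Bigl(\sum_{e}\max_{a,b\in e}\|x_a-x_b\|^2\Bigr)^{1/2},
\]
which is the bound of Theorem 6.3 of \cite{lm16} — there the extra factor disappears because the HSSE SDP (as displayed in the Overview) carries the normalization $\sum_v\|x_v\|^2=1$ — but it is not the inequality literally stated in Lemma \ref{lem:cut-bound}. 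The factor also cannot be "absorbed into $\eta^H_{\max}$" as you suggest, since the lemma then bounds $\eta^H_{\max}$ by $2\log d_{\rm max}(\cG)$, a quantity independent of $n$. So as written the proposal does not establish the stated bound; to fix it you must either work with a normalized solution (or carry the $\bigl(\sum_v\|x_v\|^2\bigr)^{1/2}$ factor through the application in Section \ref{sec:thm2}), or use \cite{lm16}'s actual definition of $\eta^H_{\max}$, in which the vertex norms are accounted for by the load assignment — neither of which your argument supplies.
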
	 

\begin{proof}
	The proof of the first point is implicit in the proof of Theorem 6.3 from \cite{lm16}. The bound on $\eta^H_{\rm max}$ uses the following observation. 
	\begin{claim}[Claim 6.2~\cite{lm16}]
		Given a hypergraph $H = (V,E)$, suppose there is way to chose one vertex per hyperedge such that no vertex is chosen more than once. Then $\eta^H_{\rm max} \leq \max_{e \in H} \log |e|$
	\end{claim}	
	Now for the the hypergraph $H$ constructed from the \stronguniquegames~instances, we know that every hyperedge can be uniquely identified from a vertex $v \in V$. Therefore, 
	\begin{eqnarray*}
	\eta^H_{\rm max} \leq \max_{v \in V'} \log |\cE(v)| = \max_{v \in V_{\sym}}{\rm deg}_{G_\sym}(v) + 1 
	&=& \max_{v \in V}{\rm deg}_{G_\sym}(v) + 1 \\
	&=& \max_{v \in V}{\rm deg}_{G}(v) + 1 \leq 2d_{\rm max}(\cG).
	\end{eqnarray*}
\end{proof}

\part{Hardness of Approximation}

\newcommand{\meas}[1]{\mu\left(#1\right)}
\newcommand{\Inf}[2]{{\sf Inf}_{#1}\left[#2\right]}
\newcommand{\tx}{\tilde{x}}
\newcommand{\oct}{\textsc{OddCycleTransversal}}

In this section we shall prove our hardness results for \strug. We recall the theorem for convenience.

\hardness*

Theorem \ref{thm:strong-ug-informal} follows by combining Conjecture \ref{conj:ug} with the following theorem.

\begin{theorem}			\label{thm:strug-red}
	There exist constants $C,C',C'' \in \mathbbm{R}^+$ such that the following holds. 
	Let $\epsilon \in (0,1), k \in \mathbbm{N}$ and $d \geq C'' \epsilon^{-2} \log k$ be fixed constants such that $\sqrt{\epsilon \log d \log k} \leq 1/100 C$. Let $\epsilon_c = \epsilon/d$ and $\epsilon_s = (\epsilon/k)^{C'd/\epsilon^2}$ for any $\epsilon$ such that $\epsilon_c,\epsilon_s \leq \epsilon_0$ (where $\epsilon_0$ is from Conjecture \ref{conj:ug}). Then, given a $(1-\epsilon_c,\epsilon_s)$-\ug~instance $\cG_0$, there exists a polynomial time reduction to a \strug~instance $\cG$ on alphabet $[k]$ and max degree bounded by $d$ such that the following conditions are satisfied
	\begin{itemize}
		\item If $\cG_0$ is a YES instance, then $\cV(\cG) \geq 1 - 4\epsilon$.
		\item If $\cG_0$ is a NO instance, then $\cV(\cG) \leq 1 - C\sqrt{\epsilon \log k \log d}$.
	\end{itemize}
	Furthermore, the constraints in $\cG$ are linear.
\end{theorem}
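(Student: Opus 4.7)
The plan is to follow the three-step chain of reductions sketched in the overview: $(1-\epsilon_c,\epsilon_s)$-\ug $\;\longrightarrow\;$ $(1-\epsilon, 1-C\sqrt{\epsilon\log d\log k})$-\hug $\;\longrightarrow\;$ \sbug $\;\longrightarrow\;$ \strug, with each step preserving (up to constants) the completeness/soundness gap and producing instances with bounded degree. The composition of these reductions, together with Conjecture~\ref{conj:ug}, then yields the desired hardness. The alphabet in each intermediate instance stays $[k]$, and I will track the dependence of the degree on $d$ carefully since the final bound requires a $\mathrm{poly}(d)$ bound on the degree of the \strug~instance.

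For the first and most technical step, I would build a long-code--style dictatorship gadget whose underlying probability space is the Markov chain $M$ described in Section~\ref{sec:overviewstrug}: a $k$-block chain where each block $V_i=\{s_i,t_i\}$ has $s_i$ connected only to $t_i$ (with $s_i$ having mass $\approx 1-\epsilon$) and the induced graph on $\{t_i\}$ is an expander. The key properties I would need to establish are: (a) each ``dictator cut'' $V_i$ has vertex expansion at most $\epsilon$ in $M$ (since $s_i$'s only neighbor is $t_i$), and (b) $M$ has spectral gap $\Omega(\epsilon)$ coming from the expander on the $t_i$'s. The dictatorship test is then the natural $d$-ary folded long-code test from Figure~\ref{fig:pcptest-0-intro} applied with the noise operator of $M^{\otimes n}$ in place of the noisy hypercube. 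Completeness follows from (a): a dictator function has vertex expansion at most $\epsilon$, so the test accepts with probability $\geq 1-O(\epsilon)$. For soundness, after standard influence-decoding to discard functions with influential coordinates (which decodes a \ug~labeling contradicting the NO case), I would apply the Invariance Principle~\cite{MOO10} to pass to a correlated Gaussian space and invoke a small-set isoperimetric bound in the Gaussian setting — this is where the Exchangeable Gaussians Theorem of Isakkson--Mossel~\cite{IM12} replaces Borell for non-spherical correlations, and yields the $1-\Omega(\sqrt{\epsilon \log d \log k})$ upper bound on the acceptance of non-dictatorial assignments, matching completeness up to the factor $\sqrt{\log d \log k}$. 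The standard \ug-composition then produces a \hug~instance $\cH$ with the claimed gap.

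The remaining two steps are more combinatorial. For \hug $\to$ \sbug, I would construct the incidence bipartite graph (hyperedges on the left, \hug~vertices on the right, with the projection maps of the hyperedge as edge constraints) so that a labeling satisfying a hyperedge corresponds exactly to a left vertex having all incident constraints satisfied, giving an exact correspondence of values. To get bounded degree on both sides, I would perform the two-stage subsampling described in the overview: first subsample $\ell=\ell(d)$ left-neighbors per right vertex, then prune right vertices of too-large degree. Standard Chernoff/Markov arguments show that both steps preserve completeness and soundness up to $o(1)$ losses, provided $d$ is at least $\Omega(\epsilon^{-2}\log k)$ (this is where the lower bound on $d$ enters). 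For \sbug $\to$ \strug, I would reuse the Khot--Regev~\cite{KR08} construction: the \strug~vertex set is $V_L''$, and whenever two left vertices share a right neighbor I add the composed constraint. One then verifies (i) the value preservation (a fully-satisfying labeling on $S\subset V_L''$ in the composed \strug~instance extends to an \sbug~labeling satisfying all constraints of $S$, and conversely) and (ii) that the resulting degree is bounded by $\mathrm{poly}(d\ell)=\mathrm{poly}(d)$, since the neighborhood in $\cG$ is exactly the $2$-step neighborhood in the bipartite graph. Finally, the linearity of the constraints in the output \strug~instance is inherited from the \ug~constraints (which, by Conjecture~\ref{conj:ug}, can be taken to be linear) since all the reduction steps only compose bijections.

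The hard part will be Step 1, specifically proving the right quantitative small-set vertex-expansion/isoperimetric bound for the Gaussian image of $M^{\otimes n}$. The standard noisy hypercube does \emph{not} have small-volume sets with vertex expansion as small as $\sqrt{\epsilon\log d \log k}$, which is why the tailor-made Markov chain is essential — but making the invariance-principle reduction go through requires the chain to simultaneously be positive-semidefinite, have spectral gap $\Omega(\epsilon)$, and have small-volume sets of small vertex expansion, which I expect to be the most delicate construction. Everything downstream (degree reduction, Khot--Regev composition, keeping constraints linear) is expected to be routine modulo careful parameter tracking.
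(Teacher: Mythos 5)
Your proposal follows essentially the same route as the paper: the Markov-chain-gadget dictatorship test (with PSD transition matrix, $\Omega(\epsilon)$ spectral gap, invariance principle, and the Isaksson--Mossel exchangeable-Gaussians bound) for the \ug-to-\hug step, then the hyperedge-incidence bipartite construction with per-right-vertex subsampling and high-degree pruning, and finally the Khot--Regev composition with the $2$-step-neighborhood degree bound and linearity tracking. The only piece you do not mention is the preliminary uniformization of the weighted \hug~instance (the paper's Theorem~\ref{thm:unif-weights}), which the paper needs so that the right-uniformity property holds and the Chernoff-based subsampling and degree bounds go through, but this is a standard preprocessing step within the same approach rather than a different argument.
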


In the following sections we proceed to prove the above theorem.

\section{Preliminaries}

\subsection{Problem Definitions} 
 
Now we formally define the several intermediate problems that will be used in the sequence of reduction used in the proof of Theorem \ref{thm:strug-red}. We begin by defining the \hug~ problem which is as follows.

\begin{problem}[$(\alpha,\beta)$\hug~]
	\label{def:abhug}
	An instance $\cH(V,E,[k],\{\pi_{e,v}\}_{v \in V,e \in E},\cD_H)$ of \hug~ consists of $d$-ary constraints on the vertex set $V$. Each hyperedge constraint $e (v_{i_1},v_{i_2},\ldots,v_{i_d}) \in E$ is identified with $d$-bijections $\pi_{e,v_1},\pi_{e,v_2},\ldots,\pi_{e,v_d}: [k] \to [k]$. A labeling $\sigma:V \to [k]$ is said to {\em satisfy} the hyperedge $e$ if for every $i,j \in [d]$ we have $\pi_{e,v_i}(\sigma(v_i)) = \pi_{e,v_j}(\sigma(v_j))$.
The goal of this problem is to compute an assignment that maximizes the fraction of the satisfied hyperedges;

Now given parameters $\alpha,\beta \in [0,1]$, with $\alpha \geq \beta$, the objective of $(\alpha,\beta)$-\hug~ is to distinguish between the following two cases:
	\begin{itemize}
		\item [YES]: There exists a labeling $\sigma:V \to [k]$ such that 
		\[
		\Pr_{e \sim \cD_H} \left[ \sigma \mbox{ satisifies } e\right] \geq \alpha
		\]		
		\item [NO]: For every labeling $\sigma:V \to [k]$ 
		\[
		\Pr_{e \sim \cD_H} \left[ \sigma \mbox{ satisifies } e\right] \leq \beta
		\] 
	\end{itemize} 	
\end{problem}

 Next we define bipartite versions of the \strug~problem.
 
 \begin{problem}[$(\alpha,\beta)$-\sbug~]
 	An instance $G_{SB}(V_L,V_R,E,[k],\{\pi_{vu}\}_{(u,v) \in E},\mu)$ of \sbug~ consists of bipartite multi-graph with left and right vertex sets $V_L,V_R$ and edge set $E \subseteq V_L\times V_R$. Here every edge $(u,v) \in E$ is associated with a bijection $\pi_{v \to u}: [k] \to [k]$. Given a labeling $\sigma: V_L \cup V_R \to [k]$, we say that the labeling satisfies an edge $(u,v) \in E$ if $\sigma(u) = \pi_{v \to u}(\sigma(v))$. Furthermore, $\mu:V_L \mapsto \mathbbm{R}_{+}$ is a measure on left vertices.
 	
 	Given parameters $\alpha,\beta \in [0,1]$, with $\alpha \geq \beta$, the objective of $(\alpha,\beta)$-\hug~ is to distinguish between the following two cases:
 		\begin{itemize}
 			\item [YES]: There exists a labeling $\sigma:U \cup V \to [k]$ such that 
 			\[
 			\Pr_{u \sim \mu} \left[ \forall v \in N_{G_{SB}}(u), \quad \pi_{v \to u}(\sigma(v)) = \sigma(u)\right] \geq \alpha
 			\]
 			\item [NO]: Any labeling $\sigma:V_L \cup V_R \to [k]$ satisfies
 			\[
 			\Pr_{u \sim \mu} \left[ \forall v \in N_{G_{SB}}(u), \quad \pi_{v \to u}(\sigma(v)) = \sigma(u)\right] < \beta
 			\]
 		\end{itemize} 	
 	 
 	Finally, for any integer $D \in \mathbbm{N}$, we shall say that $G_{SB}$ is a $D$-\sbug~ instance if vertices in $V_R$ have maximum degree at most $d$.
 \end{problem}

\subsection{Fourier Analysis over $V^n_M$}			\label{sec:fourier}
	
	Let $G = (V_M,E)$ be a graph on $m$ vertices with weighted adjacency matrix $A \in \mathbbm{R}^{m \times m}$.  Consider the space of functions $f:V_M \to \mathbbm{R}$. We can equip the space of such functions with the inner product 
	\[
	\langle f,g \rangle_M \defeq \Ex_{x \sim \mu_M}\left[f(x) g(x) \right] 
	\]
	where $\mu_M$ is the stationary measure for the random walk on $G$. Then one can verify that the space of functions $f:V_M^n \to \mathbbm{R}$ forms a vector space under the usual addition and multiplication of functions, and therefore admits an orthonormal basis with respect to the inner product $\langle \cdot, \cdot \rangle_M$. In particular, we are interested in the basis given by the eigenfunctions of the Markov chain $M$, as stated formally below.
	
	\begin{theorem}[Theorem 4.4~\cite{OD13}]
	Let $D \in \mathbbm{R}^{m \times m}$ be the diagonal matrix whose diagonal entries are given by the vertex weights i.e, $D_{ii} = \sum_{j \in [m]} A(i,j)$. Let $A_M = D^{-1}A$ be the row-normalized transition probability matrix of the Markov chain $M$ corresponding to the random walk on $G$. Then there exist right eigenvectors $\chi_1,\chi_2,\ldots,\chi_m$ with eigenvalues $1 = \lambda_1 \geq \lambda_2 \cdots \geq \lambda_m$ such that they form an orthonormal basis for functions $f:V_M \mapsto \mathbbm{R}$, where $\chi_1$ can be taken to be the all ones vector. In particular, every function $f:V_M \mapsto \mathbbm{R}$ admits a Fourier decomposition of the form
	\[
	f = \sum_{v \in V_M} \hat{f}(v)\chi_v
	\]
	where $\hat{f}(v)$ are the fourier coefficients determined by the inner product $\langle \cdot, \cdot \rangle_M$.
	\end{theorem}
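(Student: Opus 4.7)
The plan is to reduce this to the finite-dimensional spectral theorem, with the key observation being that although $A_M = D^{-1}A$ is generally not a symmetric matrix, it is self-adjoint as an operator on the inner product space $(\mathbb{R}^{V_M}, \langle \cdot, \cdot \rangle_M)$. This is essentially the standard ``reversibility implies self-adjointness'' argument, so the proof should be short and largely a calculation.

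First, I would verify that the stationary measure $\mu_M$ has the form $\mu_M(i) \propto D_{ii}$; this follows from the symmetry of $A$ by checking $\mu_M A_M = \mu_M$ directly. Next, I would check the detailed balance relation $\mu_M(i) A_M(i,j) = \mu_M(j) A_M(j,i)$, which is immediate from the symmetry of $A$: both sides equal $A(i,j)/Z$ where $Z = \sum_j D_{jj}$. Then, using detailed balance, I would show that for any $f, g: V_M \to \mathbb{R}$,
\[
\langle A_M f, g \rangle_M = \sum_{i,j} \mu_M(i) A_M(i,j) f(j) g(i) = \sum_{i,j} \mu_M(j) A_M(j,i) f(j) g(i) = \langle f, A_M g \rangle_M,
\]
so $A_M$ is self-adjoint on $(\mathbb{R}^{V_M}, \langle \cdot,\cdot\rangle_M)$.

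Having established self-adjointness, I would invoke the finite-dimensional real spectral theorem on the $m$-dimensional inner product space to obtain an orthonormal basis of eigenvectors $\chi_1, \ldots, \chi_m$ with real eigenvalues $\lambda_1, \ldots, \lambda_m$. To identify the leading eigenpair, I would note that $A_M$ is row-stochastic, so the all-ones vector $\mathbf{1}$ satisfies $A_M \mathbf{1} = \mathbf{1}$ with eigenvalue $1$, and $\|\mathbf{1}\|_M^2 = \sum_i \mu_M(i) = 1$, so $\chi_1 = \mathbf{1}$ is a unit-norm eigenvector. To show $1$ is the largest eigenvalue in absolute value (or at least that all eigenvalues lie in $[-1, 1]$), I would use that $A_M$ has operator norm at most $1$ in the $\ell^\infty$ norm, which via a standard argument (or directly, by taking $f = \chi_v$ and applying $|\langle A_M \chi_v, \chi_v\rangle_M| \le \|\chi_v\|_M^2$ with Cauchy--Schwarz) forces $|\lambda_v| \le 1$. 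Reordering eigenvalues in decreasing order then yields $1 = \lambda_1 \ge \lambda_2 \ge \cdots \ge \lambda_m$.

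Finally, any function $f: V_M \to \mathbb{R}$ can be written uniquely in the orthonormal basis as $f = \sum_v \hat{f}(v) \chi_v$ with $\hat{f}(v) = \langle f, \chi_v \rangle_M$, which is the claimed Fourier decomposition. The main (minor) subtlety is really just pinning down the correct inner product and verifying detailed balance; everything else is standard linear algebra. Since the paper attributes this to O'Donnell's book as a cited theorem, I expect the intended treatment is a brief derivation of this form rather than a novel argument.
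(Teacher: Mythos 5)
Your proof is correct: the paper itself contains no proof of this statement (it is quoted as Theorem 4.4 of O'Donnell's book), and your route — detailed balance for the walk with symmetric weights, self-adjointness of $A_M$ with respect to $\langle\cdot,\cdot\rangle_M$, the finite-dimensional real spectral theorem, and row-stochasticity to identify $\chi_1=\mathbf{1}$ with eigenvalue $1$ — is exactly the standard textbook argument being cited. The only step to state carefully is the bound $|\lambda_v|\le 1$: Cauchy--Schwarz alone requires knowing that $A_M$ is a contraction on $L^2(\mu_M)$ (which follows from Jensen, $(A_M f)^2 \le A_M f^2$, and stationarity of $\mu_M$), or one can simply note that every eigenvalue of a nonnegative row-stochastic matrix has modulus at most $1$ via the $\ell^\infty$ argument you mention.
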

	One can easily extend the above to functions over the $n$-fold product Markov chain $M^n$ by defining the inner product between functions $f,g:V^n_M \to \mathbbm{R}$ as $\langle f,g \rangle_M = \Ex_{x \sim \mu^n_M}[ f(x)g(x)]$. From this we can build a Fourier basis for functions $f:V_M^n \to [0,1]$ as $\left\{ \chi^{(S)} = \otimes_{i \in S} \chi_{S_i} : S \in V_M^n \right\}$.	Hence, any function $f: V_M^n \to \mathbbm{R}$ admits a multilinear polynomial decomposition of the form
	\[
	f = \sum_{S \in V_M^n} \hat{f}(S) \chi^{(S)}.
	\] 
	Finally, we can extend the above to the space of functions $f:V_M^n \to [k]$ as follows. We shall relax the range of the functions to be the $(k-1)$-simplex and interpret the functions as $f: V_M^n \to \Delta_k$. Given that the functions are now real vector valued, we can equip the space of such functions with the inner product 
	\[
	\langle f,g \rangle_M \defeq \Ex_{x \sim \mu^n_M}\left[\langle f(x),g(x) \rangle \right] = \sum_{i \in [k]}\Ex_{x \sim \mu^n_M} \left[f^i(x)g^i(x) \right]
	\] 
	where $f^i:V_M^n \to \mathbbm{R}_+$ is the $i^{th}$-coordinate function for the vector valued function $f$. Using this, the Fourier analytic concepts defined above can be extended to functions with the domain as $[k]$.

	The following proposition is well known.
	\begin{proposition}\cite{OD14}			\label{prop:fourier}
		Let $f,g :V_M^n \to \mathbbm{R}$ be a pair of real valued functions over $\mathbbm{R}^n$. Then
\begin{itemize}
	\item {\it Plancharel's Identity}: $\langle f, g \rangle_M = \sum_{S \in V_M^n} \hat{f}(S)\hat{g}(S)$
	\item {\it Influence}: The influence of the $i^{th}$ coordinate is expressed in terms of the Fourier coefficients as 
	\[ \Inf{i}{f}	= \sum_{S : \chi_{S_i} \neq {\bf 1}} \hat{f}(S)^2  \]
	Using the above expression, we can also extend the notion of influences for vector valued functions $F: V^n_M \mapsto \Delta_k$ by defining
		\[ \Inf{i}{F}  = \sum_{j \in [k]} \Inf{i}{F^{j}}. \]
\end{itemize}
\end{proposition}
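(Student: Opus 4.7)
The plan is to prove both parts by directly exploiting the orthonormality of the tensor-product basis $\{\chi^{(S)}\}_{S \in V_M^n}$ furnished by the preceding theorem. First I would record the single-coordinate orthonormality $\langle \chi_u, \chi_v \rangle_M = \delta_{uv}$, which holds because $\chi_1,\ldots,\chi_m$ are right eigenvectors of the row-stochastic matrix $A_M$ and form an orthonormal basis with respect to the stationary measure $\mu_M$. Since the inner product on $V_M^n$ is the product inner product $\langle f,g\rangle_M = \Ex_{x \sim \mu_M^n}[f(x)g(x)]$, and the basis element $\chi^{(S)} = \bigotimes_{i=1}^n \chi_{S_i}$ factorizes over coordinates, I get
\[
\langle \chi^{(S)}, \chi^{(T)}\rangle_M \;=\; \prod_{i=1}^n \langle \chi_{S_i}, \chi_{T_i}\rangle_{\mu_M} \;=\; \prod_{i=1}^n \delta_{S_i, T_i} \;=\; \delta_{S,T}.
\]
This immediately yields Plancherel: expanding $f = \sum_S \hat f(S)\chi^{(S)}$ and $g = \sum_T \hat g(T)\chi^{(T)}$ and using bilinearity of $\langle\cdot,\cdot\rangle_M$ gives $\langle f,g\rangle_M = \sum_{S,T} \hat f(S)\hat g(T)\,\delta_{S,T} = \sum_S \hat f(S)\hat g(S)$.

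For the influence formula, I would use the standard ANOVA-style decomposition. Recall $\Inf{i}{f} = \Ex_{x_{-i} \sim \mu_M^{n-1}}\bigl[\Var_{x_i \sim \mu_M}[f(x)]\bigr]$. Fix $x_{-i}$ and view $f(\cdot)$ as a function of the $i$-th coordinate with the other coordinates frozen. Expanding $f$ in the Fourier basis and grouping by whether $S_i = 1$ (so $\chi_{S_i}$ is the constant function $\mathbf{1}$, contributing nothing to the variance over $x_i$) or $S_i \neq 1$ (so $\Ex_{x_i}[\chi_{S_i}(x_i)] = \langle \chi_{S_i}, \chi_1\rangle_{\mu_M} = 0$), I get
\[
\Var_{x_i}[f(x)] \;=\; \Ex_{x_i}\Bigl[\Bigl(\sum_{S:\chi_{S_i}\neq\mathbf{1}} \hat f(S)\, \chi^{(S)}(x)\Bigr)^{\!2}\Bigr].
\]
Now taking expectation over $x_{-i}$ as well, this equals $\bigl\langle \sum_{S:\chi_{S_i}\neq \mathbf{1}}\hat f(S)\chi^{(S)},\ \sum_{T:\chi_{T_i}\neq \mathbf{1}}\hat f(T)\chi^{(T)}\bigr\rangle_M$, and by Plancherel applied to this truncated expansion I obtain $\Inf{i}{f} = \sum_{S:\chi_{S_i}\neq \mathbf{1}}\hat f(S)^2$.

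Finally, for the extension to vector-valued functions $F:V_M^n \to \Delta_k$, I would apply the scalar identities coordinate-wise. Writing $F = (F^1,\ldots,F^k)$ and using the vector inner product $\langle F,G\rangle_M = \sum_{j\in[k]} \langle F^j, G^j\rangle_M$, both the Plancherel identity and the influence formula follow by summing the scalar versions over $j \in [k]$; in particular $\Inf{i}{F} = \sum_{j\in[k]}\Inf{i}{F^j} = \sum_{j\in[k]}\sum_{S:\chi_{S_i}\neq \mathbf{1}} \widehat{F^j}(S)^2$, matching the stated definition. There is no real obstacle here: the entire proof reduces to the orthonormality of the eigenbasis and the fact that $\chi_1 \equiv \mathbf{1}$ is orthogonal to every non-trivial eigenfunction under $\mu_M$; the only mild care needed is in the vector-valued case to ensure the definitions of $\langle\cdot,\cdot\rangle_M$ and $\Inf{i}{\cdot}$ are interpreted consistently across coordinates.
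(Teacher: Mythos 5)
Your proof is correct and is exactly the standard argument (orthonormality of the product eigenbasis giving Plancherel, then the conditional-variance/ANOVA decomposition with the constant eigenfunction separated out for the influence formula); the paper itself gives no proof, simply citing O'Donnell's book, and your argument is the one found there. No gaps: the only point worth noting is that your mean-zero step uses orthogonality to $\chi_1 \equiv \mathbf{1}$ rather than the eigenvalue being strictly less than $1$, which is precisely what makes the condition $\chi_{S_i} \neq \mathbf{1}$ the right one.
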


The following standard fact will be helpful in our analysis.
	
\begin{fact}
\label{fact:covar}
	Let $M$ be a Markov chain on $k$ vertices with spectral gap $\epsilon$. Consider the distribution over random vertices $(x,y_1,y_2,\ldots,y_d)$ where $x \sim \mu_M^n$ and $y_1,\ldots,y_d \sim \mu^n_M(x)$ where $\mu^n_M(x)$ is the distribution over random neighbors of vertex $x$. Let $Y = (Y_1,Y_2,\ldots,Y_n)$ be the independent ensemble of random variables with $Y_i := \{\chi_r(y_{j}(i))\}_{r \in [k],j \in [d]}$ with every $i \in [n]$. Then the following properties hold.
	\begin{itemize}
		\item[1.] {\bf Matching Moments}. There exists a corresponding set of Gaussian random variables $Z = (Z_1,Z_2,\ldots,Z_n)$ with matching covariance structure i.e., for any $i,i' \in [n]$ we have $\Ex_Y [Y_iY^\top_{i'}] = \Ex_Z [Z_iZ^\top_{i'}]$. In particular the random variables in $Z_i$ are identified as $(Z(y_{rj}(i)))_{r \in [k],j \in [d]}$.
		\item[2.] {\bf Row-wise Structure}.	Furthermore, there exists a vector $\overline{\rho} \in [-1,1]^{kn}$ such that the random variable  
		$Z_{y_j} = (Z(y_{rj}(i)))_{r \in [k], i \in [n]}$ are $\overline{\rho}$-correlated copies of a $kn$-dimensional standard Gaussian vector $Z_x$.
		\item[3.] {\bf Bounded Correlations}. Moreover, $\|\overline{\rho}\|_\infty \leq 1 - \epsilon$, and if the transition probability matrix of $M$ is positive semidefinite, then $\overline{\rho}$ is nonnegative.	
	\end{itemize}
\end{fact}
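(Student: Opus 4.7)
My plan is to derive all three properties of Fact~\ref{fact:covar} from two ingredients: the product structure of the chain $M^n$, and the spectral decomposition $T\chi_r = \lambda_r \chi_r$ of the single-copy transition operator $T$ with respect to the orthonormal basis $\{\chi_r\}_{r\in[k]}$ from Section~\ref{sec:fourier}. First I would observe that because $M^n$ acts independently on each coordinate, the conditional law of $(y_1(i),\dots,y_d(i))$ given $x$ depends only on $x(i)$ and is independent across $i$. Hence the ensembles $Y_1,\dots,Y_n$ are mutually independent, which lets me construct the Gaussian ensemble $Z=(Z_1,\dots,Z_n)$ independently across $i$.

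Next I would compute the within-coordinate covariance structure. Using the eigenvector relation, $\E[\chi_r(y_j(i))\chi_{r'}(y_{j'}(i))\mid x(i)]$ for $j\neq j'$ factors as $(T\chi_r)(x(i))(T\chi_{r'})(x(i)) = \lambda_r\lambda_{r'}\chi_r(x(i))\chi_{r'}(x(i))$, so taking expectation and using orthonormality of $\{\chi_r\}$ with respect to $\mu_M$ gives the value $\lambda_r^2\,\delta_{r,r'}$. For $j=j'$ the value is $\delta_{r,r'}$, and $\E[\chi_r(x(i))\chi_{r'}(y_j(i))]=\lambda_r\,\delta_{r,r'}$ by the same argument. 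All covariances are thus diagonal in the $(r,r')$ index, and the only non-constant eigenvector contributions are the non-trivial ones $r\neq 1$ (the $r=1$ coordinate is the constant $1$ and is handled separately or dropped). I would then invoke the elementary fact that any positive semidefinite matrix is realized by some Gaussian vector, producing $Z_i=(Z(y_{rj}(i)))_{r,j}$ matching $Y_i$ moment-for-moment; taking independent copies across $i$ completes item~1.

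For items~2 and~3, I would set $\overline{\rho}\in[-1,1]^{kn}$ by $\rho_{r,i}:=\lambda_r$, independent of $i$. By the covariance computation, for every fixed $j\in[d]$ the pair $(Z_x(r,i),Z_{y_j}(r,i))$ is a bivariate standard Gaussian with correlation $\lambda_r$, and the different pairs (across $(r,i)$) are uncorrelated; since joint Gaussians are independent iff uncorrelated, this is precisely the meaning of $Z_{y_j}$ being an $\overline{\rho}$-correlated copy of $Z_x$. The bound $\|\overline{\rho}\|_\infty \le 1-\epsilon$ is then immediate from the definition of the spectral gap, which forces $|\lambda_r|\le 1-\epsilon$ for all $r\neq 1$; and if the transition matrix is PSD, all its eigenvalues are nonnegative, so $\overline{\rho}\geq 0$ entrywise.

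The only mildly delicate point is the handling of the trivial eigenvector $\chi_1\equiv \mathbf{1}$, which contributes a deterministic coordinate to $Y_i$ and cannot literally be a Gaussian; in the actual write-up I would either remove it at the outset or treat it as a degenerate Gaussian with variance~$0$ and correlation~$1$ (which is harmless because degenerate coordinates play no role when the fact is applied via the invariance principle). Otherwise the argument is a routine bookkeeping exercise in spectral calculations on product Markov chains, so I do not anticipate any technical obstacle.
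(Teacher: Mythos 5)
Your proposal is correct and follows essentially the same route as the paper: reduce to a single coordinate by independence, compute the second moments via the eigenfunction relations (obtaining $\lambda_r$, $\lambda_r^2$, and orthogonality across distinct eigenfunctions), and identify a Gaussian ensemble with this covariance in which $\overline{\rho}$ is given by the eigenvalues, so that the spectral gap and PSD-ness of the transition matrix yield item~3. The only cosmetic difference is that the paper writes the Gaussians explicitly as $Z_{y_j}(r) = \lambda_r Z_x(r) + \sqrt{1-\lambda_r^2}\,\xi_{r,j}$ rather than invoking existence from the covariance matrix, and your explicit handling of the constant eigenfunction is a point the paper glosses over, so it is a welcome (not problematic) addition.
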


\begin{proof}
Let $\lambda_1,\ldots,\lambda_{k}$ be the eigenvalues corresponding to the eigenfunctions $\chi_1,\ldots,\chi_k$ of $M$ (i.e, they are the eigenvalues of the row normalized transition probability matrix $A_M$). Since for any $x \in V_M^n$ and $S \in V_M^n$ we have $\chi^S(x) = \prod_{i \in S} \chi_{S_i}(x(i))$ and under the distribution $x(i)$'s are independent for different choices of $i \in [n]$ (and similarly for the $y'_j$'s as well), it suffices to prove the statement for $n = 1$. 

Let $\chi_1,\ldots,\chi_k$ denote the Fourier basis for functions $f: V_M \to \mathbbm{R}$ (given by the corresponding right eigenvectors of $A_M$). We begin by analyzing the correlations between different types of variables.
		\begin{enumerate}
			\item[(a)] For any $i \neq j$ and $x,y_a,y_b$ we have $\Ex_{x \sim \mu_M} [\chi_i(x)\chi_j(x)] = \Ex_{x \sim \mu_M}\Ex_{y_a \sim \mu_M(x)} [\chi_i(y_a) \chi_j(y_a)] = \langle \chi_i, \chi_j \rangle_M = 0$ because $\chi_i,\chi_j$ are orthogonal.  Furthermore,
			\[
			\Ex_{x \sim \mu_M}\Ex_{y_b \sim \mu_M(x)} [\chi_i(x) \chi_j(y_b)] = \langle \chi_i, A_M \chi_j \rangle = \lambda_i \langle \chi_i, \chi_j \rangle = 0
			\]
			using Proposition \ref{prop:eig-facts}. Similarly we also have $\Ex_{x}\Ex_{y_b,y_{b'}} [\chi_i(y_b) \chi_j(y_{b'})]= 0$.
			\item[(b)] For any $i \in [k]$ and $a \in [d]$ we have $\Ex [\chi_i(x)^2] = \Ex [\chi_i(y_a)^2] = \|\chi_i\|^2_M = 1$.
			\item[(c)] For any $i \in [k]$ and $a \in [d]$ we have $\Ex [\chi_i(x) \chi_i(y_a)] = \lambda_i$ and $\Ex [\chi_i(y_a) \chi_i(y_b)] = \lambda_i^2$. We prove these identities in Proposition \ref{prop:eig-facts}. 
		\end{enumerate}
		Now for every $r \in [k]$, we define the following ensemble of Gaussian random variables.
		\begin{enumerate}
			\item We define $Z_{x}(r) \sim N(0,1)$ to be standard normal random variable.
			\item For every $j \in [d]$, we define $Z_{y_j}(r) = \lambda_r Z_{x}(r) + \sqrt{1 - \lambda^2_r}\xi_{r,j}$ where $\xi_{r,j} \sim N(0,1)$ is an independent standard normal for every $r \in [k],j \in [d]$.		
		\end{enumerate}
	Now for $r \in [k],j \in [d]$, we identify the random variable $Z_{y_{j}}(r)$ with $\chi_{r}(y_j)$. We quickly verify that with this indexing the $Z$-variables have matching covariance structure with the $Y$-variables. For any $r,r' \in [k]$ with $r \neq r'$, it is easy to see that $Z_{y_{j}}(r),Z_{y_{j'}}(r')$ are independent independent Gaussians and therefore they satisfy $(a)$. Since the marginal distribution of every $Z_{y_{j}}(r)$ random variable is $N(0,1)$, point $(b)$ is satisfied. And finally, for a fixed $r \in [k]$, the above construction identically captures the non-trivial $\lambda_r,\lambda^2_r$-correlations, therefore satisfying $(c)$.
	
	Finally, since the correlations between the Gaussian in the ensembles are products of the eigenvalues of $M$, we can establish that the cross-variable correlations are bounded by $\max_{i \geq 2} \lambda_i \leq 1 - \epsilon$. Furthermore, they are nonnegative whenever the transition probability matrix of $M$ is PSD.
	\end{proof}
	
\subsection{Noise Operators}

\begin{definition}[Markov Chain Noise Operator] 				\label{defn:mark-noise}
The noise operator $\Gamma_{1 - \eta}$ is defined on functions $f:V_M^n \to [k]$ as follows.
Given $x \in V_M^n$, we sample $x'\underset{1 - \eta}{\sim} x$ 
as follows. For every $i \in [n]$, we independently set $x'(i) = x(i)$ with probability $1 - \eta$, 
and with probability $\eta$, we sample $x'(i) \sim \mu_M(x(i))$. 
Then $\Gamma_{1 - \eta}f(x) \defeq \Ex_{x' \underset{1 - \eta}{\sim} x} f(x')$.
\end{definition}

The following facts about the Fourier decay properties of the noise operators are well known.

\begin{lemma}[Lemma 5.4, Lemma C.1 \cite{LRV13}]			\label{lem:four-decay}
Let $f: V^n_M \to [0,1]$ be a function defined on the Markov chain $M^n$, and let $Q$ be multilinear polynomial representation of $\Gamma_{1 - \eta}f$ in the Fourier basis over $V^n_M$. Let $\epsilon$ be the spectral gap of the Markov chain $M^n$. Then the following properties hold for every $\eta \in (0,1)$.
\begin{itemize}
	\item For every $p \geq 1$, we have ${\rm Var}(Q^{>p}) \leq (1 - \epsilon\eta)^{2p}$.
	\item The sum of influences is bounded i.e, $\sum_{i \in [n]} \Inf{i}{Q} \leq \frac{1}{\epsilon\eta}$.
\end{itemize}
\end{lemma}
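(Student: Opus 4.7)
The plan is to work directly in the Fourier basis for $V_M^n$ described in Section \ref{sec:fourier} and exploit that each tensor basis function $\chi^{(S)}$ is an eigenfunction of the noise operator $\Gamma_{1-\eta}$. Concretely, if $A_M$ has right eigenvectors $\chi_1 = \mathbf{1}, \chi_2, \ldots, \chi_m$ with eigenvalues $1 = \lambda_1 \geq \lambda_2 \geq \cdots \geq \lambda_m$, then the single-coordinate operator $(1-\eta)I + \eta A_M$ has eigenvalues $1 - \eta(1-\lambda_j)$, and the $n$-fold noise operator $\Gamma_{1-\eta}$ acts on $\chi^{(S)} = \bigotimes_i \chi_{S_i}$ by multiplication by $\prod_{i : S_i \neq 1}\bigl(1 - \eta(1 - \lambda_{S_i})\bigr)$. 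Thus, writing $f = \sum_S \hat{f}(S)\chi^{(S)}$, we immediately obtain
\[
\hat{Q}(S) \;=\; \hat{f}(S) \prod_{i : S_i \neq 1}\bigl(1 - \eta(1 - \lambda_{S_i})\bigr).
\]
Since the spectral gap assumption gives $\lambda_j \leq 1-\epsilon$ for every $j \geq 2$, each factor is at most $1 - \epsilon\eta$, so $|\hat{Q}(S)| \leq (1-\epsilon\eta)^{|S|}|\hat{f}(S)|$, where $|S|$ denotes the number of coordinates $i$ with $S_i \neq 1$.

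For the first part, I would compute
\[
\mathrm{Var}(Q^{>p}) \;=\; \sum_{|S| > p} \hat{Q}(S)^2 \;\leq\; (1-\epsilon\eta)^{2p} \sum_{|S|>p} \hat{f}(S)^2 \;\leq\; (1-\epsilon\eta)^{2p}\|f\|_2^2,
\]
and then use $f : V_M^n \to [0,1]$, which forces $\|f\|_2^2 = \Ex[f^2] \leq \Ex[f] \leq 1$. This gives the claimed bound.

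For the second part, using the Fourier expression for influence from Proposition \ref{prop:fourier} and interchanging the order of summation,
\[
\sum_{i=1}^n \Inf{i}{Q} \;=\; \sum_S |S|\,\hat{Q}(S)^2 \;\leq\; \sum_S |S|\,(1-\epsilon\eta)^{2|S|}\hat{f}(S)^2 \;\leq\; \Bigl(\sup_{t \geq 0} t\,(1-\epsilon\eta)^{2t}\Bigr)\|f\|_2^2.
\]
The main (and only) calculation is to bound this supremum. Setting $x = 1-\epsilon\eta$, an elementary calculus computation shows $\sup_t t\,x^{2t} = -1/(2e\ln x)$, attained at $t^* = -1/(2\ln x)$. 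Using $-\ln(1-\epsilon\eta) \geq \epsilon\eta$ gives $\sup_t t(1-\epsilon\eta)^{2t} \leq 1/(2e\,\epsilon\eta) \leq 1/(\epsilon\eta)$, and combining with $\|f\|_2^2 \leq 1$ yields the stated bound. Neither step presents a real obstacle once the eigenfunction action of $\Gamma_{1-\eta}$ is written down; the entire argument is essentially the standard noise decay calculation transferred from the Boolean hypercube to a general reversible Markov chain, justified by the orthonormal eigenbasis provided by Theorem 4.4 of \cite{OD13}.
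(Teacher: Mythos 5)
Your proof is correct and is essentially the standard spectral argument that the paper implicitly relies on by citing [LRV13]: diagonalize $\Gamma_{1-\eta} = ((1-\eta)I + \eta A_M)^{\otimes n}$ in the product eigenbasis, observe each nontrivial factor is $1-\eta(1-\lambda_{S_i}) \leq 1-\epsilon\eta$, and then the variance-decay and influence bounds follow from $\sum_S \hat f(S)^2 \leq 1$ and the elementary estimate $\sup_{t\ge 0} t(1-\epsilon\eta)^{2t} \le 1/(2e\,\epsilon\eta)$. The only caveat is that passing from "each factor is at most $1-\epsilon\eta$" to $|\hat Q(S)| \le (1-\epsilon\eta)^{|S|}|\hat f(S)|$ tacitly needs the nontrivial eigenvalues to be nonnegative (or the gap to be defined via $\max_{j\ge 2}|\lambda_j|$), which holds here because the chain used in the reduction has a PSD transition matrix (Claim \ref{cl:M-prop}).
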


We shall also need the notion of the noise operator on the Gaussian space.

\begin{definition}[Gaussian Noise Operator] For any $\overline{\rho} = (\rho_1,\rho_2,\ldots,\rho_n) \in [-1,1]^n$, we use $U_{\overline{\rho}}$ to denote the Gaussian noise operator with correlation vector $\overline{\rho}$. Formally, given $x \in \mathbbm{R}^n$, a $\overline{\rho}$-correlated Gaussian $y \underset{\overline{\rho}}{\sim} x$ is defined as follows. For every $i \in [n]$, we set $y(i) = \rho_i x(i) + \sqrt{1 - \rho^2_i} z(i)$ where $z \sim N(0,1)^n$ is an $n$-dimensional random Gaussian vector. Then $U_{\overline{\rho}} f (x) \defeq \Ex_{y \underset{\overline{\rho}}{\sim}x  }\big[f(y)\big]$.
\end{definition}

\section{Hardness of \hug~}

The main result of this section is the following theorem which gives arity dependent hardness for \hug.

\begin{theorem}					\label{thm:hyperug}
	Let $d,\epsilon,k$ be as in Theorem \ref{thm:strug-red}. Fix $\epsilon > 0$ and let $\epsilon_c = \epsilon/d, \epsilon_s = (\epsilon/k)^{C'd/\epsilon^2}$ such that $\epsilon_c,\epsilon_s \leq \epsilon_0$. Then there exists a polynomial time reduction from $(1-\epsilon_c,\epsilon_s)$-\ug~to $(1 - 4\epsilon,1 - C\sqrt{\epsilon \log d \log k})$-\hug, where $C,C' > 0$ are absolute constants.	
\end{theorem}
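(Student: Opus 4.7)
The plan is to follow the standard reduction from \ug~to a dictatorship-style hypergraph CSP, but with a carefully designed ``gadget'' Markov chain that simultaneously has $k$ small-volume dictator cuts of small vertex expansion and yet forces all other small-volume cuts to have large vertex expansion. Concretely, I would build a Markov chain $M$ on $V_M = \uplus_{i\in[k]} V_i$ with $V_i = \{s_i,t_i\}$ and stationary weights $\mu(s_i)\approx (1-\epsilon)/k$, $\mu(t_i)\approx \epsilon/k$, where the $s_i$-vertex is connected only to $t_i$, and the induced chain on $\{t_i\}_{i\in[k]}$ is an expander with spectral gap $\Omega(1)$. Property (a) gives that each ``dictator'' $V_i$ has internal boundary only at $t_i$, so its vertex expansion is $O(\epsilon)$; property (b) ensures that the overall spectral gap of $M$ is $\Omega(\epsilon)$, which is what drives the soundness. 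This gadget should be verified to satisfy the two goals listed after \eqref{eq:isop-bound}, and in particular to be PSD so that later correlations are nonnegative (needed for the Borell-type bound).

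Given a $(1-\epsilon_c,\epsilon_s)$-\ug~instance $\cG_0(V_0,E_0,[k],\{\pi_e\})$, I would construct the \hug~instance $\cH$ by composing the $d$-ary long code test of Figure \ref{fig:pcptest-0-intro} with long codes over the gadget. That is, for each vertex $v\in V_0$ I introduce a cloud of variables indexed by $V_M^n$; the hyperedge constraints of $\cH$ are generated by the test: pick $v$ and neighbors $w_1,\ldots,w_d$ in $\cG_0$, pick $x\sim \mu_M^n$ and $y_1,\ldots,y_d$ by flipping each coordinate with probability $\eta$ to a fresh $M$-neighbor, and require that $f_{w_j}(\pi_{w_j\to v}\circ y_j)$ agree across $j$. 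Setting $\eta\asymp \epsilon$ and reading off the natural bijections from the test will give a legitimate \hug~instance whose hyperedge distribution is $\cD_H$.

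For completeness, I would take a labeling $\sigma$ of $\cG_0$ satisfying $(1-\epsilon_c)$ fraction of constraints, assign to each $v$ the dictator $f_v = \chi_{\sigma(v)}$ folded appropriately, and bound the acceptance of a random hyperedge by a union bound: the outer UG layer contributes $O(\epsilon_c d) = O(\epsilon)$ violated edges, while each inner noise coordinate flips independently with probability $\eta$, and on the gadget a single $s_i\to t_i$ flip does not change the dictator, so the failure probability per coordinate is $O(\epsilon\eta)$; summing over $d$ copies and $n$ coordinates and choosing the parameters as in the hypothesis yields acceptance $\geq 1-4\epsilon$.

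The main obstacle — and the technical heart — is soundness. I would assume no labeling satisfies more than an $\epsilon_s$ fraction of $\cG_0$, fix a candidate assignment $\{f_v\}$ with acceptance probability strictly greater than $1-C\sqrt{\epsilon \log d\log k}$, average to obtain $g_v=\Ex_{w\sim v} f_w$, and rewrite the acceptance as $\Ex_v \Ex_x\sum_i \prod_{j\in[d]}(\Gamma_{1-\eta} g_v^i)(y_j)$ as in \eqref{eq:isop-bound}. The standard influence-decoding argument (cf.\ \cite{KKMO07}) shows that if some coordinate has influence $\geq \tau$ on many $v$, one can extract a randomized labeling for $\cG_0$ satisfying $\gg \epsilon_s$ fraction of constraints, contradicting the NO case. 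Once influences are small, I invoke the invariance principle of \cite{MOO10} in the form of Fact~\ref{fact:covar} to pass to correlated Gaussians $Z_{y_1},\ldots,Z_{y_d}$ that are $\overline\rho$-correlated copies of $Z_x$ with $\|\overline\rho\|_\infty\leq 1-\Omega(\epsilon)$ and $\overline\rho\geq 0$ by PSD-ness of $M$. The acceptance is thus bounded by $\Pr[\forall j: G(Z_{y_j})\in S]$ for some set $S$ of volume $1/k$, and applying the Exchangeable Gaussians Theorem of Isaksson–Mossel \cite{IM12} (which upgrades Borell's inequality to non-spherical correlations) gives the quantitative bound $\tfrac1k - \Omega(\tfrac1k\sqrt{\epsilon\log d\log k})$ per color class. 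Summing over $i\in[k]$ produces the stated soundness $1-C\sqrt{\epsilon\log d\log k}$. The hard step is arranging the gadget so that the correlation vector produced by Fact~\ref{fact:covar} is PSD-valid and the exchangeable-Gaussian bound yields the desired $\sqrt{\log d\log k}$ factor simultaneously; this is precisely where the edge-expander-with-small-vertex-expansion property of the Markov chain is indispensable.
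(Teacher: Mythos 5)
Your overall architecture (the $\{s_i,t_i\}$ gadget with an expander on the $t$-vertices, PSD transition matrix, spectral gap $\Omega(\epsilon)$, folding, averaging to $g_v$, invariance principle via Fact~\ref{fact:covar}, the Isaksson--Mossel exchangeable-Gaussians bound, and influence decoding against the $(1-\epsilon_c,\epsilon_s)$-\ug~outer instance) matches the paper. However, there is a genuine gap at the single most important design choice: the correlation structure of the test. You generate $y_1,\ldots,y_d$ from $x$ by resampling each coordinate with probability $\eta\asymp\epsilon$ to a fresh $M$-neighbor, i.e.\ your queries are correlated through the noise operator $\Gamma_{1-\eta}$ (as in Figure~\ref{fig:pcptest-0-intro}, just over $V_M^n$ instead of $[k]^n$). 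The relevant single-coordinate operator is then $(1-\eta)I+\eta A_M$, whose nontrivial eigenvalues are $1-\eta(1-\lambda_r)$. Since the gadget's spectral gap is only $\Theta(\epsilon)$ (and there really are eigenvalues of $A_M$ equal to $1-\Theta(\epsilon)$, e.g.\ functions constant on each pair $\{s_i,t_i\}$ built from a second eigenfunction of the expander), your correlation vector satisfies only $\|\overline\rho\|_\infty\le 1-\Theta(\eta\epsilon)=1-\Theta(\epsilon^2)$, not $1-\Omega(\epsilon)$ as you assert; Fact~\ref{fact:covar} gives $1-\Omega(\epsilon)$ only when each $y_j$ is a \emph{full one-step neighbor} $y_j\sim\mu_M^n(x)$. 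Feeding $1-\Theta(\epsilon^2)$ into Theorem~\ref{thm:bound} yields soundness $1-C\epsilon\sqrt{\log d\log k}$, which does not match the claimed $1-C\sqrt{\epsilon\log d\log k}$; this is exactly the completeness/soundness mismatch of the naive noisy test that the gadget was introduced to avoid, so the gap is not a bookkeeping issue but the heart of the construction.

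The fix is the paper's test (Figure~\ref{fig:pcptest-1-intro}): take $y_1,\ldots,y_d\sim\mu_M^n(x)$, so the cross-correlations are the eigenvalues of $A_M$ themselves, bounded by $1-c\epsilon$ by Claim~\ref{cl:M-prop}, and apply only a tiny additional smoothing $\tilde y_j\underset{1-\eta}{\sim}y_j$ with $\eta=\epsilon/(100d)$, whose sole purpose is to furnish the Fourier decay and bounded influence sums (Lemma~\ref{lem:four-decay}) needed by the invariance principle. Completeness survives the full chain step because each dictator block $\{s_j,t_j\}$ is closed under one step of $M$ except from its $t$-part, which has stationary mass $O(\epsilon)$ (Lemma~\ref{lem:dict-ns}), plus $\eta d\le\epsilon/100$ from the extra noise and $O(\epsilon)$ from the outer layer. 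Relatedly, your completeness accounting (``failure $O(\epsilon\eta)$ per coordinate, summed over $d$ copies and $n$ coordinates'') is off: a dictator depends on a single coordinate, so no union over $n$ is taken (and such a union would blow up with $n$); the correct bound conditions once on the event that the relevant coordinate is a $t$-vertex. With the test corrected as above, the rest of your soundness outline (low-influence case via invariance and the exchangeable-Gaussians inequality, high-influence case via decoding a labeling satisfying more than $\epsilon_s$ of $\cG_0$) is the paper's argument.
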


\subsection{Markov chain gadget and Long codes}
\label{sec:gadget}
We describe the basic Markov chain gadget and its underlying properties. Consider the graph defined on vertices $V_M = \{s_i,t_i\}_{i \in [k]}$ with adjacency matrix $A$, where the entires of $A$ are as follows. Let $G'$ be an unweighted regular expander with {\em constant} spectral gap $\alpha > 0$ of {\em constant} degree $g$ on the vertex set $\{t_i\}_{i \in [k]}$.
For $x,y \in V_M$, we have 
\[A(x,y) = 
\begin{cases}
1 - \epsilon & if x = y  = s_i \mbox{ for some } i \in [k] \\
\epsilon & if x = t_i, y \in \{s_i\} \cup N_{G'}(t_i)\\
(g+1)\epsilon & if x = y = t_i \mbox{ for some } i \in [k]
\end{cases}
\]
Let $M$ be a Markov chain with $V_M = \{s_i,t_i\}_{i \in [k]}$ whose transition probability matrix is given by weighted random walk on $M$. We now state some lemmas which summarize the properties of the Markov chain that will be used in the reduction.
\begin{claim}		\label{cl:M-prop}
	The Markov chain $M$ satisfies the following properties:
	\begin{itemize}
		\item[(i)] The row normalized transition probability matrix $A_M$ is PSD.
		\item[(ii)] Let $\mu_M:V_M \to [0,1]$ be the stationary distribution of $M$. Then for every $i \in [k]$, $\mu_M(s_i) = \frac{1}{k(1 + (2g+1)\epsilon)}$.
		\item[(iii)] The spectral gap of $A_M$ is at least $c \epsilon$ where $c = \min\{\alpha/24,1/10g\}$ .
	\end{itemize}
\end{claim}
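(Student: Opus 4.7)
The plan is to prove the three properties in turn, leveraging the block structure of $A$ and the Poincar\'e inequality on the expander $G'$.

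For (i), I would observe that $A_M = D^{-1}A$ is similar (via $D^{1/2}$) to the symmetric matrix $D^{-1/2}AD^{-1/2}$, so it has nonnegative real spectrum iff $A \succeq 0$. To prove $A \succeq 0$, I would decompose it as a sum of three PSD matrices: (a) $(1-2\epsilon) I$ on the $s$-block, PSD when $\epsilon \leq 1/2$; (b) $\epsilon (A_{G'} + g I)$ on the $t$-block, PSD because $G'$ is $g$-regular and thus $A_{G'} \succeq -g I$; and (c) $\epsilon \sum_{i \in [k]} (e_{s_i} + e_{t_i})(e_{s_i} + e_{t_i})^\top$, a sum of rank-one outer products. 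An entrywise check confirms these three terms reconstruct $A$ (e.g.\ the $t_i$ self-loop picks up $\epsilon g$ from (b) and $\epsilon$ from (c), matching $(g+1)\epsilon$).

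Property (ii) is a direct computation: summing rows gives $d(s_i) = (1-\epsilon) + \epsilon = 1$ and $d(t_i) = \Theta(g\epsilon)$, after which $\mu_M(v) = d(v)/\sum_u d(u)$ yields the claimed stationary distribution.

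For (iii), I would use the variational characterization
\[
1 - \lambda_2(A_M) \;=\; \inf_{f : \mathbb{E}_{\mu_M} f = 0,\, f \neq 0}\;\frac{\mathcal{E}(f,f)}{\mathrm{Var}_{\mu_M}(f)},
\]
with Dirichlet form $\mathcal{E}(f,f) = \tfrac12 \sum_{x,y}\mu_M(x)A_M(x,y)(f(x)-f(y))^2$. Writing $a_i := f(s_i)$, $b_i := f(t_i)$ and using the reversibility identity $\mu_M(x)A_M(x,y) = A(x,y)/Z$, this collapses to
\[
\mathcal{E}(f,f) \;=\; \frac{\epsilon}{Z}\Bigl[\sum_i (a_i - b_i)^2 + \sum_{(i,j)\in E(G')}(b_i - b_j)^2\Bigr].
\]
The crucial input is the Poincar\'e inequality on the $g$-regular expander: $\sum_{(i,j)\in E(G')}(b_i - b_j)^2 \geq g\alpha \sum_i (b_i - \bar b)^2$. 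Setting $u_i := a_i - b_i$ and $v_i := b_i - \bar b$, the mean-zero constraint pins $\bar u = -(1 + \Theta(g\epsilon))\bar b$, which lets me bound $\mathrm{Var}_{\mu_M}(f) = O(Z^{-1})(\sum_i u_i^2 + \sum_i v_i^2)$ by controlling both $k\bar b^2$ and the cross term $\sum u_i v_i$ via Cauchy--Schwarz. Combining these yields $\mathcal{E}(f,f)/\mathrm{Var}_{\mu_M}(f) \geq \Omega(\min(1,\,g\alpha))\,\epsilon$, which is at least the advertised $\min(\alpha/24,\,1/(10g))\,\epsilon$.

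The main obstacle is (iii): the variance $\mu_M(s)\sum_i a_i^2 + \mu_M(t)\sum_i b_i^2$ is dominated by the $s$-block (since $\mu_M(s) \gg \mu_M(t)$), while the Dirichlet form only picks up $s$--$t$ differences scaled by $\epsilon$. The key trick is to use the mean-zero constraint and the identity $(a_i - \bar a) = (u_i - \bar u) + v_i$ to transfer $\sum_i a_i^2$ into a combination of $\sum_i u_i^2$ and $\sum_i v_i^2$; the direct $s_i$--$t_i$ edge then controls the $u$-part and the expander's Poincar\'e inequality controls the $v$-part. The $\epsilon$-bookkeeping must be executed carefully so that the $(2g+1)\epsilon$ factor relating $\bar a$ and $\bar b$ does not propagate into $c$ and degrade it by a spurious $1/g$ or $1/\epsilon$.
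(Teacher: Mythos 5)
Your proposal is correct, and for parts (i) and (iii) it takes a genuinely different route from the paper; part (ii) is the same direct degree computation in both. For (i), the paper simply notes that $A$ is weakly diagonally dominant with nonnegative diagonal (the $s_i$-row has slack $1-2\epsilon$, the $t_i$-row has slack exactly $0$) and invokes Gershgorin's circle theorem before passing to $D^{-1/2}AD^{-1/2}$; your explicit decomposition of $A$ into the three PSD pieces $(1-2\epsilon)$ times the identity on the $s$-block, $\epsilon(A_{G'}+gI)$ on the $t$-block, and $\epsilon\sum_{i}(e_{s_i}+e_{t_i})(e_{s_i}+e_{t_i})^{\top}$ reconstructs $A$ entrywise (I checked: the $t_i$ diagonal picks up $\epsilon g+\epsilon=(g+1)\epsilon$, etc.) and proves the same fact while making explicit where $\epsilon\le 1/2$ and the $g$-regularity bound $A_{G'}\succeq -gI$ enter, at the cost of that verification. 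For (iii), the paper (in its appendix) works with the shift-invariant pairwise form ${\rm Var}(f)=\sum_{u,v}\mu(u)\mu(v)(f(u)-f(v))^2$, expands $(f(s_i)-f(s_j))^2$ and $(f(s_i)-f(t_j))^2$ along paths through $t_i,t_j$ using the $(x+y+z)^2\le 3(x^2+y^2+z^2)$ trick, compares numerator and denominator term-by-term with the expander gap stated in its ``all-pairs'' form $\tfrac{1}{k^2}\sum_{i,j}(b_i-b_j)^2$, and finishes with the mediant inequality $\tfrac{A_1+A_2}{B_1+B_2}\ge\min(A_1/B_1,A_2/B_2)$; you instead keep the standard centered variational form, use the Poincar\'e inequality $\sum_{\{i,j\}\in E(G')}(b_i-b_j)^2\ge g\alpha\sum_i(b_i-\bar b)^2$ (note this takes $\alpha$ to be the gap of the \emph{normalized} walk on $G'$, which is indeed how the paper uses it), and absorb the dominant $s$-block variance via $a_i-\bar b=u_i+v_i$; both routes land at $\Omega(\min(1,g\alpha))\,\epsilon$, which indeed dominates $\min(\alpha/24,1/(10g))\,\epsilon$ since $g\ge 1$. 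One simplification you may want: your detour through the mean-zero constraint (pinning $\bar u$ against $\bar b$ and controlling $k\bar b^2$ and the cross term) does work, but is unnecessary, because the variance is the minimum over shifts $c$ of $\Ex_{\mu_M}\left[(f-c)^2\right]$; centering at $\bar b$ directly gives ${\rm Var}_{\mu_M}(f)\le \mu_M(s_1)\sum_i(u_i+v_i)^2+\mu_M(t_1)\sum_i v_i^2=O(k^{-1})\left(\sum_i u_i^2+\sum_i v_i^2\right)$ with no constraint bookkeeping, after which your term-by-term comparison with $\mathcal{E}(f,f)=\tfrac{\epsilon}{Z}\bigl[\sum_i u_i^2+\sum_{\{i,j\}\in E(G')}(b_i-b_j)^2\bigr]$ finishes the proof immediately.
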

\begin{proof}
	For any $i \in [k]$ we have 
	\[
	A(s_i,s_i) - \sum_{x \neq s_i} A(s_i,x_i) = 1 - 2\epsilon
	\]
	and 
	\[
	A(t_i,t_i) - \sum_{x \neq t_i} A(t_i,x) = (g+1)\epsilon - \epsilon - g\epsilon = 0
	\]
	Therefore using {\em Gershghorin's Circle Theorem} we know that $A$ is PSD. But then the weighted normalized matrix $ D^{-1/2} A D^{-1/2}$ (where $D$ is the diagonal matrix with vertex weights) is also PSD. Since eigenvalues of $A_M = D^{-1}A$ are the same as those of $D^{-1/2}AD^{-1/2}$, we have that $A_M$ is PSD.
	Furthermore,  the total weight of edges in the graph $G$ is $k + k(2g+1)\epsilon$. Therefore, for any $i \in [k]$ we have 
	\[
	\mu_M(s_i) = \frac{A(s_i,s_i) + A(s_i,t_i)}{k + k(2g+1)\epsilon} = \frac{1}{1 + (2g+1)\epsilon} 
	\]
	The proof of the spectral gap can be found in Section \ref{sec:spec-gap}.
\end{proof}

	Our long codes will be functions $f:V_M^n \to [k]$. We introduce some additional notation that will be useful. Define the map $\psi:V_M \to [k]$ as $\psi(x) = i$ is $x \in \{s_i,t_i\}$. Among functions $f:V_M^n \to [k]$, the $i^{th}$ dictator function, denoted by $\Lambda_i$ is defined as follows. For every $x \in [k]^n$, $\Lambda_i(x) = \psi(x_i)$. The follow lemma bounds the noise stability of dictators.

\begin{lemma}
	\label{lem:dict-ns}
For any $i \in [n]$, we have 
	\[ \Ex_{x \sim \mu_M^n} \Pr_{y_1,\ldots,y_d \sim \mu_M^n(x)} \left[\exists j \in [d]: \Lambda_i(y_j) \neq \Lambda_i(x)\right] \leq 2 \epsilon. \]
	\end{lemma}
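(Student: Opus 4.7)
The key structural fact is that $\Lambda_i(x) = \psi(x_i)$ depends only on the $i$-th coordinate of $x$. Since $M^n$ is the $n$-fold product chain and the $y_j$'s are drawn independently from $\mu_M^n(x)$, the other coordinates are irrelevant both for the test event and for the sampling of $y_j(i)$. Thus, the quantity in question collapses to a one-coordinate expression
\[
\Ex_{z \sim \mu_M}\Pr_{w_1,\ldots,w_d \sim \mu_M(z)}\left[\exists j \in [d] : \psi(w_j) \neq \psi(z)\right],
\]
and it suffices to bound this by $2\epsilon$.

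To bound the inner probability, I would split on whether $z = s_l$ or $z = t_l$ for some $l \in [k]$ and use the cluster-block structure of the gadget. Inspecting the weighted adjacency matrix $A$ in Section~\ref{sec:gadget}, the only non-self-loop edge incident to $s_l$ is $(s_l, t_l)$, so every one-step neighbor of $s_l$ lies in the cluster $\{s_l, t_l\}$. Hence, conditioned on $z = s_l$, we have $\psi(w_j) = l = \psi(z)$ deterministically for every $j \in [d]$, and the event ``$\exists j : \psi(w_j) \neq \psi(z)$'' is empty. Consequently, the expectation is supported on $z \in \{t_1,\ldots,t_k\}$, and we may crudely bound the inner probability by $1$ in that case.

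Putting this together with Claim~\ref{cl:M-prop}(ii), which gives $\mu_M(s_l) = 1/(k(1 + (2g+1)\epsilon))$, yields
\[
\Ex_{z \sim \mu_M}\Pr_{w_1,\ldots,w_d}\left[\exists j : \psi(w_j) \neq \psi(z)\right]
\;\leq\; \mu_M(\{t_1,\ldots,t_k\})
\;=\; \frac{(2g+1)\epsilon}{1+(2g+1)\epsilon}
\;\leq\; 2\epsilon,
\]
where the last inequality uses that $g$ is a fixed absolute constant (the expander degree) and $\epsilon$ is small, with the constants in the statement absorbing the $O(g)$ factor. The main point to get right is the clean separation of the Markov chain into ``heavy/insulated'' vertices $s_l$ and ``light/connecting'' vertices $t_l$: the heavy vertices contribute nothing because their neighborhoods respect clusters, and the light vertices have stationary mass $O(\epsilon)$. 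There is no genuine obstacle beyond the bookkeeping of these constants; notably, one does not need to invoke any properties of the expander $G'$ or of the $d$-arity of the test, since the insulation property of $s_l$ makes the $d$-fold union bound cost nothing.
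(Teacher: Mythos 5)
Your proof is correct and follows essentially the same route as the paper: reduce to the $i$-th coordinate, observe that from an $s_\ell$-vertex every one-step neighbor stays inside the cluster $\{s_\ell,t_\ell\}$ so the dictator value is preserved, and bound the failure probability by the stationary mass of the $t$-vertices. Your explicit computation $\mu_M(\{t_1,\dots,t_k\})=\frac{(2g+1)\epsilon}{1+(2g+1)\epsilon}=O(g)\,\epsilon$ is in fact slightly more careful than the paper, which simply asserts $\Pr[x_i\in T]\le 2\epsilon$; the same constant-factor slack coming from the expander degree $g$ is implicit in both arguments.
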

	\begin{proof}
		For brevity, let $S = \{s_i\}_{i \in [k]}$ and $T = \{t_i\}_{i \in [k]}$. Now let $x \in V_M^n$ be such that $x_i = s_j$ for some $j \in [k]$. Then for any $y \sim \mu_M^n(x)$, we will also have $y_i  \in \{s_j,t_j\}$. Therefore, conditioned on $x_i \in S$, $\Lambda_i(y) = \Lambda_i(x)$ with probability $1$. Hence
\[
	\Ex_{x \sim \mu_M^n} \Pr_{y_1,\ldots,y_d \sim \mu_M^n(x)} \left[\exists j \in [d]: \Lambda_i(y_j) \neq \Lambda_i(x)\right] 
	\leq \Pr_{x \sim \mu_M^n} \left[ x_i \in T\right] \leq 2\epsilon .\]
	\end{proof}
	
\paragraph{Composition and Folding.} For any $x \in V^n_M$ and a permutation $\pi:[n] \to [n]$ we use $\pi \circ x := (x_{\pi(1)},x_{\pi(2)},\ldots,x_{\pi(n)})$ to denote the composition of $\pi$ with $x$. Furthermore, we assume the algebra $(\mathbbm{F}_{k},+)$ on the indices. 
For any $x \in V_M$ and $i \in [k]$ we define
	\[
	x \oplus i \defeq 
	\begin{cases}
	s_{j + i} & \mbox{if }  x = s_j\\
	t_{j + i} & \mbox{if }  x = t_j
	\end{cases}
	\] 
	and extend the above definition to $x \in V_M^n$.
	For any $x \in V_M^n$ and $i \in [k]$ we define
	$x \oplus i \defeq (x_1 \oplus i, \ldots, x_n \oplus i)$.
	\begin{definition}[Folding]
	\label{def:folding}
	Given a function $f:V_M^n \to [k]$, the folded function $\tilde{f}$ is defined as
	\[
	\tilde{f}(x) \defeq f(x \oplus (-\psi(x_1) + 1)) + \psi(x_1) - 1. 
	\]
	\end{definition}

	It is easy to see that $\tilde{f}$ constructed above is $\mathbbm{F}_k$ linear. To see this, fix a $r \in \mathbbm{F}_k$. Then,
	\begin{eqnarray*}
	\tilde{f}(x \oplus r) &=& f\left( (x  \oplus r ) \oplus (-\psi((x \oplus r)_1) + 1)\right) + \psi((x \oplus r)_1) - 1 \\
	&=& f\left( (x  \oplus r ) \oplus (-\psi((x_1 \oplus r)_1) + 1)\right) + \psi(x_1 \oplus r) - 1 \\
	&=& f\left( (x  \oplus r ) \oplus (-\psi(x_1) - r + 1)\right) + \psi(x_1) + r - 1 \\
	&=& f\left( (x ) \oplus (-\psi(x_1) + 1)\right) + \psi(x_1) - 1 + r \\
	&=& \tilde{f}(x) + r
	\end{eqnarray*}
	
	The following claim summarizes the key properties of the folding operation defined above.
		
	\begin{claim}					\label{cl:folded}
		The folded codes defined above satisfy the following properties.
		\begin{enumerate}
			\item For any function $f:V_M^n \to [k]$, for any $u \in [k]$ we have $\Pr_{x \sim \mu_M^n} \left[\tilde{f}(x) = u\right] = \frac1k$.
			\item The dictator functions are folded i.e., if $f = \Lambda_i$, then $f = \tilde{f}$.
		\end{enumerate}
	\end{claim}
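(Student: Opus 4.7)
The proof of both parts will rely on two structural facts about the gadget: (a) the stationary measure $\mu_M$ is invariant under the shift action $x \mapsto x \oplus r$, and (b) the folded function satisfies the shift-equivariance identity $\tilde{f}(x \oplus r) = \tilde{f}(x) + r$ (already established in the discussion preceding the claim). Fact (a) holds because by Claim \ref{cl:M-prop}(ii) we have $\mu_M(s_i) = \frac{1}{k(1 + (2g+1)\epsilon)}$ independent of $i$, and by a symmetric calculation using the construction of the transition matrix $A$, the weight $\mu_M(t_i)$ is also independent of $i$; the map $x \mapsto x \oplus r$ simply relabels $s_j \leftrightarrow s_{j+r}$ and $t_j \leftrightarrow t_{j+r}$, so it preserves $\mu_M$ and hence $\mu_M^n$.

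For part 1, the plan is to combine (a) and (b) to show that the distribution of $\tilde{f}(x)$ is invariant under additive shifts by any $r \in \mathbbm{F}_k$. Concretely, for any $u \in [k]$ and any $r \in \mathbbm{F}_k$,
\[
\Pr_{x \sim \mu_M^n}\bigl[\tilde{f}(x) = u\bigr]
= \Pr_{x \sim \mu_M^n}\bigl[\tilde{f}(x \oplus r) = u\bigr]
= \Pr_{x \sim \mu_M^n}\bigl[\tilde{f}(x) + r = u\bigr]
= \Pr_{x \sim \mu_M^n}\bigl[\tilde{f}(x) = u - r\bigr],
\]
where the first equality uses (a), the second uses (b), and the third is just relabeling. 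Since this holds for all $r$, the distribution of $\tilde{f}(x)$ is uniform on $[k]$, giving $\Pr[\tilde{f}(x) = u] = 1/k$.

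For part 2, the plan is a direct symbolic computation using the identity $\psi(y \oplus r) = \psi(y) + r$, which follows immediately from the definitions of $\psi$ and $\oplus$ on a single coordinate. For a dictator $\Lambda_i(x) = \psi(x_i)$, one unfolds the definition:
\[
\tilde{\Lambda}_i(x)
= \Lambda_i\bigl(x \oplus (-\psi(x_1)+1)\bigr) + \psi(x_1) - 1
= \psi\bigl(x_i \oplus (-\psi(x_1)+1)\bigr) + \psi(x_1) - 1,
\]
and then applies $\psi(x_i \oplus r) = \psi(x_i) + r$ with $r = -\psi(x_1)+1$ to get $\tilde{\Lambda}_i(x) = \psi(x_i) = \Lambda_i(x)$.

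Neither part presents a real obstacle; the only mildly delicate point is verifying fact (a), which requires checking that the $t_i$ vertices also carry equal stationary weight. This reduces to observing that the expander $G'$ on $\{t_i\}$ is regular and that the edges to the $\{s_i\}$ side are distributed symmetrically across the indices $i \in [k]$, so the symmetry group $\mathbbm{F}_k$ acts by measure-preserving automorphisms of the weighted graph $G$, and hence on $V_M^n$ coordinate-wise.
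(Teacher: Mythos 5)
Your proof is correct and follows essentially the same route as the paper's: the $\mathbbm{F}_k$-equivariance $\tilde{f}(x \oplus r) = \tilde{f}(x) + r$ together with the invariance $\mu_M^n(x \oplus r) = \mu_M^n(x)$ forces the distribution of $\tilde{f}(x)$ to be uniform, and part 2 is the same direct computation $\psi(x_i \oplus (-\psi(x_1)+1)) + \psi(x_1) - 1 = \psi(x_i)$. Your justification of the measure invariance (equal stationary mass on the $s_i$'s and, by $g$-regularity of $G'$, on the $t_i$'s) is a detail the paper merely asserts; just note that the shift need not be an automorphism of the weighted graph, since $G'$ need not be shift-invariant, but preservation of the stationary measure is all your argument actually uses.
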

	\begin{proof}
		From the $\mathbbm{F}_k$ linearity of $\tilde{f}$ (established above), we know that for any $x \in V_M^n$ and $r \in [k]$, we have $\tilde{f}(x  \oplus r) = \tilde{f}(x) + r$. Furthermore, note that $\mu^n_M(x \oplus r) = \mu^n_M(x)$. Therefore, the inverse set $\tilde{f}^{-1}(u)$ for $u \in [k]$ are cosets of the set $f^{-1}(1)$ under the operation $\oplus$ with identical probability masses. Since the masses of the pre-image sets add up to $1$, this establishes the first point.
		
		For the second point, fix a $f = \Lambda_i$ for some $i \in [n]$. Then for any $x$, we have 
		\[ \tilde{f}(x) = \Lambda_i(x \oplus (-\psi(x) \oplus {\bf 1})) + \psi(x) - 1 
		= \psi(x_i \oplus ( -\psi(x_i) + 1)) + \psi(x) - 1 
		= \psi(x_i)= f(x)
		\]

	\end{proof}

We shall need the following observation.

\begin{claim}				\label{cl:vert-exp}
For any dictator function $\Lambda_i$ with $i \in [n]$, and $x \in V_M^n$, $\Pr_{x' \underset{1- \eta}{\sim} x} \left[\Lambda_i(x) \neq \Lambda_i(x')\right] \leq \eta$.
\end{claim}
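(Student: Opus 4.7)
The plan is to exploit the fact that the dictator function $\Lambda_i(x) = \psi(x_i)$ depends only on the $i$-th coordinate of $x$, and then directly unpack the definition of the Markov-chain noise operator coordinate by coordinate. Specifically, I would write
\[
\Pr_{x' \underset{1-\eta}{\sim} x}\left[\Lambda_i(x) \neq \Lambda_i(x')\right]
= \Pr_{x'_i}\left[\psi(x'_i) \neq \psi(x_i)\right],
\]
since the event only depends on the $i$-th coordinate and all coordinates are perturbed independently.

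Next, by Definition~\ref{defn:mark-noise} applied to the single coordinate $i$, with probability $1-\eta$ we have $x'_i = x_i$ (and hence $\psi(x'_i) = \psi(x_i)$ with certainty), while with probability $\eta$ we sample $x'_i \sim \mu_M(x_i)$. Therefore
\[
\Pr_{x'_i}\left[\psi(x'_i) \neq \psi(x_i)\right]
= \eta \cdot \Pr_{x'_i \sim \mu_M(x_i)}\left[\psi(x'_i) \neq \psi(x_i)\right]
\leq \eta,
\]
using the trivial bound that a probability is at most $1$. This immediately gives the claim.

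There is essentially no obstacle here: the argument is just the observation that a dictator is $1$-junta-like together with the product structure of $\Gamma_{1-\eta}$. One could in fact give a sharper bound by examining $\mu_M(x_i)$ explicitly (for instance, when $x_i = s_j$ all neighbors lie in $\{s_j, t_j\}$ so $\psi$ is preserved, and only when $x_i = t_j$ can $\psi(x'_i) \neq \psi(x_i)$, which by Claim~\ref{cl:M-prop} happens on a set of stationary mass $O(\epsilon)$), but this is not needed for the stated $\eta$ bound and would only matter for completeness calculations such as Lemma~\ref{lem:dict-ns}.
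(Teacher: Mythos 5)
Your proof is correct and follows essentially the same route as the paper: both reduce the event to the $i$-th coordinate using that $\Lambda_i$ is a dictator, and then observe that a disagreement can only occur when coordinate $i$ is resampled, which happens with probability $\eta$. The paper phrases this as $\Pr\left[\psi(x_i)\neq\psi(x_i')\right]\leq\Pr\left[x(i)\neq x'(i)\right]\leq\eta$, while you condition on the resampling event and bound the conditional probability by $1$; these are the same argument.
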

\begin{proof}
	Note that the $i^{th}$ dictator depends only on the $i^{th}$ coordinate and therefore,
\[ \Pr_{x' \underset{1- \eta}{\sim} x} \left[\Lambda_i(x) \neq \Lambda_i(x')\right] 
 = \Pr_{x' \underset{1- \eta}{\sim} x} \left[\psi(x_i) \neq \psi(x_i')\right] 
\leq \Pr_{x' \underset{1- \eta}{\sim} x} \left[x(i) \neq x'(i)\right] \leq \eta.
	\]
\end{proof}

\subsection{The Reduction}					
\label{sec:test}
Our reduction is a generalization of the $2$-query long code test from \cite{KKMO07} to the setting of hypergraph constraints, combined with the Markov operator defined in Section \ref{sec:gadget}. Given a \ug~instance $\cG(V_\cG,E_\cG,[k],\{\pi_{u \to v}\}_{(u,v) \in E_\cG})$ we construct a \hug~instance $\cH(V_\cH,E_\cH,[k],\{\pi_{e,u}\}_{e \in E_\cH, u \in e})$ as follows. The variable set $V_\cH$ is the set of long code tables $\{f_v\}_{v \in V_\cG}$. The distribution over hyperedge constraints in $\cH$ is described by the following dictatorship test.

\begin{figure}[h!]
	\begin{mdframed}
		Given long codes $\{f_v\}_{v \in V_\cG}$ for every vertex $v \in V_\cG$. \\
		{\bf Folding}: From $f:V_M^n \to [k]$ construct $\tilde{f}: V_M^n \to [k]$ as 
		in Definition \ref{def:folding}. 
	
		{\bf Test}: 
		\begin{enumerate}
			\item Sample a random vertex $v \sim V_\cG$ and neighbors $w_1,\ldots,w_d \sim N_{\cG}(v)$. Let $\tilde{f}_{v}, \tilde{f}_{w_1},\ldots,\tilde{f}_{w_d}:V_M^n \to [k]$ be the corresponding folded long codes.
			\item Sample $x \sim \mu_M^n$ and $y_1,\ldots,y_d \sim \mu_M^n(x)$.
			\item For every $i \in [d]$, sample $\tilde{y}_i \underset{1 - \eta}{\sim} y_i$.
			\item Accept if and only if for every $j,j' \in [d]$ we have 
			\[
			\tilde{f}_{w_j}(\pi_{w_j \to v} \circ \tilde{y}_j) = \tilde{f}_{w_j'}(\pi_{w_{j'} \to v} \circ \tilde{y}_{j'}) 
			\]
		\end{enumerate}
	\end{mdframed}
	\caption{PCP Verifier for Hypergraph Unique Games}
	\label{fig:pcptest-1-intro}
\end{figure}

\paragraph{Parameters of the reduction.}
Let $\xi_{k,\epsilon,d} := C\sqrt{\epsilon \log d \log k}$ for some constant $C>0$. Throughout this section, we shall be working with the following parameters:
\begin{itemize}
	\item $\eta = \frac{\epsilon}{100d}$.
	\item $\delta = \frac1k$.
	\item $\alpha = \left(\frac{\epsilon}{k}\right)\left(\epsilon\right)^d$.
	\item $\zeta \in \left(0,\xi^2_{k,\epsilon,d}\right)$.
	\item $\tau = (\delta\zeta/dk)^{8d^2\log (2k/\epsilon)/\epsilon^2}$.
	\item $\rho(\tau,\epsilon,\eta,\alpha) = \tau^{\epsilon\eta/\log(1/\alpha)} \leq \epsilon^2\delta^2\zeta^2/dk$.
\end{itemize}
and $d,\epsilon,k$ are fixed such that they satisfy the conditions of Theorem \ref{thm:strong-ug-informal}.

\subsection{Completeness}		\label{sec:comp}

Suppose there exists an assignment $\sigma:V_\cG \to [n]$ that satisfies $1 - \epsilon/d$ fraction of the constraints. For every $v \in V_\cG$, define $f_v \defeq \Lambda_{\sigma(v)}$. Note that the dictator functions are folded (Claim \ref{cl:folded}), therefore $\tilde{f}_{w_j} = f_{w_j}$ for every $j \in [d]$. 

\begin{claim}				\label{cl:comp}
	With probability at least $1 - \epsilon$, the constraints $(v,w_i) \ \forall i \in [d]$ are satisfied
	by $\sigma$, i.e. $\pi_{w_j \to v}(\sigma(w_j)) = \sigma(v)$. 
\end{claim}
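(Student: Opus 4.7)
The plan is to exploit the fact that the Unique Games instance $\cG_0$ is a YES instance, so there exists a labeling $\sigma$ satisfying a $1 - \epsilon_c = 1 - \epsilon/d$ fraction of the constraints of $\cG$, together with the fact that the underlying constraint graph of $\cG$ is regular (guaranteed by Conjecture \ref{conj:ug}). These two facts combine to give a per-edge satisfaction guarantee under the sampling distribution used in the test.

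First, I would observe that the sampling $v \sim V_\cG$ followed by $w_i \sim N_\cG(v)$ produces, for each fixed $i \in [d]$, a uniformly random edge $(v, w_i)$ of $\cG$ (this is where regularity of the constraint graph is used: the distribution of $v$ induces the correct edge-marginal). Therefore the probability over this sampling that the specific edge constraint $\pi_{w_i \to v}(\sigma(w_i)) = \sigma(v)$ fails equals the fraction of constraints violated by $\sigma$, which is at most $\epsilon/d$.

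Second, since the $w_i$'s are drawn independently and the bad event I need to bound is the union over $i \in [d]$ of single-edge violations, I would apply a straightforward union bound to conclude
\[
\Pr\bigl[\exists\, i \in [d]: \pi_{w_i \to v}(\sigma(w_i)) \neq \sigma(v)\bigr] \;\leq\; d \cdot \frac{\epsilon}{d} \;=\; \epsilon,
\]
which yields the claim. There is no genuine obstacle here, since the statement is purely combinatorial and decoupled from the long code gadget; the only subtlety worth flagging explicitly is the use of regularity to identify the test's edge-sampling distribution with the uniform distribution over constraints, which is what allows the YES-case satisfaction guarantee to be invoked edgewise.
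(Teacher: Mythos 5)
Your proof is correct, and it differs from the paper's only in the final combining step. The paper conditions on $v$, lets $p_v$ be the fraction of constraints incident on $v$ satisfied by $\sigma$ (so $\E_v\, p_v \geq 1 - \epsilon/d$, which uses regularity exactly as you flag), exploits the conditional independence of $w_1,\ldots,w_d$ given $v$ to write the success probability as $\E_v\, p_v^d$, and then applies Jensen's inequality plus Bernoulli: $\E_v\, p_v^d \geq (\E_v\, p_v)^d \geq (1-\epsilon/d)^d \geq 1-\epsilon$. You instead bound the failure event by a union bound over the $d$ sampled edges, using only that each $(v,w_i)$ is marginally a uniformly random constraint, so each single-edge violation has probability at most $\epsilon/d$ and the total is at most $\epsilon$. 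Your route is marginally more elementary and more robust (it needs no independence among the $w_i$'s, only their marginals), while the paper's route is marginally tighter (it gives $1-(1-\epsilon/d)^d$ rather than $1-\epsilon$), but both bounds suffice for the completeness analysis, and both rest on the same regularity observation you correctly identify as the only subtlety.
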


\begin{proof}
For a vertex $v$, let $p_v$ denote the fraction of constraints incident on $v$ that
are satisfied by $\sigma$. Then $\E_v p_v \geq 1 - \epsilon/d$.

Since $w_1, \ldots, w_d \sim N_{\cG}(v)$ are chosen randomly, the probability that they
are all satisfied is $p_v^d$. Since $v$ is also chosen randomly, the probability that 
 $(v,w_i) \ \forall i \in [d]$ are satisfied by $\sigma$ is at least
\begin{align*}
	\E_v p_v^d & \geq \left( \E_v p_v \right)^d \geq \left(1 - \frac{\epsilon}{d} \right)^d
	\geq 1 - \frac{\epsilon}{d} \cdot d = 1 - \epsilon.
\end{align*}

\end{proof}

Therefore, for every $j \in [d]$, we have $
\pi_{w_j \to v} \circ f_{w_j} = \pi_{w_j \to v} \circ \Lambda_{\sigma(w_j)} = \Lambda_{\sigma(v)}$. Furthermore, we also have $\tilde{y}_j(u) = y_j(u)$ for every $j \in [d]$ with probability at least $1 - \eta d$ and from Lemma \ref{lem:dict-ns} we know that for any dictator function 
\[
\Ex_{x \sim \mu_M^n} \Pr_{y_1,\ldots,y_d \sim \mu_M^n(x)} \left[\exists j \in [d] : \Lambda_{\sigma(u)}(y_i) \neq \Lambda_{\sigma(u)}(x) \right] \leq 2 \epsilon .
\]
Therefore, the test accepts with probability at least $1 - 3\epsilon - d\eta \geq 1 - 4\epsilon$.

\subsection{Soundness}

In this section we prove the soundness guarantee of the PCP verifier as stated in the following theorem.

\begin{theorem}
\label{thm:sound}
	Suppose there exist long code tables $\{\tilde{f}_v\}_{v \in V}$ for which the test accepts with probability at least $1 - \xi_{k,\epsilon,d} + \xi$. Then there exists a labeling $\sigma:V(\cG) \to [n]$ which satisfies at least $(\epsilon/k)^{C'(1/d \epsilon^2)}$-fraction of constraints in $\cG$, where $C'$ is an absolute constant independent of other parameters. 
\end{theorem}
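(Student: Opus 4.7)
\textbf{Proof plan for Theorem \ref{thm:sound}.}
The plan is to follow the standard ``dictatorship-test $\Rightarrow$ labeling'' template adapted to hyperedge constraints and to the Markov-chain gadget $M$. First, we rewrite the acceptance probability in a form amenable to Fourier/invariance arguments. View the folded long codes as functions $\tilde f_w : V_M^n \to \Delta_k$ with coordinates $\tilde f_w^i = \mathbbm{1}[\tilde f_w = i]$, and for a fixed $v$ let $g_v^i := \E_{w \sim N_{\cG}(v)} \tilde f_w^i \circ \pi_{w \to v}$ be the neighbourhood-averaged, projected coordinate function. Then, using independence of the $d$ samples $w_1,\dots,w_d$ conditioned on $v$ and the noise $\tilde y_j \underset{1-\eta}{\sim} y_j$, one can rewrite
\begin{equation*}
\Pr[\text{Accept}] \;=\; \E_{v}\; \E_{x \sim \mu_M^n}\, \E_{y_1,\dots,y_d \sim \mu_M^n(x)} \sum_{i \in [k]} \prod_{j \in [d]} (\Gamma_{1-\eta} g_v^i)(y_j).
\end{equation*}
Folding (Claim \ref{cl:folded}) ensures each $g_v^i$ has mean $1/k$, so the ``target'' value of this expression is $1/k$ per term and $1$ in total; the goal is to show that a non-trivial surplus forces a decodable coordinate.

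Next I would execute the influence/invariance dichotomy. Fix a threshold $\tau$ as chosen in the parameter list and call a coordinate $i \in [n]$ \emph{influential for $v$} if $\Inf_i[\Gamma_{1-\eta} g_v^j] \geq \tau$ for some $j \in [k]$. By Lemma \ref{lem:four-decay}, each $g_v^j$ has at most $O(1/(\epsilon\eta\tau))$ influential coordinates, so the decoding list $L(v)$ has bounded size. If at least an $\epsilon$-fraction of vertices $v$ have $|L(v)| \geq 1$, one can use a standard list-decoding argument against the \ug~instance: sample $\sigma(v) \in L(v)$, and for each edge $(u,w)$ with $u, w$ both having non-empty lists, the probability that $\pi_{w \to v}$ maps an influential coordinate of $g_w$ to one of $g_v$ is $\Omega(1/|L|^2)$, which in turn is inherited from the analogue for the original long codes $\tilde f_u, \tilde f_w$ via the averaging defining $g_v$. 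This yields an assignment satisfying $(\epsilon/k)^{C'd/\epsilon^2}$ fraction of constraints of $\cG$ for an appropriate constant $C'$, giving the conclusion.

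It remains to rule out the ``no influential coordinates'' case. Suppose that for a typical $v$, no $\Gamma_{1-\eta} g_v^j$ has any $\tau$-influential coordinate. Apply the invariance principle of Mossel--O'Donnell--Oleszkiewicz, in the multi-function form of Lemma \ref{lem:invar}, to each inner expectation $\E_x \E_{y_1,\dots,y_d} \sum_i \prod_j (\Gamma_{1-\eta} g_v^i)(y_j)$. Using Fact \ref{fact:covar}, the ensembles $\{\chi_r(y_j(i))\}$ have a Gaussian analogue where each $Z_{y_j}$ is a $\bar\rho$-correlated copy of $Z_x$ with $\|\bar\rho\|_\infty \leq 1 - c\epsilon$ (by Claim \ref{cl:M-prop}(iii)), and $\bar\rho \geq 0$ since $A_M$ is PSD (Claim \ref{cl:M-prop}(i)). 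Invariance lets us replace the inner expectation, up to additive error $\rho(\tau,\epsilon,\eta,\alpha)$, by a Gaussian quantity of the form
\begin{equation*}
\E \sum_{i \in [k]} \prod_{j \in [d]} G_v^i(Z_{y_j}),
\end{equation*}
where the $G_v^i$ are truncated Gaussian analogues with $\E[G_v^i] \approx 1/k$.

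The hard part will be the final Gaussian isoperimetric step: bounding this $d$-ary correlated Gaussian expectation by $1 - \Omega(\sqrt{\epsilon \log d \log k})$. The plan is to invoke Theorem \ref{thm:bound}, which is the tailored small-volume / non-spherically correlated generalisation of Borell's inequality (an application of the Exchangeable Gaussians Theorem of Isaksson--Mossel, as advertised in Section \ref{sec:overviewstrug}): since the ``effective'' pairwise correlation between $Z_{y_j}$ and $Z_{y_{j'}}$ is at most $1 - c\epsilon$, halfspaces of volume $1/k$ maximise the probability that all $d$ copies land inside the same level set, and the quantitative Gaussian isoperimetric profile then yields the upper bound $1/k \cdot (1 - \Omega(\sqrt{\epsilon \log d \log k}))$ for each term, hence total $1 - \Omega(\sqrt{\epsilon \log d \log k})$. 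Choosing the parameters $\tau,\zeta$ as in the list so that the invariance error $\rho(\tau,\epsilon,\eta,\alpha)$ is much smaller than $\xi$, this contradicts the hypothesis that the test accepts with probability $\geq 1 - \xi_{k,\epsilon,d} + \xi$. Thus the influential-coordinates case must hold, and the decoding above produces the desired labeling, completing the proof.
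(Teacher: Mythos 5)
Your proposal is correct and follows essentially the same route as the paper: rewrite the acceptance probability in terms of the neighbourhood-averaged functions $g_v^i$, use the invariance principle (Lemma \ref{lem:invar}, via Fact \ref{fact:covar} and Claim \ref{cl:M-prop}) together with the Gaussian isoperimetric bound of Theorem \ref{thm:bound} to force a $\tau$-influential coordinate for a constant fraction of vertices, and then decode with the standard KKMO-style list-decoding argument (the paper's Lemma \ref{lem:inf-dec}). Your contrapositive "dichotomy" phrasing is just a repackaging of the paper's averaging-over-$v$ step followed by Corollary \ref{cor:invar}, so there is no substantive difference.
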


The rest of this section is dedicated towards proving the above theorem. For ease of notation, we will assume that the codes are folded and denote $f_v = \tilde{f}_v$. As discussed in Section \ref{sec:fourier}, for any vertex $v$, the corresponding long code $f_v$ can be interpreted as $f_v = (f^1_v,f^2_v,\ldots,f^k_v): V_M^n \to \Delta_k$ where $f^1_v,f^2_v,\ldots,f^k_v:V_M^n \to [0,1]$ are coordinate functions. In particular $f_v(x) = u$ for some $u \in [k]$ is equivalent to $(f^1_v,f^2_v,\ldots,f^k_v) = e_u$ where $e_u$ is the $u^{th}$ standard basis vector. With this interpretation, the probability of the test accepting can be expressed as 
\[ \Pr\left[\mbox{Test Accepts}\right] 
	 = \Ex_v \Ex_{w_1,\ldots,w_d \sim N_{\cG}(v)}\Ex_{x,\left(y_j\right)^d_{j = 1}}\Ex_{\tilde{y}_j \underset{1 - \eta}{\sim} y_j} \left[\sum_{i \in [k]}  \prod_{j \in [d]} f^i_{w_j}\left(\pi_{w_j \to v} \circ \tilde{y}_j\right) \right] \]
where $x \sim \mu^n_M$ and $y_1,\ldots,y_d \sim \mu^n_M(x)$. Now, using the independence in the choices of the $w_i$s, the $y_i$s and the $\tilde{y}_i$s,  
\begin{multline*}
\Ex_{\tilde{y}_j \underset{1 - \eta}{\sim} y_j} \left[\sum_{i \in [k]}  \prod_{j \in [d]} f^i_{w_j}\left(\pi_{w_j \to v} \circ \tilde{y}_j\right) \right] 
= \sum_{i \in [k]} \prod_{j \in [d]} \Ex_{\tilde{y}_j \underset{1 - \eta}{\sim} y_j} \left[ f^i_{w_j}\left(\pi_{w_j \to v} \circ y_j\right) \right] \\
= \sum_{i \in [k]}  \prod_{j \in [d]} \Gamma_{1 - \eta}f^i_{w_j}\left(\pi_{w_j \to v} \circ y_j\right) .
\end{multline*}
where recall $\Gamma_{1 - \eta}$ is the noise operator for functions on $V^n_M$ (Definition \ref{defn:mark-noise}). Using this, 
\begin{align*}
\Pr\left[\mbox{Test Accepts}\right] 
	& = \Ex_v \Ex_{w_1,\ldots,w_d \sim N_{\cG}(v)}\Ex_{x,y_1,\ldots,y_d} \left[\sum_{i \in [k]}  \prod_{j \in [d]} \Gamma_{1 - \eta}f^i_{w_j}\left(\pi_{w_j \to v} \circ y_j\right) \right] \\
	& = \sum_{i \in [k]} \Ex_v \Ex_{w_1,\ldots,w_d \sim N_{\cG}(v)}\Ex_{x,y_1,\ldots,y_d} \left[\prod_{j \in [d]}  \Gamma_{1 - \eta}f^i_{w_j}\left(\pi_{w_j \to v} \circ y_j \right)\right] \\ 
	& = \sum_{i \in [k]} \Ex_v\Ex_{x, y_1,\ldots,y_d}  \left[\prod_{j \in [d]} \Ex_{w_j \sim N_{\cG}(v)} \Gamma_{1 - \eta}f^i_{w_j} \left(\pi_{w_j \to v} \circ y_j \right) \right] 
	\qquad \textrm{(using independence)}. 
\end{align*}
Let $g^i_v \defeq \Ex_{w_j \sim N_{\cG}(v)} \left[ f^i_{w_j} \left(\pi_{w_j \to v} \circ y_j \right) \right]$
Then,
\begin{equation} 
\Pr\left[\mbox{Test Accepts}\right] 
 = \sum_{i \in [k]} \Ex_v\Ex_{x,y_1,\ldots,y_d}  \left[\prod_{j \in [d]} \Gamma_{1 - \eta} g^i_v(y_j) \right].
\end{equation}
where the averaged functions $\Gamma_{1 - \eta}g^i_v$ are bounded in $[0,1]$. Now, suppose the test accepts with probability at least $1 - \xi_{k,\epsilon,d} + \zeta$,
where $\xi_{k,\epsilon,d}$ is as defined in Section \ref{sec:test}.
Then,
\[
\Pr_{v \sim V_\cG} \left[\Ex_{x,y_1,\ldots,y_d} \left[\sum_{i \in [k]} \prod_{j \in [d]}  \Gamma_{1 - \eta}g^i_{v}(y_j)\right] \geq 1 - \xi_{k,\epsilon,d} + \frac{\zeta}{2}\right] \geq \frac{\zeta}{2\xi_{k,\epsilon,d} - \zeta}. \]
This follows from the observation that for a random variable $Z \in [0,1]$ with 
$\Ex[Z] \geq 1 - \xi_{k,\epsilon,d} + \zeta$,
\[ \Pr\left[Z < 1 - \xi_{k,\epsilon,d} + \frac{\zeta}{2} \right]
 = \Pr\left[1 - Z > \xi_{k,\epsilon,d} - \frac{\zeta}{2} \right]
\leq \frac{\Ex[1 - Z]}{\xi_{k,\epsilon,d} - \frac{\zeta}{2}}
\leq \frac{\xi_{k,\epsilon,d} - \zeta}{\xi_{k,\epsilon,d} - \frac{\zeta}{2}}
= 1 - \frac{ \zeta}{2 \xi_{k,\epsilon,d} - \zeta}. \]

Define the set $v$'s for which the test passes with probability at least $1 - \xi_{k,\epsilon,d} + \zeta/2$ as $V_{\rm large}$. Fix such a $v \in V_{\rm large}$. We will need the following lemma. .
\begin{lemma}				
	\label{lem:invar}
	Let $h: V_M^n \to \mathbbm[0,1]$ such that for every $i \in [n]$, $\Inf{i}{ \Gamma_{1 - \eta} h} \leq \tau$ and let $\delta = 1/k = \Ex_{x \sim \mu_M^n}g(x)$. Then,  
	\[
		\Ex_{x,y_1,\ldots,y_d} \left[\prod_{j \in [d]} \Gamma_{1 - \eta} h (y_j) \right] \leq \delta -   \delta \xi_{k,\epsilon,d} + 2\sqrt{d}\rho(\tau,\epsilon,\eta,\alpha).
	\]
\end{lemma}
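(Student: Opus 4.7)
The plan is to reduce the statement to a small-set isoperimetry question on correlated Gaussians via the invariance principle, and then invoke the Exchangeable Gaussians Theorem of Isaksson--Mossel \cite{IM12} to upper bound the resulting Gaussian expectation by its value on halfspaces.

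First I would write $Q$ for the multilinear polynomial representing $\Gamma_{1-\eta} h$ in the Fourier basis of $V_M^n$ from Section \ref{sec:fourier}. By the variance decay in Lemma \ref{lem:four-decay}, the high-degree tail $Q^{>p}$ has variance at most $(1-\epsilon\eta)^{2p}$, so choosing $p = \Theta\bigl(\log(1/\alpha)/(\epsilon\eta)\bigr)$ lets me truncate to $\widetilde{Q} = Q^{\leq p}$ with a negligible (polynomially-in-$\alpha$) error when comparing $\mathbb{E}\bigl[\prod_j \Gamma_{1-\eta}h(y_j)\bigr]$ to $\mathbb{E}\bigl[\prod_j \widetilde{Q}(y_j)\bigr]$; a position-by-position hybrid argument using $0 \leq \Gamma_{1-\eta}h \leq 1$ and Cauchy--Schwarz controls the total error by $O(\sqrt{d}\alpha)$.

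Next I would pass to Gaussian space. By Fact \ref{fact:covar}, there is a jointly Gaussian ensemble $(Z_x, Z_{y_1},\ldots, Z_{y_d})$ whose second moments match those of the Fourier-basis variables on $(x,y_1,\ldots,y_d)$, and each $Z_{y_j}$ is a $\overline{\rho}$-correlated copy of $Z_x$ with $\|\overline{\rho}\|_\infty \leq 1 - c\epsilon$ and $\overline{\rho} \geq 0$ (using that $A_M$ is PSD by Claim \ref{cl:M-prop}(i)). The influence hypothesis $\mathsf{Inf}_i[\Gamma_{1-\eta}h] \leq \tau$ together with the degree bound $p$ allows me to apply the invariance principle of \cite{MOO10} (in the form used in \cite{LRV13}) to the low-degree polynomial $\widetilde{Q}$, replacing the ensemble $(y_j)_{j \in [d]}$ by $(Z_{y_j})_{j \in [d]}$ at a cost of $O(\sqrt{d}) \cdot \tau^{\Omega(\epsilon\eta/p)}$, which matches the advertised error $\sqrt{d}\rho(\tau,\epsilon,\eta,\alpha)$ after absorbing the truncation error. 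A standard smoothing/rounding of the resulting polynomial produces a bounded $\widetilde{H}:\mathbb{R}^n \to [0,1]$ with $\mathbb{E}\widetilde{H} = \delta + O(\rho)$ and $\mathbb{E}\bigl[\prod_j \widetilde{Q}(Z_{y_j})\bigr] \leq \mathbb{E}\bigl[\prod_j \widetilde{H}(Z_{y_j})\bigr] + O(\sqrt{d}\rho)$.

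Finally, and this is where the main isoperimetric work happens, I would invoke the Exchangeable Gaussians Theorem \cite{IM12}: among all $[0,1]$-valued functions on $\mathbb{R}^n$ of Gaussian mean $\delta$, the quantity $\mathbb{E}\bigl[\prod_j \widetilde{H}(Z_{y_j})\bigr]$ with $\overline{\rho} \in [0,1-c\epsilon]^n$ is maximized by the indicator of a halfspace. It thus suffices to bound $\Pr[Z_{y_1},\ldots,Z_{y_d} \in H_\delta]$ for the halfspace $H_\delta$ of Gaussian mass $\delta = 1/k$, i.e.\ $\{z : z \cdot u \geq t_\delta\}$ with $t_\delta \approx \sqrt{2\log k}$. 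A direct $d$-fold tail computation on correlated Gaussians with correlation $\leq 1-c\epsilon$ (in the spirit of the small-set Gaussian bounds of \cite{LRV13}) yields a halfspace probability of at most $\delta\bigl(1 - \Omega(\sqrt{\epsilon \log d \log k})\bigr)$, i.e.\ the desired bound $\delta - \delta\xi_{k,\epsilon,d}$. The main obstacle I anticipate is precisely this Gaussian computation: extracting the joint $\sqrt{\epsilon \log d \log k}$ dependence requires simultaneously exploiting the threshold height (which contributes $\log k$), the deficit $c\epsilon$ in the correlation from $1$ (which controls how much the $d$ copies concentrate), and the number of copies $d$ (which amplifies the concentration logarithmically); the PSDness of $A_M$, ensuring nonnegativity of $\overline{\rho}$, and the multi-copy form of Borell's inequality encoded in \cite{IM12} rather than the two-function version are both essential here.
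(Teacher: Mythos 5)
Your proposal is correct and follows essentially the same route as the paper: variance decay (Lemma \ref{lem:four-decay}) plus the influence bound feed an invariance-principle transfer to the Gaussian ensemble of Fact \ref{fact:covar} (nonnegative correlations bounded by $1-c\epsilon$ via PSDness of $A_M$), then the Exchangeable Gaussians Theorem reduces to halfspaces and a one-dimensional correlated-tail computation (the paper's Proposition \ref{prop:line-bound}) gives the $\delta\sqrt{\epsilon\log d\log k}$ deficit. The only differences are packaging: the paper applies the Isaksson--Mossel form of the invariance principle (Theorem \ref{thm:clt}) directly to the $\sqrt{d}$-Lipschitz clamp $\psi'$ instead of your explicit truncation/hybrid/rounding, and it isolates the Gaussian step as Theorem \ref{thm:bound}, handling the $[0,1]$-to-indicator reduction by a convexity argument (Claim \ref{cl:01sets}).
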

We defer the proof of the above to Section \ref{sec:invar}. Using Lemma \ref{lem:invar}, we obtain the following corollary.
\begin{corollary}
	\label{cor:invar}
	The following holds for every vertex $v \in V$. Let $g_v = (g^1_v,\ldots,g^k_v)$ be the concatenation of the coordinate functions and let $\delta = \frac1k$. If
	\[
	\Ex_{x}\Ex_{y_1,\ldots,y_d} \left[\sum_{i \in [k]} \prod_{j \in [d]} \Gamma_{1 - \eta} g^i_{v}(y_j)\right] \geq  1 - \xi_{k,\epsilon,d} + \xi/2,
	\]
	then there exists $i \in [n]$ such that $\Inf{i}{\Gamma_{1 - \eta} g_v} \geq \tau$.
\end{corollary}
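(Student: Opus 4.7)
The corollary is essentially a vector-valued contrapositive of Lemma \ref{lem:invar}, so the plan is to assume that every coordinate $i\in[n]$ has $\Inf{i}{\Gamma_{1-\eta}g_v}<\tau$ and then derive an upper bound on $\sum_{i\in[k]}\Ex_{x,y_1,\ldots,y_d}\bigl[\prod_{j\in[d]}\Gamma_{1-\eta}g^i_v(y_j)\bigr]$ that contradicts the hypothesis $1-\xi_{k,\epsilon,d}+\xi/2$. The main content is therefore just bookkeeping to check that the scalar lemma's hypotheses apply to each coordinate function $g^i_v$, and then summing $k$ copies of the inequality.

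\textbf{Reduction to the scalar lemma.} The first step is to observe that the vector-valued influence decomposes coordinate-wise: by Proposition \ref{prop:fourier} together with the fact that the Markov-chain noise operator commutes with the influence functional,
\[
\Inf{i}{\Gamma_{1-\eta}g_v}=\sum_{j\in[k]}\Inf{i}{\Gamma_{1-\eta}g^j_v}.
\]
Consequently, $\Inf{i}{\Gamma_{1-\eta}g_v}<\tau$ for every $i\in[n]$ implies $\Inf{i}{\Gamma_{1-\eta}g^j_v}<\tau$ for every $i\in[n]$ and $j\in[k]$. Each coordinate $g^j_v=\Ex_{w\sim N_\cG(v)}[f^j_w(\pi_{w\to v}\circ y)]$ is an average of $[0,1]$-valued functions, and hence is itself $[0,1]$-valued. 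Its mean is $\delta=1/k$: since the folding operation (Claim \ref{cl:folded}) makes each $\tilde f_w$ perfectly balanced across labels and since $\mu_M^n$ is a product measure invariant under the coordinate permutation $\pi_{w\to v}$, we have $\Ex_y f^j_w(\pi_{w\to v}\circ y)=1/k$. Thus each $g^j_v$ satisfies the hypotheses of Lemma \ref{lem:invar}.

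\textbf{Summing over coordinates.} Apply Lemma \ref{lem:invar} to $h=g^j_v$ for every $j\in[k]$ and sum to obtain
\[
\sum_{j\in[k]}\Ex_{x,y_1,\ldots,y_d}\Bigl[\prod_{l\in[d]}\Gamma_{1-\eta}g^j_v(y_l)\Bigr]\;\leq\;k\bigl(\delta-\delta\xi_{k,\epsilon,d}+2\sqrt d\,\rho(\tau,\epsilon,\eta,\alpha)\bigr)\;=\;1-\xi_{k,\epsilon,d}+2k\sqrt d\,\rho.
\]
To contradict the hypothesis, it then suffices to verify $2k\sqrt d\,\rho<\xi/2$. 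From the parameter choices in Section \ref{sec:test}, $\rho\leq\epsilon^2\delta^2\zeta^2/(dk)=\epsilon^2\zeta^2/(dk^3)$, so
\[
2k\sqrt d\,\rho\;\leq\;\frac{2\epsilon^2\zeta^2}{k^2\sqrt d},
\]
which is far smaller than $\xi/2$ under the standing assumptions on $\epsilon,\zeta,k,d$ (in particular, $\zeta<\xi_{k,\epsilon,d}^2<1$ and $k,d\geq 2$). This yields the desired contradiction and completes the proof.

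\textbf{Expected difficulty.} There is essentially no hard step here — all of the real analytic work (the invariance principle, the Exchangeable Gaussians argument, and Borell-type isoperimetry) has already been absorbed into Lemma \ref{lem:invar}. The only points to double-check carefully are (i) that $\Inf{i}{\cdot}$ is additive over the $k$ coordinates of a $\Delta_k$-valued function, which is built into the definition used in Proposition \ref{prop:fourier}, and (ii) that the parameter budget leaves enough slack so that the additive error $2k\sqrt d\,\rho$ accumulated from the $k$ applications of Lemma \ref{lem:invar} is dominated by the assumed gap $\xi/2$; both are routine given the parameter setup.
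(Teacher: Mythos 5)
Your proposal is correct and follows essentially the same route as the paper: argue the contrapositive, use the coordinate-wise additivity of influence from Proposition \ref{prop:fourier} to get small influences for each $g^j_v$, note each coordinate is $[0,1]$-valued with mean $\delta=1/k$ by folding, apply Lemma \ref{lem:invar} to each of the $k$ coordinates, and sum, with the accumulated error term absorbed by the parameter choice of $\rho$ (the paper bounds it by $2dk\rho\leq\xi/4$, you by $2k\sqrt d\,\rho<\xi/2$ — the same slack argument). No gap.
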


\begin{proof}
	For contradiction, assume that $\max_{i \in [n]} \Inf{i}{ \Gamma_{1 - \eta} g_v} \leq \tau$. Since $g_v(x) = \Ex_{w \sim N_\cG(v)}f_w\left(\pi_{w \mapsto v} \circ x \right)$, and the range of $f_w$'s are $\Delta_k$, it follows that the range of $\Gamma_{1 - \eta} g_v$ is still $\Delta_k$. Therefore using the definition of influences from Proposition \ref{prop:fourier}, for any $i \in [n]$ we get that
\[	
	\tau  \geq\Inf{i}{\Gamma_{1 - \eta} g_v} = \sum_{j \in [k]}\Inf{i}{\Gamma_{1 - \eta} g^j_v}  \geq \max_{j \in [k]} \Inf{i}{\Gamma_{1 - \eta}g^j_v}. 
\]	
Therefore, for every $g^j_v:V_M^n \to \mathbbm{R}_+$ we have $\max_{i \in [n]}\Inf{i}{\Gamma_{1 - \eta'} g^j_v} \leq \tau$. Furthermore, since $f_w$'s are folded, we have $\Ex f^i_{w_j} = 1/k = \delta$ for every choice of $i \in [k]$, and consequently $\Ex_x g^i_v(x) = \Ex_{w \sim v} \Ex_{x} f^i_w( \pi_{w \to v} \circ x) = \delta$. Invoking Lemma \ref{lem:invar} for any $i \in [n]$ we get that  
	\[
	\Ex_{x,y_1,\ldots,y_d}\left[ \prod_{j \in [d]}  \Gamma_{1 - \eta} g^i_v(y_j)\right] \leq \delta - \delta \xi_{k,\epsilon,d} + 2\sqrt{d}\rho(\tau,\epsilon,\eta,\alpha).
	\]
	Combining the above bound for every $i \in [k]$ we get that 
	\begin{eqnarray*}
	\Ex_{x,y_1,\ldots,y_d} \left[\sum_{i \in [k]} \prod_{j \in [d]} \Gamma_{1 - \eta} g^i_{v}(y_j)\right]
	&\leq& \sum_{i \in [k]} \left( \delta - \delta \xi_{k,\epsilon,d} + 2\sqrt{d}\rho(\tau,\epsilon,\eta,\alpha) \right) \\
	&\leq& 1 - \xi_{k,\epsilon,d} +   2dk\rho(\tau,\epsilon,\eta,\alpha). 
	\end{eqnarray*}
	Since by our choice of parameters we have $2dk\rho(\tau,\epsilon,\eta,\alpha) \leq \xi/4$, this gives us the contradiction.
\end{proof}
Finally the following lemma is the standard influence decoding step which decodes a good labeling for the underlying \uniquegames~instance.
\begin{lemma}
\label{lem:inf-dec}
	Let $\cG$ be such that 
	\[
	\Pr_{v \sim V_{\cG}}\left[ \max_{i \in [n]} \Inf{i}{\Gamma_{1 - \eta}g_v} \geq \tau\right] \geq \gamma \]
	Then there exists a labeling $\sigma:V \to [n]$ which satisfies at least $\Omega(\gamma \tau^2\epsilon^2\eta^{2} k^{-5})$ fraction of edges.
\end{lemma}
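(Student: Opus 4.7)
The argument follows the standard ``influence decoding'' template familiar from $\ug$--style reductions, with the twist that our hypothesis controls influences of the \emph{averaged} functions $g_v = \E_{w \sim N_\cG(v)} [f_w \circ \pi_{w \to v}]$ rather than of the long codes $f_v$ themselves, so an extra transfer step is needed to move influence from $v$ onto its neighbours.

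First I would set up candidate label sets. For every $u \in V_\cG$ define
\[ C_u \defeq \left\{ i \in [n] : \Inf{i}{\Gamma_{1-\eta}f_u} \geq \tau/2 \right\}, \qquad C_u^g \defeq \left\{ i : \Inf{i}{\Gamma_{1-\eta}g_u} \geq \tau \right\}. \]
By the sum-of-influences bound in Lemma \ref{lem:four-decay}, $|C_u|, |C_u^g| = O(1/(\tau\epsilon\eta))$. By hypothesis $C_v^g \neq \emptyset$ for a $\gamma$-fraction of $v$'s, call this set $V_{\mathrm{good}}$. Next, since $g_v$ is an average of the functions $f_w \circ \pi_{w\to v}$ and since the Fourier formula for influence (Proposition \ref{prop:fourier}) is an $L^2$ norm squared, Jensen's inequality gives for every coordinate $i$
\[ \Inf{i}{\Gamma_{1-\eta}g_v} \leq \E_{w \sim N_\cG(v)} \Inf{\pi_{w\to v}^{-1}(i)}{\Gamma_{1-\eta}f_w}, \]
using that the bijection $\pi_{w\to v}$ permutes the Fourier basis of $V^n_M$ and that the noise operator $\Gamma_{1-\eta}$ commutes with coordinate permutations. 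For $v \in V_{\mathrm{good}}$ and any $i^* \in C_v^g$, Markov's inequality then forces at least a $\tau/2$ fraction of neighbours $w$ of $v$ to satisfy $\pi_{w\to v}^{-1}(i^*) \in C_w$.

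With this in hand, sample $\sigma(u)$ uniformly at random from $C_u \cup C_u^g$ (fall back to an arbitrary label if this set is empty); the size of this set is $O(1/(\tau\epsilon\eta))$. For a random edge $(v,w)$ with $v \in V_{\mathrm{good}}$, fix any $i^* \in C_v^g$. Then $\Pr[\sigma(v) = i^*] = \Omega(\tau\epsilon\eta)$; with probability $\geq \tau/2$ over $w$ we have $\pi_{w\to v}^{-1}(i^*) \in C_w$, and conditional on this $\Pr[\sigma(w) = \pi_{w\to v}^{-1}(i^*)] = \Omega(\tau\epsilon\eta)$. Multiplying and averaging over $V_{\mathrm{good}}$ yields a random labelling whose expected edge-satisfaction is $\Omega(\gamma\tau^3\epsilon^2\eta^2)$; derandomising (or simply taking the best realisation) gives the promised deterministic $\sigma$. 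The extra polynomial loss in $k$ in the stated bound will come from (i) descending from the vector-valued $f_v : V_M^n \to \Delta_k$ to its $k$ coordinate functions, each carrying only a $1/k$-share of total influence, and (ii) the normalisation of the edge measure on $\cG$, which is not necessarily regular after intermediate reductions.

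\emph{Main obstacle.} The most delicate step is the Jensen/convexity inequality in the middle paragraph: one must verify carefully that the bijection $\pi_{w\to v}$ acts as a permutation on the eigenbasis of $M^n$ (which it does, since $M$ is defined on $V_M = \{s_i,t_i\}_{i \in [k]}$ and the $\pi$'s act by relabelling the $[k]$-index), and that the noise operator $\Gamma_{1-\eta}$ commutes with this action. Once this is in place, the remainder is routine Markov-and-union-bound bookkeeping, where the main care is tracking the interplay between the four small parameters $\gamma,\tau,\epsilon,\eta$ and the polynomial $k$-overhead.
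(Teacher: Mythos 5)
Your proposal follows essentially the same route as the paper's proof: transfer the influence of the averaged function $g_v$ to the neighbours' long codes $f_w$ via convexity/Jensen (using that $\pi_{w\to v}$ permutes coordinates and commutes with $\Gamma_{1-\eta}$), apply Markov to get an $\Omega(\tau)$-fraction of good neighbours, bound the candidate lists via the noisy sum-of-influences bound (Lemma \ref{lem:four-decay}), and decode by sampling uniformly from the lists. Two bookkeeping caveats: for the vector-valued influence $\Inf{i}{F}=\sum_{r\in[k]}\Inf{i}{F^r}$ the total-influence bound is $k/(\epsilon\eta)$, so your lists $C_u,C_u^g$ have size $O(k/(\tau\epsilon\eta))$ rather than $O(1/(\tau\epsilon\eta))$; the paper instead first descends to a single coordinate $r\in[k]$ with $\Inf{i}{g^r_v}\geq\tau/k$, keeps per-coordinate lists $L_{v,r},\tilde L_{v,r}$ of size $O(k/(\tau\epsilon\eta))$, and samples $r$ in the decoder, which is where its $k^{-5}$ arises. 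Also, your final count carries $\tau^3$ rather than the stated $\tau^2$ (the $\Omega(\tau)$ good-neighbour fraction multiplies the two list-sampling probabilities); the paper's own displayed computation in Section \ref{sec:inf-dec} appears to drop this same $\tau/2k$ factor, and with the parameter choices of Section \ref{sec:test} the extra $\tau$ only changes the constant in the exponent of the final soundness bound, so your argument serves the downstream purpose even though, as written, it does not literally reach $\Omega(\gamma\tau^2\epsilon^2\eta^2 k^{-5})$.
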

The proof of the above lemma follows using techniques identical to \cite{KKMO07}. We include a proof of it in Section \ref{sec:inf-dec} to derive explicit lower bounds for the fraction of constraints satisfied.

\subsubsection*{Putting things together}
\begin{proof}[Proof of Theorem \ref{thm:sound}]
Using our choice of parameters (Section \ref{sec:test}), 
we know that $V_{\rm large} \subset V$ is of size at least 
$\gamma|V| = \Abs{V} \zeta/(2 \xi_{k,\epsilon,d} - \zeta) \geq \Abs{V}/20$. 

Furthermore, using Corollary \ref{cor:invar}, we get that for every $v \in V_{\rm large}$, $\max_{i \in [n]}\Inf{i}{\Gamma_{1 - \eta} g_v} \geq \tau$. Now, using Lemma \ref{lem:inf-dec}, and plugging in the values of the parameters $\eta,\tau,\gamma$ as functions of $\epsilon,d,k$ we get that there exists an assignment to the \uniquegames~instance which satisfies at least 
\[
C_0k^{-5}\gamma\tau^2\epsilon^2\eta^{2} \geq C_0 \epsilon^2(\delta\zeta/dk)^{8d^2\log (2k/\epsilon)/\epsilon^2}(\epsilon/10d)^2/k^5 \geq (\epsilon/k)^{C'd^2/\epsilon^2}
\]fraction of edges. 
\end{proof}
Now we prove Theorem \ref{thm:hyperug}.
\begin{proof} [Proof of Theorem \ref{thm:hyperug}]
Let $\cG$ be an $(1 - \epsilon_c,\epsilon_s)$-\ug~instance with $\epsilon_c,\epsilon_s$ chosen as in the statement of theorem. Let $\cH$ be the \hug~instance output by the reduction in Figure \ref{fig:pcptest-1-intro}. Now suppose $\cG$ is a YES instance. Then using the arguments from Section \ref{sec:comp} we know that there exists a labeling which satisfies at least $1 - 4\epsilon$ fraction of hyperedge constraints in $\cH$. On the other hand, if $\cG$ is a NO instance. Then using Theorem \ref{thm:bound} no labeling satisfies more than  $1 - C\sqrt{\epsilon\log d \log k}$-fraction of constraints in $\cH$ (here $C$ is the constant from Theorem \ref{thm:sound}). Combining these two cases completes the proof of Theorem \ref{thm:hyperug}.   
\end{proof}

\subsection{Proof of Lemma \ref{lem:invar}}
\label{sec:invar}

The proof of Lemma \ref{lem:invar} uses the following isoperimetry result on Gaussian graphs,
we prove it in Section \ref{sec:gaussiso}.
\begin{restatable}{rethm}{gaussiso}		
\label{thm:bound}
	Let $f: \mathbbm{R}^m \to [0,1]$ such that $\Ex_{g \sim N(0,1)^m} f(g) = \delta$. Let $g \sim N(0,1)^m$ and $g_1,\ldots,g_d \underset{\overline{\rho}}{\sim} g$ be an ensemble of $m$-dimensional Gaussian random variables which are $\overline{\rho}$ correlated with $g$ such that $\overline{\rho}$ is nonnegative and  $\|\overline{\rho}\|_\infty \leq 1 - \epsilon$. Then
	\[
	\Ex_{g_1,\ldots,g_d } \left[\prod_{i \in [d]} f(g_i)\right] \leq \delta -  C\delta \sqrt{\epsilon \log d \log \frac1\delta}
	\]
	where $C> 0$ is an universal constant.
\end{restatable}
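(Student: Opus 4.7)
\textbf{Proof proposal for Theorem \ref{thm:bound}.}

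The plan is to follow the classical Borell-style strategy: first reduce to the extremal case where $f$ is the indicator of a halfspace, then carry out an explicit one-dimensional Gaussian tail calculation. The reduction step will invoke the Exchangeable Gaussians Theorem of Isaksson--Mossel \cite{IM12}, which generalizes Borell's inequality to ensembles with a non-spherical (per-coordinate) correlation structure of exactly the form assumed here: namely, the $g_i$'s are conditionally independent given $g$, with $g_i^{(j)} = \rho_j g^{(j)} + \sqrt{1-\rho_j^2}\, z_i^{(j)}$ for independent standard Gaussians $z_i^{(j)}$. Since the functional $F(f) = \E\prod_{i\in[d]} f(g_i)$ is symmetric under permutations of the $d$ copies and the ensemble is exchangeable in them, the Isaksson--Mossel theorem guarantees that among all $f:\mathbb{R}^m\to[0,1]$ with mean $\delta$, the maximum of $F(f)$ is attained by some halfspace indicator $f = \mathbf{1}_H$ with $\mu(H)=\delta$. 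Thus it suffices to prove the claimed bound when $f=\mathbf{1}_H$.

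Next I would reduce the halfspace case to one dimension. Writing $H = \{x : \langle a, x\rangle \ge t\}$ with $\|a\|=1$ and $t=\Phi^{-1}(1-\delta)$, the projected random variables $\langle a,g\rangle$ and $\langle a,g_i\rangle$ are jointly standard Gaussian with correlation $\rho_{\mathrm{eff}} := \sum_j a_j^2\rho_j$, and, conditional on $\langle a, g\rangle$, the $\langle a,g_i\rangle$ are still i.i.d.\ Gaussians (as follows from the conditional-independence structure above). Because $\bar\rho$ is nonnegative with $\|\bar\rho\|_\infty\le 1-\epsilon$ and $\sum_j a_j^2 = 1$, we get $0 \le \rho_{\mathrm{eff}} \le 1-\epsilon$. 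The problem therefore reduces to bounding
\[
P(\rho_{\mathrm{eff}},t,d) \;=\; \E_{u\sim N(0,1)}\Phi\!\left(\tfrac{\rho_{\mathrm{eff}}\, u - t}{\sqrt{1-\rho_{\mathrm{eff}}^2}}\right)^{\!d}.
\]

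The main calculation is to estimate this integral. Writing $u = t/\rho_{\mathrm{eff}} + s$ and using the Mills ratio $\phi(t)\asymp \delta t$ for $t=\Phi^{-1}(1-\delta)\asymp \sqrt{2\log(1/\delta)}$, the density at $u$ is approximately $\delta t \cdot e^{-ts - s^2/2}$. The factor $\Phi(\rho_{\mathrm{eff}} s/\sqrt{1-\rho_{\mathrm{eff}}^2})^d$ transitions from $\approx 0$ to $\approx 1$ around $s_0 \asymp \sqrt{(1-\rho_{\mathrm{eff}}^2)/\rho_{\mathrm{eff}}^2}\cdot \sqrt{2\log d} \asymp \sqrt{\epsilon\log d}$ (using $1-\rho_{\mathrm{eff}}^2 \asymp \epsilon$ and $\rho_{\mathrm{eff}}\asymp 1$). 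Splitting the integral at $s_0$ and bounding $\Phi(\cdot)^d$ by $0$ below $s_0$ and by $1$ above it yields
\[
P(\rho_{\mathrm{eff}},t,d) \;\lesssim\; \delta\cdot e^{-t s_0} \;=\; \delta\cdot e^{-\Theta(\sqrt{\epsilon\log d\,\log(1/\delta)})}.
\]
Under the hypothesis that $\sqrt{\epsilon\log d\,\log(1/\delta)}$ is small (which is the only regime of interest), $1-e^{-x} \ge x/2$, giving the desired bound $\delta - C\delta\sqrt{\epsilon\log d\,\log(1/\delta)}$.

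The main obstacles I expect are: (a) verifying that the Isaksson--Mossel result applies cleanly to this particular exchangeable ensemble (including nonnegativity of $\rho$, which is needed for the extremizer to be a one-sided halfspace rather than a slab or its complement) and picking out the ``halfspaces maximize simultaneous containment'' statement from it; and (b) making the one-dimensional tail estimate tight enough to get the correct multiplicative constant $C$ rather than losing log factors. Both are essentially technical: the Mills ratio controls $\phi(t)/\delta$ up to factors of $1+O(1/t^2)$, and the transition width of $\Phi(\cdot)^d$ around $s_0$ contributes only a lower-order additive error, so a careful but routine estimate should suffice.
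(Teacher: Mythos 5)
Your proposal follows essentially the same route as the paper: reduce to set indicators and then to halfspaces via the Isaksson--Mossel exchangeable Gaussians theorem (the paper's Theorem \ref{thm:egt}), and finish with a one-dimensional Gaussian tail computation (the paper's Proposition \ref{prop:line-bound}). Within that shared outline, though, two of your steps have genuine problems. First, the Isaksson--Mossel theorem is a statement about sets $A_1,\dots,A_d$, not $[0,1]$-valued functions, so you need the preliminary reduction from $[0,1]$-valued $f$ to indicators; the paper does this by observing that $f \mapsto \Ex_g\bigl[(U_{\overline{\rho}}f(g))^d\bigr]$ is convex, so the maximizer over the convex set of mean-$\delta$ functions is an extreme point (Claim \ref{cl:01sets}). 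Second, your projection step rests on a false claim: conditional on $\langle a,g\rangle$ alone, the variables $\langle a,g_i\rangle$ are \emph{not} independent when $\overline{\rho}$ is non-constant, since they share the residual randomness of $\sum_j a_j\rho_j g_j$; their conditional covariance is $\sum_j a_j^2\rho_j^2 - \bigl(\sum_j a_j^2\rho_j\bigr)^2 \geq 0$. The correct reduction conditions on all of $g$, after which the $\langle a,g_i\rangle$ are i.i.d.\ with common mean $\sum_j a_j\rho_j g_j$ and variance $1-s^2$, where $s^2 = \sum_j a_j^2\rho_j^2$; the effective scalar correlation is thus $s \leq \|\overline{\rho}\|_\infty \leq 1-\epsilon$, not your $\rho_{\mathrm{eff}} = \sum_j a_j^2\rho_j$. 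The bound survives this fix, but note the paper sidesteps the whole issue because Theorem \ref{thm:egt} already returns coordinate-aligned halfspaces $\{x : x_1 \leq \Phi^{-1}(\gamma^n(A))\}$, so the one-dimensional reduction is immediate with correlation $\rho_1 \leq 1-\epsilon$; your appeal to "some halfspace" of arbitrary orientation is not what the theorem gives, and rotation invariance is unavailable here since the correlation structure is coordinate-dependent.

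The one-dimensional estimate as you phrase it is also not yet a proof. Bounding the integral by "$\delta\cdot e^{-ts_0}$ up to constants" and then using $1-e^{-x}\geq x/2$ only works if the hidden multiplicative constant is $1+o(1)$ relative to the saving $\sqrt{\epsilon\log d\log(1/\delta)}$; already the Mills-ratio correction $1+O(1/t^2)$ can exceed the saving when $\epsilon\log d$ is small compared with $1/\log^3(1/\delta)$, a regime the theorem must cover. The robust organization — which is exactly how Proposition \ref{prop:line-bound} proceeds — is to write the quantity as $\delta$ minus the Gaussian mass of an interval of width $\Theta(\sqrt{\epsilon\log d})$ adjacent to the threshold, on which the conditional probability that all $d$ copies stay on one side is at most $1/2$ (concentration of the maximum of $d$ Gaussians), plus negligible terms such as $2^{-d}$ (this is where a hypothesis like $d \gtrsim \log(1/\epsilon\delta)$ enters, which your statement of the plan omits); the interval mass is then lower bounded by $\Omega\bigl(\delta\sqrt{\epsilon\log d\log(1/\delta)}\bigr)$ via the halfspace stability estimate (Fact \ref{lem:stab-half}). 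Your split at $s_0$ can be reorganized into this form, but as written the constant-tracking issue you flag is a real gap, not mere bookkeeping, and the small-$\delta$ restriction (the paper's $\delta\leq\delta_0$, with a separate argument for constant $\delta$) also needs to be made explicit.
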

We shall also use the following variant of invariance principle (Theorem 3.6~\cite{IM12}).
\begin{theorem}[Invariance Principle] 			
\label{thm:clt}
	Let $Y = (Y_1,\ldots,Y_n)$ be an independent sequence of ensembles on a finite probability space $(\prod_{i \in [n]} \Omega_i,\prod_{i \in [n]}\mu_i)$, such that probability of any atom is at least $\alpha$ i.e., $\Pr_{\mu_i}\left[Y_i = y \right] \geq \alpha$ for all $y \in \Omega_i$. Let $Q$ be a $d$-dimensional multilinear polynomial such that ${\rm Var}(Q_j(Y)) \leq 1$ and ${\rm Var}(Q^{>p}_j(Y)) \leq (1-\epsilon\eta)^{2p}$ for $p = O(\log(1/\tau)/ \log(1/\alpha))$. Furthermore, let $\max_i \Inf{i}{Q_j} \leq \tau$ Finally, let $\psi:\mathbbm{R}^k \to \mathbbm{R}$ be  $d$-Lipschitz continuous. Then,
		\[
		\left|\Ex\left[\psi(Q(Y))\right] - \Ex\left[ \psi(Q(Z))|\right]\right| \leq O\left(d \tau^{\epsilon\eta/\log\frac1\alpha}\right)
		\]
	where $Z$ is a Gaussian ensemble with matching covariance structure.
\end{theorem}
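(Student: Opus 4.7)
The plan is to follow a Lindeberg-type hybrid argument in the spirit of the Mossel--O'Donnell--Oleszkiewicz invariance principle, adapted to the multidimensional / low-atom-probability setting as in \cite{IM12}. The overall strategy is \emph{truncate-and-replace}: first project $Q$ onto its low-degree part $Q^{\leq p}$, then swap the $Y_i$'s for $Z_i$'s one coordinate at a time, using moment-matching together with careful smoothing to control each swap.

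\medskip
\textbf{Step 1 (Truncation).} Writing $Q = Q^{\leq p} + Q^{>p}$ and using the Lipschitz constant of $\psi$,
\[
\bigl|\E[\psi(Q(Y))] - \E[\psi(Q^{\leq p}(Y))]\bigr| \;\leq\; d \cdot \sqrt{\textstyle\sum_j \mathrm{Var}(Q^{>p}_j(Y))} \;\leq\; d\sqrt{k}\,(1-\epsilon\eta)^p,
\]
with the same estimate on the Gaussian side since the covariance structure (hence the degree decomposition in second moment) is matched. Choosing $p = \Theta(\log(1/\tau)/\log(1/\alpha))$ makes this truncation error no larger than the target bound.

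\medskip
\textbf{Step 2 (Hybrid replacement --- the heart of the argument).} Define hybrids $W^{(i)} = (Z_1,\ldots,Z_i,Y_{i+1},\ldots,Y_n)$ and telescope
\[
\E[\psi(Q^{\leq p}(Y))] - \E[\psi(Q^{\leq p}(Z))] \;=\; \sum_{i=1}^n \bigl(\E[\psi(Q^{\leq p}(W^{(i-1)}))] - \E[\psi(Q^{\leq p}(W^{(i)}))]\bigr).
\]
For each $i$, decompose $Q^{\leq p} = R_i + S_i$, where $R_i$ does not depend on the $i$-th ensemble and $S_i$ captures its contribution. Mollify $\psi$ to a $C^3$ function $\psi_\sigma$ by convolving with a smooth bump of width $\sigma$, at a cost of at most $O(d\sigma)$ in sup-distance. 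Taylor-expand $\psi_\sigma(R_i + S_i)$ to third order around $R_i$: the zeroth and first order terms in $Y_i$ and $Z_i$ agree by independence and mean-matching, the second order terms agree because $Y_i$ and $Z_i$ have matching covariance, and the third-order remainder is bounded by $\|\psi_\sigma^{(3)}\|_\infty$ times the third absolute moments of $S_i(Y_i)$ and $S_i(Z_i)$. By Wolff's hypercontractive inequality for product spaces with minimum atom probability $\alpha$, applied to the degree-$p$ polynomial $S_i$, these moments are controlled by $C(\alpha)^p \cdot (\mathrm{Var}\,S_i)^{3/2}$.

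\medskip
\textbf{Step 3 (Summation and balancing).} Since $\mathrm{Var}(S_i) = \Inf{i}{Q^{\leq p}}$, summing over $i$ gives
\[
\sum_i (\Inf{i}{Q^{\leq p}})^{3/2} \;\leq\; \bigl(\max_i \Inf{i}{Q^{\leq p}}\bigr)^{1/2} \sum_i \Inf{i}{Q^{\leq p}} \;\leq\; \tau^{1/2}\cdot p\cdot \mathrm{Var}(Q^{\leq p}).
\]
The third derivative contributes $O(d/\sigma^2)$, so the hybrid error is at most $O\!\bigl(d C(\alpha)^p \tau^{1/2} p / \sigma^2\bigr)$; balancing this against the mollification loss $O(d\sigma)$ and against the Step~1 truncation error, with $p = \Theta(\log(1/\tau)/\log(1/\alpha))$, yields the advertised $O\!\bigl(d \tau^{\epsilon\eta/\log(1/\alpha)}\bigr)$. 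The main obstacle is precisely this three-way balancing in Step 2: the mollification loss in $\sigma$, the blow-up of $\|\psi_\sigma^{(3)}\|_\infty$ as $\sigma\to 0$, and the $C(\alpha)^p$ penalty from hypercontractivity together force the exponent of $\tau$ to degrade from an absolute constant to $\epsilon\eta/\log(1/\alpha)$ --- the spectral-gap factor $\epsilon\eta$ entering through the tail hypothesis $\mathrm{Var}(Q^{>p}) \leq (1-\epsilon\eta)^{2p}$, which controls how large a truncation level $p$ we can afford.
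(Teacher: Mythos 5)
The paper never proves this statement: it is imported wholesale as Theorem 3.6 of Isaksson and Mossel \cite{IM12}, so there is no internal proof to compare against. Your sketch is a reconstruction of the standard argument behind that cited theorem --- truncation to degree $p$ using the hypothesis ${\rm Var}(Q_j^{>p}) \leq (1-\epsilon\eta)^{2p}$, a Lindeberg one-coordinate-at-a-time swap with a mollified test function, matching of zeroth/first/second moments, and hypercontractivity for product spaces with minimum atom probability $\alpha$ (Wolff) to control the third-order remainder --- and this is exactly the Mossel--O'Donnell--Oleszkiewicz-style route taken in the source, so in substance you are proving the black box rather than doing something new. Two points to tighten if you flesh it out. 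First, in Step 1 the error should involve the number of coordinates of $Q$ (which the theorem also calls $d$), i.e.\ the bound is the Lipschitz constant times $\bigl(\sum_{j} {\rm Var}(Q_j^{>p})\bigr)^{1/2} \leq d\sqrt{d}\,(1-\epsilon\eta)^{p}$; the $\sqrt{k}$ you wrote refers to nothing in the hypotheses. Second, the "three-way balancing" in Step 3 needs to be stated non-circularly: you must fix $p = c\log(1/\tau)/\log(1/\alpha)$ with $c$ a sufficiently small absolute constant so that the hypercontractive penalty $C(\alpha)^p = \tau^{-O(c)}$ is beaten by the $\tau^{1/2}$ coming from $\max_i \Inf{i}{Q^{\leq p}} \leq \tau$, after which the hybrid-plus-mollification error is $\tau^{\Omega(1)}$ and the final bound is dominated by the truncation term $(1-\epsilon\eta)^p = \tau^{\Theta(\epsilon\eta/\log(1/\alpha))}$; the spectral-gap-dependent exponent comes only from that truncation term, not from the balancing itself. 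With these repairs the outline is sound and matches the proof in \cite{IM12}.
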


\begin{proof}[Proof of Lemma \ref{lem:invar}]
Recall that in the setting of the lemma we have $\max_{i \in [n]} \Inf{i}{\Gamma_{1 - \eta}g} \leq \tau$. Define the map $\psi: \mathbbm{R} \to \mathbbm{R}$ as follows
\[
\psi(x) \defeq 
\begin{cases}
0 & \mbox{ if } x < 0 \\
x & \mbox{ if } x \in [0,1] \\
1 & \mbox{ if } x > 1
\end{cases}
\]
and let $\psi'(y_1,y_2,\ldots,y_d) \defeq \prod_{j \in [d]} \psi(y_j)$. 
\begin{claim}
$\psi'$ is $\sqrt{d}$-lipschitz continuous. 
\end{claim}
\begin{proof}
Observe that for any $y,y' \in \mathbbm{R}^d$ we have 
\begin{align*}
\left|\psi'(y_1,y_2,\ldots,y_d) - \psi'(y'_1,y'_2,\ldots,y'_d)\right|
&= \left| \prod_{j \in [d]} \psi(y_j) - \prod_{j \in [d]} \psi(y'_j) \right| \\
&\leq 
\sum_{i = 1}^d\left| \prod_{j \leq  i} \psi(y_j) \prod_{j' > i} \psi(y'_{j'}) - \prod_{j \leq i + 1} \psi(y_j) \prod_{j' > i+1} \psi(y'_{j'})\right| \\
&\leq \sum_{i = 1}^d\prod_{j < i} \psi(y_j) \prod_{j' > i} \psi(y'_{j'})\left|\psi(y_i) - \psi(y'_i)\right| \\
&\leq \sum_{i = 1}^d\left|\psi(y_i) - \psi(y'_i)\right| \\
&\leq \sum_{i = 1}^d\left|y_i - y'_i\right| \leq \sqrt{d}\|y - y'\|_2.
\end{align*}
where the first inequality can be obtained by writing the expression inside the absolute value as a telescoping sum with the intermediate summands consisting of terms which are products of $\psi(y_j)$'s for some $j \leq i$ and $\psi(y'_j)$'s for $j > i$, followed by an application of triangle inequality. 
\end{proof}
Now, since $g$ is bounded between $[0,1]$ we have 
\begin{eqnarray*}
\Ex_x \Ex_{y_1,\ldots,y_d} \left[ \prod_{j \in [d]}  \Gamma_{1 - \eta} g(y_j) \right] 
= \Ex_{x}\Ex_{y_1,\ldots,y_d } \left[  \prod_{j \in [d]} \psi\left(\Gamma_{1 - \eta} g(y_i)\right)\right]. 
\end{eqnarray*}

We shall now  pass on to the Gaussian space using the Invariance principle (Theorem \ref{thm:clt}) as follows. For every $i \in [n]$, define $Y_i := \{\chi_r(y_{j}(i))\}_{r \in [2k],j \in [d]}$, and let $Y = (Y_1,Y_2,\ldots,Y_n)$ denote the ensemble of random variables. Let $Q:Y \to \mathbbm{R}^d$ be the vector valued multilinear polynomial $(Q_1(Y),Q_2(Y),\ldots,Q_d(Y))$, where for every $j \in [d]$, $Q_j(Y)$ is the multilinear polynomial representation of the function $\Gamma_{1 - \eta}g(y_j)$ in terms of the Fourier expansion over $V^n_M$. Then we observe the following. 

\begin{enumerate}
	\item For every $j \in [d]$, the marginal distribution of $y_j$ is $\mu^n_M$. Therefore by the assumption on $\Gamma_{1 - \eta}g$ for Lemma \ref{lem:invar}, we have 
	\[
	\max_{i \in [n]}\Inf{i}{Q_j(Y)} = \max_{i \in [n]}\Inf{i}{\Gamma_{1 - \eta}g(y_j)} \leq \tau.
	\]
	\item Again, since $Q_j(Y) = \Gamma_{1 - \eta} g$, with the range of $g$ bounded in $[0,1]$ we have ${\rm Var}(Q_j) \leq 1$. Furthermore, since the spectral gap of the Markov chain $M$ is at least $c\epsilon$ (Claim \ref{cl:M-prop}), using the first point of Lemma \ref{lem:four-decay} we have ${\rm Var}(Q^{>p}_j) \leq (1 - c\epsilon \eta)^{2p}$ for any $p \geq 1$. 
	\item From Fact \ref{fact:covar}, we know that there exists an ensemble of Gaussian random variables $Z = (Z_1,Z_2,\ldots,Z_n)$ with matching the covariance structure of $Y$ i.e, $\Ex_Y Y_i Y^\top_{i'}  = \Ex_Z Z_i Z^\top_{i'}$ for every $i,i' \in [n]$. In particular, there exists a correlation vector $\overline{\rho} \in [-1,1]^{2kn}$ such that $Z_{y_j}$'s are ${\overline{\rho}}$ correlated copies of a $2kn$-dimensional Gaussian vector $Z_{x} \sim N(0,1)^{2kn}$. From the spectral gap guarantee, we know that the cross variable correlations are bounded by $1 - c\epsilon$. Furthermore, since the transition matrix of the Markov chain $M$ is PSD (Claim \ref{cl:M-prop}), the correlations in the ensemble $Z$ are {\em nonnegative}.
\end{enumerate}	
	Note that in ensemble constructed above, the ensemble $Y_i$ is completely determined by the $d$-tuple $(y_1(i),y_2(i),\ldots,y_d(i))$. And any realization in the support of the random $d$-tuple of vertices $(y_1(i),y_2(i),\ldots,y_d(i))$ appears with probability at least $\alpha = (\epsilon/k)(\epsilon)^d$. Therefore, instantiating Theorem \ref{thm:clt} with $\psi',Q,Y,Z$ as defined above and writing $G \defeq \Gamma_{1 - \eta} g$, we can upper bound the above expression as follows
\begin{align*}
	&\Ex_{x \sim \mu^n_M}\Ex_{y_1,\ldots,y_d {\sim} \mu^n_M(x)  }\left[  \prod_{j \in [d]} \psi\left(\Gamma_{1 - \eta} g(y_j)\right)\right] \\
	&\leq \Ex_{Z_x \sim N(0,{\rm I}_{2kn})}\Ex_{Z_{y_1},\ldots,Z_{y_d} \underset{\overline{\rho}}{\sim} Z_x } \left[ \prod_{j \in [d]} \psi\left(G(Z_{y_j})\right)\right] + \sqrt{d}\rho(\tau,\epsilon,\eta,\alpha) \qquad \textrm{(Using Theorem \ref{thm:clt})} \\
	&\overset{1}{\leq }\Ex_{Z_x \sim N(0,{\rm I}_{2kn})} \left[ \psi\left(G(Z_x)\right)\right]  -  C\cdot \Ex_{Z_x \sim N(0,{\rm I}_{2kn})} \left[ \psi\left(G(Z_x)\right)\right] \sqrt{\epsilon \log d \log \frac{1}{\Ex_{Z_x} \left[ \psi\left(G(Z_x)\right)\right]}} + \sqrt{d}\rho(\tau,\epsilon,\eta,\alpha) \\
&\overset{2}{\leq }\Ex_{x \sim \mu^n_M} \left[ \psi\left(G(x)\right)\right] - C\left(\Ex_{x \sim \mu^n_M} \left[ \psi\left(G(x)\right)\right] - \rho(\tau,\epsilon,\eta,\alpha)\right) \\
& \qquad \qquad \qquad \qquad \qquad \qquad \quad
\times  \sqrt{\epsilon \log d \log \frac{1}{\Ex_{x\sim \mu^n_M} \left[ \psi\left(G(x)\right)\right] + \rho(\tau,\epsilon,\eta,\alpha)}} +2 \sqrt{d}\rho(\tau,\epsilon,\eta,\alpha)	\\
&\overset{3}{\leq} \delta - C' \delta \sqrt{\epsilon \log d \log k} .
\end{align*}
We justify steps $1$ to $3$. In step $1$, from Fact \ref{fact:covar} we know that $Z_{y_l} \underset{\overline{\rho}}{\sim} Z_{x}$ for every $l \in [d]$, and $Z_x \sim N(0,I)$. Furthermore, $\|\overline{\rho}\|_\infty \leq 1 - c\epsilon$ and $\overline{\rho}$ is nonnegative. Then step $1$ is established using Theorem \ref{thm:bound}. Step $2$, follows again by an application of Theorem \ref{thm:clt} instantiated with the $1$-lipschitz function $\psi(\cdot)$. Finally for step $3$ we observe that for $x \sim \mu^n_M$, and $x' \underset{1 - \eta}{\sim }x$, the random $n$-ary vertex $x'$ is distributed as $\mu^n_M$. Therefore, 
\[
\Ex_{x} \left[ \psi\left(\Gamma_{1 - \eta}g(x)\right)\right] = \Ex_{x} \left[ \Gamma_{1 - \eta}g(x)\right] =  \Ex_x \Ex_{x' \underset{1 - \eta}{\sim} x} \left[ g(x') \right] 
= \Ex_{x'} g(x') = \delta 
\]
which along with the fact that $\rho(\tau,\epsilon,\eta,\alpha) \ll \delta^2\epsilon^2/d$ by our choice of parameter $\tau$, gives us the required bound.

\end{proof}

\section{$d$-ary Gaussian Isoperimetry}
\label{sec:gaussiso}

In this section, we prove Theorem \ref{thm:bound}.
\gaussiso*
The key tool used is the following generalization of Borell's isoperimetric inequality to collections of Gaussians. 
\begin{theorem}[\cite{IM12} Theorem 4.5]					\label{thm:egt}
	Fix $\overline{\rho} \in [0,1]^n$. Suppose $h_1,h_2,\ldots,h_d \sim N(0,{\rm I}_n)$ are jointly normal with ${\rm Cov}(h_i,h_j) = {\rm Diag}(\overline{\rho})$ for all $i \neq j$, where ${\rm Diag}(\overline{\rho})$ is the diagonal matrix with the diagonal entries given by the vector $\overline{\rho}$. Then for any choice of sets $A_1,A_2,\ldots,A_d \subseteq \mathbbm{R}^n$ we have
	\[
	\Pr_{h_1,\ldots,h_d}\left[\bigwedge_{j \in [d]}\{h_j \in A_j\}\right] \leq \Pr_{h_1,\ldots,h_d}\left[\bigwedge_{j \in [d]}\{h_j \in H_j\}\right]
	\] 
	where for every $j \in [d]$, $H_j$ is the halfspace given by $H_j = \{x \in \mathbbm{R}^n | x_1 \leq \Phi^{-1}(\gamma^n(A_j))\}$. Here $\gamma^n$ being the $n$-dimensional Gaussian measure and $\Phi(\cdot)$ is the Gaussian CDF function.
\end{theorem}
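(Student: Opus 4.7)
The plan is to prove Theorem~\ref{thm:egt} by iterated Ehrhard-type symmetrization, reducing the multidimensional statement to a one-dimensional rearrangement inequality for correlated Gaussians. The crucial structural observation is that since each $h_j \sim N(0,{\rm I}_n)$ and ${\rm Cov}(h_i,h_j) = {\rm Diag}(\overline{\rho})$ for $i \neq j$, the coordinate slices $(h_1(k),\ldots,h_d(k))$ are mutually independent across $k \in [n]$, with each slice being $\rho_k$-exchangeably correlated standard Gaussians. This product structure lets me condition on any subset of coordinates at each step without disturbing the exchangeable structure on the unconditioned ones.

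The core technical ingredient is the following one-dimensional rearrangement lemma: for $\rho \in [0,1]$, standard Gaussians $X_1,\ldots,X_d \in \mathbb{R}$ with $\Ex[X_i X_j] = \rho$ for $i \neq j$, and Borel sets $B_1,\ldots,B_d \subseteq \mathbb{R}$,
\[
\Pr\Bigl[\bigwedge_{j \in [d]} X_j \in B_j\Bigr] \ \leq\ \Pr\Bigl[\bigwedge_{j \in [d]} X_j \in B_j^{*}\Bigr],
\]
where $B_j^{*} = (-\infty,\Phi^{-1}(\gamma^1(B_j))]$ has the same 1-D Gaussian measure as $B_j$. To lift this to $n$ dimensions I would apply cylindrical symmetrization: iterating over the coordinate directions $e_1,\ldots,e_n$, in direction $e_k$ I partition $\mathbb{R}^n$ into lines parallel to $e_k$ and replace each slice of $A_j$ by its half-line rearrangement of equal 1-D Gaussian measure, producing $A_j^{\sharp}$. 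Conditioning on the other $n-1$ coordinates (which, by coordinate independence, leaves $(h_1(k),\ldots,h_d(k))$ distributed as $\rho_k$-exchangeable standard Gaussians) and applying the 1-D lemma with $\rho = \rho_k$ shows that the symmetrization does not decrease $\Pr[\bigwedge_j h_j \in A_j]$. After sufficiently many rounds of symmetrization and a compactness argument in $L^1(\gamma^n)$ to pass to the limit, each $A_j^{\sharp}$ becomes a halfspace, which rotational invariance aligns with $\{x_1 \leq \Phi^{-1}(\gamma^n(A_j))\}$.

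For the 1-D lemma itself I would induct on $d$. The base case $d = 2$ is Borell's classical Gaussian noise stability inequality, provable by standard two-point symmetrization. For the inductive step $d-1 \to d$, condition on $X_d$: the conditional joint law of $(X_1,\ldots,X_{d-1}) \mid X_d$, after centering by $\rho X_d$ and rescaling by $\sqrt{1-\rho^2}$, is again exchangeable Gaussian with correlation $\rho/(1+\rho) \in [0,1]$. The inductive hypothesis then replaces the (suitably translated) $B_1,\ldots,B_{d-1}$ by their half-line rearrangements without decreasing the conditional probability. Integrating back over $X_d$ and applying Borell's $d=2$ inequality to the pair $X_d$ versus the monotone half-line event $\bigwedge_{j<d}\{X_j \in B_j^*\}$ completes the step.

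The main obstacle is the last ``monotone event'' step of the induction: although $\bigwedge_{j<d}\{X_j \in B_j^*\}$ is intuitively a lower set with respect to $X_d$ after conditioning, making this precise (and reducing it to a bona fide two-set Borell statement against arbitrary $B_d$) requires a careful identification of the event's structure and possibly an additional layer of symmetrization. An alternative path is Isaksson and Mossel's heat-flow approach: interpolating the $d$-tuple of functions along the Ornstein--Uhlenbeck semigroup and showing the derivative of $\Ex_Z \prod_j F_j^{(t)}(Z)$ (with $F_j(z) = \Pr[\sqrt{\rho}z + \sqrt{1-\rho}Y \in B_j]$) has a definite sign via the concavity of $\Phi^{-1} \circ \Phi$, which is Ehrhard's symmetrization principle. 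Controlling all $\binom{d}{2}$ pairwise interaction terms in the Hessian computation simultaneously is the most technical step of either approach.
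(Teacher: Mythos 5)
You should first be aware that the paper contains no proof of this statement to compare against: Theorem \ref{thm:egt} is imported as a black box from Isaksson--Mossel \cite{IM12} (their Exchangeable Gaussians Theorem), whose proof there goes through spherical rearrangement machinery and a limiting argument rather than the scheme you outline, and it is only ever \emph{used} in this paper (in Theorem \ref{thm:bound}). Judged on its own terms, your plan has a genuine gap at its core, namely the one-dimensional lemma. In the inductive step you condition on $X_d = x$; the conditional law of $(X_1,\dots,X_{d-1})$ is indeed exchangeable with correlation $\rho/(1+\rho)$, but the induction hypothesis then replaces each $B_j$ by the half-line whose measure matches the \emph{conditional} measure of $B_j$ given $X_d = x$, and these target half-lines depend on $x$. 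Integrating over $x \in B_d$ therefore does not yield a comparison against the fixed rearrangements $B_1^*,\dots,B_{d-1}^*$, and the final step --- ``apply Borell's $d=2$ inequality to the pair $X_d$ versus the event $\bigwedge_{j<d}\{X_j \in B_j^*\}$'' --- is not an instance of Borell at all: Borell compares two sets for a single correlated pair, whereas here one side is a $[0,1]$-valued noise-stability functional of $X_d$, and bounding a set against such a functional is essentially the statement being proven. You flag an obstacle at this point, but as written the induction does not close and no repair is indicated.

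The reduction from $n$ dimensions to one dimension is also unsound as described. Coordinate directions are the only ones where your conditioning argument applies: for a general direction $u$ the cross-covariances $u^{\top}\mathrm{Diag}(\overline{\rho})\,v$ with $v \perp u$ need not vanish when the $\rho_k$ are distinct, so the projected variables are no longer exchangeable after conditioning on the orthogonal data. But iterated coordinate-wise Ehrhard symmetrization converges at best to coordinate-monotone sets, not to half-spaces normal to $e_1$, and the concluding appeal to rotational invariance is unavailable because $\mathrm{Diag}(\overline{\rho})$ is not rotation-invariant unless all $\rho_k$ coincide. Indeed the direction $e_1$ cannot be arbitrary: for $\overline{\rho} = (0,1)$ (or $(0,1-\epsilon)$) and $A_1 = A_2 = \{x \in \mathbb{R}^2 : x_2 \le 0\}$, the left-hand side is $1/2$ while half-spaces defined through $x_1$ give $1/4$, so any correct formulation and any correct proof must single out the maximal-correlation coordinate --- something a symmetrization scheme that treats all coordinates alike and then ``rotates to $e_1$'' cannot see. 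The $L^1(\gamma^n)$ compactness step is likewise asserted rather than proved. The heat-flow/Ehrhard-principle route you mention at the end is the more promising one and closer in spirit to how such multi-set statements are actually established, but as presented it remains a pointer rather than a proof.
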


\begin{proof}[Proof of Theorem \ref{thm:bound}]
	We begin by rewriting the LHS of the above equation as 
	\[
	\Ex_g \Ex_{h_1,\ldots,h_d \sim N_\rho(g)}\left[\prod_{j \in [d]} f(h_j)\right] = \Ex_g \left[\left(U_{\overline{\rho}} f(g)\right)^d\right] .
	\]	
	This follows using the definition of the Gaussian noise operator and the observation that conditioning on the realization of the Gaussian vector $g$, the $\rho$-correlated Gaussians $h_1,\ldots,h_d$ are drawn independently. As is standard, Claim \ref{cl:01sets} implies that it suffices to prove the above for any set indicator $f:\mathbbm{R}^n \to \{0,1\}$ i.e., $f$ is the indicator function of a set i.e.,
	\[
	\argmax_{\substack{f:\mathbbm{R}^n \to [0,1] \\ \Ex_g f(g) = \delta}} \Ex_g \left[\left(U_{\overline{\rho}} f(g)\right)^d\right]
	= \argmax_{\substack{f:\mathbbm{R}^n \to \{0,1\} \\ \Ex_g f(g) = \delta}} \Ex_g \left[\left(U_{\overline{\rho}} f(g)\right)^d\right]
	\]
	Therefore without loss of generality we can assume that $f(\mathbbm{R}^n) = \{0,1\}$. By construction, for every $j \in [d],i \in [m]$ we can write $h_j(i) = \rho_ig(i) + \sqrt{1 - \rho^2_i} z_j(i)$, where $z_1(i),z_2(i),\ldots,z_d(i) \sim N(0,1)$ are independent Gaussians. Therefore for any $i,j \in \{0,1\ldots,d\}$ we have that $h_i,h_j$ are marginally $N(0,{\rm Id}_m)$ and ${\rm Cov}(h_i,h_j) = \left({\rm Diag}(\overline{\rho})\right)^2$. Define $A = {\rm supp}(f)$ to be the set indicated by the function $f$, and let $H \subset \mathbbm{R}^n$ be the halfspace $H = \{x \in \mathbbm{R}^n | x_1 \leq t \}$ with $t = \Phi^{-1}(\delta)$. Instantiating Theorem \ref{thm:egt} with $A_j = A$ and $H_j = H$ for every $j \in [d]$ we get that 
	\begin{eqnarray*}
		\Ex_g\Ex_{h_1,\ldots,h_d \sim N_\rho(g)} \left[\prod_{j \in [d]}f(h_j)\right] 
		&{=}&\Ex_g\Pr_{h_1,\ldots,h_d} \left[ \wedge_{j \in \{1,\ldots,d\}} h_j \in A \right] \\
		&{\leq}&\Ex_g\Pr_{h_1,\ldots,h_d} \left[ \wedge_{j \in \{1,\ldots,d\}} h_j \in H \right] \\
		&=&\Ex_g\Pr_{h_1,\ldots,h_d} \left[ \wedge_{j \in [d]} h_{j,1} \leq t \right] \\
		&\leq& \delta - C\delta\sqrt{\epsilon \log d \log\frac1\delta} \qquad \textrm{(Using  Proposition \ref{prop:line-bound})}. 
	\end{eqnarray*}
\end{proof}

\begin{claim}					
	\label{cl:01sets}
	Fix $\overline{\rho} \in (0,1)^n$ and $\delta \in (0,1)$. Then among all functions $f:\mathbbm{R}^n \to [0,1]$ such that $\Ex_{g \sim N(0,1)^n}[f(g)] = \delta$, the quantity $\Ex_{g \sim N(0,1)^n}\left[(U_{\overline{\rho}}f)^d\right]$ is maximized for some $f: \mathbbm{R}^n \to \{0,1\}$. 
\end{claim}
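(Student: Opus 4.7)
The plan is to prove the claim by an extreme-point argument (Bauer's maximum principle). Define the convex set
\[
K := \left\{ f : \mathbb{R}^n \to [0,1] \ \big|\ \mathbb{E}_{g \sim N(0,1)^n}[f(g)] = \delta \right\}
\]
viewed as a subset of $L^\infty(\mathbb{R}^n, \gamma^n)$. First I would verify that $K$ is weak-$*$ compact: the unit ball of $L^\infty$ is weak-$*$ compact by Banach--Alaoglu, the pointwise bounds $0 \leq f \leq 1$ cut out a weak-$*$ closed subset (testing against nonnegative $L^1$ functions), and the single affine constraint $\mathbb{E} f = \delta$ is weak-$*$ continuous.

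Second, the target functional $F(f) := \mathbb{E}_g[(U_{\overline{\rho}} f)(g)^d]$ is convex and weak-$*$ continuous on $K$. Convexity is immediate: $U_{\overline{\rho}}$ is a linear operator that sends $[0,1]$-valued functions to $[0,1]$-valued functions, and $t \mapsto t^d$ is convex on $[0,1]$, so $(U_{\overline{\rho}} f)(g)^d$ is convex in $f$ pointwise in $g$, and integration preserves convexity. For continuity, note that for each fixed $g$ the map $f \mapsto (U_{\overline{\rho}} f)(g) = \int f \, d\nu_{\overline{\rho},g}$ is weak-$*$ continuous because $\nu_{\overline{\rho},g}$ has an $L^1$ density against $\gamma^n$; combining this pointwise convergence with the uniform bound $|U_{\overline{\rho}} f|^d \leq 1$ and dominated convergence gives weak-$*$ continuity of $F$.

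The key structural step is characterizing the extreme points of $K$ as precisely the $\{0,1\}$-valued members of $K$. If $f \in K$ is not a.e.\ $\{0,1\}$-valued, then the set $A_\eta := \{x : \eta \leq f(x) \leq 1 - \eta\}$ has positive Gaussian measure for some small $\eta > 0$. Since $\gamma^n$ is atomless, I can split $A_\eta = A_1 \sqcup A_2$ with $\gamma^n(A_1) = \gamma^n(A_2)$. The perturbation $f \pm \eta(\mathbf{1}_{A_1} - \mathbf{1}_{A_2})$ remains in $K$, and
\[
f = \tfrac{1}{2}\bigl(f + \eta(\mathbf{1}_{A_1} - \mathbf{1}_{A_2})\bigr) + \tfrac{1}{2}\bigl(f - \eta(\mathbf{1}_{A_1} - \mathbf{1}_{A_2})\bigr)
\]
exhibits $f$ as a nontrivial convex combination of two distinct elements of $K$, so $f$ is not extreme. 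Conversely, every indicator $\mathbf{1}_S \in K$ is trivially extreme.

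Applying Bauer's maximum principle (a weak-$*$ continuous convex function on a convex weak-$*$ compact set attains its maximum at an extreme point) gives a $\{0,1\}$-valued maximizer, which is the claim. I expect the main (mild) obstacle to be the topological bookkeeping in the compactness/continuity step; the convexity and the extreme-point description are standard once the correct topology is fixed. If one prefers to avoid functional-analytic machinery, the same conclusion can be reached constructively by iterating the two-set swap above on a suitable increasing exhaustion of the set where $f \in (0,1)$, but the Bauer route is cleaner.
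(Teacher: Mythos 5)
Your proof is correct and follows essentially the same route as the paper: both argue that $f \mapsto \Ex_g\left[(U_{\overline{\rho}}f(g))^d\right]$ is convex on the convex set of $[0,1]$-valued functions with Gaussian mean $\delta$, so its maximum is attained at an extreme point, and the extreme points are exactly the indicator functions. The only difference is that you supply the supporting details the paper leaves implicit or cites (weak-$*$ compactness and Bauer's maximum principle to guarantee a maximizer at an extreme point, and a direct perturbation argument for the extreme-point characterization, which the paper takes from Exercise 11.25 of O'Donnell), and these details are sound given $\overline{\rho} \in (0,1)^n$.
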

\begin{proof}
	Define the set $S_{n,\delta} := \{f:\mathbbm{R}^n \to [0,1] : \Ex_g[f(g)] = \delta \}$. Clearly $S_{n,\delta}$ is convex. Now we claim that the functional $f \mapsto \Ex_g\left[(U_{\overline{\rho}} f(g))^d\right]$ is convex. For any pair of functions $f,f' \in S_{n,\delta}$, $g \in \mathbbm{R}^n$ and $\lambda \in [0,1]$ we have 
	\begin{eqnarray*}
		\left(U_{\overline{\rho}} (\lambda f + (1 - \lambda) f')(g)\right)^d &=& \left(\lambda \left(U_{\overline{\rho}} f(g)\right) + (1 - \lambda)\left( U_{\overline{\rho}} f' (g)\right)\right)^d  \\
		&\leq& \lambda \left(U_{\overline{\rho}} f(g)\right)^d + (1 - \lambda)\left( U_{\overline{\rho}} f' (g)\right)^d  .
	\end{eqnarray*}
	where the first equality follows from linearity of $U_{\overline{\rho}}$ and the last inequality follows from the observations that $f,f'$ are nonnegative functions, and the map $x \mapsto x^d$ is convex for $x \geq 0$ and $d \geq 1$. Therefore, integrating both sides with respect to the $n$-dimensional Gaussian measure we get that 
	\[
	\Ex_{g} \left[\left(U_{\overline{\rho}} (\lambda f + (1 - \lambda)f')(g)\right)^d\right] \leq \lambda \Ex_{g} \left[\left(U_{\overline{\rho}} f(g)\right)^d\right] + (1 - \lambda)\Ex_{g} \left[\left(U_{\overline{\rho}} f')(g)\right)^d\right]. 
	\]
	which establishes the convexity of the map $f \mapsto \Ex_g\left[(U_{\overline{\rho}} f(g))^d\right]$. Therefore, the maximizer of the map over the set $S_{n,\delta}$ must be an extreme point. Using the fact that the extreme points of the set $S_{n,\delta}$ are set indicator functions (Exercise 11.25~\cite{OD14}), completes the proof.
\end{proof}

\subsection{Stability on a Line}

We shall need the following Gaussian stability bounds for halfspaces. 

\begin{fact}				\label{lem:stab-half}
	There exists a constant $\delta_0 \in (0,1)$ such that the following holds for all $\delta \in (0,\delta_0]$ and every $\nu \in \left(0,\frac{1}{10\sqrt{\log\frac{1}{\delta}}}\right)$. Let $H \subset \mathbbm{R}^n$ be a halfspace satisfying $\gamma^n(H) = \delta$, and  let $H^{-\nu} \defeq \{x \in H | {\rm dist}(x,\partial H) \geq \nu \}$ be the $\nu$-shift of the halfspace $H$. Then we have $\gamma^n(H) - \gamma^n(H^{-\nu}) \geq (1/8)\delta\nu\sqrt{\log 1/ \delta}$.
\end{fact}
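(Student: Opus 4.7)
The plan is to reduce the statement to a one-dimensional Gaussian tail computation. By the rotation invariance of $\gamma^n$, after an orthogonal change of coordinates we may assume $H = \{x \in \mathbb{R}^n : x_1 \leq t\}$ where $t = \Phi^{-1}(\delta)$. Under this normalisation, $H^{-\nu} = \{x : x_1 \leq t - \nu\}$, so
\[
\gamma^n(H) - \gamma^n(H^{-\nu}) \;=\; \Phi(t) - \Phi(t-\nu) \;=\; \int_{t-\nu}^{t} \varphi(s)\, ds,
\]
where $\varphi(s) = (2\pi)^{-1/2} e^{-s^2/2}$ is the standard Gaussian density. The job is thus to lower-bound this one-dimensional integral.

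Next I would estimate $\varphi$ on the interval. Since $\delta$ is small, $t$ is a large negative number, so on $[t-\nu, t]$ the quantity $|s|$ is maximised at $s = t-\nu$, and by monotonicity of $\varphi$ on $[0,\infty)$ we have $\varphi(s) \geq \varphi(t-\nu)$ throughout. An explicit computation gives
\[
\varphi(t-\nu) \;=\; \varphi(t)\exp\!\Big({-}|t|\nu - \tfrac{\nu^2}{2}\Big).
\]
The hypothesis $\nu \leq 1/(10\sqrt{\log(1/\delta)})$, combined with the standard tail estimate $|t| \leq \sqrt{2\log(1/\delta)}\,(1 + o_\delta(1))$, forces $|t|\nu \leq \sqrt{2}/10 + o_\delta(1)$ and makes $\nu^2$ negligible, so $\varphi(s) \geq (1 - o_\delta(1))\, e^{-\sqrt{2}/10}\, \varphi(t)$ on the whole interval. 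Integrating over an interval of length $\nu$ yields
\[
\int_{t-\nu}^{t} \varphi(s)\, ds \;\geq\; (1 - o_\delta(1))\, e^{-\sqrt{2}/10}\, \nu\, \varphi(t).
\]

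Finally, I would convert $\varphi(t)$ and $|t|$ into expressions involving $\delta$ using the Mills ratio. The classical upper Mills bound $\Phi(t) \leq \varphi(t)/|t|$ for $t < 0$ gives $\varphi(t) \geq \delta|t|$, while the matching lower bound $\Phi(t) \geq \varphi(t)(1/|t| - 1/|t|^3)$ yields $|t| \geq \sqrt{2\log(1/\delta)}\,(1 - o_\delta(1))$. Substituting these into the previous display produces
\[
\gamma^n(H) - \gamma^n(H^{-\nu}) \;\geq\; (1 - o_\delta(1))\, e^{-\sqrt{2}/10}\, \sqrt{2}\; \delta\, \nu\, \sqrt{\log(1/\delta)}.
\]
Since $e^{-\sqrt{2}/10}\sqrt{2} \approx 1.22$ is much larger than $1/8$, choosing $\delta_0$ small enough so that the cumulative $(1 - o_\delta(1))$ factor exceeds, say, $1/10$ delivers the claimed bound $(1/8)\,\delta\nu\sqrt{\log(1/\delta)}$. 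The only delicate point is the bookkeeping of the small-$\delta$ corrections in the Mills ratio expansions of $\varphi(t)$ and $|t|$, but the generous gap between $1.22$ and $1/8$ absorbs them easily; the choice of constant $1/10$ in the hypothesis on $\nu$ is precisely what makes this slack available.
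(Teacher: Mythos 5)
Your proposal is correct and follows essentially the same route as the paper: reduce to one dimension by rotation invariance, lower-bound the Gaussian density on $[t-\nu,t]$ by its value at the left endpoint $t-\nu$, and convert everything into $\delta$ via Mills-ratio bounds together with $|t| = \Theta\bigl(\sqrt{\log(1/\delta)}\bigr)$. The only cosmetic difference is that you compare $\varphi(t-\nu)$ to $\varphi(t)$ directly through the factor $e^{-|t|\nu-\nu^2/2}$, controlled by the hypothesis $\nu\sqrt{\log(1/\delta)}\le 1/10$, whereas the paper writes $\varphi(t-\nu)\ge |t-\nu|\,\Phi(t-\nu)$ and separately argues $\Phi(t-\nu)\ge \delta/2$ -- both variants use the assumption on $\nu$ at exactly the same point, and your constant bookkeeping (roughly $1.2$ versus the required $1/8$) is sound.
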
	

\begin{proof}
	Since the Gaussian measure is rotation invariant, without loss of generality, we can assume that $H: = \{x \in \mathbbm{R} | x \leq t\}$. We shall use the following well known bounds for the Gaussian cdf function.
	\begin{fact}[Eq. 7.1.13 \cite{AS65}]			\label{fact:cdf-bound}
		For every $t \in (-\infty,0)$ we have 
		\[
		\frac{1}{\sqrt{2 + t^2} + |t|} \cdot \frac{1}{\sqrt{2\pi}}e^{-t^2/2} \leq \Phi(t) \leq \frac{1}{t} \cdot \frac{1}{\sqrt{2\pi}}e^{-t^2/2}. 
		\]
	\end{fact}
	The following is an immediate consequence of the above fact.
	\begin{claim}					\label{cl:t-bound}
		There exists constant $t_0 \in (-\infty,0)$, such that for every $t \leq t_0$ we have 
		\[
		\frac12 \sqrt{\log \frac{1}{\Phi(t)}} \leq |t| \leq 2 \sqrt{\log \frac{1}{\Phi(t)}}. 
		\]
	\end{claim}
	\begin{proof}
		Let $\delta := \Phi(t)$. Using Fact \ref{fact:cdf-bound}, we have $e^{t^2/2} \geq (3\sqrt{2\pi} \delta |t|)^{-1}$, which by taking log on both sides and rearranging gives us
		\[
		\frac{t^2}{2} + \log(3\sqrt{2\pi}|t|) \geq \log\frac{1}{\delta} \Rightarrow |t| \geq \frac12\sqrt{\log \frac{1}{\delta}}
		\]
		by choosing $|t|$ to be large enough. On the other hand, again using Fact \ref{fact:cdf-bound} we observe that $e^{t^2/2} \leq 1/(\delta t)$. Therefore again taking log on both sides and the fact that $|t|$ is large enough, we have $|t| \leq 2 \sqrt{\log (1/\delta)}$.
	\end{proof}
	Denote $t: = \Phi^{-1}(\delta)$ and observe that from our choice of parameters $t + \nu \leq - (1/2)\sqrt{\log 1/\delta} + (1/100)\sqrt{\log 1/\delta} \leq 0$ (using the lower bound on $|t|$ from Claim \ref{cl:t-bound}). Since for any $x \in (-\infty,0)$, the map $x \mapsto e^{-x^2/2}$ is an increasing function we have 
	\begin{align}				
		\gamma^n(H) - \gamma^n(H^{-\nu}) & =  \Phi(t) - \Phi(t-\nu) \label{eqn:gauss-bd} \\
		&= \int^{t}_{t-\nu} \frac{1}{\sqrt{2\pi}} e^{-x^2/2} dx \nonumber \\
		&\geq \frac{ e^{-(t-\nu)^2/2}}{\sqrt{2\pi} }\int^{t}_{t-\nu} 1.dx \nonumber \\
		&= \frac{ e^{-(t-\nu)^2/2}}{\sqrt{2\pi}}\nu \nonumber \\
		&\geq (|t - \nu|)\Phi(t-\nu)\nu \nonumber & \textrm{(using Fact \ref{fact:cdf-bound})}  \\
		&\geq \frac{\nu |t|}{2}\Phi(t-\nu)& \textrm{($2\nu \leq 2 \leq |t|$ for small enough $\delta_0$)} \nonumber \\
		& \geq \frac{\nu }{4}\Phi(t - \nu)\sqrt{\log\frac{1}{\delta}} & \textrm{(using Claim \ref{cl:t-bound})} \label{eqn:final}
	\end{align}
	On the other hand, 
	\begin{align*}
		\Phi(t) - \Phi(t-\nu) & = \int^{t}_{t - \nu} \frac{e^{-x^2/2}}{\sqrt{2\pi}} dx 
		\leq \nu \frac{e^{-t^2/2}}{2} & \textrm{(using $t - \nu < t < 0$)} \\
		& \leq 3\nu|t|\Phi(t) & \textrm{(using Fact \ref{fact:cdf-bound})} \\
		& \leq 6 \nu \delta \sqrt{\log 1/\delta} & \textrm{(using Claim \ref{cl:t-bound})}. 
	\end{align*}
	Since by our choice of parameters we have $\nu\sqrt{\log (1/\delta)} \leq 0.1$, it follows that 
	\[
	\Phi(t - \nu) \geq \Phi(t) - 6\nu\delta\sqrt{\log \frac1\delta } \geq \frac{\delta}{2}.
	\]
	Plugging in the above bound into Eq. \ref{eqn:final} gives us the required lower bound on Eq. \ref{eqn:gauss-bd}.
\end{proof}
Proposition \ref{prop:line-bound} is folklore; we include a proof of it here for the sake of completeness.
\begin{proposition}[Folklore]
	\label{prop:line-bound}
	Let $\delta_0$ be as in Lemma \ref{lem:stab-half}. Then there exists constants $C,C_0$ such that the following holds. Fix any $\delta \in (0,\delta_0)$, $\epsilon \in (0,1)$ and $d \geq 10\log(1/\epsilon\delta)$  such that $\sqrt{\epsilon \log d \log (1/\delta)} \leq 1/10C$. Define $\rho:= 1- \epsilon$. Let $g \sim N(0,1)$, and $h_0,\ldots,h_d \sim N_\rho(g)$.  we have 
	\begin{equation}
	\Ex_g\Pr_{h_0,\ldots,h_d}\left[ \wedge_{i \in [d]} \left\{h_j \leq t \right\}\right] \leq \Phi(t) - C\Phi(t)\sqrt{\epsilon \log d \log\frac{1}{\Phi(t)}}
	\end{equation}
	where $\Phi(\cdot)$ is the CDF of the standard normal distribution.
\end{proposition}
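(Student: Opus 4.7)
The plan is to reduce the joint event to a halfspace event for a single Gaussian, shifted by $\Omega(\sqrt{\epsilon \log d})$, and then invoke the halfspace stability estimate of Fact~\ref{lem:stab-half} to extract the claimed improvement over $\Phi(t)$.

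First, I would use the canonical Gaussian representation: since the $h_j$ are jointly normal, each $\rho$-correlated with $g$ and pairwise $\rho^2$-correlated, one may write $h_j = \rho g + \sqrt{1-\rho^2}\, z_j$ with $z_1, \ldots, z_d$ iid standard normals independent of $g$. The event $\bigwedge_{j=1}^d\{h_j \leq t\}$ is then equivalent to $\rho g + \sqrt{1-\rho^2}\, M \leq t$, where $M := \max_{j \in [d]} z_j$ is the maximum of $d$ iid standard normals, independent of $g$.

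Second, for any threshold $m_0 > 0$, on the event $\{M > m_0\}$ one has $\rho g + \sqrt{1-\rho^2}\, M \geq \rho g + \sqrt{1-\rho^2}\, m_0$, whence
\[
\Pr\!\left[\rho g + \sqrt{1-\rho^2}\, M \leq t\right] \;\leq\; \Pr[M \leq m_0] \;+\; \Phi\!\left(\frac{t - \sqrt{1-\rho^2}\, m_0}{\rho}\right).
\]
Using $\sqrt{1-\rho^2} = \sqrt{\epsilon(2-\epsilon)} \geq \sqrt{\epsilon}$, $\rho \in (0,1)$ and $t < 0$, an elementary check gives $(t - \sqrt{1-\rho^2}\,m_0)/\rho \leq t - \sqrt{\epsilon}\, m_0$, so the second term is at most $\Phi(t - \sqrt{\epsilon}\,m_0)$.

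Third, I would set $\nu := \sqrt{\epsilon}\, m_0$ and choose $m_0$ so that $\nu \asymp \sqrt{\epsilon \log d}$. The hypothesis $\sqrt{\epsilon \log d \log(1/\Phi(t))} \leq 1/(10C)$ ensures $\nu \leq 1/(10\sqrt{\log(1/\Phi(t))})$, so Fact~\ref{lem:stab-half} applies to the halfspace $\{x \leq t\}$ with this $\nu$, yielding
\[
\Phi(t - \nu) \;\leq\; \Phi(t) \;-\; \tfrac{1}{8}\,\Phi(t)\,\nu\,\sqrt{\log(1/\Phi(t))} \;\leq\; \Phi(t) \;-\; \Omega(1)\cdot \Phi(t)\sqrt{\epsilon \log d \log(1/\Phi(t))}.
\]

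Finally, one controls the residual $\Pr[M \leq m_0] = \Phi(m_0)^d$. Using the Mills'-ratio lower bound $\Phi(-m_0) \geq \phi(m_0)/(m_0 + 1/m_0)$ together with $1-x \leq e^{-x}$,
\[
\Phi(m_0)^d \;\leq\; \exp\!\left(-\frac{d\,\phi(m_0)}{m_0 + 1/m_0}\right).
\]
Taking $m_0^2 \asymp 2\log(d/\log(1/\Phi(t)))$ (which is still of order $\log d$ under $d \geq 10\log(1/(\epsilon\Phi(t)))$, so the shift $\sqrt{\epsilon}\,m_0$ remains $\Omega(\sqrt{\epsilon \log d})$), the right-hand side is much smaller than $\Phi(t)\sqrt{\epsilon \log d \log(1/\Phi(t))}$, so this residual is absorbed into the constant $C$.

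The main obstacle is precisely the quantitative balancing in the last step: $m_0$ must be chosen large enough for the deterministic shift $\sqrt{\epsilon}\, m_0$ to have order $\sqrt{\epsilon \log d}$, yet small enough that $\Phi(m_0)^d$ is dominated by the savings $\Phi(t)\sqrt{\epsilon \log d \log(1/\Phi(t))}$ extracted from Fact~\ref{lem:stab-half}. A careful use of Mills'-ratio asymptotics together with the lower bound on $d$ lets both conditions be met simultaneously, which closes the argument.
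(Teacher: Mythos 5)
Your reduction to the pair $(g,M)$ with $M=\max_j z_j$, the use of Fact~\ref{lem:stab-half}, and the Mills-ratio control of $\Phi(m_0)^d$ are individually sound, and the strategy is close in spirit to the paper's (which also plays a shift of order $\sqrt{\epsilon\log d}$ against the maximum of the $d$ independent noise coordinates). The gap sits exactly in the balancing step you yourself flag as the main obstacle. Your split is additive, $\Pr[\wedge_{j\in[d]}\{h_j\le t\}]\le \Phi(m_0)^d+\Phi(t-\sqrt{\epsilon}\,m_0)$, so the residual $\Phi(m_0)^d$ must on its own be smaller than the savings $C\delta\sqrt{\epsilon\log d\log(1/\delta)}$, which essentially forces $d\,(1-\Phi(m_0))\gtrsim\log(1/\delta)$. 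But the hypotheses only give $d\ge 10\log(1/(\epsilon\delta))$, so $\log(1/\delta)$ may be as large as $d/10$; in that regime $1-\Phi(m_0)$ must be a constant, hence $m_0=O(1)$, the shift is only $O(\sqrt{\epsilon})$, and Fact~\ref{lem:stab-half} yields a gain of only $O(\delta\sqrt{\epsilon\log(1/\delta)})$ — a factor $\sqrt{\log d}$ short of what is claimed, which no choice of the absolute constant $C$ can absorb since $\log d$ is unbounded. Concretely, your assertion that $m_0^2\asymp 2\log\bigl(d/\log(1/\delta)\bigr)$ ``is still of order $\log d$'' fails when $\log(1/\delta)=\Theta(d)$ (then $m_0^2=\Theta(1)$), and that regime is consistent with all hypotheses: take $\delta=e^{-d/20}$ and $\epsilon$ of order $1/(C^2 d\log d)$, which also satisfies $\sqrt{\epsilon\log d\log(1/\delta)}\le 1/(10C)$.

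The paper avoids this trap by conditioning on $g$ rather than on $M$: it splits according to whether $g\le t-C'\sqrt{\epsilon\log d}$, $g\in I:=[t-C'\sqrt{\epsilon\log d},\,t]$, or $g\ge t$. Inside the band $I$, the event $\wedge_j\{h_j\le t\}$ forces $\max_j z_j\le\tfrac1{10}\sqrt{\log d}$, whose probability only needs to be bounded by $1/2$; this constant loss multiplies $\Pr[g\in I]\ge C\delta\sqrt{\epsilon\log d\log(1/\delta)}$ (again via Fact~\ref{lem:stab-half}), so half of the band mass is saved, and the only additive residual is $(1-\delta)2^{-d}$ from the regime $g\ge t$, which the hypothesis $d\ge 10\log(1/(\epsilon\delta))$ easily dominates. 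In other words, the ``small max'' event must be charged conditionally against the band mass, not additively against the whole probability; if you reorganize your argument that way, you recover the paper's proof.
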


\begin{proof}
	For brevity, let $\delta := \Phi(t)$ and $f: \mathbbm{R} \to \{0,1\}$ be the halfspace indicator function $f(x) = \mathbbm{1}(x_1 \leq t)$. Let $I$ denote the interval $\left[t - C'\sqrt{\epsilon \log d}, t\right]$ for some constant $C'$ to be chosen later. Using conditional expectations we can upper bound the expectation as 
	\begin{multline}
		\label{eq:bound0}
		\Ex_g\Pr_{\left(h_j\right)^d_{j = 1}}\left[ \wedge_{j \in [d]} \left\{h_j \leq t \right\}\right] 
		\leq \Pr_g\Big[g \leq t - C'\sqrt{\epsilon \log d}\Big] + 
		\Pr_g \Big[g \in  I \Big]\Pr_{\left(h_j\right)^d_{j = 1} \big| g \in I}\left[ \wedge_{j \in [d]} \left\{h_j \leq t \right\}\right] \\  
		+ \Pr_g \Big[g \geq t \Big]\Pr_{\left(h_j\right)^d_{j = 1} \big| g \geq t}\left[ \wedge_{j \in [d]} \left\{h_j \leq t \right\}\right]. 
	\end{multline}
	Recall that for every $j \in [d]$, we can write $h_j = (1 - \epsilon)g + \sqrt{2 \epsilon - \epsilon^2} z_j$ where $z_1,z_2,\ldots,z_d \sim N(0,1)$ are independent Gaussians. First we consider the case $g \geq t$. 
	Since $t > 0$, we have $(1 - \epsilon)g > t$ (recall that $\epsilon \in (0,1)$). Then
	\begin{equation}			\label{eq:bound1}
	\Pr_{\left(h_j\right)^d_{j = 1} \big| g \geq t} \left[\wedge_{j \in [d]} \left\{ h_j \leq t \right\} \right]
	\leq \Pr_{z_1,\ldots,z_d \sim N(0,1)} \left[\wedge_{j \in [d]} \left\{ z_j \leq  0\right\} \right] 
	= 2^{-d} .
	\end{equation}
	Now we proceed to upper bound second term in equation \ref{eq:bound0}. Here we condition on $g \in I$.  Then we have $(1 - \epsilon)g \geq g \geq t - \frac{1}{10}\sqrt{\epsilon \log d}$ (recall that $t \leq 0$), 
	and in particular
	\[
	\bigwedge_{j \in [d]} \left\{ h_j \leq t \right\} \Rightarrow 
	\bigwedge_{j \in [d]} \left\{ \left(\sqrt{2 \epsilon - \epsilon^2} \right)z_j \leq \frac{1}{10}\sqrt{\epsilon \log d} \right\} \Rightarrow 
	\bigwedge_{j \in [d]} \left\{ z_j \leq \frac{1}{10}\sqrt{\log d} \right\}. 
	\]
	where in the last step we use that $2\epsilon - \epsilon^2 \geq \epsilon$ for $\epsilon \leq 1/2$. Hence conditioned on the event $g \in I$ and using the independence of the $z_j$s we get that 
	\begin{eqnarray}			\label{eq:bound2}
	\Pr_{h_1,\ldots,h_d \sim N_\rho(g)} \left[\wedge_{j \in [d]} h_j \leq t \Big| g \in I \right]
	\leq \Pr_{z_1,\ldots,z_d \sim N(0,1)} \left[\max_{j \in [d]} z_j \leq \frac{1}{10}\sqrt{\log d} \right] 
	\leq \exp\left(- \Omega(\log d)\right) \leq \frac12.
	\end{eqnarray}
	for large enough choice of $d$. The last equality combines with the fact that expected maximum of $d$ independent Gaussian random variables is at least $(1/2)\sqrt{\log d}$ with lipschitz concentration of gaussian random variables (Theorem 5.8~\cite{BLM13})  
	
	On the other hand, instantiating Fact \ref{lem:stab-half} with the halfspace $H = \mathbbm{1}( x \leq t)$ we get that
	\begin{eqnarray}			\label{eq:bound3}
	\Pr_{g \sim N(0,1)} \left[ g \in I\right] = \gamma^1(H) - \gamma^1\left(H^{-(1/10)\sqrt{\epsilon \log d}}\right)  \geq C \delta \sqrt{\epsilon \log d \log\frac1\delta} . 
	\end{eqnarray}
	Therefore combining the bounds from \ref{eq:bound1},\ref{eq:bound2} and \ref{eq:bound3}, we get that  	
	\begin{align*}
		& \Pr_g \Big[g \leq t - C\sqrt{\epsilon \log d }\Big] + \Pr_g \Big[g \in  I \Big]\Pr_{\left(h_j\right)^d_{j = 1} \big| g \in I}\left[ \wedge_{j \in [d]} \left\{h_j \leq t \right\}\right] +  \Pr_g \Big[g \geq t \Big]\Pr_{\left(h_j\right)^d_{j = 1} \big| g \geq t}\left[ \wedge_{j \in [d]} \left\{h_j \leq t \right\}\right] \\
		&\leq  \left(\Pr_g \Big[g \leq t \Big] -  \Pr_g\Big[g \in I \Big] \right) + \frac12\Pr_g\Big[g \in I\Big] + (1 - \delta)2^{-d} \\
		&\leq  \delta - C\delta\sqrt{\epsilon \log d \log\frac1\delta}		\tag{Since $d \geq 10 \log(1/\epsilon\delta)$}	
	\end{align*}
	Finally plugging in $\delta = \Phi(t)$ in to the above expression gives us the bound.
\end{proof}

\paragraph{Addressing the constant volume case}.

We point out that Proposition \ref{prop:line-bound} is stated for all $\delta \leq \delta_0$ for some constant $\delta_0$, and in particular does not address the setting $\delta = 1/2$ which is needed for our hardness result for \oddcycletransversal~(Theorem \ref{thm:oct-hardness}). For that we state the following variant of the Proposition \ref{prop:line-bound} which works for all $\delta$'s but gives weaker quantitative relationships between $\epsilon$ and $\delta$.
\begin{proposition}
	\label{prop:line-bound-2}
	Given $\delta \in (0,1/2]$, there exists $\epsilon(\delta) \in (0,1)$ and constants $C,C_0$ such that the following holds for every $\epsilon \leq \epsilon(\delta)$. Fix $d \geq 10\log(1/\epsilon\delta)$  such that $\sqrt{\epsilon \log d \log (1/\delta)} \leq 1/10C$. Define $\rho:= 1- \epsilon$. Let $g \sim N(0,1)$, and $h_0,\ldots,h_d \sim N_\rho(g)$. Then we have 
	\begin{equation*}
	\Ex_g\Pr_{h_0,\ldots,h_d}\left[ \wedge_{i \in [d]} \left\{h_j \leq t \right\}\right] \leq \Phi(t) - C\Phi(t)\sqrt{\epsilon \log d \log\frac{1}{\Phi(t)}}
	\end{equation*}
	where $\Phi(\cdot)$ is the CDF of the standard normal distribution.
\end{proposition}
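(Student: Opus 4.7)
The plan is to mirror the proof of Proposition \ref{prop:line-bound} almost verbatim, with the main substantive change being that the stability estimate in Fact \ref{lem:stab-half}, currently stated only for $\delta \leq \delta_0$, must be extended to all $\delta \in (0,1/2]$. Let $t := \Phi^{-1}(\delta) \leq 0$, set $\nu := (1/10)\sqrt{\epsilon \log d}$, and write $h_j = (1-\epsilon)g + \sqrt{2\epsilon - \epsilon^2}\,z_j$ with $z_1,\ldots,z_d \sim N(0,1)$ i.i.d.\ and independent of $g$. I would split the expectation according to the three events $\{g \leq t-\nu\}$, $\{g \in [t-\nu,t]\}$, and $\{g \geq t\}$, just as in the original proof.

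For $\{g \geq t\}$: a two-line case analysis on $g \geq 0$ versus $t \leq g < 0$ shows $(1-\epsilon)g \geq t$ (using $t \leq 0$), so $h_j \leq t$ forces $z_j \leq 0$ and the conditional joint probability is at most $2^{-d}$. For $\{g \in [t-\nu,t]\}$: analogously $(1-\epsilon)g \geq t-\nu$, so $h_j \leq t$ forces $\sqrt{2\epsilon - \epsilon^2}\,z_j \leq \nu$ and hence $z_j \leq (1/10)\sqrt{\log d}$ (for $\epsilon \leq 1/2$); standard concentration of the maximum of $d$ i.i.d.\ standard Gaussians bounds the conditional joint probability by $1/2$ whenever $d$ is at least an absolute constant, which is guaranteed by the hypothesis $d \geq 10 \log(1/\epsilon\delta)$. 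For $\{g \leq t-\nu\}$ we use the trivial bound $1$. Summing,
\[
\Ex_g \Pr\bigl[\wedge_j \{h_j \leq t\}\bigr] \;\leq\; \Phi(t) \;-\; \tfrac{1}{2}\bigl(\Phi(t) - \Phi(t-\nu)\bigr) \;+\; 2^{-d}.
\]

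It remains to lower-bound the stability gap $\Phi(t) - \Phi(t-\nu) \geq \nu\,\phi(t-\nu)$, where $\phi$ denotes the standard normal density. For $\delta \leq \delta_0$ this is exactly Fact \ref{lem:stab-half}. For $\delta \in [\delta_0, 1/2]$ both $|t|$ and $\sqrt{\log(1/\delta)}$ are bounded by constants depending only on $\delta_0$; by choosing $\epsilon(\delta) > 0$ small enough (as a function of $\delta$) we can force $\nu \leq 1$, so $\phi(t-\nu) \geq \phi(t-1) \geq c(\delta_0) > 0$ by direct computation, giving $\nu\,\phi(t-\nu) \geq C'\delta\sqrt{\epsilon \log d \log(1/\delta)}$ for an appropriate absolute constant $C'$ after absorbing the $\delta_0$-dependent constants. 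Combining the two ranges of $\delta$ and setting $\epsilon(\delta) := \min\{\delta_0^{\ast}, \epsilon_1(\delta)\}$ yields the desired bound; the $2^{-d}$ residual is dominated by the main term thanks to $d \geq 10\log(1/\epsilon\delta)$. The main obstacle is the regime $\delta \uparrow 1/2$: here $|t| \downarrow 0$, and ensuring that $\phi(t-\nu)$ does not fall substantially below $\phi(t)$ forces $\nu$ to be small relative to $|t|$ (or at least below an absolute constant), which is precisely why the threshold $\epsilon(\delta)$ must be allowed to depend on $\delta$ rather than being an absolute constant as in Proposition \ref{prop:line-bound}.
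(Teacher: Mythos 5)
Your proof is correct and follows the paper's skeleton: the paper handles Proposition \ref{prop:line-bound-2} in one line, by rerunning the proof of Proposition \ref{prop:line-bound} verbatim and replacing Fact \ref{lem:stab-half} with Fact \ref{lem:stab-half-asymp}, a halfspace-stability bound valid for every $\delta \in (0,1/2]$ (for $\epsilon \leq \epsilon(\delta)$) quoted as a consequence of the Gaussian isoperimetric inequality. You keep the identical three-event decomposition, the $2^{-d}$ bound on $\{g \geq t\}$ and the $1/2$ bound on $\{g \in [t-\nu,t]\}$, but instead of citing that isoperimetric fact as a black box you supply the stability gap by hand: Fact \ref{lem:stab-half} when $\delta \leq \delta_0$, and for $\delta \in [\delta_0,1/2]$ the elementary estimate $\Phi(t)-\Phi(t-\nu) \geq \nu\,\phi(t-\nu) \geq \nu\,\phi(t-1) \geq c(\delta_0)\,\nu$ (using that $\phi$ is increasing on $(-\infty,0]$ and $|t| \leq |\Phi^{-1}(\delta_0)|$), after which the $\delta_0$-dependent constants are absorbed into $C$; the bookkeeping with $\delta\sqrt{\log(1/\delta)} = O(1)$ in the constant regime and with $2^{-d}$ absorbed via $d \geq 10\log(1/\epsilon\delta)$ goes through. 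What your route buys is a self-contained, elementary replacement for the cited isoperimetric fact; what the paper's route buys is brevity. Two small corrections to your closing commentary: you do not need to shrink $\epsilon(\delta)$ to force $\nu \leq 1$, since $\log(1/\delta) \geq \log 2$ and the standing hypothesis $\sqrt{\epsilon \log d \log(1/\delta)} \leq 1/(10C)$ already make $\nu = \tfrac{1}{10}\sqrt{\epsilon \log d}$ at most an absolute constant; and the regime $\delta \uparrow 1/2$ is not where the density argument is delicate (there $|t|$ is tiny and $\phi(t-\nu) \geq \phi(-1-|t|)$ is bounded below by an absolute constant, the density being largest near $0$), so in your version $\epsilon(\delta)$ could in fact be taken absolute --- the $\delta$-dependence in the statement is an artifact of the paper's appeal to Fact \ref{lem:stab-half-asymp}, and in any case allowing it is harmless.
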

The proof of the above is identical to the proof of Proposition \ref{prop:line-bound}, where instead of Fact \ref{lem:stab-half}, we use the stability bounds from the following lemma which follows directly from the {\em Gaussian Isoperimetric Inequality}.
\begin{fact}[Proposition 5.27, 11.49~\cite{OD14}]				\label{lem:stab-half-asymp}
	For all $\delta \in (0,1/2]$, there exists $\epsilon(\delta) \in (0,1)$ such that for all $\epsilon \leq \epsilon(\delta)$ we have 
	\[
	\gamma^n(\Phi(t)) - \gamma^n(\Phi(t - \epsilon)) \geq \frac{\epsilon}{10}\sqrt{\log (1/\delta)}
	\]
	where $t = \Phi^{-1}(\delta)$.
\end{fact}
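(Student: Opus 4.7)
The plan is to reduce the statement to a one-dimensional density estimate via rotation invariance, and then invoke the standard lower bound on the Gaussian isoperimetric profile. By rotation invariance of $\gamma^n$, any halfspace of Gaussian measure $\delta$ is (after rotation) of the form $H_t = \{x \in \mathbb{R}^n : x_1 \leq t\}$ with $t = \Phi^{-1}(\delta)$, so that $\gamma^n(H_t) = \Phi(t)$ and $\gamma^n(H_{t-\epsilon}) = \Phi(t-\epsilon)$. The claim therefore reduces to
\[
\Phi(t) - \Phi(t-\epsilon) \;=\; \int_{t-\epsilon}^{t}\varphi(s)\,ds \;\geq\; \frac{\epsilon}{10}\sqrt{\log(1/\delta)},
\]
where $\varphi$ is the standard one-dimensional Gaussian density.

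For the integral lower bound I would exploit monotonicity: $\delta \leq 1/2$ forces $t \leq 0$, so $\varphi$ is increasing on $(-\infty, t]$ and $\varphi(s) \geq \varphi(t-\epsilon)$ on the integration interval. By continuity of $\varphi$ at $t$, one picks $\epsilon(\delta)$ small enough (for example, $\epsilon(\delta) \leq |t|/2$, with a small separate handling of the boundary case $\delta = 1/2$) so that $\varphi(t-\epsilon) \geq \varphi(t)/2$ for every $\epsilon \leq \epsilon(\delta)$. This yields
\[
\Phi(t) - \Phi(t-\epsilon) \;\geq\; \frac{\epsilon}{2}\varphi(t) \;=\; \frac{\epsilon}{2}\,\mathcal{U}(\delta),
\]
where $\mathcal{U}(\delta) := \varphi(\Phi^{-1}(\delta))$ is the Gaussian isoperimetric profile.

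The remaining step is the estimate $\mathcal{U}(\delta) \geq \tfrac{1}{5}\sqrt{\log(1/\delta)}$ on the relevant range of $\delta$, which is the content of Propositions~5.27 and~11.49 of~\cite{OD14}. Alternatively, one can re-derive what is needed from bounds already in the paper: Fact~\ref{fact:cdf-bound} gives $\varphi(t) \geq |t|\Phi(t) = |t|\delta$ for $t<0$, and Claim~\ref{cl:t-bound} gives $|t| \geq \tfrac{1}{2}\sqrt{\log(1/\delta)}$ once $\delta$ is small, so that $\varphi(t) \geq \tfrac{\delta}{2}\sqrt{\log(1/\delta)}$. The main obstacle is pinning down the universal constant $1/10$: it has to absorb the factor $1/2$ from the monotone-density step, the constant in the $\mathcal{U}(\delta)$ estimate, and any slack introduced in choosing $\epsilon(\delta)$. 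Once these constants are fixed consistently across the full range $\delta \in (0, 1/2]$, the argument is a routine consequence of rotation invariance and standard Gaussian density bounds.
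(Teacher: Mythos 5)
The paper gives no argument for this fact at all---it is quoted directly from \cite{OD14} (Propositions 5.27, 11.49)---so there is no internal proof to match; your route (rotation invariance to reduce to dimension one, lower-bounding $\int_{t-\epsilon}^{t}\varphi(s)\,ds$ by $\epsilon\,\varphi(t-\epsilon)\geq \tfrac{\epsilon}{2}\varphi(t)$ for $\epsilon\leq\epsilon(\delta)$, then an estimate on the isoperimetric profile $\varphi(\Phi^{-1}(\delta))$) is the natural one and is essentially what that citation encodes. The genuine gap is in the last step. The estimate you invoke, $\varphi(\Phi^{-1}(\delta))\geq \tfrac15\sqrt{\log(1/\delta)}$, is false for small $\delta$: the Gaussian density never exceeds $1/\sqrt{2\pi}$, so $\Phi(t)-\Phi(t-\epsilon)\leq \epsilon/\sqrt{2\pi}$ for every $\epsilon$, while the claimed lower bound $\tfrac{\epsilon}{10}\sqrt{\log(1/\delta)}$ exceeds $\epsilon/\sqrt{2\pi}$ once $\log(1/\delta)>100/(2\pi)$; in fact $\varphi(\Phi^{-1}(\delta))=\Theta\bigl(\delta\sqrt{\log(1/\delta)}\bigr)\to 0$ as $\delta\to 0$. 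What \cite{OD14}---and your own fallback computation via Fact~\ref{fact:cdf-bound} and Claim~\ref{cl:t-bound}---actually deliver is the $\delta$-scaled bound $\varphi(\Phi^{-1}(\delta))\geq \tfrac{\delta}{2}\sqrt{\log(1/\delta)}$, hence $\Phi(t)-\Phi(t-\epsilon)\geq \tfrac{\epsilon\delta}{4}\sqrt{\log(1/\delta)}$. So this is not a matter of ``pinning down the universal constant $1/10$'': no choice of constant makes the statement as printed true on all of $(0,1/2]$.

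What your attempt really exposes is that the Fact is misstated: either a factor of $\delta$ is missing on the right-hand side (the version consistent with how it is used, since the conclusion of Proposition~\ref{prop:line-bound-2} carries the factor $\Phi(t)$, exactly as Fact~\ref{lem:stab-half} does in the small-$\delta$ regime), or $\delta$ must be restricted to a range bounded away from $0$, e.g.\ the $\delta=1/2$ case needed for Theorem~\ref{thm:oct-hardness}, where the missing $\delta$ is absorbed into the constant. With the $\delta$ factor inserted, your argument goes through, up to one small slip: choosing $\epsilon(\delta)\leq |t|/2$ does not by itself guarantee $\varphi(t-\epsilon)\geq\varphi(t)/2$ when $|t|$ is large (you need $\epsilon$ small compared to $1/|t|$), but since $\epsilon(\delta)$ is allowed to depend on $\delta$, continuity of $\varphi$ at $t$ already suffices.
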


\newcommand{\hyperedge}{e}

\section{\hug~to \strug}

Using the hardness of \hug, we now reduce to \strug~in the following steps.

\subsection{\hug~with uniform weights}

The following uniformization step is well known for the case $d = 2$ i.e, \ug. Our proof for general $d$ is an adaptation of Proposition 7.2 from \cite{LRV13} to our setting.
\begin{theorem}			\label{thm:unif-weights}
	Given a hyperedge weighted \hug~instance $\cH(V,E,[k],\{\pi_{e,u}\}_{e \in E, u \in e},\cD_{\cH})$, there exists an efficient procedure which constructs \hug~instance $\cH'=(V',E',[k],\{pi_{e,v}\}_{e \in E', v\in e},\cD_{\cH'})$ such that 
	\begin{itemize}
		\item $|V'| = \bigo{d |V|^3 |E|^3}$
		\item $\cH'$ satisfies
		\[
		\left| \argmax_{\sigma:[V] \to [k] } \Pr_{e \sim \cD_{\cH}}\left[\sigma \mbox { satisfies } e\right] - 
		\argmax_{\sigma':[V'] \to [k] } \Pr_{e \sim \cD_{\cH'}}\left[\sigma' \mbox { satisfies } e\right]  \right| \leq \frac{1}{|V|} 
		\]
		i.e., the optimal values of $\cH$ and $\cH'$ differ by at most $1/|V|$.
		\item For every $v \in V'$, we have 
		\[
		\Pr_{e \sim \cD_{\cH'}} \left[ v \in e \right] = \frac{d}{|V'|}.
		\]
	\end{itemize}
\end{theorem}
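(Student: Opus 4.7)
My strategy adapts the uniformization trick for \ug~(cf.\ Proposition~7.2 of~\cite{LRV13}) to the $d$-ary hyperedge setting: first discretize the hyperedge distribution $\cD_\cH$ to rationals with a common denominator $N$, then construct $\cH'$ by replacing each vertex with a cloud of copies and each hyperedge with a bundle of copies, arranged so that every vertex-copy appears in exactly one hyperedge of $\cH'$.

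Concretely, I would take $N = \Theta(|V|^3 |E|^3)$, set $c_e := \lfloor N\,\cD_\cH(e) \rfloor$, and let $\cD'$ be the renormalization of the integer weights $(c_e)_{e \in E}$. Since $\|\cD_\cH - \cD'\|_1 \leq 2|E|/N$, replacing $\cD_\cH$ by $\cD'$ changes the value of any fixed assignment by $o(1/|V|)$. For each $v \in V$, introduce a cloud $\cC_v$ of $d_v := \sum_{e \ni v} c_e$ copies, so that $|V'| = \sum_v d_v = d \sum_e c_e \leq dN = O(d|V|^3|E|^3)$. For each $e = (v_1,\ldots,v_d) \in E$, introduce $c_e$ hyperedge-copies; fixing, for every $v$, a bijection between $\cC_v$ and the set of incidences $\{(e,l) : e \ni v,\ l \in [c_e]\}$, place the $l$-th copy of $e$ on the specific vertex-copies identified by those bijections and inherit the constraints $\pi_{e, v_i}$. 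Letting $\cD_{\cH'}$ be uniform over the $N' = \sum_e c_e$ resulting hyperedges yields $\Pr_{\hat e \sim \cD_{\cH'}}[v^{(j)} \in \hat e] = 1/N' = d/|V'|$, which is the third bullet.

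The forward direction of the value claim is immediate: extending any $\sigma : V \to [k]$ by $\sigma'(v^{(j)}) := \sigma(v)$ preserves the satisfaction of every hyperedge-copy, giving $\text{val}(\cH') \geq \text{val}(\cH) - o(1/|V|)$. The main obstacle is the reverse direction, which requires rounding an arbitrary $\sigma' : V' \to [k]$ back to a labeling $\sigma : V \to [k]$ with value loss at most $1/|V|$. The subtlety is that different copies of the same $v$ may carry different labels, so a naive uniform-random decoding $\sigma(v) := \sigma'(v^{(J_v)})$ can lose multiplicative factors from correlations of $\sigma'$ across the $d$ vertex-copies sharing a common hyperedge. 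Following~\cite{LRV13}, I would additionally randomize the incidence-to-copy bijection: under this extra randomization, the $d$ copies in a uniformly chosen hyperedge of $\cH'$ become approximately independent samples from the empirical label distributions on their respective clouds, so the expected value of the randomly-decoded $\sigma$ under $\cD'$ matches $\text{val}_{\sigma'}(\cH')$ up to an $o(1/|V|)$ error, the choice $N = \Omega(|V|^3|E|^3)$ being exactly what is needed to control this concentration. Derandomizing via conditional expectations then yields a deterministic bijection achieving the claimed $1/|V|$ gap.
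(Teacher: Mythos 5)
In your instance $\cH'$ every vertex-copy is placed in \emph{exactly one} hyperedge-copy (each copy of $v$ is identified with a single incidence $(e,l)$), so the hyperedges of $\cH'$ are pairwise vertex-disjoint. But any single \hug~constraint is satisfiable in isolation: on a hyperedge $e$ with bijections $\pi_{e,v_i}$ one can set $\sigma'(v_i^{(\cdot)}) = \pi_{e,v_i}^{-1}(c)$ for any fixed $c \in [k]$. Hence a vertex-disjoint union of hyperedge constraints always has value $1$, i.e.\ $\mathrm{val}(\cH') = 1$ for \emph{every} realization of your incidence-to-copy bijection. The second bullet of the theorem (optimal values within $1/|V|$) therefore fails whenever $\mathrm{val}(\cH) < 1 - 1/|V|$, which is exactly the NO case the reduction must preserve. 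Randomizing the bijection and appealing to concentration cannot repair this: the statement concerns the optimum of the instance itself, and that optimum is $1$ for every fixed bijection, so no averaging or derandomization over bijections changes it. (The issue you flagged, correlated labels across copies of the same $v$, is a symptom: with degree-$1$ copies the labeling of $\cH'$ is free to coordinate per hyperedge with no consistency pressure at all.)

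The paper avoids this by making the copies of a vertex genuinely interchangeable: after the same rounding/pre-processing step, each vertex $v$ gets $n(v) \propto \Pr_{e \sim \cD_{\cH}}[v \in e]$ copies, and each original hyperedge $e = (v_1,\ldots,v_d)$ is replaced by \emph{all} copies $e_I$, one for every tuple $I \in [n(v_1)] \times \cdots \times [n(v_d)]$, each carrying weight $\mu(e)/\prod_i n(v_i)$. This keeps the per-vertex participation probability exactly $d/|V'|$ (the output instance stays weighted on hyperedges; only vertex-uniformity is required), and soundness becomes an \emph{exact} averaging identity: decoding by sampling an independent uniform copy $i(v) \in [n(v)]$ for each $v$ makes the expected fraction of hyperedges of $\cH$ satisfied equal to the $\cD_{\cH'}$-weight satisfied by $\sigma'$, with no correlation loss, no union bound over the $k^{|V'|}$ labelings, and no derandomization beyond the trivial ``some outcome attains the expectation.'' If you want to keep your copy-based picture, you would have to either include all copy-combinations as the paper does, or add explicit consistency constraints between copies of the same vertex — but the latter changes the constraint structure and degrees and is not what the theorem asks for.
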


\begin{proof}
	We begin with an elementary pre-processing step. We construct an intermediate \hug~instance $\cH_0(V_0,E_0,[k],\{\pi_{e,u}\}_{e \in E_0, u \in e},\cD_{\cH})$ from $\cH$ as follows. Let $\mu_\cH$ be the measure on $E$ corresponding to distribution $\cD_{\cH}$. 
	\begin{enumerate}
		\item Let $\mu'(e)$ be $\mu_{\cH}(e)$ rounded down to a precision $1/(2|V|^3|E|^3)$. Delete all $e$ such that $\mu'(e) = 0$. 
		\item Delete from $V$ every vertex $u$ and all hyperedges constraints incident on it such that $\sum_{e \ni u} \mu'(e) < 1/|V|^2|E|^2$. 
	\end{enumerate}
	Denote the set of remaining vertices by $V_0$ and the set of remaining edges by $E_0$. Define 	
	\[
	\mu_{\cH_0}(e) \defeq \frac{\mu'(e)}{\sum_{e \in E_0} \mu'(e)}.
	\]    
	By construction, we have 
	\begin{equation}
	\label{eq:mupr}
	\mu_{\cH}(e) - \frac{1}{2 |V|^3 |E|^3} \leq \mu'(e) \leq \mu_{\cH}(e) \qquad \forall e \in E.
	\end{equation}
	Let $E_1$ and $E_2$ denote the subset of constraints that are deleted in steps $1$ and $2$ respectively. 
	\[ \mu_\cH(E_2) \leq \sum_{u \in V } \frac{1}{|V|^2 |E|^2}
	\leq \frac{1}{2 |E|^2}. \]
	Therefore, 
	\begin{align*} 
		\sum_{e \in E_0} \mu'(e) & \geq \sum_{e \in E \setminus E_2} \left(\mu_{\cH}(e) - \frac{1}{2 |V|^3 |E|^3} \right)
		\geq \mu_{\cH}(E) - \mu_{\cH}(E_2) -  |E|\frac{1}{2 |V|^3 |E|^3} \\ 
		& \geq 1 - \frac{1}{2|E|^2} - \frac{1}{2 |V|^3 |E|^2} \geq 1 - \frac{1}{|E|^2}. 
	\end{align*} 
	Since, $E_0 \subseteq E$ and  we have $\sum_{e \in E_0} \mu'(e) \leq 1$.
	Therefore, from eqn \ref{eq:mupr},  for every $e \in E_0$ we have 
	\[
	\mu_{\cH}(e) - \frac{1}{|E|^2} \leq \mu_{\cH_0}(e) \leq \frac{1}{1 - \frac{1}{|E|^2}}\mu_{\cH}(e), 
	\]		 
	and hence
	\[ |\mu_{\cH_0}(e) - \mu_{\cH}(e)| \leq \max \left\{\frac{1}{|E|^2}, \frac{1}{|E|^2 - 1} \mu_{\cH}(e) \right\}
	\leq \frac{1 }{|E|^2 - 1} . \]
	Therefore, $\|\mu_{\cH} - \mu_{\cH_0} \|_{\rm TV} \leq 1/n$. This implies that we can work with $\cH_0$ instead of $\cH$ by losing a factor of $1/n$ in the compleleteness and soundness. 
	
	Therefore going forward we shall assume $\cH$ has weights are rounded off to a precision of $1/(|E|^3|V|^3)$. Furthermore, define the map $\nu:V \to \mathbbm{R}_+$ as $\nu(u) = \Pr_{ e \sim \mu_{\cH_0}}\left[u \in e\right]$. Then we can construct nonnegative integers $\{n(v)\}_{v \in V}$ and $N \defeq d |V|^3 |E|^3$ such that (i) $\sum_{v \in V} n(v) = N$ and (ii) $n(v) = \nu(v) N/d$ for every vertex $v$. Now we describe the construction of the \hug~instance $\cH'$.
	
	{\bf Vertex Set}. For every vertex $v \in V$, we introduce $n(v)$ copies of $v$, and denote them by $S_v \defeq \{(v,1),\ldots,(v,n(v))\}$. The vertex set of $\cH'$ is then $V' = \cup_{v \in V} S_v$.
	
	{\bf Constraint Set}. For every hyperedge constraints $e$ supported on vertices $(v_1,\ldots,v_d)$ (with bijections $\{\pi_{e,v_i}\}_{i \in [d]}$) and for every choice of $I \in [n(v_1)] \times [n(v_2)] \times \cdots \times [n(v_d)]$, we introduce a hyperedge constraint $e_I$ on vertices $\{(v_1,I_1),\ldots,(v_d,I_d)\}$ with weight
	\begin{equation}
	\mu_{\cH'}(e_I) \defeq \Pr_{e' \sim \mu_{\cH_0}} \left[ e' = e \right] \Pr_{I' {\sim} \times^d_{i = 1} [n(v_i)] } \left[I' = I \right] = \frac{\mu_{\cH_0}(e)}{\prod_{i \in [d]} n(v_i)}. 
	\end{equation}
	The constraint on hyperedge $e_I$ is given the following: 
	\[
	\pi_{e_I,(v_a,I_a)}(\sigma(v_a,I_a)) = \pi_{e_I,(v_b,I_b)}(\sigma(v_b,I_b)), \qquad\forall a,b \in [d].
	\]
	where for every $a \in [d]$, $\pi_{e_I,(v_a,I_a)} = \pi_{e,v_a}$. For brevity, we shall use $\cC(e)$ to denote the cluster of hyperedges in $\cH'$ corresponding to the hyperedge constraint $e$. The above completes the construction of the \hug~instance $\cH'$. Now we analyze the guarantees of the construction. We first show that the vertices in $\cH'$ have uniform weights. Indeed, for a fixed vertex $(v,i) \in V'$ we have 
	\[
	\Pr_{e'_I \sim \mu_{\cH'}} \left[(v,i) \in e'_I\right] = \Pr_{e' \sim \mu_{\cH_0}}\left[ v \in e' \right] \Pr_{e',I}\left[I_v = i | v \in e'\right]
	= \nu(u) \left(\frac{1}{n(v)}\right) = \frac{d}{N}.
	\]
	where the last step follows from our choice of the integers $n(v)$ and $N$. Now we argue completeness and soundness.
	
	{\bf Completeness}. Suppose there exists $\sigma:V \to [k]$ which satisfies $(1 - \gamma)$-fraction of hyperedges. Construct the labeling $\tilde{\sigma}:V' \to [k]$ as follows. For every $v \in V$ and $i \in [n(v)]$, we assign the label $\tilde{\sigma}(v,i) \defeq \sigma(v)$. Then the fraction of constraints in $\cH'$ satisfied by $\tilde{\sigma}$ is given by 
	\begin{align*}
		&\Pr_{e' = (v_1,\ldots,v_d) \sim \mu_{\cH'}} \left[ \forall a,b \in [d], \quad \pi_{e',v_a}(\tilde{\sigma}(v_a)) = \pi_{e',v_b}(\tilde{\sigma}(v_b))  \right] \\
		&=\Pr_{e = (v_1,\ldots,v_d) \sim \mu_{\cH_0}} \Pr_{I \sim \times^d_{i = 1} [n(v_i)]} \left[ \forall a,b \in [d], \quad \pi_{e_I,(v_a,I_a)}(\tilde{\sigma}((v_a,I_a))) = \pi_{e_I,(v_b,I_b)}(\tilde{\sigma}((v_b,I_b)))  \right] \\
		&=\Pr_{e = (v_1,\ldots,v_d) \sim \mu_{\cH_0}} \Pr_{I \sim \times^d_{i = 1}[n(v_i)]} \left[ \forall a,b \in [d],\quad \pi_{e,v_a}({\sigma}(v_a)) = \pi_{e,v_b}({\sigma}(v_b)) \right] \\
		&=\Pr_{e = (v_1,\ldots,v_d) \sim \mu_{\cH_0}}  \left[ \forall a,b \in [d],\quad \pi_{e,v_a}({\sigma}(v_a)) = \pi_{e,v_b}({\sigma}(v_b)) \right] \\
		&\geq 1 - \gamma . 
	\end{align*}
	
	{\bf Soundness}. Suppose there exists a labeling $\sigma:V' \to [k]$ which satisfies at least $(1 - \gamma)$-fraction of the constraints. Then consider the following randomized labeling procedure for $V$. For every vertex $v \in V$, sample $i(v) \sim [n(v)]$ uniformly at random and label $\sigma'(v) \defeq \sigma((v,i(v)))$. We now lower bound the expected fraction of hyperedge constraints satisfied by $\sigma'$ as follows
	\begin{align*}
		&\Ex_{e = (v_1,\ldots,v_d) \sim \mu_{\cH_0}} \Ex_{\sigma} \left[\mathbbm{1}\left(\forall a,b \in [d], \pi_{e,v_a}({\sigma}'(v_a)) = \pi_{e,v_b}({\sigma}'(v_b)) \right)\right] \\
		&\Ex_{e = (v_1,\ldots,v_d) \sim \mu_{\cH_0}} \frac{1}{\prod_{i \in [d]} n(v_i)} \sum_{I\in \times^d_{i= 1} [n(v_i)]} \mathbbm{1}\left(\forall a,b \in [d], \pi_{e,v_a}({\sigma}(v_a,I_a)) = \pi_{e,v_b}({\sigma}(v_b,I_b)) \right) \\
		&= \Ex_{e = (v_1,\ldots,v_d) \sim \mu_{\cH_0}} \Ex_{I \sim \times^d_{i = 1} [n(v_i)]} \left[\mathbbm{1}\left(\forall a,b \in [d], \pi_{e_I,(v_a,I_a)}({\sigma}(v_a,I_a)) = \pi_{e_I,(v_{b},I_b)}({\sigma}(v_b,I_b)) \right)\right] \\
		&= \Pr_{e \sim \cD_{\cH},e_I \sim \cC(e)} \left[\forall a,b \in [d], \pi_{e_I,(v_a,I_a)}({\sigma}(v_a,I_a)) = \pi_{e_I,(v_{b},I_b)}({\sigma}(v_b,I_b))\right] \\
		&\geq 1 - \gamma
	\end{align*}
	which gives us the soundness direction.
	
\end{proof}

\subsection{\hug~to \sbug}
\label{sec:hypUG-BipUG}

\begin{theorem}
\label{thm:bipartite-ug}
	Given an instance uniformly weighted $\cH(V,E,[k],\{\pi_{e,v}\}_{e,v},\cD_H)$ of \hug~from Theorem \ref{thm:unif-weights}, we can construct an instance $G_{SB}(V_L,V_R,E_{SB}, \{\pi_{e,v}\},\mu_L)$ of \sbug~such that $|V_R| = n$.  Here $\mu_L: V_L \to \mathbbm{R}_{\geq 0}$ is a measure over left vertices which satisfy the following:
	\begin{itemize}
		\item[(i)] {\bf Completeness}: If there exists a labeling $\sigma:V \to [k]$ which  satisfies $(1 - \epsilon)$-fraction of hyperedges in $\cH$, then there exists a labeling a labeling $\sigma:V_L \cup V_R \to [k]$ such that 
		\[
		\Pr_{u \sim \mu_L}\left[\forall v \in N_{G_{SB}}(u), \pi_{vu}(\sigma(v)) = \sigma(u) \right] \geq 1- \epsilon.
		\]
		\item[(ii)] {\bf Soundness}: If there exists a labeling $\sigma':V_L \cup V_R \to [k]$ such that 
		\[
		\Pr_{u \sim \mu_L}\left[\forall v \in N_{G_{SB}}(u), \pi_{vu}(\sigma'(v)) = \sigma'(u) \right] \geq 1 - \gamma,
		\]
		then there exists a labeling $\sigma''$ which satisfies $(1 - \gamma)$-fraction of hyperedges in $\cH$.
		\item[(iii)] {\bf Right Uniformity:} For every vertex $v \in V_R$ we have $\Pr_{u \sim \mu_L} \left[ v \in N_{G_{SB}}(u) \right] = \frac{d}{|V_R|}$.
		\item[(iv)] {\bf Left Regularity:} The degree of every vertex $v \in V_L$ is $d$.
	\end{itemize}
\end{theorem}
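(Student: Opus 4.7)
The plan is to follow the construction already sketched in the overview (\hug~to \sbug paragraph): introduce one left vertex for every hyperedge of $\cH$, keep the original $V$ as the right side, and connect $u_e$ to each $v_i$ in $e = (v_{i_1},\ldots,v_{i_d})$ with the bijection $\pi_{e,v_{i_j}}$ that the hyperedge carries. Formally, I will set $V_L = E$, $V_R = V$, and define $E_{SB} = \{(u_e, v_{i_j}) : e = (v_{i_1},\ldots,v_{i_d}) \in E, \ j \in [d]\}$ with the bijection on edge $(u_e, v_{i_j})$ being $\pi_{e,v_{i_j}}$. The measure on left vertices is taken to be the hyperedge distribution $\cD_H$ from $\cH$, i.e.\ $\mu_L(u_e) = \Pr_{e' \sim \cD_H}[e' = e]$.

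The left regularity (iv) is then immediate: every $u_e$ has exactly $d$ neighbors since every hyperedge has arity $d$. The right uniformity (iii) is where the uniform-weight preprocessing from Theorem~\ref{thm:unif-weights} is essential: by that theorem, for every $v \in V$, $\Pr_{e \sim \cD_H}[v \in e] = d/|V| = d/|V_R|$, which is exactly the required marginal, since $v \in N_{G_{SB}}(u_e) \iff v \in e$.

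For completeness (i), given a labeling $\sigma:V \to [k]$ satisfying a $(1-\epsilon)$-fraction of hyperedges, define $\sigma':V_L \cup V_R \to [k]$ by $\sigma'(v) = \sigma(v)$ for $v \in V_R$ and for each $u_e \in V_L$ choose $\sigma'(u_e)$ to be the common value $\pi_{e,v_{i_j}}(\sigma(v_{i_j}))$ (well defined when $e$ is satisfied by $\sigma$; otherwise arbitrary). Whenever $\sigma$ satisfies the hyperedge $e$, all $d$ edges incident on $u_e$ in $G_{SB}$ are satisfied by $\sigma'$, so the $\mu_L$-mass of fully satisfied left vertices is at least $1-\epsilon$. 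For soundness (ii), in the reverse direction, given $\sigma'$ let $\sigma'' = \sigma'|_{V_R}$. For any left vertex $u_e$ all of whose incident constraints are satisfied, the constraints $\pi_{e,v_{i_j}}(\sigma''(v_{i_j})) = \sigma'(u_e)$ hold simultaneously for all $j \in [d]$, which is exactly the condition that $\sigma''$ satisfies the hyperedge $e$; hence the fraction of hyperedges of $\cH$ satisfied by $\sigma''$ is at least the $\mu_L$-mass of fully-satisfied left vertices, namely $1-\gamma$.

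I do not foresee a serious obstacle: the construction is a direct ``hyperedge-vertex incidence'' bipartite graph, the completeness/soundness correspondence is a tautology up to the choice of $\sigma'(u_e)$, and both degree conditions are enforced structurally. The only point that needs care, and which is the reason Theorem~\ref{thm:unif-weights} was proved first, is that the right-uniformity condition (iii) relies on the uniform vertex-incidence probabilities of the \hug~instance; without that preprocessing, (iii) would fail and later degree-reduction steps (which subsample neighborhoods to make right-degrees small) would not have the clean unbiased-sampling guarantees needed downstream.
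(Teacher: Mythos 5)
Your proposal is correct and follows essentially the same route as the paper: it builds the hyperedge--vertex incidence bipartite graph with $\mu_L$ equal to the hyperedge distribution, derives (iii) from the uniformity guaranteed by Theorem~\ref{thm:unif-weights} and (iv) from the uniform arity $d$, and proves completeness and soundness by the same labeling of $u_e$ with the common value $\pi_{e,v}(\sigma(v))$ and restriction of $\sigma'$ to $V_R$, respectively. No gaps.
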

\begin{proof}
Given a \hug~instance $\cH$ as in Theorem \ref{thm:unif-weights}, we construct an instance of \sbug~as follows. 
	Let $\mu_{\cH}: E \to [k]$ denote the probability distribution over the hypergraph constraints in $\cH$.
\begin{itemize}
	\item The right vertex set $V_r$ is the vertex set of $\cH$.
	\item For every \hug~constraint $\hyperedge$, we add a vertex $u_{\hyperedge}$ to $V_L$. 
	\item We define the measure $\mu_L$ as $\mu_L(u_e) = \mu_\cH(e)$ for every hyperedge constraint $e$. 
\end{itemize}

From the above construction, for any right vertex $v \in V_R$  we have 
\[ 
\Pr_{u \sim \mu_L} \left[ v \in N_{G_{SB}}(u) \right] = \Pr_{e \sim \mu_\cH} \left[ v \in e \right] = \frac{d}{n} 
\]
where the last step uses the uniformity guaranteed by the instances output in Theorem \ref{thm:unif-weights}. This implies the right uniformity property of $G_{SB}$. The left regularity of the graph $G_{SB}$ follows from the uniform arity of the hypergraph constraints. Now we analyze completeness and soundness of the reduction. 
	
	\paragraph{Completeness.} Suppose there exists a labeling $\sigma:V \to [k]$ which satisfies at least $1 - \epsilon$-fraction of hyperedge constraints. Then we construct the following labeling $\tilde{\sigma}: V_L \cup V_R \to [k]$. For any $v \in V_R = V$, we assign $\tilde{\sigma}(v) \defeq \sigma(v)$. For any hyperedge constraint $\hyperedge$ satisfied by $\sigma$, pick any $v \in N_{G_{SB}}\left(u_\hyperedge\right)$ and assign $\tilde{\sigma}(u_{\hyperedge}) \defeq \pi_{e,v}(\sigma(v))$. Note that since any such $\hyperedge$ is  satisfied, for any $v,v' \in \hyperedge$ we must have $\pi_{\hyperedge,v}(\sigma(v)) = \pi_{\hyperedge,v'}(\sigma(v'))$, and the choice of the vertex-bijection pair $(v,\pi_{e,v})$ used to label $u_\hyperedge$ does not matter. For the remaining $u_\hyperedge$ vertices, assign labels $\tilde{\sigma}(u_\hyperedge)$ arbitrarily. Therefore,
\begin{align*}
\Pr_{u_\hyperedge \sim \mu_L}\left[\forall v \in N_{G_{SB}}(u_\hyperedge), \pi_{v,u_{\hyperedge}}(\tilde{\sigma}(v)) = \tilde{\sigma}(u_\hyperedge) \right] 
= \Pr_{\hyperedge \sim \cD_\cH}\left[\forall v,v' \in \hyperedge, \pi_{\hyperedge,v}(\sigma(v)) = \pi_{\hyperedge,v}(\sigma(v')) \right] 
\geq 1 - \epsilon.
\end{align*}
\paragraph{Soundness.} Suppose there exists a labeling $\sigma': V_L \cup V_E \to [k]$ such that for a 
left vertex sampled according to $\mu_L$, the probability that all the constraints incident on it are satisfied
is at least $1 - \gamma$. Let $\sigma:V \to [k]$ be the restriction of labeling $\sigma'$ to the right vertex set $V_R$ i.e., for every $v \in V$, we assign $\sigma(v) = \sigma'(v)$. Now pick any hyperedge vertex $u_{\hyperedge} \in V_L$ such that all the constraints incident on it are satisfied by $\sigma'$.
	Let $e$ be supported on vertices $(v_1,v_2,\ldots,v_d)$. Then from the guarantee of $\sigma'$ we have $\pi_{e,v_j}(\sigma(v_j)) = \pi_{e,v_j}(\sigma'(v_j)) = \sigma'(e)$ for every $j \in [d]$ i.e, $\sigma$ satisfies hyperedge constraint $\hyperedge$. Since this holds for any satisfied hyperedge $u_\hyperedge$, at least $1 - \gamma$-measure of hyperedge constraints are satisfied by $\sigma$.
\end{proof}

\subsection{\sbug~to $D$-\sbug}				
\label{sec:BipUG-DBipUG}

Here we sparsify the \sbug~instances constructed in the previous section while approximately preserving the completeness and soundness.

\begin{theorem}				\label{thm:deg-red}
Let $\gamma,\epsilon \in (0,1)$ such that $\gamma \geq \epsilon^2$. 
There exists an efficient randomized procedure, that takes as input a \sbug~instance 
$G = (V_L,V_R,E,[k],\{\pi_{v \to u}\}_{(u,v) \in E},\mu)$ as constructed in Theorem \ref{thm:bipartite-ug}
and outputs a $(1-\epsilon,1 - \gamma)$-\sbug~ instance $G'' = (V_L'',V_R',E',[k],\{\pi_{v \to u}\}_{(u,v) \in E},\mu')$ such that 
the left degree of $G''$ is at most $d$ and the right degree is at most $D = C'' d \ell$, where $\ell = C''\epsilon^{-2} \log k$ (for some large enough constant $C'' > 0$), $\mu'$ is the uniform distribution on the left vertices and 
\begin{itemize}
		\item If $G$ is a YES instance, then there exists a subset $S \subseteq V_L''$ of size at least $\mu'(S) \geq 1 - 4\epsilon$ and a right labeling $\sigma:V \to [k]$ such that all the constraints incident on $S$ are satisfied.
		\item If $G$ is a NO instance, for any right labeling $\sigma:V \to [k]$ at most $(1 - \gamma/4)$-fraction of left vertices (w.r.t measure $\mu'$) have all the constraints incident on them satisfied.
	\end{itemize}
\end{theorem}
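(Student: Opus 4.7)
The construction of $G''$ follows the two-stage sketch in the introduction: stage one randomly sparsifies the right neighborhoods of $G$ to produce an intermediate \sbug~instance $G'$, and stage two prunes away damaged left vertices and any residual heavy right vertices, then reweights $\mu'$ to be uniform on the surviving left vertices.

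Concretely, I would fix $\ell := C''\epsilon^{-2}\log k$ and form $G'$ by, independently for each right vertex $v \in V_R$, sampling $\ell$ of its left-neighbors uniformly at random (keeping all its edges if $\deg_G(v) \leq \ell$) and retaining those edges with the original projection constraints. A left vertex $u \in V_L$ is called \emph{undamaged} if $G'$ retains all $d$ of its original incident edges. Let $V_L'' \subseteq V_L$ be the set of undamaged vertices, let $V_R' \subseteq V_R$ be the right vertices whose $G'$-degree is at most $D := C''d\ell$ (a safety net to absorb the low-probability event of anomalously large right-degrees from multi-edges or from $v$'s whose $\deg_G(v)$ exceeds $\ell$), and take $G''$ to be the restriction of $G'$ to $V_L'' \cup V_R'$ with $\mu'$ the uniform measure on $V_L''$. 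The right-uniformity of $G$ from Theorem \ref{thm:bipartite-ug} makes $\deg_G(v)$ identical across $v \in V_R$, so a direct first-moment computation shows that the probability a fixed left vertex is undamaged is at least $1 - O(\epsilon)$, giving $\Ex[\mu_L(V_L'')] \geq 1 - O(\epsilon)$. The completeness direction is then almost immediate: since the constraint set of any undamaged vertex is unchanged between $G$ and $G''$, a labeling that fully satisfies $S \subseteq V_L$ with $\mu_L(S) \geq 1 - \epsilon$ in $G$ still fully satisfies $S \cap V_L''$, a set of $\mu'$-measure at least $1 - 4\epsilon$ after reweighting.

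The main technical obstacle is the soundness direction. For a fixed right labeling $\sigma : V_R \to [k]$, let $T_\sigma \subseteq V_L$ be the deterministic set of left vertices for which all $d$ incident constraints in $G$ are satisfied by $\sigma$; since $V_L'' \subseteq V_L$ and undamaged vertices carry the same constraints in $G''$ as in $G$, the fully satisfied set in $G''$ under any extension of $\sigma$ equals exactly $T_\sigma \cap V_L''$. The plan is to show that $|T_\sigma \cap V_L''|/|V_L''|$ is within an additive $3\gamma/4$ of $|T_\sigma|/|V_L|$ with high probability, using McDiarmid's bounded-differences inequality applied to the $|V_R|$ independent right-vertex samples. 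Resampling one $v$'s sample can change the undamaged-status of at most $O(\ell)$ left vertices, yielding a bounded-differences parameter $O(\ell)$ per right-vertex coordinate and a per-labeling failure probability of $\exp(-\Omega(\gamma^2 |V_L|^2 / (|V_R|\,\ell^2)))$. A union bound over the $k^{|V_R|}$ possible right labelings then closes the soundness argument; the hypotheses $\gamma \geq \epsilon^2$ and $\ell = C''\epsilon^{-2}\log k$, together with the polynomial size blow-up built into the chain of reductions from Theorem \ref{thm:unif-weights} (which can be made to guarantee a sufficiently large ratio $|V_L|/|V_R|$), ensure the per-labeling failure probability is much smaller than $k^{-|V_R|}$. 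Combining these estimates with the small additive losses from the pruning of heavy right vertices and from the reweighting of $\mu_L$ to $\mu'$ then yields the claimed $(1-4\epsilon,\,1 - \gamma/4)$ guarantees.
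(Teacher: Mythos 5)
Your construction diverges from the paper's at a point where the argument genuinely breaks: the definition of $V_L''$ as the ``undamaged'' left vertices. In the instances produced by Theorem \ref{thm:bipartite-ug}, the left side has one vertex per hyperedge constraint and is in general far larger than $|V_R|$, and the right (unweighted) degrees are unbounded --- that is the whole reason this sparsification step exists. A fixed left vertex $u$ is undamaged only if \emph{every one} of its $d$ right neighbors happens to include $u$ among its $\ell$ sampled neighbors; for a single right vertex $v$ the chance of this is roughly $\ell\,\mu_L(u \mid v) = \ell n \mu_L(u)/d$ (using right uniformity), which is $\ll 1$ whenever $|V_L| \gg n\ell/d$, and raised to the $d$-th power it makes the undamaged set essentially empty. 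So the claimed first-moment bound $\Pr[u \text{ undamaged}] \geq 1 - O(\epsilon)$ is false. Two further problems compound this: right uniformity is a statement about the $\mu_L$-mass of neighborhoods ($\Pr_{u \sim \mu_L}[v \in N_G(u)] = d/n$), not about unweighted degrees, so ``$\deg_G(v)$ identical across $v \in V_R$'' is not available; and sampling neighbors uniformly from $N_G(v)$ rather than from $\mu_L(\cdot \mid v)$ discards the only mechanism for converting the hypotheses (which are $(1-\epsilon)$/$(1-\gamma)$ fractions \emph{with respect to $\mu_L$}) into statements about your uniform measure $\mu'$ on surviving vertices --- a set of tiny $\mu_L$-mass can be a large fraction of $V_L$ by count and vice versa.

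The paper's construction avoids all of this by making the sampled copies themselves the new left vertices: for each right vertex $v$ it draws $\ell$ copies from $\mu_L(\cdot \mid v)$, each copy retaining all $d$ of its original constraints, so $|V_L'| = \ell n$ exactly, the left degree stays $d$ automatically, and Observation \ref{lem:obs} (sample $v$ uniformly, then $u \sim \mu_L(\cdot \mid v)$; the result is distributed as $\mu_L$) shows that the uniform measure on the copies simulates $\mu_L$. Completeness and soundness then follow from Hoeffding over the $n\ell$ independent copies, with a union bound over the $k^n$ right labelings fixed by $\ell = \Theta(\epsilon^{-2}\log k)$; only the second pruning step (deleting right vertices of degree exceeding $C'd\ell$ together with their left neighbors) needs a Chernoff-plus-Markov argument, and it removes at most an $\epsilon$ (indeed $\gamma^2$) fraction of the copies. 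Your McDiarmid-plus-union-bound route additionally leans on an unsubstantiated assumption that $|V_L|/|V_R|$ can be made large relative to $\ell^2 \gamma^{-2}\log k$, which the reduction chain does not provide; but the primary gap is the construction itself, which must be replaced by resampling the left side rather than thinning edges and keeping only fully intact left vertices.
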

\begin{proof}
Let $n$ denote $|V_R|$.
	Our reduction to $D$-\sbug~has two steps. 
\begin{enumerate}
	\item First we construct an intermediate graph $G' = (V_L',V_R,E')$ as follows. For every right vertex $v \in V_R$, we sample a set of $\ell$-neighbors $N_{G'}(v) \subset N_{G}(v)$ where each neighbor is chosen according to $\mu_L(\cdot | v )$ i.e., the distribution over left vertices $u \in V_L$ conditioned on $v \in N_{G_{SB}}(u)$. The multi-set of left vertices is given by $V_L' \defeq \cup_{v \in V} N_{G'}(v)$. Note that for distinct right vertices $v,v'$, $N_{G'}(v) \cap N_{G'}(v')$ may not be empty; if there are $i$-copies of a single left vertex due to $i$-different right neighbors, we treat each of the $i$-copies as distinct vertices. Therefore, overall $|V_L'| = \ell n$. 
    The edge set is $E'$ is the set of edges in the subgraph induced by   $ V'_L \cup V_R$, and we also add the constraints corresponding to the edges in $E'$.
	\item Next, we construct $G'' = (V_L'',V_R',E'', \{\pi_{v \to u}\}_{(u,v) \in E''})$ by removing all right vertices with degree greater than $C'd \ell$, and all their left neighbors. Here $C' > 0$ is large constant to be fixed later.
	\end{enumerate}
	We first analyze the properties of the intermediate \sbug~instance $G'$.
	\begin{lemma}
	With constant probability (over $G'$), the following holds.
	\begin{itemize}
		\item If $G$ is a YES instance, then there exists a subset $U_0' \subseteq V_L'$ of size at least $(1 - 2 \epsilon)|V_L'|$ and a labeling $\sigma:V_L' \cup V_R \to [k]$ such that all the constraints incident on the vertices in $U_0'$ are satisfied.
		\item If $G$ is a NO instance, then for any labeling $\sigma:V_L' \cup V_R \to [k]$, at most $(1 - \gamma/2)$ fraction of vertices in $V_L'$ have the property that all the constraints incident on them are satisfied.
	\end{itemize}
	\end{lemma}

	\begin{proof}
	We argue completeness and soundness of the intermediate \sbug~instance $G'$.
	\paragraph{Completeness.} Let $\sigma:V_L \cup V_R \to [k]$ be an assignment such that there exists a subset $U_0 \subseteq V_L$ of measure at least $(1 - \epsilon)$ on which all the incident constraints are satisfied by the labeling $\sigma$. 
	Let $U_0' \defeq U_0 \cap V_L'$ (here we take multi-set intersection, i.e. if $v \in U_0$ has at least one copy in $V_L'$, then we keep all copies of $v$ in $V_L'$ in $U_0'$). The following  follows directly from the right uniformity of $G$.
	\begin{observation} 			\label{lem:obs}
		Consider the following procedure for sampling a left vertex: (i) sample $v \overset{\rm unif}{\sim} V_R$ and then (ii) sample $u' \sim \mu_L(\cdot | v)$. Then $u'$ is distributed according to $\mu_L$.
	\end{observation}
	\begin{proof}
	Fix a vertex $u \in V_L$. Let $N_{G}(u) = \{v_1,\ldots,v_d\}$. Then 
		\[\Pr_{v,u'}\left[u ' = u\right] = \frac1n\sum_{i \in [d]} \mu_L(u|v_i) = \frac1n\sum_{i \in [d]} \frac{\mu_L(u)}{\mu_L(\{u'':v_i \in N_G(u'')\})} 
		= \frac1n\sum_{i \in [d]} \frac{\mu_L(u)}{d/n} = \mu_L(u) . \]
	\end{proof}		
	
	Then using Observation \ref{lem:obs} and the guarantee of labeling $\sigma$ we have
	\begin{equation}
	\label{eq:u0}
	\Ex_{v \sim \mu_{V_R}} \Pr_{u \sim \mu(\cdot|v)} \left[u \in U_0\right]
	=  \Pr_{u \sim \mu_{V_L}} \left[u \in U_0\right] \geq 1 - \epsilon. 
	\end{equation}
	Now for every vertex $v \in V_R$, let $(v,1),(v,2),\ldots,(v,\ell)$ be the sequence of left neighbors sampled in $N_{G'}(v)$. For every $v \in V_R$ and $i \in [\ell]$, define the indicator random variable $X_{v,i} \defeq \mathbbm{1}((v,i) \in U_0)$. 
	Using equation \ref{eq:u0} we get that 
	\[
	\Ex_{G'}\left[ \frac{1}{n\ell} \sum_{v \in V_R} \sum_{i \in [\ell]}  X_{v,i}\right] = \Ex_{v \sim V_R}  \Ex_{i \in [\ell]}\Ex_{G'}\big[ X_{v,i}\big] = \Ex_{v \sim V_R} \Pr_{u \sim \mu(\cdot|v)} [u \in U_0] \geq 1 - \epsilon.
	\]  
	Since $X_{v,i}$'s are independent $0/1$-random variables, using Hoeffding's inequality we get that 
	\[
	\Pr_{G'}\left[\sum_{v,i} X_{v,i} - \Ex_{G'} \left[\sum_{v,i} X_{v,i} \right]  < -  \epsilon \ell n \right]
	\leq \exp\left( - 2\epsilon^2 \ell n \right) < 0.1.
	\]
	Therefore with probability at least $0.9$ we have 
	\begin{equation}
	\label{eq:u0card}
	\Ex_{v \sim V_R} \Pr_{u \sim N_{G'}(v)}\left[u \in U_0 \cap V_L'\right] \geq 1 - 2 \epsilon. 
	\end{equation}
	Extending $\sigma: V_L \cup V_R \to [k]$ to $\sigma: V_L' \cup V_R \to [k]$ in the natural way (if a vertex
	$v \in V_L$ has multiple copies in $V_L'$, then they all get the label $\sigma(v)$), we get that 
	$\sigma$ satisfies all the constraints incident on $U_0'$ 
	and inequality \ref{eq:u0card} implies that $|U_0'| \geq (1 - 2\epsilon)|V_L'|$.
	
	\paragraph{Soundness.} Fix a right labeling $\sigma:V_R \to [k]$; this defines a set $U^\sigma_0 \defeq \{u \in V_L | \pi_{v \to u}(\sigma(v)) = \pi_{v' \to u}(\sigma(v')) \forall v,v' \in N_G(u)\}$.
	As before, for every $v \in V_R$, let $(v,1),(v,2),\ldots,(v,\ell)$ be the sequence of left neighbors sampled in $N_{G'}(v)$. For every $i \in [\ell]$, define the indicator random variable $X^{(\sigma)}_{v,i} \defeq \mathbbm{1}\left((v,i) \in U^\sigma_0\right)$. Using the soundness guarantee and Observation \ref{lem:obs} we have 
	\[\Ex_G'\left[\frac{1}{n\ell}\sum_{v \in V_R} \sum_{i \in [\ell]}  X^{(\sigma)}_{v,i}\right] =  \Pr_{u \sim V_L} \left[ u \in U^\sigma_0\right] \leq  1 - \gamma.
	\]
	Using Hoeffding's inequality we get that 
	\[
	\Pr\left[\sum_{v,i} X(\sigma)_{v,i} - \Ex \left[ \sum_{v,i} X^{(\sigma)}_{v,i}\right] > \frac{\gamma n \ell}{100}\right]
	\leq \exp\left( - 2\gamma^2n\ell/10^4\right).
	\]
	The number of possible labelings of $V_R$ is $k^n$.
	Taking a union bound over all possible labelings of $V_R$ we get
	\[
	\Pr\left[\exists \sigma : V_R \to [k] \mbox{ s.t } 
	\sum_{v, i} X^{(\sigma)}_{v,i} - \Ex \left[\sum_{v,i} X^{(\sigma)}_{v,i} \right] 
	> \frac{\gamma n \ell}{100}\right]
	\leq \exp\left(n \log k - 2\gamma^2n\ell/10^4\right) < 0.1, 
	\]
	for $\ell = C''\gamma^{-2}\log k$ with $C''> 0$ chosen to be large enough. Therefore,
	with probability at least $0.9$, for every right labeling $\sigma: V_R \to [k]$
	the set $V_L' \cap U^\sigma_0$ (here again, we take multi-set intersection) satisfies 
	\[
		|V_L' \cap U^\sigma_0| =  \sum_{v \in V}\Ex_{v \sim V} \left[|N_{G'}(v) \cap U^\sigma_0|\right] = \ell n\Ex_{v \sim V} \left[\Pr_{u \sim N_{G'}(v)}[u \in U^\sigma_0]\right] \leq (1 - \gamma/2)|V_L'|. \]
	i.e., for any right labeling, at most $1 - \gamma/2$ fraction of left vertices have all the constraints incident on them satisfied by the labeling $\sigma$.

	\paragraph{Putting things together.}
	Using the union bound, we get that the completeness and soundness events hold simultaneously with probability 
		at least $0.8$.
	\end{proof}

Now we analyze $G''$.

\begin{lemma}
With constant probability (over $G'$), the following holds.
	\begin{itemize}
	\item If $G$ is a YES instance, then there exists a subset $S \subseteq V_L''$ of size at least $(1 - 4 \epsilon)|V_L''|$ and a right labeling $\sigma:V'_R \to [k]$ such that all the constraints incident on $S$ are satisfied.
	\item If $G$ is a NO instance, for any right labeling $\sigma:V \to [k]$ at most $1 -  \gamma/4$ fraction of left vertices have all the constraints incident on them satisfied.
	\end{itemize}
\end{lemma}

\begin{proof}
	Fix a right vertex $v' \in V_R$. For every $v \in V_R$, let $N_{G'}(v) = \{(v,1),(v,2),\ldots,(v,\ell)\}$ be the $\ell$-neighbors sampled from the left neighborhood of $v$. Furthermore, for every $v \in  V_R$, and $i \in [\ell]$ define the indicator random variable $Y_{v,i} \defeq \mathbbm{1}(v' \in (v,i))$ i.e., in the variable-constraint interpretation, it indicates whether variable $v'$ appears in the constraint represented by the vertex $(v,i)$. Then we get that 
	\begin{eqnarray*}
	\Ex_{G'} \left[{\rm deg}_{G'}(v') \right] 
	&=& \Ex_{G'}\left[ \sum_{v \in V_R}\sum_{i \in [\ell]} Y_{(v,i)} \right] 
	= n \ell \Ex_{G'}\Ex_{v\sim V_R}\Ex_{i \sim [\ell]} Y_{(v,i)} \\
	&{=}& n \ell \Ex_{v \sim V_R} \Pr_{u \sim N_G(v)} \left[v' \in u\right] \\
	&=& n \ell \mu_{L}\left(\{u: v' \in N_G(u)\}\right) 
		\qquad \textrm{(Using Observation \ref{lem:obs})} \\
	&=& n \ell\left(\frac{d}{n}\right) 
		\qquad \textrm{({\em right uniformity} of $G$ (Theorem \ref{thm:bipartite-ug}))} \\
	&=& \ell d.
	\end{eqnarray*}
	Since $Y_{v,i}$'s are independent $0/1$ random variables, using Chernoff bound, for any $\theta \geq 1$ we have 
	\begin{equation}					
	\label{eq:conc-bound}
	\Pr\left[{\rm deg}_{G'}(v) \geq (1 + \theta)d\ell\right] \leq \Pr\left[\sum_{v \in V_R}\sum_{i \in [\ell]} Y_{(v,i)} > (1 + \theta) d \ell\right] \leq \exp\left(-\frac{ \theta^2 d \ell}{4}\right).
	\end{equation}
	For any $i \in \mathbbm{N}$, let $m(i)$ be the random variable which denotes the number of right vertices with degree exactly $i$ in $G'$. Note that the above calculation implies that $\Ex_{G'}\left[n((1+\theta) d\ell)\right] \leq e^{-\theta^2d\ell/4} n$. Note that $m(i)$ is a strictly decreasing function of $i$. Therefore  the expected number of left vertices incident on large right degree vertices can bounded as follows
	\begin{align*}
	\Ex_{G'}\left[\sum^\infty_{i \geq C' d\ell} i m(i)  \right] 
	&\leq 	\sum^\infty _{ \theta = C' - 1} ((1 + \theta)d\ell + d\ell) \Ex_{G'}\left[m((1 + \theta)d \ell)\right]  \tag{Rounding the sum into intervals of length $d\ell$ }\\ 
	&\leq  \sum^\infty _{ \theta = C' - 1} 2(1 + \theta)d \ell e^{-\theta^2 d \ell/4} n  \tag{Since $\theta \geq C' - 1 \geq 1$}\\
	&\leq \int^\infty_{ \theta = C' - 1} 2(1 + \theta)d \ell n e^{-\theta^2 d \ell/4} d\theta \\
	&\leq e^{-d\ell}n
	\end{align*}
	where the last step follows for any large enough choice of $C'$. Therefore using Markov's inequality, with probability at least $0,9$, the number of left vertices deleted is at most $10e^{-d\ell} n \leq \epsilon n \leq  \gamma^2n$ (by choosing $\ell \geq \epsilon^{-2}\log(k) \geq\log(1/\gamma)$). For brevity, given a right labeling $\sigma:V \to [k]$, we say that left vertex $u \in V_L$ is {\em good} w.r.t. labeling $\sigma$, if $\sigma$ satisfies all the edges incident on it. We argue completeness and soundness for $G''$. 

\paragraph{Completeness.} Consider the right labeling $\sigma: V_R \to [k]$ for which the fraction of good left vertices in $G'$ is at least $(1 - 2\epsilon)$. From Eq. \ref{eq:conc-bound} we know that the truncation step removes at most $e^{-Cd \ell}$-fraction of left vertices. By our choice of parameters, $e^{-Cd \ell} \leq \epsilon$.
Therefore, the fraction of left vertices that are good in $G''$ is at least $(1 - 3\epsilon)/(1 - \epsilon) \geq 1 - 4\epsilon$.
	
\paragraph{Soundness.}
	Let $\sigma:V_L'' \cup V_{R}' \to [k]$ be a labeling for which there exists a good set $S \subset V_L''$ of size at least $(1 - \gamma/4)|V_L''|$. Construct $\sigma':V_L' \cup V_{R} \to [k]$ by assigning  $\sigma'(x) = \sigma(x)$ whenever $x \in V_L'' \cup V_R'$, otherwise we assign $\sigma'(x)$ arbitrarily. Observe that since in the construction of $G''$ from $G'$ we delete all the left vertices which are neighbors of the deleted right vertices, for every $u \in V_L''$ we have $N_{G''}(u) = N_{G'}(u)$. In particular, if $\sigma$ satisfies all the constraints incident on a left vertex $u \in V_L''$, $\sigma'$ satisfies all incident constraints on $u$ in $G'$. 
	Therefore $\sigma'$ satisfies all incident constraints for any $u \in S$. Furthermore, from our choice of $\ell$ we have 
	\[
	\frac{|S|}{|V_L'|} \geq \frac{(1 - \gamma/4)|V_L''|}{|V_L'|}  = (1 - \gamma/4)\frac{|V_L'| - |V_L' \setminus V_L''|}{|V_L'|}\geq (1 - \gamma/4) \frac{\ell n - e^{-d\ell} n}{\ell n}  \geq (1 - \gamma/3)
	\]
	which gives us the lower bound on the fraction of left vertices for which $\sigma'$ satisfies all constraints incident on it.
\end{proof}

\end{proof}

\subsection{\sbug~ to \strug}				\label{sec:DBigUG-StrongUG}

The last step uses the following lemma.
\begin{lemma}[\cite{KR08}, Lemma 3.8]				\label{lem:strong-ug}
	There exists an efficient procedure that given an unweighted \sbug~ instance $G = (V_L,V_R,E,[1],\{\pi_{v \to u}\}_{(u,v) \in E})$ with left and right degrees bounded by $d$ and $D$ respectively, outputs a UG instance $\psi(V,E',[k],\{\pi_{e,v}\}_{e,v})$ with the following properties.
	\begin{itemize}
		\item If there exists a labeling $\sigma: V_L \cup V_R \to [k]$ such that $(1 - \epsilon)$-fraction of left vertices in $G$ have all their incident constraints satisfied, then there exists a labeling to $\psi$ and a set $S \subseteq V$ of $(1 - \epsilon)$-fraction of vertices such that {\em all constraints induced in $S$} are satisfied.
		\item For any $\gamma > 0$, if there exists a labeling $\sigma'$ to $\psi$, and a set $S \subseteq V$ of $(1 - \gamma)$-fraction of variables such that {\em all constraints induced in $S$}  are satisfied, then there exists a labeling $\sigma$ to $G$ such that at least  $(1 - \gamma)$-fraction of left vertices in $G$ have all their constraints satisfied.
	\end{itemize}
	Moreover the degree of the vertices in $\psi$ is at most $dD$.
\end{lemma}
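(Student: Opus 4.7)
The plan is to use exactly the construction from \cite{KR08} indicated in the overview and then verify the three required properties. Construct $\psi$ on vertex set $V := V_L$ as follows: for every right vertex $v \in V_R$ and every ordered pair $(u,u')$ of distinct left neighbors of $v$, add an edge $\{u,u'\}$ to $E'$ with the composed constraint $\pi_{u \to u'} := \pi_{v \to u'} \circ \pi_{v \to u}^{-1}$ (which is a bijection since each $\pi_{v \to u}$ is). The alphabet is $[k]$. The degree bound is immediate from this construction: any neighbor of $u$ in $\psi$ must share a right neighbor with $u$ in $G$, and $u$ has at most $d$ such right neighbors, each of which has at most $D$ left neighbors, so $\deg_\psi(u) \le dD$.

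For completeness, suppose $\sigma : V_L \cup V_R \to [k]$ makes $(1-\epsilon)$-fraction of left vertices ``good'' (all their incident constraints in $G$ satisfied); let $S \subseteq V_L$ be this set. Define $\sigma'(u) := \sigma(u)$ for $u \in V = V_L$. For any edge $\{u,u'\} \in E'$ with $u,u' \in S$, this edge was created from some common right neighbor $v \in N_G(u) \cap N_G(u')$, and goodness of $u,u'$ gives $\sigma(u) = \pi_{v \to u}(\sigma(v))$ and $\sigma(u') = \pi_{v \to u'}(\sigma(v))$; substituting, $\sigma'(u') = \pi_{v \to u'}(\pi_{v \to u}^{-1}(\sigma'(u))) = \pi_{u \to u'}(\sigma'(u))$, so the edge is satisfied. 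Hence all constraints induced on $S$ are satisfied.

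For soundness, suppose $\sigma'$ and $S \subseteq V$ of size $(1-\gamma)|V|$ are such that all edges of $\psi$ induced on $S$ are satisfied. The key observation is a \emph{consistent voting} property: for each $v \in V_R$, all $u \in S \cap N_G(v)$ suggest the same label for $v$, namely $\pi_{v \to u}^{-1}(\sigma'(u))$. Indeed, for any two such $u,u' \in S \cap N_G(v)$, the induced edge $\{u,u'\}$ was built using $v$ and is satisfied, so $\sigma'(u') = \pi_{v \to u'}(\pi_{v \to u}^{-1}(\sigma'(u)))$, which rearranges to $\pi_{v \to u}^{-1}(\sigma'(u)) = \pi_{v \to u'}^{-1}(\sigma'(u'))$. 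Define $\sigma(v)$ to be this common value (arbitrary when $S \cap N_G(v) = \emptyset$), and $\sigma(u) := \sigma'(u)$ for $u \in V_L$. Then for every $u \in S$ and every $v \in N_G(u)$, by construction $\sigma(v) = \pi_{v \to u}^{-1}(\sigma'(u))$, so the constraint $(u,v)$ is satisfied; hence at least $(1-\gamma)$-fraction of left vertices have all incident constraints satisfied.

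There is no real obstacle here; the only thing to check carefully is the direction of the permutations when writing the composed constraint (since a sign or inversion error would break the ``voting'' step), and one must be slightly careful that in the soundness step edges $\{u,u'\}$ coming from \emph{different} common right neighbors $v,v''$ do not cause inconsistency --- but this is fine because each edge independently forces the equality above only at the specific $v$ used to create it, and we only need each fixed $v$ to receive consistent votes from $S \cap N_G(v)$, which is precisely what the argument shows.
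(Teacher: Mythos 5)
Your proposal is correct and follows essentially the same route as the paper: the identical composed-constraint construction $\pi_{v\to u'}\circ\pi_{v\to u}^{-1}$ on $V=V_L$, the same $dD$ degree count, the same completeness check, and the same consistent-voting decoding of right labels for soundness. No gaps.
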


While the proof of the above lemma is identical to Lemma 3.8 in \cite{KR08}, we include the proof here because of two reasons. The statement of Lemma 3.8 does not give a bound on the degree of the \strug~instance, and therefore the reduction is needed to give a bound on the degree. Furthermore, our reduction to \oct~builds on this lemma, and its analysis is implicitly used in giving the guarantees for the reduction.

\begin{proof}
	Given $\cG(V_L,V_R,E,[k],\{\pi_{e,v}\})$ we construct the \strug~ instance $\psi$ as follows. The vertex set is $V = V_L$. Furthermore, for any $e_1,e_2 \in V_L$ and any $v \in N_G(e_1) \cap N_G(e_2)$ we introduce a constraint $\pi^{(v)}_{e_1 \to e_2} : = \pi_{v,e_2}\circ \pi^{-1}_{v,e_1}$. Note that for any pair of vertices $e_1,e_2$ we add a constraint for every vertex commonly adjacent to $e_1$ and $e_2$ i.e, the number of constraint edges between two left vertices $e_1,e_2$ is  $|N_G(e_1) \cap N_G(e_2)|$. In particular, the number of constraints incident on any vertex $v \in V_L$ is at most $dD$. Now we argue completeness and soundness of the above reduction.
	
	{\bf Completeness}: Suppose there exists a labeling $\sigma:V_L \cup V_R \to [k]$ and a set $S \subset V_L$ such that $\sigma$ satisfies all the constraints incident on $S$. We claim that the labeling $\sigma' := \sigma|_{V_L}$ satisfies all the constraints incident on vertices in $S$ in $\psi$. To see this, fix any $e_1,e_2 \in V_L$ such that they share a constraint $\pi = \pi^{(v)}_{e_1 \to e_2}$. Note that by guarantee of $\sigma$ we have $\sigma(e_1) = \pi_{v,e_1}(\sigma(v))$ and $\pi_{v,e_2}(\sigma(v)) = \sigma(e_2)$. Therefore,
	\[
	\pi^{(v)}_{e_1 \to e_2}(\sigma'(e_1)) = \pi_{v,e_2} \circ \pi^{-1}_{v,e_1} \circ \pi_{v,e_1}(\sigma(v)) = \pi_{v,e_2}(\sigma(v)) = \sigma'(e_2)
	\]
	i.e, $\sigma'$ satisfies the constraint $\pi^{(v)}_{e_1 \to e_2}$. Since the above arguments hold for any choice of $e_1,e_2 \in S$ with respect to the labeling $\sigma'$ the claim follows.
	
	{\bf Soundness}: Suppose there exists a labeling $\sigma$ and a subset $S$ of $1 - \gamma$ fraction of vertices in $V$ such that {\em all} constraints induced among $S$ are satisfied by $\sigma'$. Now consider the following decoding procedure. 
	\begin{enumerate}
		\item For every vertex $e \in S$, and $v \in N_\cG(e)$ assign $\sigma(v) = \pi^{-1}_{e,v}(\sigma'(e))$. As we shall show immediately, the choice of the left vertex used to label the right vertex will not matter. Furthermore, for every such $e$, assign $\sigma(e) = \sigma'(e)$.
		\item For any left over vertices in $V_L \cup V_R$, assign labels arbitrarily.
	\end{enumerate}
	
	We shall now argue that (i) Step $1$ above does not lead to inconsistent labelings to vertices in $V_R$ and (ii) the labeling $\sigma$ satisfies all the constraints incident on $S$ in $V_L$. Observe that assuming (i), (ii) follows immediately by definition of the decoding procedure.
	
	Hence it suffices to just show (i). Fix a right vertex $v \in V_R$ and $e_1,e_2 \in N_G(v)$ such that $e_1,e_2 \in S$. Then the labeling $\sigma'$ satisfies all the constraints within $e_1,e_2$ and in particular, $\pi^{(v)}_{e_1 \to e_2}$ i.e.,
	\[
	\pi^{(v)}_{e_1 \to e_2}(\sigma'(e_1)) = \sigma'(e_2) \Rightarrow \pi^{-1}_{e_2,v}(\sigma'(e_2)) = \pi^{-1}_{e_1,v}(\sigma'(e_1)) 
	\]	
	Therefore, decoding using $e_1$ and $e_2$ assigns the same labeling to the vertex $v$. Since this is true for any right vertex decoded using vertices in $S$, the claim follows.
\end{proof}

\subsection{Proof of Theorem \ref{thm:strug-red}}

Now we chain the various hardness results established in the previous sections to prove Theorem \ref{thm:strug-red} in three steps. We begin with an instance $\cG$ of $(1-\epsilon,\gamma)$-\ug~ and reduce it to an instance of $(1 - 2\epsilon,1 - O(\sqrt{\epsilon \log d \log k}))$-\hug~ using Theorem \ref{thm:hyperug}. Then combining Theorems \ref{thm:unif-weights}, \ref{thm:bipartite-ug} and \ref{thm:deg-red}, we reduce $(1 - 2\epsilon,1 - O(\sqrt{\epsilon \log d \log k}))$-$d$-\sbug~ such that the left degree of the output instances is bounded by $d$, and the right degree is bounded by at most $C''d\ell$. Finally, using Lemma \ref{lem:strong-ug}, we reduce the above instances to $(1 - C\epsilon, 1- C'\sqrt{\epsilon \log d \log k})$-\strug~ where the degree of the vertices are bounded by $d^2\ell \leq d^3$ (since by our choice of parameters we have  $d \geq \log k/\epsilon^2$ ). This completes the proof of Theorem \ref{thm:strug-red}.

\section{The \oddcycletransversal~problem}			
\label{sec:oct}

\begin{proof}[Proof of Corollary \ref{cor:oct}]
We first establish that~\oddcycletransversal~is an instance of~\stronguniquegames~with $k = 2$. Given an instance $G = (V,E)$ of \oddcycletransversal, construct the \stronguniquegames~instance $\cG (V,E,\mathbbm{F}_2,[2],\{\pi_e\}_{e \in E})$ with the constraint $\sigma(u) \neq \sigma(v)$ whenever $\{u,v\} \in E$. We claim that for any subset $S \subseteq V$, we have $\cV(\cG[S]) = 1$ if and only if $G[S] $ is bipartite. Indeed, suppose $G[S]$ is bipartite with bipartition $S = S_1 \uplus S_2$. Let $\sigma(u) = \mathbbm{1}(u \in S_1)$. Then, by construction, the only constraints in $\cG[S]$ are the ones crossing the cut $(S_1,S_2)$. Furthermore, for any $\{u,v\} \in E_G(S_1,S_2)$ we have $\sigma(u) \neq \sigma(v)$ (by construction of $\sigma$) i.e., the labeling $\sigma$ satisfies all the constraints in the induced game $\cG[S]$. Therefore, $\cV\left(\cG[S]\right) = 1$.

Conversely, suppose there exists subset $S \subset V $ such that $\cV\left(\cG[S]\right) = 1$. Let $\sigma : S \to \mathbbm{F}_2$ be the labeling that satisfies all the constraints induced in $\cG[S]$. Let $S_1 := \left\{u \in S | \sigma(u) = 1\right\}$ and $S_2 := S \setminus S_1$. Since all the constraints are of the form $\sigma(u) \neq \sigma(v)$, it follows that $S_1,S_2$ must be independent sets. Since $S = S_1 \uplus S_2$, therefore $G[S]$ must be bipartite.

Therefore, combining the two directions, it follows that \oddcycletransversal~is an instance of \stronguniquegames~with alphabet sets $\mathbbm{F}_2$. Combining this observation with the guarantees of  Theorem \ref{thm:partial-ug-2} (instantiated with $k = 2$) gives us the claim.
\end{proof}

\subsection{Hardness for the \oddcycletransversal~problem}

In this section, we prove Theorem \ref{thm:oct-hardness}.
This follows from the following theorem.
\begin{theorem} \label{thm:oct-red}
	Let $G = (V,E,\{\pm 1\},\{\pi_{v \to u}\}_{(u,v) \in E})$ be a \strug instance with degree bounded by $d$. Then there exists an efficient procedure that given $G$ outputs a \strug~instance $\psi(V',E',\{\pm 1\},\{\pi_{e,v}\}_{e,v})$ which satisfies the following guarantees for any choice of $\epsilon,\gamma \in (0,1)$.
	\begin{itemize}
		\item If $\sval{(G)} \geq 1 - \epsilon$, then $\sval{(\psi)} \geq 1 - \epsilon$.
		\item If $\sval{(\psi)} \geq 1 - \gamma$, then $\sval{(G)} \geq 1 - 2\gamma$.
	\end{itemize}
	Additionally, the degree of constraint graph here is bounded by $d + 1$, and all constraints in $\psi$ are {\sf NEQ} constraints. 
\end{theorem}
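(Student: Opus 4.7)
The plan is to build $\psi$ from $G$ by introducing one auxiliary vertex $w_v$ for each $v\in V$, which will serve as a ``label complement'' of $v$. Concretely, fix any total order on $V$. Keep every NEQ constraint of $G$ unchanged. For every EQ constraint $\{u,v\}\in E$ with $v>u$, remove it and add instead the NEQ constraint $\{u,w_v\}$. Finally, for every $v\in V$ add the anchoring NEQ constraint $\{v,w_v\}$. The anchor forces any consistent labeling to satisfy $\sigma'(w_v)=-\sigma'(v)$, which makes $\sigma'(u)\neq\sigma'(w_v)$ equivalent to the original EQ constraint $\sigma'(u)=\sigma'(v)$. Hence every constraint of $\psi$ is a NEQ constraint on $\{\pm 1\}$, as required. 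The degree bound is immediate: the degree of $v$ in $\psi$ is at most $1$ (anchor) plus the number of original NEQ edges at $v$ plus the number of EQ edges $\{u,v\}$ with $u>v$ (each contributing an edge $\{v,w_u\}$), which together is $1+\deg_G(v)\le d+1$; similarly $\deg_\psi(w_v)\le 1+\deg_G(v)\le d+1$.

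For completeness, assume $\sval(G)\ge 1-\epsilon$, with witness $S\subseteq V$ and labeling $\sigma:S\to\{\pm1\}$ that satisfies all constraints in $G[S]$. Let $S':=S\cup\{w_v:v\in S\}$ and extend $\sigma$ to $\sigma'$ on $S'$ by setting $\sigma'(w_v):=-\sigma(v)$. Since $|V'|=2|V|$, we have $|S'|/|V'|=|S|/|V|\ge 1-\epsilon$. Every constraint of $\psi[S']$ is satisfied: anchors hold by construction; surviving NEQ edges hold because $\sigma$ satisfies them in $G[S]$; each replacement edge $\{u,w_v\}$ coming from an EQ edge $\{u,v\}$ with $u,v\in S$ satisfies $\sigma'(u)=\sigma(u)=\sigma(v)=-\sigma'(w_v)$.

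For soundness, assume $\sval(\psi)\ge 1-\gamma$, with witness $S'\subseteq V'$ of measure $\ge 1-\gamma$ and labeling $\sigma'$ satisfying $\psi[S']$. Define
\[
T := (S'\cap V)\setminus\{v\in V: w_v\notin S'\}.
\]
The key size estimate is an inclusion-exclusion: since $T$ is obtained from $S'\cap V$ by removing at most $|W\setminus S'|$ vertices,
\[
|T|\ \ge\ |S'\cap V|-(|W|-|S'\cap W|)\ =\ |S'|-|W|\ \ge\ (1-\gamma)\cdot 2|V|-|V|\ =\ (1-2\gamma)|V|.
\]
To show $\sigma:=\sigma'|_T$ satisfies $G[T]$: original NEQ edges are inherited directly; for an EQ edge $\{u,v\}$ with $u,v\in T$ and, WLOG, $v>u$ (so it was routed through $w_v$), the membership $v\in T$ forces $w_v\in S'$, and then the two NEQ constraints $\sigma'(u)\neq\sigma'(w_v)$ and $\sigma'(v)\neq\sigma'(w_v)$ together yield $\sigma(u)=\sigma(v)$, satisfying the EQ constraint.

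The conceptual point is to pick the gadget so that the completeness loss and soundness loss are both affine in the fraction of discarded vertices \emph{measured against $|V|$}. One auxiliary vertex per original vertex is exactly the right choice: it is small enough (so $|V'|=2|V|$ gives the clean factor $2$ in soundness via $|S'|-|W|\ge(1-2\gamma)|V|$), yet flexible enough to re-route every EQ constraint while only adding a single anchor to each vertex's degree (giving the $d+1$ bound). The only mild care needed is the routing convention (via the total order), which keeps the contribution to each $w_v$'s degree balanced; any alternative like adding an auxiliary vertex per EQ edge would blow up $|W|$ relative to $|V|$ and destroy the factor-$2$ soundness guarantee.
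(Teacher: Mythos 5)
Your construction is correct and is essentially the paper's: the paper also adds a negation twin per vertex (there written $e^{+},e^{-}$) joined by an anchoring {\sf NEQ} edge, so that $|V'|=2|V|$, decodes in the soundness direction exactly your set $T=\{v: \text{both copies survive}\}$ with the same counting $|S'|-|V|\ge(1-2\gamma)|V|$, and gets the same $d+1$ degree bound. The only (harmless) difference is routing: you keep original {\sf NEQ} edges and send each {\sf EQ} edge through a single twin chosen by a total order, while the paper symmetrically duplicates every constraint across both copies, which avoids the ordering trick at the cost of twice as many constraint edges.
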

\begin{proof}
	
	Given \strug instance $G$, we construct an \oct~instance $\psi$ as follows.

	{\bf Vertex Set}. For every $e \in V$, introduce two vertices $e^+,e^-$. Let $V^+ = \{e^+ | e \in V_L\}$ and $V^- = \{e^- | e \in V_L\}$, and finally we define the vertex set to be $V' = V^+ \cup V^-$.
	
	{\bf Constraints}. For every vertex $e \in V$ we add a {\sf NEQ}  constraint between $e^+,e^-$. Furthermore for every edge constraint $(e_1,e_2) \in E$, we add constraints in the following way:
	\begin{itemize}
		\item If $\pi_{e_1 \to e_2} $ is an {\sf EQ} constraint, we add {\sf NEQ} constraints between $(e^+_1,e^-_2)$ and $(e^-_1,e^+_2)$.
		\item If $\pi_{e_1 \to e_2} $ is an {\sf NEQ} constraint, we add {\sf NEQ} constraints between $(e^+_1,e^+_2)$ and $(e^-_1,e^-_2)$.
	\end{itemize}
	Clearly $\psi$ consists of only {\sf NEQ} constraints. We now bound the degree of a vertex $e^\alpha_i \in V$. Firstly it has an edge with $e^{-\alpha}_i$. Furthermore, $e^\alpha_i$ has an edge incident on it for every $e_j$ which is incident on $e_i$ in the constraint graph $G$  (this also takes into account multiple edges that it can share with a single vertex). Therefore we can bound the degree by $1 + {\rm deg}_G(e_i) \leq 1 + d$. Next, we argue the completeness and soundness of the reduction. 
	
	{\bf Completeness}: Suppose there exists a labeling $\sigma:V\to \{\pm 1\}$ for which there exists a set $S \subset V$ for which all incident constraints in $G$ are satisfied. Let $S^+ = \{e^+ | e \in S\}$ and $S^- = \{e^- | e \in S\}$. Construct a labeling $\sigma' : V' \to \{\pm 1\}$ as follows. For every vertex $e \in S$, assign $\sigma'(e^+) = \sigma(e)$ and $\sigma'(e^-) = - \sigma(e)$. For every $e \in V \setminus (S^+ \cup S^-)$, we assign labels arbitrarily. We claim that the labeling $\sigma'$ must satisfy all constraints in $S' = S^+ \cup S^-$. Indeed, let $(e,e')$ be a constraint in $\psi[S']$. There are two cases:
	
	{\bf Case (i)} Suppose $e = e^+$ and $e' = e^-$ for some $e \in E$. Then clearly $\sigma'(e^+) = \sigma(e) = -\sigma'(e^-)$.
	
	{\bf Case (ii)} Suppose $e = e^\alpha_i$ and $e' = e^\beta_j$ for some $\alpha,\beta \in \{+,-\}$ and $i \neq j$. Consider the case $\alpha \neq \beta$, and without loss of generality let $\alpha = +, \beta = - $ (the other cases can be shown similarly). Then $(e_i,e_j)$ must be a {\sf EQ} constraint. Then
	\[
	\sigma'(e) =  \sigma'(e^+_i) = \sigma(e_i) \overset{1}{=}  \sigma(e_j) = -\sigma'(e^-_j) = -\sigma'(e'). 
	\]
	Here the first and last equalities follow using the definition of the labeling and  step $1$ uses the fact $(e_i,e_j)$ is an {\sf EQ} constraint in $G$. Furthermore by definition of $S'$, we must have $e_i,e_j \in S$, and therefore the labeling $\sigma$ must satisfy the constraints induced on $e_i,e_j$ including $(e_i,e_j)$.
	
	Similarly consider the case when $\alpha = \beta$. Suppose $\alpha = +$ (again the case $\alpha = -$ can be argued similarly). Then $(e_i,e_j)$ must be a {\sf NEQ} constraint and  therefore 
	\[
	\sigma'(e) = \sigma(e_i) \overset{1}{=}  -\sigma(e_j) = -\sigma'(e^+_j) = -\sigma'(e').
	\] 
	Again, in step $1$ uses the fact that $e_i,e_j$ in $S$, with $(e_i,e_j,)$ being a {\sf NEQ} constraint in $G$. This along with the guarantee of the labeling $\sigma$ gives us the two steps.
	
	The above arguments imply that all constraints induced on the set $S'$ are satisfied by the labeling $\sigma'$. Finally observe that $|S'| = 2|S| = 2(1 - \gamma)|V_R| = (1 - \gamma)|V|$, which establishes the completeness direction of the reduction.
	
	{\bf Soundness}. Suppose there exists a labeling $\sigma:V' \to \{\pm 1\}$ for which there exists a set $S \subset V'$ of size $(1 - \gamma)|V'|$ such that all constraints in $\psi[S]$ are satisfied by the labeling $\sigma$. Now let $\wh{V} = \{e \in V_L | e^+,e^- \in S\}$. Then 
	\[
	|\wh{V}| \geq |V| - |\{e: e^+ \in V' \setminus S\}| - |\{e: e^- \in V' \setminus S\}| \geq |V| - |V' \setminus S| \geq |V| - \gamma|V'| = (1 - 2\gamma)|V|.
	\]
	Construct a labeling $\sigma' : V \to \{\pm 1\}$ as follows. For every $e \in \wh{V}$, let $\sigma'(e) = \sigma(e^+)$. For the remaining vertices in $V \setminus \wh{V}$, we assign labels arbitrarily.  We claim that the labeling $\sigma'$ satisfies all constraints induced in $G[\wh{V}]$. To see this, fix a pair of vertices $e_i,e_j \in \wh{V}$ such that $(e_i,e_j)$ has a  constraint $\pi_{e_1 \to e_2}$in $G[\wh{V}]$. We now consider the following cases.
	
	{\bf Case (i)}: Suppose $\pi_{e_1 \to e_2}$ is an {\sf EQ} constraint. Now from the construction of the labeling $\sigma'$ we get that 
	\[
	\sigma'(e_i) = \sigma(e^+_i) \overset{1}{=} - \sigma(e^-_j) \overset{2}{=} \sigma(e^+_j) = \sigma'(e_j)
	\] 
	i.e., $\sigma'$ satisfies the equality constraint. Here steps $1$ and $2$ follow from the fact that since $(e_i,e_j) \in G[\wh{V}]$ is an {\sf EQ} constraint, there are {\sf NEQ} constraints on 
	$(e^+_i,e^-_j)$ and $(e^+_j,e^-_j)$ in $\psi$. Furthermore since $e^+_i,e^+_j,e^-_j \in S$, these constraints must be satisfied by the labeling $\sigma$. 
	
	{\bf Case (ii)}: Suppose $\pi_{e_1 \to e_2}$ is a {\sf NEQ} constraint. Again note that $e^+_i,e^+_j \in S$ and there is a {\sf NEQ} constraint on $(e^+_i,e^+_j)$ in $\psi$. Therefore we have 
	\[
	\sigma'(e_i) = \sigma(e^+_i) = - \sigma(e^+_j) = - \sigma'(e_j)
	\]
	i.e., the {\sf NEQ} constraint on $(e_i,e_j)$ in $G$ is satisfied by the labeling $\sigma'$.

\end{proof}

\paragraph{Acknowledgements.}
We thank Prahladh Harsha for pointing us to the \stronguniquegames~problem. AL was supported in part by SERB Award ECR/2017/003296 and a Pratiksha Trust Young Investigator Award.
\bibliography{main}

\newcommand{\etalchar}[1]{$^{#1}$}
\providecommand{\bysame}{\leavevmode\hbox to3em{\hrulefill}\thinspace}
\providecommand{\MR}{\relax\ifhmode\unskip\space\fi MR }
\providecommand{\MRhref}[2]{%
  \href{http://www.ams.org/mathscinet-getitem?mr=#1}{#2}
}
\providecommand{\href}[2]{#2}
\begin{thebibliography}{ACMM05}

\bibitem[ACMM05]{ACMM05}
Amit Agarwal, Moses Charikar, Konstantin Makarychev, and Yury Makarychev,
  \emph{O(sqrt(log n)) approximation algorithms for min uncut, min 2cnf
  deletion, and directed cut problems}, Proceedings of the 37th Annual {ACM}
  Symposium on Theory of Computing, Baltimore, MD, USA, May 22-24, 2005, 2005,
  pp.~573--581.

\bibitem[AKS09]{AKS09}
Per Austrin, Subhash Khot, and Muli Safra, \emph{Inapproximability of vertex
  cover and independent set in bounded degree graphs}, 2009 24th Annual IEEE
  Conference on Computational Complexity, IEEE, 2009, pp.~74--80.

\bibitem[ALN08]{ALN08}
Sanjeev Arora, James Lee, and Assaf Naor, \emph{Euclidean distortion and the
  sparsest cut}, Journal of the American Mathematical Society \textbf{21}
  (2008), no.~1, 1--21.

\bibitem[AS65]{AS65}
Milton Abramowitz and Irene~A Stegun, \emph{Handbook of mathematical functions
  with formulas, graphs, and mathematical table}, US Department of Commerce,
  National Bureau of Standards Applied Mathematics series 55, 1965.

\bibitem[BFK{\etalchar{+}}11]{BFKM11}
Nikhil Bansal, Uriel Feige, Robert Krauthgamer, Konstantin Makarychev,
  Viswanath Nagarajan, Joseph Naor, and Roy Schwartz, \emph{Min-max graph
  partitioning and small set expansion}, 2011 IEEE 52nd Annual Symposium on
  Foundations of Computer Science, IEEE, 2011, pp.~17--26.

\bibitem[BK09]{BK09}
Nikhil Bansal and Subhash Khot, \emph{Optimal long code test with one free
  bit}, 2009 50th Annual IEEE Symposium on Foundations of Computer Science,
  IEEE, 2009, pp.~453--462.

\bibitem[BK19]{BK19}
Amey Bhangale and Subhash Khot, \emph{Ug-hardness to np-hardness by losing
  half}, 34th Computational Complexity Conference, {CCC} 2019, July 18-20,
  2019, New Brunswick, NJ, {USA.}, 2019, pp.~3:1--3:20.

\bibitem[BLM13]{BLM13}
St{\'e}phane Boucheron, G{\'a}bor Lugosi, and Pascal Massart,
  \emph{Concentration inequalities: A nonasymptotic theory of independence},
  Oxford university press, 2013.

\bibitem[Bor85]{Bor85}
Christer Borell, \emph{Geometric bounds on the ornstein-uhlenbeck velocity
  process}, Probability Theory and Related Fields \textbf{70} (1985), no.~1,
  1--13.

\bibitem[CMM06a]{CMM06a}
Moses Charikar, Konstantin Makarychev, and Yury Makarychev, \emph{Near-optimal
  algorithms for unique games}, Proceedings of the thirty-eighth annual ACM
  symposium on Theory of computing, ACM, 2006, pp.~205--214.

\bibitem[CMM06b]{CMM06}
Eden Chlamtac, Konstantin Makarychev, and Yury Makarychev, \emph{How to play
  unique games using embeddings}, 2006 47th Annual IEEE Symposium on
  Foundations of Computer Science (FOCS'06), IEEE, 2006, pp.~687--696.

\bibitem[DKK{\etalchar{+}}18]{DBLP:conf/stoc/DinurKKMS18}
Irit Dinur, Subhash Khot, Guy Kindler, Dor Minzer, and Muli Safra,
  \emph{Towards a proof of the 2-to-1 games conjecture?}, Proceedings of the
  50th Annual {ACM} {SIGACT} Symposium on Theory of Computing, {STOC} 2018, Los
  Angeles, CA, USA, June 25-29, 2018, 2018, pp.~376--389.

\bibitem[FHL08]{FLH08}
Uriel Feige, MohammadTaghi Hajiaghayi, and James~R Lee, \emph{Improved
  approximation algorithms for minimum weight vertex separators}, SIAM Journal
  on Computing \textbf{38} (2008), no.~2, 629--657.

\bibitem[GLR19]{GLS19}
Suprovat Ghoshal, Anand Louis, and Rahul Raychaudhury, \emph{Approximation
  algorithms for partially colorable graphs}, Approximation, Randomization, and
  Combinatorial Optimization. Algorithms and Techniques (APPROX/RANDOM 2019),
  Schloss Dagstuhl-Leibniz-Zentrum fuer Informatik, 2019.

\bibitem[GW94]{GW94}
Michel~X. Goemans and David~P. Williamson, \emph{.879-approximation algorithms
  for {MAX} {CUT} and {MAX} 2sat}, Proceedings of the Twenty-Sixth Annual {ACM}
  Symposium on Theory of Computing, 23-25 May 1994, Montr{\'{e}}al,
  Qu{\'{e}}bec, Canada (Frank~Thomson Leighton and Michael~T. Goodrich, eds.),
  {ACM}, 1994, pp.~422--431.

\bibitem[IM12]{IM12}
Marcus Isaksson and Elchanan Mossel, \emph{Maximally stable gaussian partitions
  with discrete applications}, Israel Journal of Mathematics \textbf{189}
  (2012), no.~1, 347--396.

\bibitem[Kho02]{Khot02a}
Subhash Khot, \emph{On the power of unique 2-prover 1-round games}, Proceedings
  on 34th Annual {ACM} Symposium on Theory of Computing, May 19-21, 2002,
  Montr{\'{e}}al, Qu{\'{e}}bec, Canada, 2002, pp.~767--775.

\bibitem[KKMO07]{KKMO07}
Subhash Khot, Guy Kindler, Elchanan Mossel, and Ryan O’Donnell, \emph{Optimal
  inapproximability results for max-cut and other 2-variable csps?}, SIAM
  Journal on Computing \textbf{37} (2007), no.~1, 319--357.

\bibitem[KLT17]{kumar2018finding}
Akash Kumar, Anand Louis, and Madhur Tulsiani, \emph{Finding pseudorandom
  colorings of pseudorandom graphs}, 37th IARCS Annual Conference on
  Foundations of Software Technology and Theoretical Computer Science (FSTTCS
  2017), Schloss Dagstuhl-Leibniz-Zentrum fuer Informatik, 2017.

\bibitem[KMS18]{DBLP:conf/focs/KhotMS18}
Subhash Khot, Dor Minzer, and Muli Safra, \emph{Pseudorandom sets in grassmann
  graph have near-perfect expansion}, 59th {IEEE} Annual Symposium on
  Foundations of Computer Science, {FOCS} 2018, Paris, France, October 7-9,
  2018, 2018, pp.~592--601.

\bibitem[KR08]{KR08}
Subhash Khot and Oded Regev, \emph{Vertex cover might be hard to approximate to
  within 2- $\varepsilon$}, Journal of Computer and System Sciences \textbf{74}
  (2008), no.~3, 335--349.

\bibitem[KV05]{KV05}
Subhash Khot and Nisheeth~K. Vishnoi, \emph{The unique games conjecture,
  integrality gap for cut problems and embeddability of negative type metrics
  into l\({}_{\mbox{1}}\)}, 46th Annual {IEEE} Symposium on Foundations of
  Computer Science {(FOCS} 2005), 23-25 October 2005, Pittsburgh, PA, USA,
  Proceedings, 2005, pp.~53--62.

\bibitem[LM16]{lm16}
Anand Louis and Yury Makarychev, \emph{Approximation algorithms for hypergraph
  small-set expansion and small-set vertex expansion}, Theory of Computing
  \textbf{12} (2016), no.~1, 1--25.

\bibitem[LRV13]{LRV13}
Anand Louis, Prasad Raghavendra, and Santosh Vempala, \emph{The complexity of
  approximating vertex expansion}, 2013 IEEE 54th Annual Symposium on
  Foundations of Computer Science, IEEE, 2013, pp.~360--369.

\bibitem[MOO10]{MOO10}
Elchanan Mossel, Ryan O'Donnell, and Krzysztof Oleszkiewicz, \emph{Noise
  stability of functions with low influences: Invariance and optimality},
  Annals of mathematics \textbf{171} (2010), no.~1, 295--341.

\bibitem[NRRS12]{NRRS12}
NS~Narayanaswamy, Venkatesh Raman, MS~Ramanujan, and Saket Saurabh, \emph{Lp
  can be a cure for parameterized problems}, STACS'12 (29th Symposium on
  Theoretical Aspects of Computer Science), vol.~14, LIPIcs, 2012,
  pp.~338--349.

\bibitem[O'D13]{OD13}
Ryan O'Donnell, \emph{A theorist's toolkit}, 2013,
  http://cs.cmu.edu/\textasciitilde odonnell/toolkit13/lecture07.pdf.

\bibitem[O'D14]{OD14}
Ryan O'Donnell, \emph{Analysis of boolean functions}, Cambridge University
  Press, 2014.

\bibitem[Rag08]{Rag08}
Prasad Raghavendra, \emph{Optimal algorithms and inapproximability results for
  every {CSP}?}, STOC, 2008, pp.~245--254.

\bibitem[RS09a]{RS09}
Prasad Raghavendra and David Steurer, \emph{How to round any {CSP}}, 50th
  Annual {IEEE} Symposium on Foundations of Computer Science, {FOCS} 2009,
  October 25-27, 2009, Atlanta, Georgia, {USA}, 2009, pp.~586--594.

\bibitem[RS09b]{RS09a}
\bysame, \emph{Integrality gaps for strong {SDP} relaxations of {UNIQUE}
  {GAMES}}, 50th Annual {IEEE} Symposium on Foundations of Computer Science,
  {FOCS} 2009, October 25-27, 2009, Atlanta, Georgia, {USA}, 2009,
  pp.~575--585.

\bibitem[RS10]{RS10}
Prasad Raghavendra and David Steurer, \emph{Graph expansion and the unique
  games conjecture}, Proceedings of the forty-second ACM symposium on Theory of
  computing, ACM, 2010, pp.~755--764.

\bibitem[RST10]{RST10a}
Prasad Raghavendra, David Steurer, and Prasad Tetali, \emph{Approximations for
  the isoperimetric and spectral profile of graphs and related parameters},
  Proceedings of the forty-second ACM symposium on Theory of computing, ACM,
  2010, pp.~631--640.

\bibitem[RST12]{RST12}
Prasad Raghavendra, David Steurer, and Madhur Tulsiani, \emph{Reductions
  between expansion problems}, 2012 IEEE 27th Conference on Computational
  Complexity, IEEE, 2012, pp.~64--73.

\bibitem[RSV04]{RSV04}
Bruce Reed, Kaleigh Smith, and Adrian Vetta, \emph{Finding odd cycle
  transversals}, Operations Research Letters \textbf{32} (2004), no.~4,
  299--301.

\bibitem[Ste10]{St10Thesis}
David Steurer, \emph{On the complexity of unique games and graph expansion}.

\end{thebibliography}
\bibliographystyle{amsalpha}
\appendix
\part{Appendix}
\section{Proof of Proposition \ref{prop:ssve-ug}}
Suppose $\sval(\cG) \geq 1 - \epsilon$. Then there exists a set $S \subset V$ such that $\sval(\cG[S]) = 1$ and $|S| \geq n(1 - \epsilon)$. Let $\sigma:S \to [k]$ be the partial labeling under which $\cG[S]$ is completely satisfiable. Consider the corresponding set $S'$ in the label extended graph $G$ defined as $S' = \left\{(a,\sigma(a)) | a \in S \right\}$. Clearly by construction, if $a,b \in E$ and $(a,b) \in \cG[S]$, then $\{(a,\sigma(a)),(b,\sigma(b))\} \in E\left(G[S']\right)$. Therefore, for any $(w,l) \in V'$,
\[
(w,l) \in \partial^{V}_{G}(S') \implies w \notin S
\]
which in turn implies that $\left|\partial^V_{G}(S')\right| \leq \left|\cup_{v \in S} \cC_v\right| \leq \epsilon n k$.

Now for the other direction, suppose there exists $S' \subset V$ which is a non repeating of size at least $(1 - \epsilon)n$ and $\partial^{V}_G(S') \leq \delta n$. Let
$S'' = S' \setminus \cup_{v \in {\rm Vert}\left(\partial^V_{G}(S')\right)} \cC_v$. Let $S := {\rm Vert}(S'')$ and $\sigma : S \to [k]$ be the corresponding labeling to the vertices in $S$ given by $S''$. Again, for any $(a,b) \in  E\left(\cG[S]\right)$ we have $\{(a,\sigma(a)),(b,\sigma(b))\} \in E\left(G[S'']\right)$, and therefore, $\pi_{ab}(\sigma(a)) = \sigma(b)$. Since this holds for any $a,b \in S $ such that $(a,b) \in E_{\cG}$, we have $\cV\left(\cG[S]\right) = 1$, and therefore, $\cV(\cG) \geq 1 - \epsilon \delta$.

\section{Improved Separators} \label{sec:separator}

The proof of Theorem \ref{thm:partial-ug-3} uses the following improved construction of $\ell^2_2$-separators.

\begin{theorem}  
	\label{thm:hos-3}
	Let $d \geq 2$ be an integer and let $H = (V,E)$ be a hypergraph with arity bounded by $d$. Then there is a randomized algorithm that given a set
	of vectors $\set{\bar{u}: u \in V}$ satisfying $\ell_2^2$ triangle inequalities,
	parameters $m \geq 2$ and $\beta \in (0,1)$, generates a hypergraph $m$-orthogonal separator with 
	probability scale $\alpha \geq 1/n$ and distortion $D = \bigo{\beta^{-1} m \log m \log \log m \sqrt{\log d M (1/\beta)}}$.
	Here $M = \sum_{e \in E} \max_{u,v \in e}\|\bar{u} - \bar{v}\|^2$. The algorithm runs in time ${\sf poly}(|E|,|E|/M)$.
\end{theorem}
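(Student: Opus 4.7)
The plan is to adapt the construction behind Theorem \ref{thm:hos-1} so that the $\sqrt{\log n}$ factor in the distortion, which ultimately comes from a union bound over all $n$ vertices in the Chlamtac--Makarychev--Makarychev style analysis of Gaussian projections, is replaced by a union bound over an ``effective'' set whose size is controlled by $dM/\beta$ rather than by $n$. Concretely, I would first split the hyperedges into a ``heavy'' bucket $E_{\mathrm{heavy}} = \{e : \max_{u,v\in e}\|\bar u - \bar v\|^2 \geq \theta\}$ and a ``light'' bucket $E_{\mathrm{light}}$, where $\theta$ is a threshold to be tuned (something like $\theta = M/|E|$ works, but the exact choice is set at the end). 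The light edges automatically satisfy the target cut bound because $\Pr[e \text{ cut}] \leq 1 \leq \theta^{-1}\max_{u,v}\|\bar u - \bar v\|^2$, which contributes a trivially small term. The heavy edges contain at most $|E|$ distinct edges, and the set $V^{\star}$ of vertices they touch has size at most $d\cdot |E_{\mathrm{heavy}}| \leq dM/\theta$.

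Next, I would perform a Johnson--Lindenstrauss projection of the vectors $\{\bar u\}_{u \in V}$ onto a random subspace of dimension $r = O(\log(dM/\beta)/\zeta^2)$. A standard JL argument preserves, up to multiplicative $(1 \pm \zeta)$, the $O(d^2|E_{\mathrm{heavy}}|)$ pairwise distances $\|\bar u - \bar v\|^2$ for $u,v$ that co-occur in some heavy edge, as well as the norms and inner products needed to maintain $\ell_2^2$ triangle inequalities to within the slack absorbed by $\beta$. After projection, I would run the Louis--Makarychev orthogonal separator construction (Theorem \ref{thm:hos-1}) on the projected vectors, but redo its analysis tracking only pairs in heavy hyperedges. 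The Gaussian-maximum step that previously produced $\sqrt{\log n}$ now only needs to control the maximum of at most $|V^{\star}| = O(dM/\theta)$ projections, which yields $\sqrt{\log(dM/\beta)}$ after setting $\theta$ and $\zeta$ appropriately.

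For the probability scale, I would run the resulting sampling procedure for $T$ trials and take the union of the samples, where $T = O(|E|/M)$ is chosen so that any vertex $u$ with $\|\bar u\|^2 \geq 1/|V^{\star}|$ is selected with probability $\Omega(\|\bar u\|^2)$, giving $\alpha \geq 1/n$. Each trial only needs to touch the projected vectors in dimension $r$, so the per-trial cost is ${\sf poly}(|E|, r)$ and the total running time is ${\sf poly}(|E|, |E|/M)$ as claimed. Properties $(1)$ and $(2)$ of Definition \ref{def:hos-1} are inherited from the underlying CMM-type construction, and property $(3)$ follows by combining the JL distortion bound with the refined union bound for heavy edges and the direct (trivial) bound for light edges.

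The main obstacle will be two-fold: (i) ensuring that after JL projection the random set still satisfies the near-orthogonality property $(2)$ for pairs with $\|\bar u - \bar v\|^2 \geq \beta \min(\|\bar u\|^2, \|\bar v\|^2)$, which requires the projection to preserve not only distances in $V^{\star}$ but also the relevant inner products; and (ii) carefully handling the interaction between the threshold $\theta$, the JL distortion $\zeta$, and the probability scale $\alpha$, so that the final bound is genuinely $\sqrt{\log(dM/\beta)}$ without hidden dependence on $n$ or $|E|$ sneaking back in through the normalization. The cleanest way I see is to set $\theta$ so that $M/\theta$ and $|E_{\mathrm{heavy}}|$ are both polynomial in $dM$, and to absorb the JL distortion into the $\beta$-slack at the cost of constant factors.
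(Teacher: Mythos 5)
There is a genuine gap, on two fronts. First, your treatment of light edges is wrong: for $e$ with $\max_{u,v\in e}\|\bar u-\bar v\|^2<\theta$ the chain $\Pr[e\text{ cut}]\leq 1\leq \theta^{-1}\max_{u,v\in e}\|\bar u-\bar v\|^2$ fails (the second inequality holds only for \emph{heavy} edges), and in any case property (3) of Definition \ref{def:hos-1} demands $\Pr[e\text{ cut}]\leq \alpha D\max_{u,v\in e}\|\bar u-\bar v\|^2$ with $\alpha$ as small as $1/n$, so a trivial bound of $1$ is nowhere near sufficient; light edges must still be cut with probability proportional to their (small) $\ell_2^2$-diameter, so they cannot be discarded from the analysis. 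Second, the Johnson--Lindenstrauss step does not attack the right quantity: the $\sqrt{\log n}$ in Theorem \ref{thm:hos-1} does not come from a Gaussian-maximum/union bound over coordinates or over the ambient dimension, but from the separation parameter $\Omega(\Delta/\sqrt{\log N})$ of the single-scale embedding (Theorem \ref{thm:embed}), where $N$ is the \emph{cardinality} of the point set being decomposed. Projecting the vectors to dimension $O(\log(dM/\beta))$ leaves that cardinality equal to $n$, so the $\sqrt{\log n}$ survives. Restricting attention to the vertex set $V^{\star}$ touched by heavy edges does not fix this either, because $V^{\star}$ has no covering property: a vertex lying only in light edges can be $\ell_2^2$-far from all of $V^{\star}$, so the padding/separation guarantee cannot be transferred to it, and properties (1)--(3) are required for \emph{all} vertices and edges. (Your repetition trick for the probability scale is also problematic: taking a union of $T$ independent samples inflates the joint probability in property (2) by cross terms between different trials.)

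The missing idea, and the route the paper takes, is to shrink the point set while retaining coverage of every vertex: build an $\epsilon$-net $V'$ of $V$ with respect to the hypergraph shortest-path metric $d_{\sdp}$ (whose path weights are the $\ell_2^2$-diameters of hyperedges and which upper-bounds $d_{\ell_2^2}$), and show via a greedy ball-growing/charging argument that $|V'|=\bigo{dM/\epsilon}$ --- each ball of radius $\epsilon/4$ together with its hyperedge boundary carries weight at least $\epsilon/4$, the balls are disjoint, and a hyperedge of arity at most $d$ is multi-counted at most $d$ times (Lemma \ref{lem:greedy-ball}). One then instantiates the single-scale embedding theorem of \cite{ALN08} on $V'$ alone, getting separation $\Omega(\Delta/\sqrt{\log(dM/\Delta)})$, and transfers the guarantee to all of $V$ by taking the $\epsilon$-blowup $U^{+\epsilon}$ of the separating sets: every vertex has a net point within $\epsilon$, and pairs at distance $\geq \Delta/16$ in $V$ have net representatives that are still well separated (Theorem \ref{thm:separators}). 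Plugging this improved separation into the construction of Theorem \ref{thm:hos-1} verbatim yields the claimed distortion and the ${\sf poly}(|E|,|E|/M)$ running time; your bucketing-plus-JL plan, as written, does not reach either the correct handling of light edges or the cardinality reduction that drives the improvement.
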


Before we prove Theorem \ref{thm:hos-3}, we remark that plugging in its hypergraph orthogonal separator into the algorithm for Theorem \ref{thm:hos-1}, and repeating the analysis as is would give us the required bound in Theorem \ref{thm:partial-ug-3}. In particular, as in Theorem \ref{thm:hos-2}, we use the observation that if the constraint graph has degree bounded by $d$, then the arity of the label extended hypergraph is also bounded by $\bigo{d}$. Plugging in the upper bound on arity will give us the claim.

The construction in Theorem \ref{thm:hos-3} differs in the following key point. It uses the following improved variant of the basic {\em single scale embedding theorem} which is obtained by combining ideas from \cite{ALN08} and \cite{FLH08}. 

\begin{theorem}				
	\label{thm:separators}
	There exists constants $C > 0$ and $p \in (0,1)$ such that the following holds. Let $H = (V,E)$ be a hypergraph with arity bounded by $d$. Furthermore, suppose there exists a  map $\vphi:V \to \mathcal{S}^{d-1}$ (where $\mathcal{S}^{d-1}$ is the $\ell_2$-unit sphere in $d$-dimensions) such the map satisfies the $\ell^2_2$ triangle inequality i.e,. $\|\vphi(i)- \vphi(j)\|^2 \leq \|\vphi(i) - \vphi(k)\|^2_2 + \|\vphi(j) - \vphi(k)\|^2_2 $ for every triple $i,j,k \in V$. Then there exists  a distribution $\mu$ over sets $U \subseteq V$ such that the following holds. For every $(i,j) \in V \times V$ such that $\|\vphi(i) - \vphi(j)\|^2_2 \geq \Delta/16$ 
	\[
	\Pr_{U \sim \mu}\left[j \in U \mbox{ and } d_{\ell^2_2}(i,U) \geq \frac{C\Delta}{2\sqrt{\log (dm/\Delta)}} \right] \geq p
	\]
	where $m = \sum_{e \in E} \max_{i,j \in e}\|\vphi(i) - \vphi(j)\|^2_2 $.
\end{theorem}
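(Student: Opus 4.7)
The plan is to follow the single-scale embedding framework of Arora--Lee--Naor~\cite{ALN08}, constructing the distribution $\mu$ via a random Gaussian projection together with a random threshold, and then replace the global union bound (which is what gives the $\sqrt{\log n}$ distortion in the classical argument) with a localized union bound in the spirit of Feige--Hajiaghayi--Lee~\cite{FLH08}. Concretely, I would sample a standard Gaussian $g \sim N(0, I_d)$ and a threshold $\sigma$ (e.g.\ from a suitable discrete log-uniform distribution over scales around $\sqrt{\Delta}$) and define
\[
U \defeq \{\, v \in V : \langle g, \vphi(v)\rangle \ \geq\ \sigma\,\}.
\]
For the designated pair $(i,j)$ with $\|\vphi(i) - \vphi(j)\|^2 \geq \Delta/16$, standard Gaussian anti-concentration shows that with constant probability the event $\cE_{ij}$ that \emph{(a)} $j \in U$ and \emph{(b)} $\langle g, \vphi(i) \rangle \leq \sigma - c\sqrt{\Delta}$ for an absolute constant $c>0$ occurs; this takes care of the first requirement that $j \in U$.

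To translate the projected gap of $\sqrt{\Delta}$ into an $\ell_2^2$-distance gap of $\Omega(\Delta/\sqrt{\log(dm/\Delta)})$ to $U$, I would analyze, conditional on $\cE_{ij}$, the probability that some ``bad'' vertex $v$ with $\|\vphi(i) - \vphi(v)\|^2 \leq C\Delta/\sqrt{\log(dm/\Delta)}$ is placed into $U$. For such a $v$, the random variable $\langle g, \vphi(v) - \vphi(i)\rangle$ is Gaussian with variance $\|\vphi(i) - \vphi(v)\|^2$, so the Gaussian tail bound gives
\[
\Pr\left[\langle g, \vphi(v)\rangle \geq \sigma\right] \ \leq\ \exp\!\left(-\Omega\!\left(\tfrac{\Delta}{\|\vphi(i) - \vphi(v)\|^2}\right)\right) \ \leq\ \exp\!\left(-\Omega\!\left(\sqrt{\log(dm/\Delta)}\,\right)\right)
\]
per bad vertex, and choosing $C$ small enough makes this smaller than $1/\mathrm{poly}(dm/\Delta)$. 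A union bound over all candidate bad vertices would then suffice, \emph{provided} the number of candidate vertices is at most $\mathrm{poly}(dm/\Delta)$.

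The main obstacle, and the place where the improvement over \cite{ALN08} happens, is the last point: naively there can be as many as $n$ candidate vertices, which forces the $\sqrt{\log n}$ distortion. Following the localization idea of~\cite{FLH08}, I would restrict attention to the set $V_{\Delta}$ of vertices that are incident to some hyperedge $e \in E$ with $\max_{u,w \in e}\|\vphi(u) - \vphi(w)\|^2 \geq \Omega(\Delta)$; vertices outside $V_{\Delta}$ do not contribute to the $\ell_2^2$-hyperedge cut at scale $\Delta$ and can be either removed or collapsed onto their nearest neighbors in $V_\Delta$ without affecting the hypergraph objective, at the cost of an extra rounding/pre-processing step. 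By definition of $m = \sum_{e} \max_{u,w \in e}\|\vphi(u) - \vphi(w)\|^2$, the number of hyperedges of ``length'' at least $\Omega(\Delta)$ is at most $O(m/\Delta)$, and each contributes at most $d$ vertices, so $|V_\Delta| = O(dm/\Delta)$. The union bound then runs over only $O(dm/\Delta)$ vertices, yielding the improved distortion $\sqrt{\log(dm/\Delta)}$.

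The hard part will be to carry out this localization cleanly while preserving the $\ell_2^2$-triangle inequality and the guarantee on $d_{\ell_2^2}(i,U)$ for the original vertex $i$ (which may itself lie outside $V_\Delta$). I expect this can be done by a careful snapping/pre-processing argument: replace each $\vphi(v)$ for $v \notin V_\Delta$ by $\vphi(v')$ for a chosen representative $v' \in V_\Delta$ within $\ell_2^2$-distance $o(\Delta/\sqrt{\log(dm/\Delta)})$, which is absorbed into the distortion factor up to a constant. Combining this with the Gaussian projection analysis sketched above gives the claimed distribution $\mu$, with the constant $p$ coming from the anti-concentration step and the polynomial running time following from the fact that the entire construction depends only on the $O(dm/\Delta)$ relevant vertices together with standard sampling of $(g,\sigma)$.
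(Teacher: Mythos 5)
Your proposal has a genuine gap at its core: you replace the single-scale separating-set theorem of \cite{ALN08} by a single Gaussian projection with a random threshold and a union bound, but this rounding cannot deliver the $\Delta/\sqrt{\log N}$ separation. Run your own numbers: a ``bad'' vertex $v$ with $\|\vphi(i)-\vphi(v)\|^2 \leq C\Delta/\sqrt{\log (dm/\Delta)}$ crosses a projected gap of $c\sqrt{\Delta}$ with probability $\exp\bigl(-\Omega\bigl(\Delta/\|\vphi(i)-\vphi(v)\|^2\bigr)\bigr) = \exp\bigl(-\Omega\bigl(\sqrt{\log (dm/\Delta)}/C\bigr)\bigr)$, and for an absolute constant $C$ this is \emph{never} $1/\mathrm{poly}(dm/\Delta)$, so the union bound over even $O(dm/\Delta)$ candidates fails. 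To survive the union bound you would need the bad-distance threshold to be $O(\Delta/\log(dm/\Delta))$, i.e. your scheme only yields distortion $\log$, not $\sqrt{\log}$. This is precisely why the $\sqrt{\log N}$ separation is the deep content of Arora--Lee--Naor (measured descent / ARV-style chaining) rather than a one-shot projection argument; the paper therefore keeps Theorem~\ref{thm:embed} as a black box and improves only the \emph{size} of the point set it is applied to.

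The localization step is also not the one that works. The paper does not restrict to vertices incident to hyperedges of length $\Omega(\Delta)$; it builds an $\epsilon$-net $V'$ (with $\epsilon \approx \Delta/\sqrt{\log(dM/\Delta)}$) in the hypergraph shortest-path metric $d_{\sdp}$ whose edge weights are $\max_{a,b\in e}\|\vphi(a)-\vphi(b)\|^2$, proves $d_{\ell_2^2}\leq d_{\sdp}$, and bounds $|V'| = O(dM/\epsilon)$ by a ball-growing/charging argument: the $\epsilon/4$-balls around net points are disjoint, each forces hyperedge weight $\geq \epsilon/4$ on its interior plus boundary, and each hyperedge is charged at most $d$ times (Lemma~\ref{lem:greedy-ball}). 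Applying Theorem~\ref{thm:embed} to $V'$ and transferring back via nearest net representatives and the $\epsilon$-blowup $U^{+\epsilon}$ gives the statement. Your snapping claim --- that every vertex outside $V_\Delta$ lies within $o\bigl(\Delta/\sqrt{\log(dm/\Delta)}\bigr)$ of $V_\Delta$ --- is false in general: a vertex incident only to short hyperedges can sit at $\ell_2^2$-distance $\Theta(\Delta)$ from every vertex touching a long hyperedge (e.g. at the end of a long chain of short hyperedges), and the vertex $i$ in the guarantee may itself be such a vertex. So both the rounding scheme and the pre-processing would need to be replaced by the net-plus-ALN08 route (or an equally strong substitute) for the theorem to follow.
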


The above theorem is obtained by combining the single-scale embedding theorem from 
\cite{ALN08} with the following key idea from \cite{FLH08}. Suppose one can construct 
an $\epsilon$-net of small cardinality for $(V,d_{\ell^2_2})$ in the $\ell^2_2$-metric. Then one can instantiate the single scale embedding theorem with $\epsilon$-net (instead of the full vertex set), and then observe that quantitatively similar guarantees can also be obtained for the full set as well (since the $\epsilon$-net is a coarse representation of the full set, and the single scale embedding theorem only needs to work with well separated points, and therefore the coarse representation suffices). Therefore, most of the work here is to show that one can obtain $\epsilon$-net of size $\tilde{O}(\sdp)$, which is done using arguments similar in spirit to \cite{FLH08}.

\subsection{Efficient $\epsilon$-nets in $\ell^2_2$}

Here we shall show that one can construct small sized $\epsilon$-net with respect to the  $\ell^2_2$-metric. We begin by defining $\epsilon$-nets for metric spaces.
\begin{definition}[$\epsilon$-net]
	Given a metric space $(X,d)$, a subset $X' \subseteq X$ is an $\epsilon$-net of $X$ if for every $x \in X$, there exists a point $x' \in X'$ such that $d(x,x') \leq \epsilon$.
\end{definition}
We shall also need the definition of the shortest path metric with respect to the hypergraph $H = (V,E)$. Given a map $\vphi:V \to \mathbbm{R}^d$ which satisfies the $\ell^2_2$-triangle inequality, we define the shortest path metric $d_\sdp$ on the $H$ as follows. For any pair of vertices $(i,j)$, we say that an ordered list of hyperedges $\{e_1,e_2,\ldots,e_R\}$ is an $(i,j)$ path if it satisfies the following conditions. 
\begin{itemize}
	\item $i \in e_1, j \in e_R$
	\item For every $a \in [R-1]$, we have $e_a \cap e_{a + 1} \neq \emptyset$.
\end{itemize}
For every $e \in E_H$, let $w(e) = \max_{a,b \in e}\|\vphi(a) - \vphi(b)\|^2$, and for any path $P$ we extend the definition as $w(P) = \sum_{e \in P} w(e)$. Now we are ready to define the shortest path metric with respect to the hypergraph $H$. 
Let $\cP_{ij}$ denote the set of all $(i,j)$ paths in $H$.
\begin{enumerate}
	\item For every $i \in V$, define $d_\sdp(i,i) \defeq 0$.
	\item For $i,j \in V$ such that $i \neq j$ we define 
	\[
	d_{\sdp}(i,j) \defeq \min_{P \in \cP_{ij}} w(P).
	\]
\end{enumerate}

We verify that $d_\sdp$ is a metric on $V$.
\begin{lemma}
	$d_{\sdp}:V \times V \to \mathbbm{R}_{\geq 0}$ is a metric.
\end{lemma}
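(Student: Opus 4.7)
The plan is to verify the four axioms of a metric in turn, with the triangle inequality being the only nontrivial one. Non-negativity of $d_\sdp$ is immediate from the definition, since $w(e) = \max_{a,b\in e}\|\vphi(a) - \vphi(b)\|^2 \geq 0$ for every hyperedge $e$, hence every path has nonnegative weight. Reflexivity, i.e.\ $d_\sdp(i,i) = 0$, is directly built into the definition.

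For symmetry, observe that if $P = \{e_1, e_2, \ldots, e_R\}$ is an $(i,j)$-path, then its reversal $P^{\rm rev} = \{e_R, e_{R-1}, \ldots, e_1\}$ is a valid $(j,i)$-path: we have $j \in e_R$, $i \in e_1$, and $e_{a+1} \cap e_a = e_a \cap e_{a+1} \neq \emptyset$ for each consecutive pair. Since $w(P^{\rm rev}) = w(P)$, minimizing over $\cP_{ji}$ is equivalent to minimizing over $\cP_{ij}$, giving $d_\sdp(i,j) = d_\sdp(j,i)$.

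The triangle inequality is proved by path concatenation. Fix $i, j, k \in V$ with $i,j,k$ not all equal, and take any $P_1 \in \cP_{ik}$ and $P_2 \in \cP_{kj}$. Write $P_1 = \{e_1, \ldots, e_R\}$ and $P_2 = \{e'_1, \ldots, e'_S\}$, and consider the concatenation $P = \{e_1,\ldots,e_R,e'_1,\ldots,e'_S\}$. Then $i \in e_1$, $j \in e'_S$, consecutive edges within $P_1$ and within $P_2$ intersect by assumption, and the new consecutive pair $e_R, e'_1$ both contain $k$, hence intersect. So $P$ is an $(i,j)$-path with weight $w(P) = w(P_1) + w(P_2)$. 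Taking $P_1$ and $P_2$ to be minimizers yields $d_\sdp(i,j) \leq d_\sdp(i,k) + d_\sdp(k,j)$, with the degenerate cases ($i=k$ or $k=j$) trivially handled. Strictly speaking, $d_\sdp$ is a pseudometric when $\vphi$ is not injective or when some hyperedge is collapsed under $\vphi$; this does not affect any downstream application, since all subsequent uses of $d_\sdp$ rely only on non-negativity, symmetry, and the triangle inequality, and the identification of vertices at $d_\sdp$-distance zero causes no harm in the net construction.
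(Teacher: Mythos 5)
Your proof is correct and uses essentially the same argument as the paper: the only nontrivial axiom is the triangle inequality, which both you and the paper establish by concatenating an $(i,k)$-path with a $(k,j)$-path and noting the concatenation is a valid $(i,j)$-path of weight equal to the sum. Your additional explicit checks of symmetry, non-negativity, and the pseudometric caveat are fine but do not change the substance.
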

\begin{proof}
	Towards showing that $d_\sdp$ satisfies the triangle inequality, fix vertices $i,j,k \in V$. We assume that $k \notin \{i,j\}$ (otherwise the triangle inequality trivially follows). Let $P_{ik} \in \cP_{ik}, P_{kj} \in \cP_{kj}$ be such that $d_\sdp(i,k) = w(P_{ik})$ and $d_\sdp(k,j) = w(P_{kj})$. Let $P = P_{ik} \hat{\odot} P_{kj}$ be the path obtained by concatenating the two paths; by definition it follows that $P \in \cP_{ij}$ and therefore $d_\sdp(i,j) \leq w(P) = w(P_{ik}) + w(P_{kj}) \leq d_{\sdp}(i,k) + d_{\sdp}(k,j)$. Since this holds for any arbitrary triple of vertices, the claim follows.
\end{proof}

\begin{remark}
	The shortest path metric $d_{\sdp}(\cdot,\cdot)$ is polynomial time computable on hypergraphs using dynamic programming.
\end{remark}

Now our first observation is that the $d_{\ell^2_2}$-metric is upper bounded by $d_{\sdp}$.

\begin{lemma}
	For any pair of vertices $(i,j) \in V \times V$ we have $d_{\ell^2_2}(i,j) \leq d_{\sdp}(i,j)$.
\end{lemma}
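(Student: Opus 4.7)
The plan is to fix an arbitrary $(i,j)$-path $P = \{e_1, e_2, \ldots, e_R\} \in \cP_{ij}$ and show that $\|\vphi(i) - \vphi(j)\|^2 \leq w(P)$; the claim then follows by minimizing over $P$ (and the case $i = j$ is trivial by the definitions).

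First I would extract a sequence of ``bridge'' vertices along the path. Since $i \in e_1$, $j \in e_R$, and $e_a \cap e_{a+1} \neq \emptyset$ for each $a \in [R-1]$, I can choose $v_0 \defeq i$, $v_R \defeq j$, and $v_a \in e_a \cap e_{a+1}$ for $1 \leq a \leq R-1$. By construction both $v_{a-1}$ and $v_a$ lie in the hyperedge $e_a$, so
\[
\|\vphi(v_{a-1}) - \vphi(v_a)\|^2 \;\leq\; \max_{x,y \in e_a} \|\vphi(x) - \vphi(y)\|^2 \;=\; w(e_a) \qquad \forall a \in [R].
\]

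Second, I would iterate the $\ell_2^2$ triangle inequality assumed on $\vphi$. A straightforward induction on $R$ (with base case the assumed inequality) yields
\[
\|\vphi(v_0) - \vphi(v_R)\|^2 \;\leq\; \sum_{a=1}^R \|\vphi(v_{a-1}) - \vphi(v_a)\|^2.
\]
Combining this with the per-edge bound above gives $\|\vphi(i) - \vphi(j)\|^2 \leq \sum_{a=1}^R w(e_a) = w(P)$. Taking the infimum over $P \in \cP_{ij}$ then produces $d_{\ell_2^2}(i,j) \leq d_{\sdp}(i,j)$, completing the argument.

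There is no genuine obstacle here: the proof is essentially a telescoping argument that bundles together the hypothesis that $\vphi$ satisfies the $\ell_2^2$ triangle inequality with the fact that consecutive hyperedges in a path share a common vertex, so that the endpoints can be ``walked'' from $i$ to $j$ one hyperedge at a time with controlled cost.
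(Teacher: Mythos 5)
Your proposal is correct and follows essentially the same argument as the paper: choose bridge vertices $v_0 = i$, $v_a \in e_a \cap e_{a+1}$, $v_R = j$ along the path, telescope using the $\ell_2^2$ triangle inequality, and bound each term by $w(e_a)$. The only cosmetic difference is that you bound an arbitrary path and then minimize, whereas the paper works directly with a path attaining $d_{\sdp}(i,j)$; both are fine.
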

\begin{proof}
	When $i = j$, we trivially have $d_{\ell^2_2}(i,j) = d_\sdp(i,j) = 0$. Otherwise let $i \neq j$, in which case there exists $P = \{e_1,e_2,\ldots,e_R\}\in \cP_{ij}$ such that $w(P) =  d_\sdp(i,j)$. Let $v(0) = i$, $v(R) = j$, and for every $a \in [R-1]$, fix $v(a) \in e_a \cap e_{a+1}$. Since the vectors $\vphi(x): x \in V$ satisfy the $\ell^2_2$-triangle inequality, we have 
	\begin{eqnarray*}
		\|\vphi(i) - \vphi(j)\|^2_2 &\leq& \sum_{a = 0}^{R-1}\|\vphi(v(a)) - \vphi(v(a+1))\|^2 \\
		&\leq& \sum_{a \in [R]}\max_{p,q \in e_a}\|\vphi(p) - \vphi(q)\|^2 \\
		&\leq& \sum_{a \in [R]} w(e) = w(P) = d_\sdp(i,j).
	\end{eqnarray*}
	Since the above holds for every $(i,j) \in V \times V$, the claim follows.
\end{proof}

The second observation used is that one can efficiently construct a greedy $\epsilon$-net of small size in the $d_{\sdp}$ metric.
Recall that for $x \in V$, the ball of radius $\epsilon$ around $x \in V$ is defined as 
$\cB_\sdp(x,\epsilon) \defeq \{v \in V : d_{\sdp}(x,v) \leq \epsilon\}$.
For any $V' \subseteq V$, and $x \in V$, $d_{\sdp}(x,V') \defeq \min_{v \in V'}d_{\sdp}(x,v)$.

\begin{lemma}
	\label{lem:greedy-ball}
	Let the SDP value be denoted by $M$ and $d$ be the max-arity of the hypergraph. 
	There exists a polynomial time algorithm to construct an $\epsilon$-net $V'$ of $V$ such that $|V'|\leq 4ZdM/\epsilon$.
\end{lemma}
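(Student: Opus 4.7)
The plan is to produce $V'$ by a greedy procedure in the shortest-path metric $d_\sdp$: initialize $V' \gets \emptyset$; while there exists $x \in V$ with $d_\sdp(x, V') > \epsilon$, add some such $x$ to $V'$. By the remark preceding the statement, $d_\sdp$ is polynomial-time computable on hypergraphs via dynamic programming, so each iteration runs in polynomial time and there are at most $|V|$ iterations. Because $d_{\ell_2^2}(u,v) \leq d_\sdp(u,v)$ for all $u,v \in V$ (as shown immediately before this lemma), the resulting $V'$ is automatically an $\epsilon$-net of $V$ in the $\ell_2^2$ metric, which disposes of the $\epsilon$-net property.

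The main work is the bound $|V'| \leq 4ZdM/\epsilon$. The greedy rule ensures that any two distinct $x, y \in V'$ satisfy $d_\sdp(x,y) > \epsilon$, so the balls $B_x := \cB_\sdp(x, \epsilon/2)$ for $x \in V'$ are pairwise disjoint in $V$. Assign to each such ball the SDP weight it ``captures'':
\[
W(B_x) \;\defeq\; \sum_{e \in E,\; e \cap B_x \neq \emptyset} w(e).
\]
Since each hyperedge $e$ has at most $d$ vertices and the balls $B_x$ are disjoint, any $e \in E$ contributes to $W(B_x)$ for at most $d$ distinct $x \in V'$, and hence $\sum_{x \in V'} W(B_x) \leq d \cdot M$. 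It then suffices to establish a per-ball lower bound $W(B_x) \geq \epsilon/Z$ for an absolute constant $Z$; combining this with the previous inequality yields $|V'| \leq 4ZdM/\epsilon$ (the factor $4$ absorbing slack in the region-growing estimate).

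The technical heart of the argument is this per-ball lower bound, and it is the main obstacle. A clean case split works: if $B_x = \{x\}$, then every hyperedge $e$ incident to $x$ must satisfy $w(e) > \epsilon/2$, for otherwise its other vertices $v \in e$ would satisfy $d_\sdp(x,v) \leq w(e) \leq \epsilon/2$ and land in $B_x$; thus any one such incident hyperedge already gives $W(B_x) \geq \epsilon/2$. Otherwise $B_x$ has positive radius, and one tracks the $d_\sdp$-shortest-path tree of $x$ truncated at radius $\epsilon/2$: the hyperedges used to grow this tree from $x$ all belong to $E(B_x)$, and their weights sum to at least the farthest distance reached. The hard part is when $B_x$ extends only to radius substantially smaller than $\epsilon/2$; here one uses a region-growing argument in the style of Calinescu--Karloff--Rabani, either by a direct case analysis of the ``boundary'' hyperedges of $B_x$ (whose weights must collectively bridge the $d_\sdp$-gap between $B_x$ and $V \setminus B_x$) or by averaging over a random truncation radius in $[\epsilon/4, \epsilon/2]$, to conclude $W(B_x) = \Omega(\epsilon)$. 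Isolated vertices of the hypergraph that happen to land in $V'$ are the only degenerate case; they are vacuously $\ell_2^2$-close to themselves, and can either be excluded by a preprocessing step or absorbed into the constant $Z$.
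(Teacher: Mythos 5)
Your outer charging scheme coincides with the paper's: greedily pick centers pairwise more than $\epsilon$ apart in $d_\sdp$, surround each by a ball, use disjointness of the balls plus the arity bound to argue that every hyperedge is charged at most $d$ times, and finish with a per-ball weight lower bound of $\Omega(\epsilon)$. The genuine gap is that this per-ball bound, which is the entire content of the lemma, is exactly what you leave unproved: you settle only the trivial case $B_x=\{x\}$, and for the general case you defer to ``a region-growing argument in the style of Calinescu--Karloff--Rabani'' without carrying it out, and without saying where the far vertex that the region must grow towards comes from. As you state it, $W(B_x)\ge \epsilon/Z$ for \emph{every} center is false: if $\cB_\sdp(x,\epsilon/2)$ contains everything reachable from $x$ (say $x$'s component is a single hyperedge of weight $0.001\epsilon$), the captured weight is arbitrarily small. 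What makes the count work is the greedy guarantee itself: when $x_i$ is added, $d_\sdp(x_i,V'_i)>\epsilon$, and the paper charges precisely the shortest path from the current net $V'_i$ to $x_i$ --- since its weight exceeds $\epsilon$, it must contain a hyperedge of $\partial^E_H\bigl(\cB_\sdp(x_i,\epsilon/4)\bigr)$, and the subpath from that hyperedge to $x_i$ has weight greater than $\epsilon/4$ while every hyperedge on it meets the radius-$\epsilon/4$ ball. That argument, or a substitute for it, is the missing step in your proposal.

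Your route can be repaired along the lines you hint at, but it needs the same input. If there is a vertex $y$ with $\epsilon/2 < d_\sdp(x,y)<\infty$ (supplied, except for the first center of a connected component, by the previously chosen centers), then either $\cB_\sdp(x,\epsilon/2)$ contains a vertex at distance more than $\epsilon/4$ from $x$, in which case the shortest path to it stays inside the ball and already contributes weight more than $\epsilon/4$ to $W(B_x)$; or every vertex of the ball lies within $\epsilon/4$ of $x$, in which case any hyperedge leaving the ball along the $x$--$y$ shortest path has weight more than $\epsilon/2-\epsilon/4=\epsilon/4$ by the triangle inequality for $d_\sdp$ --- note this is a per-hyperedge bound, strictly stronger than your vaguer claim that the boundary hyperedges ``collectively bridge'' the gap, and no random truncation radius is needed. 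Without an argument of this kind, plus an explicit account of the at most one uncharged center per component (a degeneracy your sketch only gestures at), the cardinality bound $|V'|\le 4ZdM/\epsilon$ is not established.
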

\begin{proof}
	Consider the following greedy procedure:
	\begin{enumerate}
		\item Initialize $V' \gets \emptyset$ and $S \gets V$.
		\item If there exists $x \in S$ such that $d_\sdp(x,V') > \epsilon$, update $V' \gets V' \cup\{x\}$, and update $ S \gets S \setminus \cB_\sdp(x,\epsilon)$. Otherwise terminate.
	\end{enumerate}
	Clearly the termination condition ensures that $V'$ is an $\epsilon$-net for $V$, so all that remains is to bound $|V'|$. Let $t = |V'|$ denote the total number of iterations. For every $i \in [t]$, let $V'_i,S_i$ be the $V',S$ sets at the beginning of the $i^{th}$ iteration and let $x_i$ be the vertex added during the $i^{th}$ iteration.  Furthermore, for every $i \in [t]$,let $B_i \defeq \cB_\sdp(x_i,\epsilon/4) \cap S_i$, and $\partial B_i \defeq \partial^E_H(B_i)$ be the hyperedge boundary of $B_i$. Finally we use $E(B_i)$ to denote the set hyperedges in the induced sub-hypergraph $H[B_i]$. 
	\begin{claim}
		The sets $B_1,\ldots,B_t$ are vertex disjoint.
	\end{claim}
	\begin{proof}
		Suppose not. For contradiction, let $a,b \in [t] : a < b$ and $B_a \cap B_b \neq \emptyset$. Then fix $z \in B_a \cap B_b$. By triangle inequality we have $d_{\sdp}(x_a,x_b) \leq d_{\sdp}(x_a,z) + d_{\sdp}(z,x_b) \leq \epsilon/2$ which implies that $x_b \in B_a = \cB_{\sdp}(x_a,\epsilon)$. But then $x_b \notin S_b$ which is a contradiction.
	\end{proof}
	We claim that for every $i \in [t]$ we have 
	\[
	w(E(B_i) \cup \partial B_i) \geq \epsilon/4 .
	\]
	To see this, fix an $i \in [t]$. Recall that $x_i$ is the vertex added in iteration $i$. Let $P_i = (e_1,\ldots,e_r)$ be a shortest $(V'_i,x_i)$-path such that 
	\[
	e_1 \cap V'_i \neq \emptyset,  \qquad\qquad x_i \in e_r, \qquad\qquad w(P_i) = d_\sdp(V'_i,x_i) > \epsilon.
	\]
	Since $d_\sdp(V'_i,x_i) > \epsilon$, we must have 
	\[
	P_i \cap \partial B_i = P_i \cap \partial^E_H(\cB_{\sdp}(x_i,\epsilon/4)) \neq \emptyset
	\]
	Let $e_j \in P_i \cap \partial B_i$ and $P'_i = (e_j,e_{j+1},\ldots,e_r)$ be the sub-path starting from $e_j$. 
	\begin{claim}
		There exists $x' \in e_j$ such that $d_{\sdp}(x',x_i) > \epsilon/4$
	\end{claim}
	\begin{proof}
		Suppose not, then for every $x' \in e_j$ we have $d_{\sdp}(x',x_i) \leq \epsilon/4$ and therefore $e_j \in E(B_i)$ which is a contradiction.
	\end{proof}
	Fix a $x'$ as guaranteed in the above claim. Since $P'_i$ is an $(x',x_i)$ path and we have $w(P'_i) \geq d_{\sdp}(x',x_i) > \epsilon/4$. Finally note that for every $e \in P_i'$, we have $e \in  E(B_i)$ or $e \in \partial B_i$ and therefore 
	\[
	w(B_i \cap \partial B_i) \geq w(P'_i) \geq \epsilon/4 .
	\] 
	Finally we bound the $t$ as follows. Consider the multi-set ${\bf S} = \cup_{i \in [t]} (E(B_i) \cup \partial B_i)$. Since the sets $B_1,\ldots,B_t$ are disjoint, by definition the sets $E(B_1),E(B_2),\ldots,E(B_t)$ are also disjoint, and therefore for any $i \in [t]$, every hyperedge $e \in E(B_i)$ is counted exactly once in ${\bf S}$. 
	Next, consider the case when $e \in \partial B_{j_1}, \partial B_{j_2}, \ldots,\partial B_{j_L}$. By definition, for every $h \in [L]$, we must have $e \cap B_{j_h} \neq \emptyset$. Furthermore, since $B_{j_1},\ldots,B_{j_L}$ are disjoint, the intersections $e \cap B_{j_1},e\cap B_{j_2},\ldots,e \cap B_{j_L}$ are also disjoint. Therefore 
	$L \leq |e| \leq d$ i.e., the hyperedge can be counted at most $d$ times in ${\bf S}$. Therefore,
	\[
	(\epsilon/4)t \leq \sum_{i \in [t]} w(E(B_i) \cup \partial B_i) = w_{\rm mult}({\bf S}) \leq d w(E_H) = d M
	\]
	which on rearranging gives us the bound. 
	
\end{proof}

\subsection{Proof of Theorem \ref{thm:separators}}
Towards proving Theorem \ref{thm:separators}, we shall use the following result on 
single scale embeddings for negative type metric spaces.
\begin{theorem}[Theorem 3.1~\cite{ALN08}]			\label{thm:embed}
	There exists constant $C'>0$ and $p \in (0,1)$ such that for every $N$-point $\ell^2_2$-metric space $(V',d_{\ell^2_2})$ and $\Delta > 0$, the following holds. There exists a distribution $\mu$ over subsets $U \subseteq V'$ such that for every $x,y \in V'$ with $d(x,y) \geq \Delta/16$ we have 
	\[
	\Pr_{U \sim \mu}\left[y \in U \mbox{ and } d_{\ell^2_2}(x,U) 
	\geq \frac{C'\Delta}{\sqrt{\log N}}\right] \geq p.
	\]
\end{theorem}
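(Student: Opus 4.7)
The plan is to adapt the Arora--Lee--Naor random-halfspace construction. Since the metric is of negative type (by the $\ell_2^2$ triangle inequality hypothesis), there exist vectors $\{\bar u\}_{u\in V'}$ in a Hilbert space with $d_{\ell_2^2}(u,v)=\|\bar u-\bar v\|^2$. The candidate for $U$ is a Gaussian halfspace: draw a standard Gaussian $g\sim N(0,I)$ and a scalar threshold $\tau$ uniformly from a suitable interval, and set $U = \{u \in V' : \langle g,\bar u\rangle \ge \tau\}$, possibly intersected with a randomly recentered $\ell_2^2$-ball of radius $O(\Delta)$ to make the marginal inclusion probabilities well-behaved. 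By rescaling, we may assume $\Delta = 1$.

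First I would verify the easy inclusion event. If $\|\bar x-\bar y\|^2 \ge 1/16$, then $\langle g,\bar y-\bar x\rangle$ is a one-dimensional Gaussian with standard deviation at least $1/4$, so with constant probability a random threshold $\tau$ separates $\langle g,\bar x\rangle$ from $\langle g,\bar y\rangle$, placing $y$ inside $U$ and $x$ outside. The hard direction is that, conditional on $y \in U$, with constant probability no point $z$ with $\|\bar x-\bar z\|^2 \le (C')^2/\log N$ also lies in $U$. This reduces to controlling the supremum of the Gaussian process $z \mapsto \langle g,\bar z-\bar x\rangle$ over a small $d_{\ell_2^2}$-ball of radius $r = C'/\sqrt{\log N}$ around $x$. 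A naive union bound over all $N$ points gives only $O(r\sqrt{\log N})$, which yields distortion $\log N$ and is off by a factor of $\sqrt{\log N}$.

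The main obstacle will be obtaining the $\sqrt{\log N}$ rate. This is the technical core of ALN08 and requires Dudley/Talagrand-style chaining against a ball-growth functional (the ``measured descent'' of Krauthgamer--Lee--Mendel--Naor). For dyadic sub-scales $r_k = 2^{-k}r$ and local cardinalities $N_k(x) = |B(x,r_k)|$, the chaining bound to establish is
\[
\mathbb{E}\sup_{z\,:\,\|\bar z-\bar x\|\le r}\langle g,\bar z-\bar x\rangle \;\lesssim\; \sum_{k\ge 0} r_k\sqrt{\log(N_{k-1}(x)/N_k(x))}.
\]
Cauchy--Schwarz applied to the factorization $r_k\cdot\sqrt{\log(N_{k-1}/N_k)}$, combined with the telescoping identity $\sum_k \log(N_{k-1}/N_k) = \log N_0 \le \log N$ and the geometric sum $\sum_k r_k^2 \le 2r^2$, collapses the right-hand side to $O(r)$, saving the crucial $\sqrt{\log N}$ factor.

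I would spend the bulk of the proof implementing this chaining carefully for negative-type metrics, where the $\ell_2^2$ structure is precisely what enables the chaining inequality via the Hilbert-space realization of the $\bar u$ vectors. Once the concentration bound on $\sup_{z:\|\bar z - \bar x\|\le r}\langle g,\bar z - \bar x\rangle$ is established, I would combine it with the easy inclusion event of the second paragraph via a union bound: on a constant-probability event, $y \in U$ while every $z$ within $\ell_2^2$-distance $r$ of $x$ has $\langle g,\bar z\rangle < \tau$, so $d_{\ell_2^2}(x, U) \ge r = C'/\sqrt{\log N}$, giving the claimed distribution $\mu$ with success probability $p$ bounded below by an absolute constant.
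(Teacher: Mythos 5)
This statement is not proved in the paper at all: it is imported verbatim as Theorem 3.1 of \cite{ALN08}, so a self-contained proof would have to reconstruct the Arora--Lee--Naor argument, which rests on the Arora--Rao--Vazirani structure theorem for negative-type metrics. Your sketch replaces that machinery with a random Gaussian halfspace plus a Dudley-type chaining bound, and the central step fails in two ways. First, the arithmetic of the claimed collapse is wrong: Cauchy--Schwarz applied to $\sum_k r_k\sqrt{\log(N_{k-1}/N_k)}$ gives $\bigl(\sum_k r_k^2\bigr)^{1/2}\bigl(\sum_k\log(N_{k-1}/N_k)\bigr)^{1/2}\lesssim r\sqrt{\log N}$, which is exactly the naive union bound; no $\sqrt{\log N}$ is saved. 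Second, and more fundamentally, no bound on $\sup_z\langle g,\bar z-\bar x\rangle$ over the small ball can possibly suffice, because the increments of this process have standard deviation $\|\bar z-\bar x\|=\sqrt{d_{\ell_2^2}(x,z)}$, so ``stretched'' points genuinely occur: take $\bar x=0$, the $N-1$ points $\bar z_i=\sqrt{r/2}\,e_i$ with $r=C'\Delta/\sqrt{\log N}$, and $\bar y=(\sqrt{\Delta}/4)e_0$ (this configuration satisfies the $\ell_2^2$ triangle inequality). Then $\max_i\langle g,\bar z_i\rangle\approx\sqrt{r\log N}\approx\sqrt{\Delta}\,(\log N)^{1/4}$ while $\langle g,\bar y\rangle$ has standard deviation $\sqrt{\Delta}/4$, so the event that some threshold puts $y$ inside the halfspace while keeping the entire $r$-ball around $x$ outside has probability $e^{-\Theta(\sqrt{\log N})}$, not a constant $p$; intersecting with a randomly recentered ball does not repair this, since ball-based (padded-decomposition-style) arguments alone only give separation $\Delta/\log N$ for general metrics.

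The actual proof does not try to rule out stretched pairs. After projecting on a random direction, ARV delete the points covered by a maximal matching of stretched pairs and show this matching must be small: otherwise one concatenates $\Omega(\sqrt{\log N})$ stretched hops, uses the $\ell_2^2$ triangle inequality to control the total squared length of the resulting walk, and contradicts Gaussian measure concentration. This yields two large, $\Delta/\sqrt{\log N}$-separated sets; upgrading that set-level statement to the per-pair, constant-probability statement quoted here is the technically heaviest part of \cite{ALN08}. So the $\ell_2^2$ structure enters through the matching/chaining argument, not through a Dudley entropy estimate, and your proposal as written would not go through; the honest options are to reproduce that argument or to cite it, as the paper does.
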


\begin{proof}[Proof of Theorem \ref{thm:separators}]
	Let $C > 0$ be the constant from Theorem \ref{thm:embed}. Let $V' \subset V$ be the  $\epsilon$-net of $V$ in the $d_{\ell^2_2}$-metric of size at most $16Md\sqrt{\log (d M/\Delta)}/(C\Delta)$ (with $\epsilon = C\Delta/(4'\sqrt{\log (d M/\Delta)})$ as guaranteed by Lemma \ref{lem:greedy-ball}). Note that from our choice of $\epsilon$ we have 
	\[
	\frac{C\Delta}{\sqrt{\log |V'|}}  \geq \frac{C\Delta}{\sqrt{\log (16 Md/C\Delta (\log Md/\Delta))}} \geq 0.9 \frac{C\Delta}{\sqrt{\log (dM/\Delta)}} 
	\]
	for $dM$ large enough. Therefore instantiating Theorem \ref{thm:embed} with $V'$ and $\Delta$, we get that there exists a distribution $\mu$ over subsets of $V'$ such that the following holds. For any  $x,y \in V'$ satisfy $d_{\ell^2_2}(x,y) \geq \Delta/16$ we have
	\begin{equation}				\label{eqn:sep-bound}
	\Pr_{U \sim \mu}\left[y \in U \mbox{ and } d_{\ell^2_2}(x,U) 
	\geq 0.9\frac{C\Delta}{\sqrt{\log (dm/\Delta)}}\right] 
	\geq \Pr_{U \sim \mu}\left[y \in U \mbox{ and } d_{\ell^2_2}(x,U) 
	\geq \frac{C\Delta}{\sqrt{\log |V'|}}\right] \geq p.
	\end{equation}
	Now for every $x \in V$, we identify $c(x) \in V'$ such that $d_{\ell^2_2}(x,c(x)) \leq \epsilon$. Let $\mu$ be the measure guaranteed be Theorem \ref{thm:embed}. Fix $x,y \in V$ such that $d_{\ell^2_2}(x,y) \geq 2\Delta/16$. Then it follows that $d_{\ell^2_2}(c(x),c(y))\geq 2\Delta/16 - 2\epsilon \geq \Delta/16$. Therefore from Eq. \ref{eqn:sep-bound} we have 
	\begin{align*}
		& \Pr_{U \subseteq_\mu V'}\left[c(y) \in U \mbox{ and } d_{\ell^2_2}(c(x),U) \geq 0.9\frac{C\Delta}{\sqrt{\log (dM/\Delta)}}\right] \geq p \\
		&\Rightarrow \Pr_{U \subseteq_\mu V'}\left[y \in U^{+\epsilon} \mbox{ and } d_{\ell^2_2}(x,U^{+\epsilon}) \geq 0.9\frac{C\Delta}{\sqrt{\log ( d M/\Delta)}} - 2\epsilon \right] \geq p \\
		&\Rightarrow \Pr_{U \subseteq_\mu V'}\left[y \in U^{+\epsilon} \mbox{ and } d_{\ell^2_2}(x,U^{\epsilon}) \geq \frac{C\Delta}{2\sqrt{\log (Md/\Delta)}} \right] \geq p \\
	\end{align*}
	where $U^{+\epsilon} = \left\{x \in V | d_{\ell^2_2}(x,U) \leq \epsilon \right\}$ is the $\epsilon$-blowup of the set $U$.
\end{proof}

\paragraph{Proof of Theorem \ref{thm:hos-3}}

The proof of the theorem is identical to the construction of $\ell^2_2$-orthogonal separators constructed in \cite{lm16} (Theorem 2.2). The key observation here is that the $\sqrt{\log n}$-term in the construction of $\ell^2_2$-orthogonal separator comes directly from $\Omega(1/\sqrt{\log n})$-separation guaranteed by Theorem \ref{thm:embed}. Therefore, instantiating the proof of Theorem 2.2 from \cite{lm16} with the improved arity dependent bounds from Theorem \ref{thm:separators} and then proceeding as is gives us the guarantees stated by Theorem \ref{thm:hos-3}.

\section{Miscellaneous Proofs}

\subsection{Eigenfunction Facts}

\begin{proposition}			\label{prop:eig-facts}
	Let $A \in \mathbbm{R}^{n \times n}$ be the adjacency matrix of a weighted graph $G = (V,E,w)$ such that $\sum_{ij}A_{ij} = 1$. Let $D$ be the diagonal matrix consisting of vertex weights and let $N = D^{-1}A$ be the corresponding row-normalized transition probability matrix. Let $\phi_1,\ldots,\phi_n \in \mathbbm{R}^n$ be the right eigenvectors of $N$ with eigenvalues $\lambda_1,\lambda_2,\ldots,\lambda_n$. Let $\mu:V \mapsto [0,1]$ be the stationary measure for the random walk on the graph $G$. Then the following identities hold: 
	\begin{itemize}
		\item For any $i \in [n]$, we have $\Ex_{x \sim \mu} \Ex_{y \sim \mu(x)} \phi_i(x) \phi_i(y) = \lambda_i$.
		\item For every $i,j \in [n]$ such that $i \neq j$ we have $\Ex_{x \sim \mu} \Ex_{y \sim \mu(x)} \phi_i(x) \phi_j(y) = 0$.
		\item For any $i \in [n]$, we have $\Ex_{x \sim \mu} \Ex_{y,y' \sim \mu(x)} \phi_i(y) \phi_i(y') = \lambda^2_i$.
		\item For any $i,j \in [n]$, such that $i \neq j$ we have $\Ex_{x \sim \mu} \Ex_{y,y' \sim \mu(x)} \phi_i(y) \phi_j(y') = 0$.
	\end{itemize} 
\end{proposition}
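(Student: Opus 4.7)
All four identities follow from two ingredients: (i) the eigenfunction relation written in probabilistic form, namely
\[
\Ex_{y \sim \mu(x)} \phi_i(y) = (N\phi_i)(x) = \lambda_i \phi_i(x),
\]
which is just the statement that the one-step averaging operator acts on $\phi_i$ by $\lambda_i$; and (ii) the orthonormality of the eigenvectors with respect to the stationary inner product $\langle f,g\rangle_M = \Ex_{x\sim\mu}[f(x)g(x)]$. Orthonormality holds because $N = D^{-1}A$ is self-adjoint with respect to $\langle\cdot,\cdot\rangle_M$ (this is reversibility: $\mu(x)N(x,y) = A_{xy}/\sum_{ij}A_{ij} = \mu(y)N(y,x)$), so $D^{-1/2}AD^{-1/2}$ is a symmetric matrix whose eigenvectors $D^{1/2}\phi_i$ can be taken orthonormal in the standard inner product; translating back gives $\langle \phi_i,\phi_j\rangle_M = \mathbbm{1}[i=j]$.

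\textbf{Step 1 (one-step identities).} For the first identity, condition on $x$ and pull $\phi_i(x)$ out of the inner expectation:
\[
\Ex_{x\sim\mu}\Ex_{y\sim\mu(x)} \phi_i(x)\phi_i(y)
= \Ex_{x\sim\mu}\bigl[\phi_i(x)\cdot \lambda_i\phi_i(x)\bigr]
= \lambda_i \langle \phi_i,\phi_i\rangle_M = \lambda_i.
\]
For the second identity with $i\neq j$, the same manipulation yields $\lambda_j \langle \phi_i,\phi_j\rangle_M = 0$.

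\textbf{Step 2 (two-step identities).} Here the key observation is that conditional on $x$, the two samples $y$ and $y'$ are drawn independently from $\mu(x)$, so
\[
\Ex_{y,y'\sim\mu(x)} \phi_i(y)\phi_j(y')
= \bigl(\Ex_{y\sim\mu(x)}\phi_i(y)\bigr)\bigl(\Ex_{y'\sim\mu(x)}\phi_j(y')\bigr)
= \lambda_i\phi_i(x)\cdot\lambda_j\phi_j(x).
\]
Taking the outer expectation over $x\sim\mu$ gives $\lambda_i\lambda_j\langle\phi_i,\phi_j\rangle_M$, which equals $\lambda_i^2$ when $i=j$ and $0$ when $i\neq j$.

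\textbf{Main obstacle.} There is essentially no obstacle; the only subtle point is justifying that the eigenfunctions $\phi_i$ may be taken $\langle\cdot,\cdot\rangle_M$-orthonormal. This is where reversibility of the stationary walk is used, and it is what allows us to assert $\langle \phi_i,\phi_j\rangle_M = \mathbbm{1}[i=j]$ in the final reductions. Once that normalization is in place, the four identities are immediate one-line consequences of the eigenrelation and conditional independence.
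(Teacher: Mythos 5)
Your proof is correct and follows essentially the same route as the paper: the one-step eigenrelation $\Ex_{y\sim\mu(x)}\phi_i(y)=\lambda_i\phi_i(x)$, conditional independence of $y,y'$ given $x$, and orthonormality of the $\phi_i$ under $\langle\cdot,\cdot\rangle_M$ (the paper phrases the same computations as quadratic forms $\phi_i^\top A\phi_j$ and $(N\phi_i)^\top D N\phi_i$, while you work with conditional expectations, but the content is identical). Your explicit justification of the $\mu$-orthonormality via reversibility is a welcome detail that the paper delegates to its Fourier-analytic preliminaries.
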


\begin{proof}
	For every vertex $x \in V$, let ${\rm deg}_A(x)$ denote the weighted degree of the vertex $x$ in $A$. Note that by definition, for every $i \in [n]$ we have $N \phi_i = \lambda_i \phi_i$. For the first point we can see that 
	\begin{eqnarray*}
	\Ex_{x \sim \mu} \Ex_{y \sim \mu(x)} \Big[\phi_i(x) \phi_i(y)\Big]
	&=& \sum_{x \in V} {\rm deg}_A(x) \sum_{y \in V} \frac{A(x,y)}{{\rm deg}_A(x)} \phi_i(x)\phi_i(y) \\
	&=& \sum_{x,y \in V} A(x,y)\phi_i(x)\phi_i(y) \\
	&=& \phi^\top_i A \phi_i \\
	&=& \phi^\top_i D (D^{-1} A) \phi_i \\
	&=& \phi^\top_i D N \phi_i \\
	&=& \lambda_i \phi^\top_i D \phi_i = \lambda_i \langle \phi_i , \phi_i \rangle_M = \lambda_i
	\end{eqnarray*}
	where the last equality follows from the fact that $\phi_i$'s form a orthonormal basis with respect to $\mu$. For the second point, using arguments identical to above we have 
	\[
	\Ex_{x \sim \mu} \Ex_{y \sim \mu(x)}\Big[ \phi_i(x) \phi_j(y) \Big]= \phi^\top_i A \phi_j = \lambda_j \langle \phi_i,\phi_j \rangle_M = 0
	\]
	using the orthonormality of the $\phi_i$'s. Finally for the third point we observe that 
	\begin{eqnarray*}
	\Ex_{x \sim \mu} \Ex_{y,y' \sim \mu(x)} \Big[ \phi_i(y) \phi_i(y') \Big]
	&=& \Ex_{x \sim \mu} \left(\Ex_{y,\sim \mu(x)} \Big[ \phi_i(y) \Big]\right)^2 \\
	&=& \Ex_{x \sim \mu} \left(\sum_{y} \mu(y|x) \phi_i(y)\right)^2 \\
	&=& \Ex_{x \sim \mu} \left((N \phi_i)(x)\right)^2 \\
	&=& (N \phi_i)^\top D N \phi_i \\
	&=& \lambda^2_i \phi^\top_i D \phi_i = \lambda^2_i \langle \phi_i, \phi_i \rangle_M =  \lambda^2_i
	\end{eqnarray*}
	For the last point, using identical arguments as above, we can show that
	\[
		\Ex_{x \sim \mu} \Ex_{y,y' \sim \mu(x)} \Big[ \phi_i(y) \phi_i(y') \Big] = \lambda_i\lambda_j \phi^\top_i D \phi_j = \lambda_i \lambda_j \langle \phi_i, \phi_j\rangle_M = 0
	\] 
\end{proof}

\subsection{Spectral Gap of $M$} \label{sec:spec-gap}

\begin{lemma}
	Let $M$ be the Markov chain on the the vertex set $V_M$ as defined in Section \ref{sec:gadget}. Then for any $r \geq 1$, the spectral gap of the Markov chain $M^r$ is at least $\min(\epsilon/10g, \alpha\epsilon/24)$.
\end{lemma}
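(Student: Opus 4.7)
First I would reduce to the case $r = 1$. By Claim \ref{cl:M-prop}(i), the row-normalized transition matrix $A_M$ is PSD, so all its eigenvalues lie in $[0,1]$. Consequently for any $r \geq 1$ the eigenvalues of $A_{M^r} = A_M^r$ are $\lambda_i^r \in [0, \lambda_i]$, so the second-largest eigenvalue of $M^r$ is at most the second-largest eigenvalue of $M$, giving $\mathrm{gap}(M^r) \geq \mathrm{gap}(M)$. Hence it suffices to show $\mathrm{gap}(M) \geq \min(\epsilon/(10g),\ \alpha\epsilon/24)$.

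The next step is to diagonalize $A_M$ by piggy-backing on the spectrum of the underlying expander $G'$. Write any right eigenvector $\chi$ of $A_M$ with eigenvalue $\lambda$ as $\chi(s_i) = a_i,\ \chi(t_i) = b_i$. Using $\deg(s_i)=1$ and $\deg(t_i)=(2g+2)\epsilon$ in the stationary measure, the eigenvalue equations become
\begin{align*}
(1-\epsilon-\lambda)\,a_i + \epsilon\, b_i &= 0,\\
\tfrac{1}{2g+2}\bigl[a_i + (g+1)b_i + \textstyle\sum_{j \in N_{G'}(t_i)} b_j\bigr] &= \lambda b_i.
\end{align*}
Under the ansatz that $b$ is an eigenvector of the adjacency matrix of $G'$ with eigenvalue $\mu$ (so $\sum_{j \in N_{G'}(t_i)}b_j = \mu b_i$), the first equation forces $a_i = \tfrac{\epsilon}{\lambda-1+\epsilon}b_i$, and substituting into the second gives, after setting $\delta := 1-\lambda$, the quadratic
\[
(2g+2)\delta^2 - \bigl[(g+1-\mu) + (2g+2)\epsilon\bigr]\,\delta + \epsilon(g-\mu) = 0.
\]
Since $G'$ has $k$ eigenvalues (with multiplicity) and each produces two roots, this accounts for all $2k$ eigenvalues of $A_M$, so it suffices to lower-bound the smaller root $\delta_-$ over all $\mu \in \mathrm{spec}(A_{G'})$.

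I would then split into two cases. At $\mu=g$ (the Perron eigenvalue of $G'$), the constant term vanishes, yielding roots $\delta_- = 0$ (the trivial $\lambda=1$) and $\delta_+ = \tfrac{1}{2g+2} + \epsilon \geq \tfrac{1}{2g+2}$, which dominates $\epsilon/(10g)$ for $\epsilon \leq 1$ and $g \geq 1$. For any nontrivial $\mu$ of $G'$ (so $\mu \leq (1-\alpha)g$ by the expander property), both coefficients force both roots to be positive via Vieta: their product is $\tfrac{\epsilon(g-\mu)}{2g+2} > 0$ and their sum is $\tfrac{(g+1-\mu) + (2g+2)\epsilon}{2g+2} > 0$. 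Hence
\[
\delta_- \;\geq\; \frac{\delta_-\delta_+}{\delta_+} \;\geq\; \frac{\epsilon(g-\mu)}{(g+1-\mu) + (2g+2)\epsilon}
\;\geq\; \frac{\alpha g\,\epsilon}{(2g+1) + (2g+2)\epsilon}\;\geq\;\frac{\alpha\epsilon}{24},
\]
using $g-\mu \geq \alpha g$, $g+1-\mu \leq 2g+1$ (since $\mu \geq -g$), and absorbing the $(2g+2)\epsilon$ term into a constant for $\epsilon$ small enough. Combining both cases yields $\mathrm{gap}(M) \geq \min(1/(10g),\ \alpha/24)\cdot\epsilon$, completing the proof.

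The calculation is essentially a two-line eigenvalue manipulation once the ansatz is in place; the only mildly delicate point is the constant bookkeeping for the nontrivial-$\mu$ case and verifying that the negative tail of the spectrum of $G'$ (where $\mu$ can approach $-g$) is harmless --- there the numerator $g-\mu$ grows while the denominator grows only linearly, so $\delta_-$ is if anything larger. No genuine obstacle is expected.
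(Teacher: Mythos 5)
Your proof is correct, but it takes a genuinely different route from the paper. The paper never diagonalizes $M$: it invokes the variational characterization $1-\lambda_2=\inf_{f}\cE(f,f)/\mathrm{Var}(f)$, lower-bounds the Dirichlet form by splitting it into the $(f(s_i)-f(t_i))^2$ terms and the $(f(t_i)-f(t_j))^2$ terms, upper-bounds the variance by repeated $\ell_2^2$-style triangle inequalities, and then compares the $t$-block term against the variational form of the spectral gap of $G'$, finishing with $\frac{A_1+A_2}{B_1+B_2}\geq\min(A_1/B_1,A_2/B_2)$. Your argument instead exploits the special structure (each $s_i$ couples only to $t_i$, and the $t$-layer couples through $A_{G'}$) to reduce the whole spectrum to a one-parameter family of quadratics indexed by $\mu\in\mathrm{spec}(A_{G'})$, and your quadratic, the Vieta bound $\delta_-\geq \epsilon(g-\mu)/\bigl[(g+1-\mu)+(2g+2)\epsilon\bigr]$, and the two-case constant bookkeeping all check out and indeed give the stated $\min(\epsilon/10g,\alpha\epsilon/24)$. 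What your approach buys is exact control of the spectrum (hence sharper constants in principle); what the paper's buys is robustness, since the Dirichlet-form comparison does not need the chain to block-diagonalize in the eigenbasis of $G'$.

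Two minor points you should patch. First, in this paper $M^r$ denotes the $r$-fold \emph{product} (tensor) chain on $V_M^r$ used for the long codes, not the $r$-step walk $A_M^r$; the reduction is just as immediate, since the nontrivial eigenvalues of the product chain are $\prod_{j\in S}\lambda_{S_j}$ with all $\lambda_j\in[0,1]$ (PSD), so the second eigenvalue is still $\lambda_2$, but as written your reduction addresses a different object. Second, the sentence ``this accounts for all $2k$ eigenvalues'' deserves a line of justification (and also yields realness of the roots): conjugating by $D^{1/2}$, the symmetric matrix $D^{-1/2}AD^{-1/2}$ leaves invariant, for each eigenvector $u$ of $A_{G'}$, the two-dimensional span of $(u,0)$ and $(0,u)$, and the restriction to that space is a symmetric $2\times 2$ matrix whose characteristic polynomial is exactly your quadratic; this block decomposition makes the counting rigorous even when roots coincide.
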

\begin{proof}
	We first claim that it suffices to show the above for $t= 1$. Indeed, let $1 \geq \lambda_2 \ldots \geq \lambda_{|V_M|}$ be the eigenvalues of $M$. Then for any $t \geq 1$,  the eigenvalues of $M^t$ are given by the multi-set $\{\lambda_S\}_{S \in [|V_M|]^r}$ where $\lambda_S = \prod_{j \in S} \lambda_j$ then 
	\[
	1 - \max_{S \neq (1,1,\ldots,1)} |\lambda_S| \leq 1 - \max_{j \neq 1} |\lambda_j|
	\]
	Now, we explicitly list the transition probabilities for the Markov chain $M$ defined in Section \ref{sec:gadget}.
	\begin{itemize}
		\item For every $i \in [k]$, we have $p(s_i|s_i) = 1 -\epsilon$ and $p(t_i|s_i) = \epsilon$.
		\item For every $i \in [k]$, we have $p(t_i|t_i) = 1/2$ and $p(s_i|t_i) = 1/(2(g + 1))$. Moreover, for every $t_j \in N_{G'}(t_i)$, $p(t_i|t_j) = 1/(2(g+1))$
	\end{itemize}
	Let $\lambda_2$ denote the second largest eigenvalue of the Markov chain. Recall the variational form of the spectral gap
	\[
	1 - \lambda_2 = \inf_{f : f \neq {\rm const}} \frac{\cE(f,f)}{{\rm Var}(f)}
	\]
	where $\cE(f,f) = \sum_{u,v \in V} \mu(u)p(v|u) (f(u) - f(v))^2$ and ${\rm Var}(f) = \sum_{u,v \in V} \mu(u)\mu(v) (f(u) - f(v))^2$. 
	For convenience, define $\epsilon',D$ to be such that $1/(1 + (2g+1)\epsilon) = 1 - \epsilon'$, and $D = 2(g+1)$. Then from Claim \ref{cl:M-prop} we know that $\mu(s_i) = (1 - \epsilon')/k$ and $\mu(t_i) = \epsilon'/k$. Fix an arbitrary vector $f$. We analyze the numerator and denominator. 
	\begin{align}
		\cE(f,f) &= 	\sum_{i \in [k]} (f(s_i) - f(t_i))^2 \mu(s_i)p(t_i|s_i) + \sum_{i \in [k]} (f(s_i) - f(t_i))^2 \mu(t_i)p(s_i|t_i) \nonumber \\ 
		& \qquad + \sum_{i \in [k]}\mu(t_i)\sum_{t_j \in N_{G'}(t_i)} p(t_j|t_i)(f(t_i) - f(t_j))^2 \nonumber \\
		&= \sum_{i \in [k]} (f(s_i) - f(t_i))^2 \left(\frac{1-\epsilon'}{k}\cdot \epsilon + \frac{\epsilon'}{k}\cdot\frac{1}{D}\right) + \sum_{i \in [k]}\frac{ \epsilon'}{k}\sum_{t_j \in N_{G'}(t_i)} \frac{1}{D}(f(t_i) - f(t_j))^2 \nonumber \\
		&\geq \frac{\min\{\epsilon,\epsilon'\}}{Dk} \sum_{i \in [k]} (f(s_i) - f(t_i))^2 + 
		\frac{\epsilon'}{k} \sum_{i \in [k]}\sum_{t_j \in N_E(t_i)} \frac{1}{D} (f(t_i) - f(t_j))^2 \label{eq:Hnum}.
	\end{align}
	Note that in the above, the lazy walk terms do not appear, since they are supported on identical vertices. 	On the other hand, we also have 
	\begin{align*}
		{\rm Var}(f) &= \sum_{i,j \in [k]} (f(s_i) - f(s_j))^2 \mu(s_i)\mu(s_j) + 2 \sum_{i,j \in [k]} (f(s_i) - f(t_j))^2 \mu(s_i)\mu(t_j) \\
		& \qquad + \sum_{i,j \in [k]} (f(t_i) - f(t_j))^2 \mu(t_i)\mu(t_f) \\
		&= \frac{(1 - \epsilon')^2}{k^2}\sum_{i,j \in [k]} (f(s_i) - f(s_j))^2  + \frac{2 \epsilon'(1 - \epsilon')}{k^2}\sum_{i,j \in [k]} (f(s_i) - f(t_j))^2 \\ 
		& \qquad + \frac{(\epsilon')^2}{k^2}\sum_{i,j \in [k]} (f(t_i) - f(t_j))^2 . 
	\end{align*}
	Furthermore, we can bound 
	\begin{align*}
		\sum_{i,j} (f(s_i) - f(s_j))^2 &= \sum_{i,j \in [k]} (f(s_i) - f(t_i) + f(t_i) - f(t_j) + f(t_j) - f(s_j))^2 \\
		&\leq 3\sum_{i,j \in [k]} (f(s_i) - f(t_i))^2 + (f(t_i) - f(t_j))^2 + (f(t_j) - f(s_j))^2 \\
		&\leq 6k\sum_{i\in [k]} (f(s_i) - f(t_i))^2 + 3\sum_{i,j \in [k]}(f(t_i) - f(t_j))^2. 
	\end{align*}
	Again we can bound 
	\begin{eqnarray*}
		\sum_{i,j \in [k]} (f(s_i) - f(t_j))^2 &=& \sum_{i,j \in [k]} (f(s_i) - f(t_i) + f(t_i) - f(t_j))^2 \\
		&\leq& 2\sum_{i,j \in [k]} (f(s_i) - f(t_i))^2 + 2\sum_{i, j \in [k]}(f(t_i) - f(t_j))^2 \\
		&=& 2k\sum_{i} (f(s_i) - f(t_i))^2 + 2\sum_{i, j \in [k]}(f(t_i) - f(t_j))^2.
	\end{eqnarray*}
	Therefore, overall we can bound the denominator as 
	\begin{equation}
		\label{eq:Hdenom}
		{\rm Var}(f,f) \leq \frac{10}{k} \sum_{i \in [k]} (f(s_i) - f(t_i))^2 + \frac{8}{k^2}\sum_{i,j \in [k]} (f(t_i) - f(t_j))^2 . 
	\end{equation}
	Observe that 
	\[
	\frac{\frac{\min\{\epsilon,\epsilon'\}}{Dk}\sum_{i \in [k]}(f(t_i) - f(s_i))^2}{(10/k)\sum_{i \in [k]} (f(s_i) - f(t_i))^2} \geq \frac{\min\{\epsilon,\epsilon'\}}{10D}
	\]
	and 
	\[
	\frac{(\epsilon'/k) \sum_{i \in [k]} \sum_{t_j \in N_{G'}(t_i)} \frac{1}{D} (f(t_i) - f(t_j))^2} {(8/k^2)\sum_{i,j \in [k]} (f(t_i) - f(t_j))^2 }
	= \frac{\epsilon' g}{8D}\frac{\frac1k\sum_{i \in [k]} \sum_{t_j \in N_{G'}(t_i)} \frac{1}{g} (f(t_i) - f(t_j))^2} {\frac{1}{k^2}\sum_{i,j \in [k]} (f(t_i) - f(t_j))^2 }
	\geq \frac{\alpha\epsilon'}{24} \]
	since $D = 2(g+1) \leq 3g$, and the last inequality follows from the spectral gap of the Markov chain $G'$ (defined in the Section \ref{sec:gadget}) on the $t_i$-vertices. Now using equations \ref{eq:Hnum} and \ref{eq:Hdenom} and 
	the fact that $(A_1 + A_2)/(B_1 + B_2) \geq \min(A_1/B_1,A_2/B_2)$, we get
	\begin{align*}
		\frac{\cE(f,f)}{{\rm Var}(f)} & \geq 
		\frac{\frac{\min\{\epsilon,\epsilon'\}}{Dk} \sum_{i \in [k]}\sum_{t_j \in N_{G'}(t_i)} \frac{1}{D} (f(t_i) - f(t_j))^2 + \frac{\epsilon'}{k} \sum_{i \in [k]}\sum_{t_j \in N_E(t_i)} \frac{1}{D} (f(t_i) - f(t_j))^2 }
		{ \frac{10}{k} \sum_{i \in [k]} (f(s_i) - f(t_i))^2 + \frac{8}{k^2}\sum_{i,j \in [k]} (f(t_i) - f(t_j))^2 } \\
		& \geq \min \left\{ \frac{\frac{\min\{\epsilon,\epsilon'\}}{Dk}\sum_{i \in [k]}(f(t_i) - f(s_i))^2}{(10/k)\sum_{i \in [k]} (f(s_i) - f(t_i))^2}, \frac{(\epsilon'/k) \sum_{i \in [k]} \sum_{t_j \in N_{G'}(t_i)} \frac{1}{D} (f(t_i) - f(t_j))^2} {(8/k^2)\sum_{i,j \in [k]} (f(t_i) - f(t_j))^2 }\right\} \\
		& \geq \min\left\{\frac{\min\{\epsilon,\epsilon'\}}{10D}, \frac{\alpha\epsilon'}{24} \right\}
	\end{align*}
	Finally observe that 
	\[
	1 - \epsilon' = \frac{1}{1 + 2(g+1)\epsilon} \leq 1 - (g+1)\epsilon
	\]
	for small enough $\epsilon$. Therefore $\min\{\epsilon,\epsilon'\} \geq \epsilon$, which gives us the desired bound.
	
\end{proof}

\subsection{Proof of Lemma \ref{lem:inf-dec}}			\label{sec:inf-dec}
The proof of this follows along standard influence decoding steps~\cite{KKMO07,St10Thesis}. The overall quantitative bounds obtained are different simply because of the different choice of the underlying Markov operator. Recall that from the setting of the lemma we have
\[
\Pr_{v \sim V_{\cG}}\left[ \max_{i \in [n]} \Inf{i}{\Gamma_{1 - \eta}g_v} \geq \tau\right] \geq \gamma. 
\]
Let $V_{\rm large} \subset V$ be the subset of vertices with large averaged influences i.e.,
\[
V_{\rm large}: = \left\{v \in V_{\cG} |  \max_{i \in [n]} \Inf{i}{\Gamma_{1 - \eta}g_v} \geq \tau \right\}
\]
Furthermore, it is well known that for functions $f:V^n_M \mapsto \mathbbm{R}$,  the mapping $f \mapsto \Inf{i}{f}$ is convex. This follows immediately using the fact that the influence can be expressed as $\sum_{S: \chi_{S_i} \neq {\bf 1}} \hat{f}(S)^2$, and each mapping $f \mapsto \hat{f}(S)^2$ are convex in $f$ (which is thought of as defined by a vector of Fourier coefficients). Now fix a vertex  $v \in V_{\rm large}$. Then by the definition of $V_{\rm large}$, there exists $i \in [n]$ such that $\Inf{i}{\Gamma_{1 - \eta} g_v} \geq \tau$. Furthermore, since $\Inf{i}{g_v}  = \sum_{r \in [k]} \Inf{i}{g^r_v}$, there exists $ r \in [k]$ such that $\Inf{i}{g^r_v} \geq \tau/k$. Using Jensen's inequality and the linearity of the noise operator $\Gamma_{1 - \eta}$ we have 
\[
\frac{\tau}{k} \leq \Inf{i}{\Gamma_{1 - \eta} g^r_v}  
 = \Inf{i}{\Gamma_{1 - \eta}\Big(\Ex_{w \sim N_{\cG(v)}}  \pi_{w \to v} \circ f^r_w\Big)} 
 \leq  \Ex_{w \sim N_{\cG(v)}} \Inf{\pi^{-1}_{w \to v}(i)}{\Gamma_{1 - \eta} f^r_w} \].
and therefore by averaging we have 
\begin{equation}
\Pr_{w \sim N_{\cG}(v)} \left[\Inf{\pi^{-1}_{w \to v}(i)}{\Gamma_{1 - \eta} f^r_w} \geq \frac{\tau}{2k}\right] \geq \frac{\tau}{2k}.
\end{equation}
Call the set of neighbors satisfying the above event as $N^*_{\cG}(v)$. Note that the above conclusion holds for any $v \in V_{\rm large}$. Now, for every vertex $v \in V$, and $r \in [k]$ define the following sets
\[
L_{v,r} := \left\{i \in [n] \Big| \Inf{i}{\Gamma_{1 - \eta} f^r_v} \right\} \geq \frac{\tau}{2k} \ \ \ \ \textnormal{and} \ \ \ \ 
\tilde{L}_{v,r} := \left\{i \in [n] \Big| \Inf{i}{\Gamma_{1 - \eta} g^r_v} \geq \frac{\tau}{2k} \right\}.
\]
Using Markov's inequality and the bound on the sum of noisy influences (Lemma \ref{lem:four-decay}), we have $|L_{v,r}|,|\tilde{L}_{v,r}| \leq 2k/(\tau \epsilon \eta)$. Finally, we sample a labeling $\sigma: V_{\cG} \to [k]$ using the following randomized decoding

\begin{enumerate}
	\item For every vertex $v \in V_{\cG}$ do the following.
	\item Sample $r \sim [k]$ uniformly at random.
	\item With probability $1/2$, if $L_{v,r}$ is non-empty, assign $\sigma(v)$ uniformly from $L_{v,r}$. Otherwise assign $\sigma(v)$ arbitrarily.
	\item With probability $1/2$, if $\tilde{L}_{v,r}$ is non-empty, assign $\sigma(v)$ uniformly from $\tilde{L}_{v,r}$. Otherwise assign $\sigma(v)$ arbitrarily.
\end{enumerate}

Then we can bound the expected fraction of constraints in $\cG$, satisfied by $\sigma$ as 
\begin{align*}
&\Ex_\sigma\Ex_{v \sim V_{\cG}}\Ex_{w \sim N_{\cG}(v)} \left[\mathbbm{1}\left(\pi_{w \to u}(\sigma(w)) = \sigma(v)\right)\right]  \\
&\geq \gamma \Ex_\sigma\Ex_{v \sim V_{\rm large}}\Ex_{w \sim N_{\cG}(v)} \left[\mathbbm{1}\left(\pi_{w \to u}(\sigma(w)) = \sigma(v)\right)\right] \\
&\geq \frac{\gamma}{2k} \Ex_{v \sim V_{\rm large}}\Ex_{w \sim N^*_{\cG}(v)} \Ex_\sigma \left[\mathbbm{1}\left(\pi_{w \to u}(\sigma(w)) = \sigma(v)\right)\right] \\
&\geq \frac{\gamma}{2k} \left( \frac{1}{k^2}\cdot \frac{1}{4} \cdot\left(\tau^2\epsilon^2\eta^2/4k^2\right) \right) \\
\end{align*}
where the last inequality follows from the fact that conditioned on $v \in V_{\rm large}$ and $w \in N^*_{\cG}(v)$, there exists a choice of $r \in [k]$, $i \in \tilde{L}_{r,v}$ and $j \in L_{r,w}$ such that $\pi_{w \to v}(j) = i$. Since the number of ways $r,i,j$ can be chosen (whenever $\tilde{L}_{v,r},L_{w,r} \neq \emptyset$) is bounded by $k^2(2k/\epsilon\eta\tau)^2$, the claim follows.

\end{document}